\title{On the Approximation Ratio of the $k$-Opt and Lin-Kernighan Algorithm\footnote{An extended abstract of this work appeared in the proceedings of the 28th Annual European Symposium on Algorithms (ESA 2020).}}
\author{Xianghui Zhong \orcidlink{0000-0003-3812-2903}\\
\small University of Bonn\\
\small Bonn, Germany\\
\small zhong@uni-bonn.de\\[5mm]}
\date{\today} 
\newtheorem{thm}{Theorem}[section]
\theoremstyle{plain}
\newtheorem{lemma}[thm]{Lemma}
\newtheorem{theorem}[thm]{Theorem}
\newtheorem{corollary}[thm]{Corollary}
\theoremstyle{remark}
\newtheorem{remark}[thm]{Remark}
\theoremstyle{definition}
\newtheorem{definition}[thm]{Definition}
\theoremstyle{claim}
\newtheorem{claim}[thm]{Claim}
\newcommand*{\R}{\ensuremath{\mathbb{R}}}
\newcommand*{\N}{\ensuremath{\mathbb{N}}}
\newcommand*{\Z}{\ensuremath{\mathbb{Z}}}
\DeclareMathOperator{\ex}{ex}
\begin{document}

\maketitle
\begin{abstract}
The $k$-Opt and Lin-Kernighan algorithm are two of the most important local search approaches for the \textsc{Metric TSP}. Both start with an arbitrary tour and make local improvements in each step to get a shorter tour. We show that for any fixed $k\geq 3$ the approximation ratio of the $k$-Opt algorithm for \textsc{Metric TSP} is $O(\sqrt[k]{n})$. Assuming the Erd{\H{o}}s girth conjecture, we prove a matching lower bound of $\Omega(\sqrt[k]{n})$. Unconditionally, we obtain matching bounds for $k=3,4,6$ and a lower bound of $\Omega(n^{\frac{2}{3k-3}})$. Our most general bounds depend on the values of a function from extremal graph theory and are tight up to a factor logarithmic in the number of vertices unconditionally. Moreover, all the upper bounds also apply to a parameterized generalization of the Lin-Kernighan algorithm with appropriate parameters. We also show that the approximation ratio of $k$-Opt for \textsc{Graph TSP} is $\Omega\left(\frac{\log(n)}{\log\log(n)}\right)$ and $O\left(\left(\frac{\log(n)}{\log\log(n)}\right)^{\log_2(9)+\epsilon}\right)$ for all $\epsilon>0$. For the \textsc{(1,2)-TSP} we give a lower bound of $\frac{11}{10}$ on the approximation ratio of the $k$-improv and $k$-Opt algorithm for arbitrary fixed $k$.
\end{abstract}

{\small\textbf{keywords:} traveling salesman problem; $k$-Opt algorithm; Lin-Kernighan algorithm; approximation algorithm; approximation ratio}

\section{Introduction}
The traveling salesman problem (TSP) is probably the best-known problem in discrete optimization. An instance consists of the pairwise distances of $n$ vertices and the task is to find a shortest Hamiltonian cycle, i.e.\ a tour visiting every vertex exactly once. The problem is known to be NP-hard \cite{DBLP:books/fm/GareyJ79}. A special case of the \textsc{TSP} is the \textsc{Metric TSP}. Here the distances satisfy the triangle inequality. This \textsc{TSP} variant is still NP-hard  \cite{TSPNPHARD}.

Since the problem is NP-hard, a polynomial-time algorithm is not expected to exist. In order to speed up the calculation of a good tour in practice, several approximation algorithms are considered. The approximation ratio is one way to compare approximation algorithms. It is the maximal ratio, taken over all instances, of the output of the algorithm divided by the optimum solution. For many years the best known approximation algorithm in terms of approximation ratio for \textsc{Metric TSP} was independently developed by Christofides and Serdjukov \cite{christofides,serdjukov} with an approximation ratio of $\frac{3}{2}$. Recently, Karlin, Klein and Oveis Gharan provide an approximation algorithm with approximation ratio of $\frac{3}{2}-\epsilon$ for some $\epsilon>10^{-36}$ \cite{DBLP:conf/stoc/KarlinKG21}. However, in practice other algorithms are easier to implement and have better performance and runtime \cite{DBLP:journals/informs/Bentley92,DBLP:conf/icalp/Johnson90,DBLP:books/sp/Reinelt94}. One natural approach is the $k$-Opt algorithm which is based on local search. It starts with an arbitrary tour and replaces at most $k$ edges by new edges such that the resulting tour is shorter. It stops if the procedure cannot be applied anymore. For the 2-Opt algorithm Plesn\'ik showed that there are infinitely many instances with approximation ratio $\sqrt{\frac{n}{8}}$, where $n$ is the number of vertices \cite{plesnik}. Chandra, Karloff and Tovey showed that the approximation ratio of 2-Opt is at most $4\sqrt{n}$ \cite{Chandra}. Levin and Yovel observed that the same proof yields an upper bound of $\sqrt{8n}$ \cite{Levin}. Recently, Hougardy, Zaiser and Zhong closed the gap and proved that the approximation ratio of the 2-Opt algorithm is at most $\sqrt{\frac{n}{2}}$ and that this bound is tight \cite{HOUGARDY2020401}. For general $k>2$ Chandra, Karloff and Tovey gave a lower bound of $\frac{1}{4}\sqrt[2k]{n}$ \cite{Chandra}. No non-trivial upper bound is known so far. In the case where the instances can be embedded into the normed space $\R^d$ the approximation ratio of 2-Opt is between $\Omega\left(\frac{\log(n)}{\log\log(n)}\right)$ and $O(\log(n))$ \cite{Chandra}. The upper bound was improved by Brodowsky and Hougardy for $\R^2$ to $O\left(\frac{\log(n)}{\log\log(n)}\right)$ which implies a tight asymptotic approximation ratio of $\Theta\left(\frac{\log(n)}{\log\log(n)}\right)$ for the 2-Opt algorithm \cite{brodowsky_et_al:LIPIcs.STACS.2021.18}. 

One of the best practical heuristics by Lin and Kernighan is based on $k$-Opt \cite{DBLP:journals/ior/LinK73}. The Lin-Kernighan algorithm, like the $k$-Opt algorithm, modifies the tour locally to obtain a new tour. Instead of letting all subsets of $k$ edges be eligible for replacement, which results in a high runtime for large $k$, it searches for specific changes: changes, where the edges to be added and deleted alternate in a closed walk, a so-called closed alternating walk. 
Since the Lin-Kernighan algorithm uses a superset of the modification rules of the 2-Opt algorithm, the same upper bound as for 2-Opt also applies. Apart from this, no other upper bound was known.

The \textsc{Graph TSP} is a special case of the \textsc{Metric TSP}. In this case an undirected unweighted graph is given and the distance between two vertices is the distance between them in the graph. Apart from the upper bounds for the \textsc{Metric TSP}, which also apply to the special case, only a lower bound of $2(1-\frac{1}{n})$ on the approximation ratio of the $k$-Opt algorithm is known so far: Rosenkrantz, Stearns and Lewis describe a \textsc{Metric TSP} instance with this ratio that is also a \textsc{Graph TSP} instance \cite{DBLP:journals/siamcomp/RosenkrantzSL77}. 

Another special case is the \textsc{(1,2)-TSP} where every distance is either one or two. The currently best approximation ratio for the \textsc{(1,2)-TSP} is achieved by the $k$-improv algorithm by Berman and Karpinski with an approximation ratio of $\frac{8}{7}$ \cite{DBLP:journals/eccc/ECCC-TR05-069}. The $k$-improv algorithm is an improved version of the $k$-Opt algorithm that is based on a local search approach. Adamaszek, Mnich and Paluch proposed another algorithm with approximation ratio $\frac{8}{7}$ \cite{DBLP:conf/icalp/AdamaszekMP18}. For the \textsc{(1,2)-TSP} it is known that the approximation ratio of the 2-Opt and 3-Opt algorithm are $\frac{3}{2}$ and $\frac{11}{8}$, respectively \cite{khanna1998syntactic,DBLP:journals/orl/Zhong21}.

Beyond the worst-case analysis there are also results about the average case behavior of the algorithm, for example the smoothed analysis of the 2-Opt algorithm by Englert, Röglin and Vöcking \cite{Englert2014}. In their model each vertex of the TSP instance is a random variable distributed in the $d$ dimensional unit cube by a given probability density function $f_i:[0,1]^d \to [0,\phi]$ bounded from above by a constant $1\leq \phi< \infty$ and the distances are given by the $p$-norm. They show that in this case the expected approximation ratio is bounded by $ O(\sqrt[d]{\phi})$ for all $p$. In the model where any instance is given in $[0,1]^d$ and perturbed by Gaussian noise with standard deviation $\sigma$ the approximation ratio was improved to $O(\log(\frac{1}{\sigma}))$ by Künnemann and Manthey \cite{kunnemann2015towards}.

\noindent\textbf{New results.}
For fixed $k\geq 3$, we show that the approximation ratio of the $k$-Opt algorithm is related to the extremal graph theoretic problem of maximizing the number of edges in a graph with fixed number of vertices and no short cycles. Let $\ex(n,2k)$ be the largest number of edges in a graph with $n$ vertices and girth at least $2k$, i.e.\ it contains no cycles with fewer than $2k$ edges. For instances with $n$ vertices we show for \textsc{Metric TSP} that: 

\begin{theorem} \label{intro upper}
For all fixed $k$ if $\ex(n,2k) = O(n^{c})$ for some $c>1$, the approximation ratio of $k$-Opt for \textsc{Metric TSP} is $O(n^{1-\frac{1}{c}})$ where $n$ is the number of vertices.
\end{theorem}

\begin{theorem} \label{intro lower}
For all fixed $k$ if $\ex(n,2k) = \Omega(n^{c})$ for some $c>1$, the approximation ratio of $k$-Opt for \textsc{Metric TSP} is $\Omega(n^{1-\frac{1}{c}})$ where $n$ is the number of vertices.
\end{theorem}

Using known upper bounds on $\ex(n,2k)$ in \cite{Alon} we can conclude:

\begin{corollary} \label{intro upper known}
The approximation ratio of $k$-Opt for \textsc{Metric TSP} is $O(\sqrt[k]{n})$ for all fixed $k$ where $n$ is the number of vertices.  
\end{corollary}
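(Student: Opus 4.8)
The plan is to read off the corollary from Theorem~\ref{intro upper} by plugging in the sharpest known unconditional bound on $\ex(n,2k)$. The relevant input is the Moore-type bound for irregular graphs due to Alon, Hoory and Linial \cite{Alon}: if an $n$-vertex graph has average degree $d$ and girth at least $2k$, then $n \ge 2\sum_{i=0}^{k-1}(d-1)^i$. For fixed $k$ the right-hand side is comparable to its last term $(d-1)^{k-1}$ up to a factor depending only on $k$, so this forces $d = O(n^{1/(k-1)})$ and therefore the edge count $nd/2$ satisfies $\ex(n,2k) = O\!\left(n^{1+\frac{1}{k-1}}\right)$.

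Next I would apply Theorem~\ref{intro upper} with $c := 1+\frac{1}{k-1} = \frac{k}{k-1}$. For every fixed $k \ge 2$ this $c$ is a constant strictly greater than $1$, and by the previous step $\ex(n,2k) \in O(n^c)$, so the hypothesis of the theorem is satisfied; it follows that the approximation ratio of $k$-Opt for \textsc{Metric TSP} is $O\!\left(n^{1-\frac{1}{c}}\right)$. A one-line computation gives $1 - \frac{1}{c} = 1 - \frac{k-1}{k} = \frac{1}{k}$, so the ratio is $O(n^{1/k}) = O(\sqrt[k]{n})$, which is the claim.

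I do not anticipate any genuine difficulty: the corollary is a direct substitution once the extremal bound is quoted, since all of the TSP-specific content is already packaged into Theorem~\ref{intro upper}. The only points to be careful about are (i) checking that the exponent $c$ produced by the Moore bound is strictly above $1$ so that Theorem~\ref{intro upper} is applicable, and (ii) correctly estimating the geometric-type sum $\sum_{i=0}^{k-1}(d-1)^i$ — but as $k$ is a fixed constant this sum is within a $k$-dependent factor of $(d-1)^{k-1}$, so it only affects the hidden constant in the $O$-notation.
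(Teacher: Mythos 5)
Your proposal is correct and matches the paper's own derivation: the paper likewise quotes the Alon--Hoory--Linial bound $\ex(n,2k)=O(n^{1+\frac{1}{k-1}})$ (Theorem \ref{exupper}) and substitutes $c=\frac{k}{k-1}>1$ into Theorem \ref{intro upper} (proved as Theorem \ref{upper bound}), obtaining $1-\frac{1}{c}=\frac{1}{k}$. The only cosmetic difference is that you re-derive the extremal bound from the irregular Moore bound, whereas the paper simply cites it.
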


If we further assume the Erd\H{o}s girth conjecture \cite{Erdos}, i.e.\ $\ex(n,2k) = \Theta(n^{1+\frac{1}{k-1}})$, we have:
\begin{corollary} \label{intro lower erdos}
Assuming the Erd\H{o}s girth conjecture, the approximation ratio of $k$-Opt for \textsc{Metric TSP} is $\Omega(\sqrt[k]{n})$ for all fixed $k$ where $n$ is the number of vertices.
\end{corollary}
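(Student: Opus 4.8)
The plan is to obtain this as an immediate specialization of Theorem~\ref{intro lower}. Under the Erd\H{o}s girth conjecture we have $\ex(n,2k)\in\Theta(n^{c})$ with $c=1+\frac{1}{k-1}=\frac{k}{k-1}$; in particular $\ex(n,2k)\in\Omega(n^{c})$, and since $k\geq 3$ we have $c>1$, so the hypothesis of Theorem~\ref{intro lower} is satisfied. It then only remains to simplify the resulting exponent,
\[
1-\frac{1}{c}=1-\frac{1}{1+\frac{1}{k-1}}=1-\frac{k-1}{k}=\frac{1}{k},
\]
so that $\Omega\bigl(n^{1-\frac{1}{c}}\bigr)=\Omega\bigl(n^{1/k}\bigr)=\Omega(\sqrt[k]{n})$, which is the claim.

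Two remarks on what the argument does and does not use. First, only the lower-bound half of the girth conjecture enters: Theorem~\ref{intro lower} requires $\ex(n,2k)\in\Omega(n^{c})$, i.e.\ the \emph{existence} of $n$-vertex graphs of girth at least $2k$ with $\Omega(n^{1+1/(k-1)})$ edges; the matching upper bound on $\ex(n,2k)$ is not needed here (that bound is instead what feeds Theorem~\ref{intro upper} and Corollary~\ref{intro upper known} to give the $O(\sqrt[k]{n})$ upper bound). Second, for $k=3,4,6$ this existence statement is known unconditionally via incidence graphs of generalized polygons, so for those $k$ the conclusion $\Omega(\sqrt[k]{n})$ holds without invoking the conjecture; this is the content of the unconditional part of the results.

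Since the statement is a direct consequence of Theorem~\ref{intro lower}, there is no genuine obstacle in the proof of the corollary itself: the only computation is the arithmetic of the exponent together with the check $c>1$. All the substance lies upstream, in Theorem~\ref{intro lower}, whose proof has to turn a dense high-girth graph into a \textsc{Metric TSP} instance possessing a $k$-Opt local optimum that is $\Omega(n^{1-1/c})$ times longer than an optimal tour.
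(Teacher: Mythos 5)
Your proposal is correct and matches the paper's intent exactly: the corollary is stated there as an immediate consequence of Theorem~\ref{intro lower}, obtained by plugging in $c=1+\frac{1}{k-1}$ from the girth conjecture and simplifying the exponent to $\frac{1}{k}$, which is precisely your computation. Your added observations (only the lower-bound half of the conjecture is used, and $k=3,4,6$ are unconditional) are consistent with the paper's Corollary~\ref{intro small k}.
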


Using known lower bounds on $\ex(n,2k)$ from \cite{erdHos1966problem, erdos1962problem, Brown1966, benson_1966, singleton, wenger,Lazebnik} we obtain:

\begin{corollary} \label{intro small k}
The approximation ratio of $k$-Opt for \textsc{Metric TSP} is $\Omega(\sqrt[k]{n})$ for $k=3,4,6$ and in $\Omega(n^{\frac{2}{3k-4+o}})$ for all fixed $k$ where $o=0$ if $k$ is even and $o=1$ if $k$ is odd and $n$ is the number of vertices.
\end{corollary}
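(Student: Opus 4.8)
The plan is to combine Theorem~\ref{intro lower} with the best available lower bounds on $\ex(n,2k)$ from extremal graph theory; beyond that theorem no new combinatorial input is needed, so the argument reduces to picking the right construction in each range of $k$ and tracking exponents. Concretely, Theorem~\ref{intro lower} tells us that to get an $\Omega(n^{1-1/c})$ lower bound on the approximation ratio of $k$-Opt it suffices to exhibit, for all sufficiently large $n$, an $n$-vertex graph of girth at least $2k$ with $\Omega(n^{c})$ edges. It is enough to do this for $n$ ranging over a sparse set (prime-power parameters), because $\ex(n,2k)$ is non-decreasing in $n$ and the gaps between consecutive prime powers are $o(q)$, so padding the best construction on $n_0\le n$ vertices, with $n_0=\Theta(n)$, by isolated vertices transfers the bound to all large $n$. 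Since $1-1/c$ increases with $c$, we want $c$ as large as the constructions allow.

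For $k\in\{3,4,6\}$ I would take the incidence graph of a thick generalized $k$-gon of order $(q,q)$: a projective plane for $k=3$ (girth $6$), a generalized quadrangle for $k=4$ (girth $8$), and a generalized hexagon for $k=6$ (girth $12$), all of which exist whenever $q$ is a prime power \cite{erdHos1966problem, erdos1962problem, Brown1966, benson_1966, singleton, wenger}. Such an incidence graph is $(q+1)$-regular on $n=\Theta(q^{k-1})$ vertices, hence has $\Theta(n^{1+\frac{1}{k-1}})$ edges, so $\ex(n,2k)\in\Omega(n^{c})$ with $c=\frac{k}{k-1}$. Then $1-\frac1c=\frac1k$, and Theorem~\ref{intro lower} gives approximation ratio $\Omega(n^{1/k})=\Omega(\sqrt[k]{n})$.

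For general fixed $k$ I would instead use the Lazebnik--Ustimenko--Woldar construction \cite{Lazebnik}: for every prime power $q$ it yields a $q$-regular graph of girth at least $2k$ on $n=\Theta\bigl(q^{(3k-6+\epsilon)/2}\bigr)$ vertices, where $\epsilon=0$ if $k$ is even and $\epsilon=1$ if $k$ is odd, and hence with $\Theta\bigl(n^{1+\frac{2}{3k-6+\epsilon}}\bigr)$ edges. Taking $c=1+\frac{2}{3k-6+\epsilon}=\frac{3k-4+\epsilon}{3k-6+\epsilon}$ in Theorem~\ref{intro lower} gives $1-\frac1c=\frac{2}{3k-4+\epsilon}$, i.e.\ approximation ratio $\Omega\bigl(n^{\frac{2}{3k-4+\epsilon}}\bigr)$. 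For $k\in\{3,4\}$ this coincides with the bound of the previous paragraph, while for $k=6$ the generalized-hexagon bound $\Omega(\sqrt[6]{n})$ is the stronger of the two.

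The only delicate part is the bookkeeping: checking that the cited extremal families really have the claimed vertex counts and degrees (so the edge-count exponents come out as stated), getting the parity term $\epsilon$ right, and carrying out the routine interpolation from prime-power orders to all $n$. I do not expect a genuine obstacle here, since Theorem~\ref{intro lower} already does the substantive work.
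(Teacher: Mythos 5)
Your proposal is correct and matches the paper's own proof: the paper obtains this corollary by plugging the known extremal bounds $\ex(n,2k)\in\Omega(n^{1+\frac{1}{k-1}})$ for $k=3,4,6$ (Theorem~\ref{omega small cases}) and $\ex(n,2k)\in\Omega(n^{1+\frac{2}{3k-6+\epsilon}})$ in general (Theorem~\ref{exlower}, the Lazebnik--Ustimenko--Woldar construction) into the general lower-bound theorem, exactly as you do, with the same exponent arithmetic $1-\frac1c=\frac1k$ resp.\ $\frac{2}{3k-4+\epsilon}$. Your extra discussion of the underlying generalized-polygon incidence graphs and of interpolating between prime-power parameters is just an unpacking of the cited extremal results and does not change the argument.
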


Comparing our upper and lower bounds we obtain:
\begin{theorem} \label{intro upper sharp}
Our most general upper bound on the approximation ratio of the $k$-Opt algorithm for \textsc{Metric TSP} depending on $\ex(n,2k)$ is tight up to a factor of $O(\log(n))$ where $n$ is the number of vertices.
\end{theorem}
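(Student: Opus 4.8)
The plan is to obtain the claim by comparing the general upper bound and the general lower bound on the approximation ratio of $k$-Opt that are established later in the paper: namely the refinements of Theorems~\ref{intro upper} and~\ref{intro lower} in which the bound is written directly in terms of the function $\ex(\cdot,2k)$, rather than under the hypothesis $\ex(n,2k)\in\Theta(n^{c})$. I expect the general upper bound to be proved by sorting the edges of a $k$-Opt local optimum into $O(\log n)$ classes of geometrically increasing length, showing that within each class the graph induced on the $n$ vertices has no short cycle -- otherwise the tour would admit an improving $k$-move -- so that each class contains at most $\ex(n,2k)$ edges, and then summing the per-class contributions; this summation is exactly where the single factor $O(\log n)$ is lost. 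The general lower bound, by contrast, should come from feeding one extremal graph of girth $2k$ into the construction behind Theorem~\ref{intro lower}, which realizes the per-class contribution once, with no logarithmic overhead. Hence, once both bounds are written as explicit functions of $\ex(\cdot,2k)$, dividing them should yield the asserted factor $O(\log n)$.

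The substantive step is to verify that the per-class quantity appearing in the upper bound and the quantity produced by the lower-bound gadget coincide up to constants; only then is the division meaningful. Both are ``inverse-extremal'' expressions -- a maximum over a scale parameter $s$ of a ratio involving $\ex(s,2k)$ -- but the ranges of maximization are a priori different (length classes of the tour versus admissible sizes of the gadget). To reconcile them I would use the elementary monotonicity properties of $\ex(n,2k)$: it is non-decreasing in $n$; taking disjoint copies of an extremal graph shows $\ex(an,2k)\ge a\,\ex(n,2k)$, so that $\ex(n,2k)/n$ is essentially non-decreasing; and $\ex(n,2k)\le\binom{n}{2}$ supplies the complementary trivial bound. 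These facts let one argue that the scale attaining the upper-bound maximum can be taken to be a feasible gadget size and that at that scale the two contributions agree up to a constant. I expect this matching of the two extremal expressions for the abstract function $\ex(\cdot,2k)$ -- not the final division -- to be the main obstacle, since there is no closed form available for $\ex(\cdot,2k)$ in general.

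Finally I would dispose of the extreme scales. The longest length class (edges comparable to the diameter of the instance) and the shortest classes do not fit the girth argument in its clean form, or make the gadget degenerate; I would bound their combined contribution to the upper bound separately and show it is $O$ of the per-class quantity, so that these scales are harmlessly absorbed into the $O(\log n)$ summation and do not distort the comparison with the lower bound. Combining the matched extremal quantities with the $O(\log n)$ overhead then shows that the general upper bound exceeds the general lower bound by at most a factor $O(\log n)$, which is the statement.
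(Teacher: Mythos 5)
Your overall strategy is the one the paper uses: write the upper bound as a sum of per-length-class terms (Corollary \ref{ex upper bound}), realize each per-class term as a lower bound by plugging a girth-$2k$ extremal graph of the matching scale into the gadget of Theorem \ref{chandra} (after extracting a dense Eulerian subgraph, Lemma \ref{subgraph}, and padding to $n$ vertices, Lemma \ref{monoton}), and observe that only the number of classes is lost. The paper's sole deviation from your plan is cosmetic: instead of taking the single maximizing class it mixes gadget instances over all classes $l<l^*$ (Lemma \ref{sample lower bound}), which gives the average rather than the maximum of the per-class terms; either choice yields the $O(\log n)$ comparison.

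However, three concrete points in your sketch are gaps. First, the per-class quantity is not ``at most $\ex(n,2k)$ edges in the graph on the $n$ vertices'' -- since $\ex(n,2k)\ge n$ that bound is vacuous and the comparison built on it collapses. The actual bound (Corollary \ref{ql bound}) is $4\,\ex\bigl(4(k-1)\lceil(\tfrac{4k-4}{4k-5})^{l}\rceil,2k\bigr)$, i.e.\ $\ex$ evaluated on a \emph{contracted} vertex set whose size grows geometrically in the class index $l$; it is exactly this scale dependence that makes the class-$l$ contribution to the tour length equal, up to the constant $4(k-1)$, to the gadget ratio $\ex(m,2k)/m$ at $m\approx(\tfrac{4k-4}{4k-5})^{l}$, so your ``matching of extremal expressions'' becomes immediate only once the correct per-class bound is used. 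Second, ``$O(\log n)$ classes'' is not automatic (edge lengths can be arbitrarily small); the relevant quantity is the truncation index $l^*$ defined by $\sum_{l\le l^*}4\,\ex(m_l,2k)\ge n$, and one must prove $l^*=O(\log n)$ -- the paper does this via the unconditional lower bound of Theorem \ref{exlower}, though even $\ex(m,2k)\ge m$ together with the geometric growth of the $m_l$ suffices; this step has to appear explicitly. Third, your toolkit (monotonicity, superadditivity $\ex(am,2k)\ge a\,\ex(m,2k)$, and $\ex\le\binom{n}{2}$) only produces \emph{lower} bounds under scaling, but to absorb the topmost class $l=l^*$ -- the one scale for which $\ex(m_{l^*},2k)$ need not be below $n$ and the gadget need not fit into $n$ vertices -- you need an upper ``doubling'' bound of the form $\ex(2x,2k)\le 6\,\ex(x,2k)$ (the paper's Lemma \ref{ex2x}, proved by a random bipartition). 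Without such a smoothing bound, nothing in your listed properties prevents the last term from dominating the whole sum, so ``absorbing the extreme scales'' is not a routine clean-up but requires this additional lemma.
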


The upper bounds can be carried over to a parameterized generalization of the Lin-Kernighan algorithm we will describe in detail later. In contrast to the original version of the algorithm proposed by Lin and Kernighan two parameters determine the depth that the algorithm searches for improvement.

\begin{theorem} \label{intro metric lin}
The same upper bounds from Theorem \ref{intro upper} of $O(n^{1-\frac{1}{c}})$ if $\ex(n,2k) = O(n^{c})$ and Theorem \ref{intro upper known} of $O(\sqrt[k]{n})$ hold for a parameterized generalization of the Lin-Kernighan algorithm with appropriate parameters.
\end{theorem}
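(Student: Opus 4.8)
The plan is to transfer the proof of Theorem~\ref{intro upper} to the Lin--Kernighan setting: I would show that every improving move that proof constructs is among the moves a parameterized Lin--Kernighan step may perform once the two parameters are fixed to constants depending only on~$k$, so that a local optimum $T$ of that algorithm is in particular free of those moves and the length bound follows from the identical counting argument (with the running time staying polynomial because the parameters are constant). The bridge is that a Lin--Kernighan step replaces, for a vertex sequence $t_1,\dots,t_{2i}$, the tour edges $(t_1,t_2),\dots,(t_{2i-1},t_{2i})$ by the non-tour edges $(t_2,t_3),\dots,(t_{2i},t_1)$ --- a closed alternating walk of length $2i$ that is moreover \emph{sequential}, meaning every prefix stays a Hamiltonian path so that closing up always yields a tour --- while a $k$-Opt move deletes and inserts at most $k$ edges each, so the relevant walks have length at most $2k$. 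Fixing the depth parameter to a suitable multiple of $k$ and the breadth parameter large enough that the search is not pruned before reaching such walks, the parameterized algorithm examines every sequential closed alternating walk of length at most $2k$ that yields a shorter tour; it thus suffices to show that every improving move in the proof of Theorem~\ref{intro upper} is of this kind.

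The step I expect to be the main obstacle is precisely this structural check. A short degree count shows that for any $k$-Opt move the multigraph formed by the deleted and the inserted edges is balanced at every vertex and hence decomposes into closed alternating walks; but connectedness of this multigraph alone does not make a move a Lin--Kernighan step --- the double-bridge $4$-Opt move is a single alternating cycle that is not sequential. So I would revisit the construction in the proof of Theorem~\ref{intro upper}, where the improving moves arise from short cycles in an auxiliary graph built from the long edges of $T$ and the edges of an optimum tour, and check that each such move unfolds as a sequential alternating walk --- intuitively it should, since these moves amount to reversing or relocating short sub-paths of the current tour rather than to genuinely non-sequential recombinations. Where a move does not come out in sequential form directly, the task is either to reorganize the same edge exchange as an equivalent sequential walk, or to widen the parameterized algorithm's move set (still controlled by the two parameters) so that it also contains the bounded non-sequential closures that occur; this is where the ``appropriate parameters'' get pinned down.

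Granting the structural check, the conclusion follows: a local optimum of the parameterized Lin--Kernighan algorithm has no improving move of the form produced in the proof of Theorem~\ref{intro upper}, so that proof, read with ``improving $k$-Opt move'' replaced by ``improving move of that form'', gives a ratio of $O(n^{1-1/c})$ when $\ex(n,2k)\in O(n^{c})$ and of $O(\sqrt[k]{n})$ via Corollary~\ref{intro upper known}. It remains only to note the routine points that the constant parameters keep each local-search step polynomial for fixed $k$ and that the moves guaranteed by the argument strictly shorten the tour, so that the algorithm indeed performs them.
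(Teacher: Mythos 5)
Your overall strategy is the one the paper uses: fix the two parameters to constants depending only on $k$ (the paper takes $p_1=2k-1$, $p_2=2k-4$, the ``$k$-Lin-Kernighan algorithm''), observe that a local optimum of this algorithm admits no improving closed alternating cycle with at most $2k$ edges, and then rerun the girth/counting argument behind Theorem~\ref{intro upper}. However, the step you single out as ``the main obstacle'' and then defer is precisely the substantive content of the paper's proof, and your proposal does not carry it out. As you note, a balanced-degree count only shows that a $k$-move decomposes into possibly several closed alternating walks, which is not enough. The paper instead arranges the construction so that the improving move extracted from a short cycle $C$ in $G_2^l$ is a \emph{single} improving alternating cycle of bounded length: Lemma~\ref{2k-opt} builds the tour $T'$ so that $T\triangle T'$ is one alternating cycle, and the induction in Lemma~\ref{k-opt} shows that each ``ambivalent 2-move'' replaces three consecutive edges (non-tour, tour, non-tour) of that cycle by one non-tour edge, so the single-cycle property survives every iteration; the final cycle has $h+1\le k$ tour edges, hence at most $2k$ edges, contradicting $k$-Lin-Kernighan optimality. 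Saying ``intuitively it should'' is exactly the gap: without this structural bookkeeping inside the construction, the transfer from $k$-Opt to Lin-Kernighan does not go through.

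Your framing in terms of sequentiality also points in a direction the paper deliberately avoids, and following it could make the check harder than necessary or even fail. With $p_2=2k-4$ the closure-feasibility test is switched off at the small even depths, and with backtracking up to depth $p_1=2k-1$ the output tour admits no improving alternating cycle with at most $2k$ edges at all, sequential or not (this is the unnumbered lemma right after the definition of the $k$-Lin-Kernighan algorithm; in particular non-sequential moves such as the double bridge are already excluded). So neither ``reorganizing the exchange as an equivalent sequential walk'' nor ``widening the move set'' is needed; what is needed, and what your proposal postpones, is the proof that the improving move produced by the girth argument is a single improving alternating cycle with at most $2k$ edges.
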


Although the Lin-Kernighan algorithm only considers special changes, namely changes by augmenting a closed alternating walk, we are able to show the same upper bound as for the general $k$-Opt algorithm. For the original version of Lin-Kernighan we get an improved upper bound of $O(\sqrt[3]{n})$. Our results solve two of the four open questions in \cite{Chandra}, namely: 
\begin{itemize}
\item Can the upper bounds given in \cite{Chandra} be generalized to the $k$-Opt algorithm, i.e.\ for increasing $k$ the performance guarantee improves? 
\item Can we show better upper bounds for the Lin-Kernighan algorithm than the upper bound obtained from the 2-Opt algorithm?
\end{itemize}

We also bound the approximation ratio of the $k$-Opt algorithm for \textsc{Graph TSP}.

\begin{theorem}\label{intro graph lower}
The approximation ratio of the $k$-Opt algorithm with fixed $k\geq 2$ for \textsc{Graph TSP} is $\Omega\left(\frac{\log(n)}{\log\log(n)}\right)$ where $n$ is the number of vertices.
\end{theorem}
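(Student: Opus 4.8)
The plan is to exhibit, for every fixed $k$ and every $n$, a \textsc{Graph TSP} instance together with a tour that is a $k$-Opt local optimum but whose length exceeds the optimum by a factor $\Omega\!\left(\frac{\log n}{\log\log n}\right)$. I would build the instance recursively. Start from a tiny ``gadget'' graph $G_0$ with a designated pair of \emph{ports} and a fixed cheap tour through it. Given $G_{i-1}$, form $G_i$ by taking $b$ disjoint copies of $G_{i-1}$ and joining all of them to a common connector structure --- a hub with $b$ attached paths of length $\ell_i$, one per copy --- and placing the two ports of $G_i$ on this structure. Here $b$ is a large constant depending only on $k$, and $\ell_i$ is chosen below. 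On $G_i$ there is a cheap ``canonical'' tour that visits the $b$ copies in the order in which their connector paths leave the hub, traversing the hub region only $O(b)$ times with total detour $O(b\,\ell_i)$; this has length $O(|V(G_i)|)$, so $\mathrm{OPT}(G_i)=\Theta(|V(G_i)|)$. There is also a ``bad'' tour $T_i$ that recursively uses $T_{i-1}$ inside every copy and visits the copies in a deliberately scrambled order, so that it crosses the hub region $\Theta(b)$ times and pays a \emph{penalty} of $\Theta(b\,\ell_i)$ at level $i$ on top of $b\cdot\mathrm{len}(T_{i-1})$.

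The second ingredient is the length accounting. Because the metric comes from an \emph{unweighted} graph, a tour edge of length $\ell$ must be paid for by $\ell$ vertices, so to make the level-$i$ penalty dominate the whole cost accumulated below it one needs $\ell_i=\Theta\!\left(i\,|V(G_{i-1})|\right)$. Then $|V(G_i)|=\Theta\!\left(b\,i\,|V(G_{i-1})|\right)$ and, by induction, $\mathrm{len}(T_i)\ge b\cdot\mathrm{len}(T_{i-1})+\Theta(b\,\ell_i)=\Theta(i)\cdot|V(G_i)|$. Unrolling $t$ levels gives $n=|V(G_t)|=\Theta\!\left(b^{t}\,t!\right)$, hence $\log n=\Theta(t\log t)$ and $t=\Theta\!\left(\frac{\log n}{\log\log n}\right)$, while $\mathrm{len}(T_t)=\Theta(t)\cdot n$ and $\mathrm{OPT}(G_t)=\Theta(n)$, which is the claimed ratio. (Restricting to suitable values of $n$, or padding with a cheap path, handles all $n$.)

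The heart of the argument, and the step I expect to be the main obstacle, is the \emph{local-optimality lemma}: no replacement of at most $k$ edges of $T_t$ gives a shorter tour. I would prove it by induction on the number of levels. Deleting $k$ edges of $T_t$ splits it into $\le k$ subpaths; all but at most $k$ of the top-level copies of $G_{t-1}$ are still entered and left by $T_t$ exactly as before and are dealt with by the inductive hypothesis, while the few affected copies together with the deleted hub/connector edges must be reconnected using only the $\le k$ new edges. Since $b\gg k$, these $k$ new edges cannot reorder the $b$ copies around the hub, and the essentially additive (tree-like) distances in the hub structure force any rerouting either to repeat the same expensive hub traversals or to buy fresh long edges; a case analysis on how the deleted edges distribute over the levels then shows the net change is nonnegative. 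The delicate points are (i) designing $G_0$ and the hub so these rigidity statements hold with no slack, (ii) making the canonical tour genuinely cheap so that $\mathrm{OPT}$ is as small as claimed, and (iii) arranging the bookkeeping so the level-$i$ penalty \emph{strictly} dominates everything below it --- and it is exactly this last point, in an unweighted graph, that forces $\ell_i=\Theta(i\,|V(G_{i-1})|)$ and hence the loss of a $\log\log n$ factor compared with the $\Theta(\log n)$ one might naively hope for.
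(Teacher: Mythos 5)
There is a fatal obstruction to your construction, independent of how carefully you carry out the case analysis in your local-optimality lemma. In \textsc{Graph TSP}, any $2$-optimal tour $T$ has length at most $2\lvert E(G)\rvert$: fix for each tour edge a shortest path in $G$; if the shortest paths of two tour edges shared a directed edge, replacing those two tour edges by the edge joining their tails and the edge joining their heads would be an improving $2$-move (this is exactly the observation the paper uses for its \emph{upper} bound for \textsc{Graph TSP}), so in a $2$-optimal tour the shortest paths are pairwise disjoint as sets of directed edges, and the total tour length is bounded by the number of directed edges of $G$. Your recursive instance is sparse: it is built from constant-size gadgets, hubs, and connector paths, so $\lvert E(G_t)\rvert=(1+o(1))\,\lvert V(G_t)\rvert$. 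Hence every $2$-optimal (a fortiori $k$-optimal, $k\ge 2$) tour on $G_t$ has length $O(n)$ while $\mathrm{OPT}\ge n$, giving ratio $O(1)$. In particular your bad tour of length $\Theta(t)\cdot n$ cannot be $2$-optimal; concretely, the ``scrambled'' hub crossings force two tour edges whose shortest paths reuse a spoke or hub segment in the same direction, which is an improving $2$-move, and once such reuse is forbidden the level-$i$ penalty collapses to a constant factor rather than an additive $\Theta(b\,\ell_i)$. So the gap is not merely that the rigidity lemma is unproven (points (i)--(iii) you flag); it is that no choice of gadget, hub, or bookkeeping can make this sparse hierarchical construction work.

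The lesson, and what the paper does instead, is that a ratio of $f$ forces average degree $\Omega(f)$ in the underlying graph, and one additionally needs large girth to kill improving $k$-moves. The paper takes a $2f$-regular graph $G$ of girth at least $2kf$ (such graphs exist with roughly $(2f)^{2kf}$ vertices, and the girth requirement is why $n$ must be exponential in $f$, i.e.\ $f=\Theta\bigl(\frac{\log n}{\log\log n}\bigr)$ --- this, not an $\ell_i=\Theta(i\,\lvert V(G_{i-1})\rvert)$ scaling, is the true source of the $\log\log$ loss). It builds the tour by taking an Eulerian walk, marking every $f$-th vertex (duplicating vertices that would be marked twice), and joining consecutive marked vertices; the tour has length $f\lvert V(G)\rvert$, the optimum is at most $4\lvert V(G)\rvert$ by the double-tree bound, and any improving $k$-move would project to a closed walk of length less than $2kf$ containing a cycle, contradicting the girth. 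Padding to arbitrary $n$ is then done by chaining copies along a path. If you want to salvage your write-up, you would have to replace the hub-and-spoke recursion by some dense high-girth structure of this kind, at which point you are essentially reproducing the paper's proof.
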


\begin{theorem}\label{intro graph upper}
The approximation ratio of the $2$-Opt algorithm for \textsc{Graph TSP} is $O\left(\left(\frac{\log(n)}{\log\log(n)}\right)^{\log_2(9)+\epsilon}\right)$ for all $\epsilon>0$ where $n$ is the number of vertices.
\end{theorem}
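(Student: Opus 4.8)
The plan is to bound the length of an arbitrary $2$-optimal tour $T$ of a connected graph $G$ on $n$ vertices. Since for \textsc{Graph TSP} one always has $\mathrm{OPT}\ge n$ (a tour has $n$ edges, each of length at least $1$), it suffices to show $\sum_{e\in T}d_G(e)=O\!\big(n\,(\log n/\log\log n)^{\log_2 9+\epsilon}\big)$; note also that every tour edge has length at most $\mathrm{diam}(G)\le n-1$. Following the standard scheme for estimates of this type (as in the plane case of \cite{Chandra}) I would split the tour edges according to length, but instead of the $\Theta(\log n)$ dyadic classes I would group them into macro-scales $[\,p^{\,s},p^{\,s+1})$ for a suitable $p=(\log n)^{\Theta(1)}$. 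Because the largest tour-edge length is at most $n$, there are only $\Theta(\log n/\log\log n)$ non-empty macro-scales, and $\sum_{e\in T}d_G(e)$ is within a constant factor of $\sum_s |E_s|\,p^{\,s}$, where $E_s$ collects the tour edges of the $s$-th macro-scale. The whole estimate then reduces to controlling the numbers $|E_s|$.

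The heart of the proof is a counting lemma for the edges of a fixed macro-scale. For any two tour edges $\{v_i,v_{i+1}\}$ and $\{v_j,v_{j+1}\}$ the $2$-opt inequality gives $d_G(v_i,v_j)+d_G(v_{i+1},v_{j+1})\ge d_G(v_i,v_{i+1})+d_G(v_j,v_{j+1})$, so if both edges have length at least $L$ then $d_G(v_i,v_j)+d_G(v_{i+1},v_{j+1})\ge 2L$; reading the ordered endpoint pairs as points of the product metric on $V(G)\times V(G)$, the $|E_s|$ edges of the scale with $L\approx p^{\,s}$ become points that are pairwise far apart there. I would combine this with the elementary fact that in a connected graph every ball of radius $r$ contains at least $r+1$ vertices — so a set of vertices pairwise at distance at least $L$ has size $O(n/L)$ — and, where the pairwise conditions are too weak, with $2$-opt moves that simultaneously touch three or more same-scale edges, to obtain a recursive bound in which $|E_s|$ is controlled by $|E_{s'}|$ for a coarser scale $s'>s$ at the cost of a bounded multiplicative blow-up. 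Iterating this relation is a divide-and-conquer over the $\Theta(\log n/\log\log n)$ scale indices, with $\Theta(\log\log n)$ levels and a per-level blow-up that one should be able to pin down to $9=3^2$ (a covering/Ramsey-type factor $3$, squared by the left/right-endpoint product structure of a tour edge); the total blow-up is then $9^{\log_2(\log n/\log\log n)}=(\log n/\log\log n)^{\log_2 9}$, with the remaining $\mathrm{poly}(\log\log n)$ losses absorbed into the exponent $+\epsilon$. The same information can also be packaged through the extremal-graph-theoretic lens of the earlier sections, by encoding the long edges of a scale as an auxiliary graph whose large girth is forced by $2$-opt and whose edge count is therefore capped by the relevant value of $\ex$.

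Assembling the pieces, $\sum_{e\in T}d_G(e)\le O(1)\cdot\sum_s |E_s|\,p^{\,s}\le n\cdot(\log n/\log\log n)^{\log_2 9+\epsilon}$, and dividing by $\mathrm{OPT}\ge n$ gives the claimed bound. The main obstacle I anticipate is precisely the counting lemma: the pairwise $2$-opt inequalities on their own are consistent with linearly many same-scale edges, so one must genuinely exploit either multi-edge $2$-opt moves or the way the long edges cut the tour into arcs whose far-apart endpoints can be played off against the $O(n/L)$ packing bound; carrying this out so that the per-level blow-up is exactly $9$ — and choosing the macro-scale threshold $p$ so that the number of divide-and-conquer levels is $\Theta(\log\log n)$ — is where the technical weight of the argument lies.
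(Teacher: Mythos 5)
There is a genuine gap: the statement you yourself identify as "the heart of the proof" --- the counting lemma that controls the number of same-scale tour edges with a bounded per-level blow-up --- is never proved, and the tools you propose for it cannot deliver it. A $2$-move by definition replaces exactly two tour edges, so "$2$-opt moves that simultaneously touch three or more same-scale edges" are not available when the tour is only assumed $2$-optimal; the pairwise $2$-opt inequality together with the packing bound $O(n/L)$ gives only a linear-in-$1/L$ saving, which (as you concede) is consistent with $\Theta(n)$ edges per scale and yields at best an $O(\sqrt{n})$-type bound; and rerouting through the girth/$\ex$ machinery of the metric sections does not help either, because $\ex(m,2k)$ is polynomial in the number of vertices $m$ of the auxiliary graph, so one only gets polylogarithmic bounds if the auxiliary graph lives on a drastically smaller vertex set --- and producing that smaller vertex set is exactly the step that is missing. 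The guess that the constant $9$ arises as $3^2$ from a "covering factor squared by the endpoint product structure" is also unsupported; nothing in your outline forces that constant, and the final assembly $\sum_s |E_s| p^s \le n(\log n/\log\log n)^{\log_2 9+\epsilon}$ is asserted rather than derived.

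For comparison, the paper does not classify tour edges by length at all. It fixes a shortest path for every tour edge, cuts these paths into \emph{subedges} of length exactly $l_i=9^i$, and builds directed multigraphs $G_i$ on iteratively contracted vertex sets; $2$-optimality forces each $G_i$ to be simple (a parallel edge or self-loop would give an improving $2$-move via Lemma \ref{subedge non optimal}), so $G_i$ has at least $s=n(f^\epsilon-1)$ edges for all $i\le\log_9(f^{1-\epsilon})$, where $f$ is the average tour-edge length. The level-to-level mechanism is the edge-decomposition theorem (Theorem \ref{graph decomposition}) into parts of diameter at most $4$ plus a small leftover: contracting a diameter-$4$ part multiplies distances by at most $9=4+5$ (Lemma \ref{dist vertices}), which is where $9$ really comes from, and choosing the decomposition parameter $\epsilon_i\sim s/n_i^2$ makes the number of active vertices decay \emph{doubly} exponentially, $n_{i+1}\lesssim 2^i n_i^2/s$. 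Since at least one active vertex and at least $s/2^i$ edges survive after $\lfloor\log_9(f^{1-\epsilon})\rfloor$ levels, one gets $n\ge (d_1 f^\epsilon)^{f^{(1-\epsilon)/\log_2 9-o(1)}}$, which inverts to $f\in O\bigl((\log n/\log\log n)^{\log_2 9+\epsilon'}\bigr)$. Your high-level arithmetic ($\Theta(\log\log n)$ levels, factor $9$ per level) matches the shape of this argument, but the squaring mechanism that makes each level pay off --- and hence the entire quantitative content of the theorem --- is absent from your proposal.
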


Note that the same upper bound also applies to the $k$-Opt algorithm and the Lin-Kernighan algorithm since they produce 2-optimal tours. Hence, up to a constant factor of at most $\log_2(9)$ in the exponent the $k$-Opt algorithm does not achieve asymptotically better performance than the 2-Opt algorithm in contrast to the metric case.

Furthermore, we show a lower bound on the approximation ratio of the $k$-Opt and $k$-improv algorithm for the \textsc{(1,2)-TSP}.

\begin{theorem} \label{lower Bound k-improv}
The $k$-Opt and $k$-improv algorithm with arbitrary fixed $k$ have an approximation ratio of at least $\frac{11}{10}$ for the \textsc{(1,2)-TSP}.
\end{theorem}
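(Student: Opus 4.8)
The plan is to prove this lower bound in the way that is standard for local-search heuristics: exhibit, for infinitely many $n$, a \textsc{(1,2)-TSP} instance on $n$ vertices together with a tour $T$ that is simultaneously a local optimum for the $k$-Opt and for the $k$-improv algorithm and satisfies $\mathrm{cost}(T)\ge \frac{11}{10}\,\mathrm{OPT}$. Since both algorithms may be started from an arbitrary initial tour, feeding them $T$ makes them terminate immediately and output a tour of ratio at least $\frac{11}{10}$. Throughout I encode an instance by the graph $G$ of its distance-$1$ edges, so that the cost of a tour equals $n$ plus the number of its edges outside $G$, and $\mathrm{OPT}$ equals $n$ plus the minimum number of non-$G$ edges in a Hamiltonian cycle.

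For the construction I would take a fixed small gadget and place $\Theta(n)$ copies of it around a cycle, joined by distance-$1$ edges, but with a global ``twist'': the graph $G$ admits a Hamiltonian cycle $H$ using only distance-$1$ edges (so $\mathrm{OPT}=n$), while $T$ is routed so that $H\,\triangle\,T$ is essentially a single long alternating cycle wrapping once around the whole instance, and so that $T$ is forced to spend one distance-$2$ edge for every $\approx 10$ vertices, giving $\mathrm{cost}(T)\ge\frac{11}{10}n$. The exact value $\frac{11}{10}$ would come out of optimizing the period of the gadget. The point of the twist is rigidity: any bounded rerouting that removes a distance-$2$ edge of $T$ only reconnects a short arc of this long alternating cycle, and closing the arc back into a Hamiltonian cycle forces the new ``seam'' edge to again be a distance-$2$ edge. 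I would additionally keep the distance-$1$ graph $G$ beyond the edges of $T$ as sparse and rigid as possible — each vertex incident to only $O(1)$ spare distance-$1$ edges, arranged so that no short alternating cycle through them gains — which blocks the ``shortcut'' moves that do not go along the twist.

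To prove $k$-optimality, note that a $k$-Opt move removes a set $A\subseteq T$ of $j\le k$ edges and inserts $j$ new edges forming a Hamiltonian cycle; for distances in $\{1,2\}$ its gain equals $|A\cap E_2|-|B\cap E_2|$, where $B$ is the inserted set and $E_2$ the set of distance-$2$ edges, so an improving move must insert strictly more distance-$1$ edges than it deletes, i.e.\ net-convert distance-$2$ edges of $T$ into spare distance-$1$ edges. I would decompose the symmetric difference of the move with $T$ into alternating cycles of total $T$-weight at most $k$ and, using that the gadget size is chosen larger than a fixed constant times $k$, argue that each such alternating cycle either lies inside $O(1)$ gadget copies — where the sparsity and rigidity of the spare distance-$1$ edges force it to insert at least as many distance-$2$ edges as it removes — or follows $H\,\triangle\,T$ for a stretch, where closing it off again costs a distance-$2$ edge. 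In both cases the gain is nonpositive, so $T$ is $k$-optimal.

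The main obstacle, requiring the most care, is $k$-improv-optimality, because the $k$-improv algorithm of Berman and Karpinski has a strictly richer move set than $k$-Opt: it manipulates an auxiliary collection of path segments before patching back to a tour, so ``no improving $k$-Opt move'' does not immediately yield ``no improving $k$-improv move''. I would handle this by recalling precisely which configurations and moves the $k$-improv algorithm considers (as defined later in the paper), observing that such a move still only rearranges $O(k)$ edges and hence only a bounded window of adjacent gadget copies, and then carrying out the finite — but $k$-dependent — case analysis over all such moves within that window, again exploiting the rigidity of the twist and the sparsity of $G$ to show each move is non-improving. Pinning this case analysis down while keeping the gadget a fixed function of $k$ as $n\to\infty$ is where the real work lies; the ratio computation and the $k$-Opt part are comparatively routine.
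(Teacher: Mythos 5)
Your proposal is a plan rather than a proof: everything that carries the mathematical content of the theorem --- the concrete gadget, the ``twist'', the derivation of the constant $\frac{11}{10}$, and above all the verification that the constructed tour is $k$-optimal and $k$-improv-optimal --- is left open (you yourself locate ``the real work'' there). In the paper these are exactly the nontrivial steps: a specific $10$-vertex gadget $S$ is placed at the vertices of a $4$-regular bipartite graph $G_1$ of girth $g>2k$ (not around a single cycle), the cost-$1$ graph is kept $2$-regular plus one extra edge per gadget, so that the optimum is at most $10s\left(1+\frac{1}{g}\right)$ while the prescribed tour costs $11s$, and $k$-improv-optimality is proved by a global counting argument: the symmetric difference with any improving move is decomposed into alternating paths and cycles, cycles are excluded by the girth of $G_1$, the paths are classified into cycle-creating and cycle-removing ones, the latter are mapped to paths in an auxiliary graph $G_P$ that is shown to be a forest (again via the girth), and a count of components and cycles yields the contradiction; $k$-optimality of the tour is then deduced from $2k$-improv-optimality rather than argued separately. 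Nothing of comparable substance appears in your sketch, and the assertion that $\frac{11}{10}$ ``comes out of optimizing the period of the gadget'' presupposes precisely the construction you have not given.

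Moreover, the one concrete reduction you do propose --- that a $k$-Opt or $k$-improv move ``only rearranges $O(k)$ edges and hence only a bounded window of adjacent gadget copies'', so optimality can be checked by a finite case analysis inside that window --- is false. In a $(1,2)$-instance every pair of vertices is joined by an edge of cost at most $2$, so an improving $k$-move may insert edges between arbitrarily distant gadgets, and a $k$-improv move may add cost-$1$ edges anywhere; its improvement criterion (fewer connected components, or equally many components with more cycles or fewer singletons) is a global count, so scattered partial modifications interact and cannot be dismissed window by window. This is exactly why the paper's argument is global: girth larger than $2k$ in the $4$-regular backbone kills alternating cycles spanning several gadgets, and the $G_P$ bookkeeping handles cycle-removing paths wherever they lie. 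Your architectural choices also work against you: insisting on a cost-$1$ Hamiltonian cycle $H$ (to force $\mathrm{OPT}=n$) threads spare cost-$1$ edges through every vertex, which both destroys the sparsity/``rigidity'' you invoke and enlarges the set of candidate improving moves you must exclude; the paper instead accepts $\mathrm{OPT}\le 10s\left(1+\frac{1}{g}\right)$ and chooses $g$ large, obtaining ratio $\frac{11}{10}-\epsilon$ for every $\epsilon>0$, which suffices for the stated bound. As it stands, completing your proposal would require reconstructing essentially such a gadget-plus-high-girth-backbone instance together with a global optimality argument, so the gap is substantial.
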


\noindent\textbf{Outline of the paper.}
First, we sum up some previous results by Chandra, Karloff and Tovey and results from extremal graph theory we need for this paper. In Section~\ref{sec outline of analysis} we give an outline of the main ideas of the analysis of the upper bounds. 

In Section~\ref{sec lower bound metric tsp} we improve the existing lower bound for the \textsc{Metric TSP} by weakening the condition for the existing construction of bad instances given in \cite{Chandra}. In Section~\ref{sec upper bound for metric tsp} we prove the upper bound of the approximation ratio for \textsc{Metric TSP}. For that we assume that the optimal tour and the output of the $k$-Opt or Lin-Kernighan algorithm with the largest approximation ratio are given. Our aim is to show that the output of the algorithm does not have too many long edges compared to the optimal tour. To achieve this, we first divide the edges into length classes, such that the longest edge from each class is at most a constant times longer than the shortest. Then, we construct with help of the optimal tour a graph containing at least $\frac{1}{4}$ of the edges in a length class. We show that this graph has a high girth and use results from extremal graph theory to bound the number of its edges, which implies that the length class does not contain too many edges. In Section~\ref{sec comparing lower and upper bound} we compare the lower and upper bound we got from the previous sections and show that they differ asymptotically only by a logarithmic factor even if the exact behavior of $\ex(n,2k)$ is unknown.

Then, in Sections~\ref{sec lower bound graph tsp} and \ref{sec upper bound graph tsp} we give lower and upper bounds on the $k$-Opt algorithm for \textsc{Graph TSP}. For the lower bound, we construct an instance and a $k$-optimal tour with the appropriate approximation factor, again using results from extremal graph theory. To show the upper bound, starting with a worst-case instance we iteratively decompose the current graph into smaller graphs with small diameter and contract these smaller graphs into single vertices. We show that a certain subset of the vertices, the so-called active vertices, shrinks by a factor exponential in the approximation ratio after a sufficient number of iterations. Moreover, we show that after that many iterations we still have at least one active vertex. We conclude that the number of active vertices and hence the number of vertices in the beginning depends exponentially on the approximation ratio.

Finally, in Section \ref{sec 1,2 TSP lower} we give a lower bound of $\frac{11}{10}$ on the approximation ratio of the $k$-improv and $k$-Opt algorithm for the \textsc{(1,2)-TSP} and arbitrary fixed $k$. For every fixed $k$ we construct a family of \textsc{(1,2)-TSP} instances together with $k$-optimal tours from regular graphs with high girth. We show that the ratios of the length of the constructed tours to that of the optimal tours converge to $\frac{11}{10}$.

\subsection{Preliminaries}
\subsubsection{TSP}
An instance of \textsc{Metric TSP} is given by a complete weighted graph $(K_n,c)$ where the costs are non-negative and satisfy the triangle inequality: $c(\{x,z\})+c(\{z,y\})\geq c(\{x,y\})$ for all $x,y,z\in V(K_n)$. A \emph{cycle} is a closed walk that visits every vertex at most once. A \emph{tour} is a cycle that visits every vertex exactly once. For a tour $T$, let the \emph{length} of the tour be defined as $c(T):=\sum_{e\in T}c(e)$. The task is to find a tour of minimal length. We fix an \emph{orientation} of the tour, i.e.\ we consider the edges of the tour as directed edges such that the tour is a directed cycle. From now on, let $n$ denote the number of vertices of the instance.

\textsc{Graph TSP} is a special case of the \textsc{Metric TSP}. Each instance arises from an unweighted, undirected connected graph $G$. To construct a TSP instance $(K_n,c)$, we set $V(K_n)=V(G)$. The cost $c(\{u,v\})$ of the edge connecting any two vertices $u,v\in V(G)$ is given by the length of the shortest $u$-$v$-path in $G$.

For the \textsc{(1,2)-TSP} the distances between the vertices are restricted to be equal to 1 or 2. Note that this variant of the TSP is metric since three edges of length 1 or 2 always satisfy the triangle inequality.

An algorithm $A$ for the traveling salesman problem has \emph{approximation ratio}
$\alpha(n)\ge 1$ if for every TSP instance with $n$ vertices it finds a tour that is at most $\alpha(n)$ times as long as a shortest tour and this ratio is achieved by an instance for every $n$. Note that we require here the sharpness of the approximation ratio, deviating from the standard definition in the literature to express the approximation ratio in terms of the Landau symbols. Nevertheless, the results also hold for the standard definition with more complicated notation.

\subsubsection{$k$-Opt Algorithm}
A \emph{$k$-move} replaces at most $k$ edges of a given tour by other edges to obtain a new tour. It is called \emph{improving} if the resulting tour is shorter than the original one. A tour is called \emph{$k$-optimal} if there is no improving $k$-move.
\begin{algorithm}[H]
\caption{$k$-Opt Algorithm}
 \hspace*{\algorithmicindent} \textbf{Input:} Instance of \textsc{TSP} $(K_n,c)$ \\
 \hspace*{\algorithmicindent} \textbf{Output:} Tour $T$
\begin{algorithmic}[1]
\State Start with an arbitrary tour $T$
\While{$\exists$ improving $k$-move for $T$}  
\State Perform an improving $k$-move on $T$
\EndWhile
\end{algorithmic}
\end{algorithm}

For the 2-Opt algorithm recall the following well-known fact: Given a tour $T$ with a fixed orientation, it stays connected if we replace two edges of $T$ by the edge connecting their heads and the edge connecting their tails, i.e.\ if we replace edges $(a,b),(c,d)\in T$ by $(a,c)$ and $(b,d)$.

\subsubsection{$k$-Improv Algorithm} \label{subsec k-improv}
In this section we describe the $k$-improv algorithm, introduced by Berman and Karpinski in \cite{DBLP:journals/eccc/ECCC-TR05-069}, which is an improved version of the $k$-Opt algorithm for the \textsc{(1,2)-TSP}. In the same paper it was shown that this algorithm has an approximation ratio of $\frac{8}{7}$ for $k=15$ which is the currently best approximation ratio for the \textsc{(1,2)-TSP}. 

A \emph{2-matching} is the union of disjoint paths and cycles. A \emph{$k$-improv-move} deletes and adds a total of at most $k$ edges of a 2-matching to obtain a new 2-matching. In contrast to a $k$-move the number of removed and added edges do not have to be equal. A $k$-improv-move is called \emph{improving} if the result $\widetilde{T'}$ after performing the $k$-improv-move on $\widetilde{T}$ satisfies the following conditions:
\begin{enumerate}
\item $\widetilde{T'}$ only contains edges with cost 1.
\item One of the following properties hold:
   \begin{itemize}
     \item $\widetilde{T'}$ contains fewer connected components than $\widetilde{T}$.
     \item $\widetilde{T'}$ contains the same number of connected components as $\widetilde{T}$, but more cycles than $\widetilde{T}$.
     \item $\widetilde{T'}$ contains the same number of connected components and cycles as $\widetilde{T}$, but fewer \emph{singletons}, i.e.\ vertices with degree 0, than $\widetilde{T}$. 
   \end{itemize}
\end{enumerate}

The algorithm is also a local search algorithm. It starts with an arbitrary tour and removes all edges with cost 2 to obtain a 2-matching consisting of edges with cost 1. During each iteration of the algorithm we perform an improving $k$-improv-move. Note that this way we maintain a 2-matching in every iteration. We call a 2-matching \emph{$k$-improv-optimal} if there are no improving $k$-improv-moves. If this is the case, we remove an arbitrary edge from every cycle in $\widetilde{T}$ and after that connect the paths in $\widetilde{T}$ arbitrarily to a tour $T$ (Algorithm \ref{k improv algorithm}).

\begin{algorithm}
\caption{$k$-Improv Algorithm}
\label{k improv algorithm}
 \hspace*{\algorithmicindent} \textbf{Input:} Instance of (1,2)-\textsc{TSP} $(K_n,c)$ \\
 \hspace*{\algorithmicindent} \textbf{Output:} Tour $T$
\begin{algorithmic}[1]
\State Start with an arbitrary tour $T$
\State Let $\widetilde{T}$ be the 2-matching we obtain by removing all edges of cost 2 from $T$
\While{$\exists$ improving $k$-improv-move for $\widetilde{T}$}  
\State Perform an improving $k$-improv-move on $\widetilde{T}$
\EndWhile
\State Remove an arbitrary edge from each cycle in $\widetilde{T}$
\State Connect the paths in $\widetilde{T}$ arbitrarily to a tour $T$
\State \Return $T$
\end{algorithmic}
\end{algorithm}

The $k$-improv algorithm for fixed $k$ runs in polynomial time as shown in \cite{DBLP:journals/eccc/ECCC-TR05-069}.

\subsubsection{Lin-Kernighan Algorithm}
We analyze a parameterized generalization of the Lin-Kernighan algorithm described in Section 21.3 of \cite{Korte:2007:COT:1564997}. In this version two parameters $p_1$ and $p_2$ specify the depth the algorithm is searching for improvement.

An \emph{alternating walk} of a tour $T$ is a walk starting with an edge in $T$ where exactly one of two consecutive edges is in $T$. An edge of the alternating walk is called \emph{tour edge} if it is contained in $T$, otherwise it is called \emph{non-tour edge}. A \emph{closed alternating walk} and \emph{alternating cycle} are alternating walks whose edges form a closed walk and cycle, respectively. The symmetrical difference of two sets $A$ and $B$ is the set $A\triangle B:=(A\cup B)\backslash (A\cap B)$. When we \emph{augment} $T$ by an augmenting cycle $C$ we get the result $T\triangle C$. Moreover, $C$ of a tour $T$ is called \emph{improving} if $T\triangle C$ is a shorter tour than $T$. By $(x_1,x_2,\dots,x_j):=\cup_{i=1}^{j-1}(x_i,x_{i+1})$, we denote the walk that visits the vertices $x_1,x_2\dots, x_j$ in this order. We define the \emph{gain} $g$ of an alternating walk by 
\begin{align*}
g((x_0,x_1,\dots, x_{2m})):=\sum_{i=0}^{m-1}c(x_{2i},x_{2i+1})-c(x_{2i+1},x_{2i+2}).
\end{align*}

An alternating walk $(x_0,x_1,\dots, x_{2m})$ is \emph{proper} if $g((x_0,x_1,\dots, x_{2i}))>0$ for all $i\leq m$. 

The following theorem by Lin and Kernighan allows performance improvements of the Lin-Kernighan algorithm by only looking for proper alternating walks without changing the quality of the result.

\begin{theorem}[\cite{DBLP:journals/ior/LinK73}]
For every improving closed alternating walk $P$ there exists a proper closed alternating walk $Q$ with $E(P)=E(Q)$.
\end{theorem}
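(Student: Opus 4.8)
The plan is to prove this by the classical Lin-Kernighan device of rotating the closed walk to a good starting point. Write $P=(x_0,x_1,\dots,x_{2m})$ with $x_{2m}=x_0$, so that by the alternation property $(x_{2i},x_{2i+1})$ is a tour edge and $(x_{2i+1},x_{2i+2})$ a non-tour edge for $0\le i\le m-1$; the only feature of ``improving'' I will use is that the total gain satisfies $g(P)=g((x_0,\dots,x_{2m}))>0$. For $0\le i\le m$ set $g_i:=g((x_0,\dots,x_{2i}))=\sum_{j=0}^{i-1}\bigl(c(x_{2j},x_{2j+1})-c(x_{2j+1},x_{2j+2})\bigr)$, so $g_0=0$ and $g_m=g(P)>0$. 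The first thing I would record is that shifting the start of $P$ to an even-indexed vertex $x_{2i}$ yields the closed walk $(x_{2i},x_{2i+1},\dots,x_{2m},x_1,x_2,\dots,x_{2i})$, which is again a closed \emph{alternating} walk on the same (multi)set of edges and with the same total gain: the edge $x_{2m}x_1$ that closes it is exactly the tour edge $x_0x_1$, so the tour/non-tour alternation is preserved, and permuting the summands of $g$ cyclically leaves the value unchanged.

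I would then choose the shift whose start sits at the \emph{last} minimizer of the partial gains. Precisely, let $i^*$ be the largest index in $\{0,1,\dots,m\}$ with $g_{i^*}=\min_{0\le i\le m}g_i$, and let $Q$ be the shift of $P$ starting at $x_{2i^*}$. Since $g_0=0<g_m$ we have $g_{i^*}\le 0$ and $i^*\le m-1$; moreover $g_j\ge g_{i^*}$ for every $j$, and $g_j>g_{i^*}$ for every $j$ with $i^*<j\le m$ by maximality of $i^*$. It remains to check that every nonempty prefix of $Q$ has positive gain. Let $h_\ell$ be the gain of the length-$2\ell$ prefix of $Q$. For $1\le\ell\le m-i^*$ the prefix stays inside $x_{2i^*},\dots,x_{2m}$, so $h_\ell=g_{i^*+\ell}-g_{i^*}>0$ by the choice of $i^*$; for $m-i^*<\ell\le m$ it wraps past the seam, giving $h_\ell=(g_m-g_{i^*})+g_{\ell-(m-i^*)}=g(P)+\bigl(g_{\ell-(m-i^*)}-g_{i^*}\bigr)\ge g(P)>0$. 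Hence $Q$ is proper, and $E(Q)=E(P)$ by construction, which is exactly the claim.

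The only genuine subtlety---the point I would check most carefully---is the \emph{strictness} of these prefix inequalities. Choosing an arbitrary minimizer of $g_i$ only gives $h_\ell\ge 0$, since a prefix of $Q$ ending at another minimizing index would have gain exactly $0$; taking the \emph{last} minimizer is precisely what upgrades $g_j\ge g_{i^*}$ to $g_j>g_{i^*}$ for $j>i^*$ and removes this obstruction. Everything else is routine: the bookkeeping that an even cyclic shift of a \emph{closed} alternating walk is again a closed alternating walk with the same edges and gain (this is where ``closed'' and the parity of the shift enter), and the elementary identity above expressing $h_\ell$ in terms of the $g_i$'s.
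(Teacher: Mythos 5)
Your proof is correct. The paper does not prove this statement itself but cites it from Lin and Kernighan's original paper, and your argument --- cyclically rotating the closed walk by an even shift to start at the \emph{last} minimizer $i^*$ of the partial gains, so that prefixes before the seam have gain $g_{i^*+\ell}-g_{i^*}>0$ and prefixes wrapping past the seam have gain at least $g(P)>0$ --- is exactly the classical argument, with the one genuine subtlety (strictness, which fails for an arbitrary minimizer but holds for the last one) identified and handled correctly.
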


Now, we state the generalized version of the Lin-Kernighan algorithm with parameters $p_1$ and $p_2$:
\begin{itemize}
\item The algorithm starts with an arbitrary tour and searches for an improving closed alternating walk in every iteration by a depth-first search. 
\item At depth zero the list of \emph{candidate vertices} consists of all vertices of the instance. 
\item At each depth it chooses a vertex from the list of candidate vertices, computes the list of candidate vertices for the next depth and increases the depth. 
\item The list of candidate vertices consists of all vertices forming with the vertices already chosen in previous iterations an alternating walk with a positive gain. 
\item At each depth it checks if connecting the endpoints of the alternating walk results in an improving closed alternating walk. 
\item When the depth is higher than $p_2$ and even, we further require the candidate vertices satisfying the following condition: After choosing any candidate vertex in the next iteration and connecting the endpoints of the resulting alternating walk, we get an improving closed alternating walk. 
\item When no candidates are available at depth $i$ anymore, it backtracks to the depth $\min\{p_1,i-1\}$ and chooses the next candidate at that depth. 
\item It terminates if no improving closed alternating walk is found. Otherwise, it improves the current tour by augmenting the improving closed alternating walk with the highest gain it found and repeats the process.
\end{itemize}

\begin{algorithm}[H]
\caption{Lin-Kernighan Algorithm}
 \hspace*{\algorithmicindent} \textbf{Input:} Instance of \textsc{TSP} $(K_n,c)$, Parameters $p_1, p_2\in \N$ \\
 \hspace*{\algorithmicindent} \textbf{Output:} Tour $T$
\begin{algorithmic}[1]
\State Start with an arbitrary tour $T$
\State Set $X_0:=V(K_n), i:=0$ and $g^*:=0$
\While{$i\geq 0$}
\If {$X_i=\emptyset$}
\If{$g^*>0$}
\State Set $T:=T\triangle P^*$
\State Set $X_0:=V(K_n), i:=0$ and $g^*:=0$
\Else
\State Set $i:=\min\{i-1,p_1\}$
\EndIf
\Else
\State Choose $x_i\in X_i$, set $X_i:=X_i\backslash \{x_i\}$
\State Set $P:=(x_0,x_1,\dots, x_i)$
\If {$i$ is odd}
\If{$i\geq 3, T \triangle (P\cup(x_i,x_0)) \text{ is a tour}, g(P \cup(x_i,x_0))>g^*$}
\State Set $P^*:=P\cup(x_i,x_0)$ and $g^*:=g(P^*)$
\EndIf
\State 
\begin{varwidth}[t]{\linewidth}
Set $X_{i+1}:= \{x\in V(K_n) \backslash\{x_0,x_i\}: \{x,x_0\}\not\in T\cup P,$ \par
\hskip\algorithmicindent $T\triangle (P\cup(x_i,x,x_0)) \text{ is a tour}, g(P\cup (x_i,x))>g^*\}$
\end{varwidth}
\EndIf
\If{$i$ is even}
\If{$i \leq p_2$}
\State Set $X_{i+1}:=\{x\in V(K_n):\{x_i,x\}\in T\backslash P\}$
\Else
\State 
\begin{varwidth}[t]{\linewidth}
Set $X_{i+1}:=\{x\in V(K_n):\{x_i,x\}\in T\backslash P,$ \par
\hskip\algorithmicindent $\{x,x_0\}\not\in T\cup P, T\triangle (P\cup(x_i,x,x_0)) \text{ is a tour}\}$
\end{varwidth}
\EndIf
\EndIf
\State $i:=i+1$
\EndIf
\EndWhile
\end{algorithmic}
\end{algorithm}

In the original paper Lin and Kernighan described the algorithm with fixed parameters $p_1=5,p_2=2$. 

\begin{definition}
We call the Lin-Kernighan algorithm with parameter $p_1= 2k-1$ and $p_2= 2k-4$ the \emph{$k$-Lin-Kernighan algorithm}. A tour is \emph{$k$-Lin-Kernighan optimal} if it is the output of the $k$-Lin-Kernighan algorithm for some initial tour. 
\end{definition}
Note that the original version of the Lin-Kernighan algorithm is the 3-Lin-Kernighan algorithm. By the description of the algorithm, it is easy to see that all local changes of the Lin-Kernighan algorithm are augmentations of an improving closed alternating walk and: 

\begin{lemma}
The length of any improving alternating cycle in a $k$-Lin-Kernighan optimal tour is at least $2k+1$.
\end{lemma}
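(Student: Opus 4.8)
The plan is to prove the contrapositive: if a tour $T$ admits an improving alternating cycle $C$ of length at most $2k$, then $T$ is not the output of the $k$-Lin-Kernighan algorithm on any initial tour, because the search the algorithm performs while its current tour is $T$ would make $g^\ast$ positive before it could terminate. Write the length of $C$ as $2m$; since an improving alternating cycle alternates between tour and non-tour edges it has even length at least $4$, so $2\le m\le k$. Recall that when the current tour is $T$ the algorithm resets $X_0:=V(K_n)$, $i:=0$, $g^\ast:=0$ and runs a backtracking depth-first search, and that it outputs $T$ (terminates) only if this search never makes $g^\ast$ positive.

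First I would normalise $C$. Applying the Lin--Kernighan theorem above to the improving closed alternating walk $C$ yields a proper closed alternating walk $Q=(x_0,x_1,\dots,x_{2m})$ with $x_{2m}=x_0$ and $E(Q)=E(C)$; since $T\triangle Q=T\triangle C$ is a shorter tour, $Q$ is improving and $g(Q)=g(C)>0$. (Alternatively, since the partial pair-gains $c(\text{tour edge})-c(\text{non-tour edge})$ along $C$ sum to $g(C)>0$, a suitable cyclic rotation of the vertex sequence of $C$ is already proper; I would use whichever formulation keeps the collision bookkeeping below cleanest, and would separately reduce to the case that $Q$ visits no vertex twice.)

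The core step is to show the depth-first search can reach the state $i=2m-1$, $P=(x_0,\dots,x_{2m-1})$. Since $p_1=2k-1\ge 2m-1$, the search backtracks exhaustively through all depths at most $p_1$, so it suffices to check that at every depth $j<2m-1$ the vertex $x_{j+1}$ lies in the candidate list $X_{j+1}$ (as long as $g^\ast$ is still $0$; once $g^\ast$ turns positive we are already done). At even $j\le p_2$ this is immediate, since $\{x_j,x_{j+1}\}$ is a tour edge of $Q$ and is not among the first $j$ edges of $Q$. At odd $j$ the gain condition $g(P\cup(x_j,x_{j+1}))>g^\ast=0$ holds because $Q$ is proper, and the remaining conditions $\{x_{j+1},x_0\}\notin T\cup P$ and ``$T\triangle(P\cup(x_j,x_{j+1},x_0))$ is a tour'' I would verify from the structure of the alternating cycle: $\{x_{j+1},x_0\}$ is not an edge of $C$ (for an internal index $j$ it is neither $\{x_0,x_1\}$ nor $\{x_{2m-1},x_0\}$), and deleting the prescribed tour edges while inserting the prescribed non-tour edges keeps the intermediate object a Hamiltonian path whose closure by the edge back to $x_0$ is a Hamiltonian cycle. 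The only even depth exceeding $p_2=2k-4$ that a walk of length $2m\le 2k$ can visit is $j=2k-2$, and only when $m=k$; there the extra requirement ``$T\triangle(P\cup(x_j,x_{j+1},x_0))$ is a tour'' is literally ``$T\triangle C$ is a tour'', and $\{x_{2k-1},x_0\}$ is the (non-tour) closing edge of $C$, so both conditions hold --- this is precisely what the choice $p_2=2k-4$ secures.

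Once the search is at depth $2m-1$ (odd and at least $3$ because $m\ge 2$) with $P=(x_0,\dots,x_{2m-1})$, the algorithm tests whether $T\triangle(P\cup(x_{2m-1},x_0))=T\triangle C$ is a tour, which it is, with gain $g(C)>0=g^\ast$; so it sets $g^\ast:=g(C)>0$. Hence the search does not terminate with $T$, contradicting the assumption that $T$ is $k$-Lin-Kernighan optimal. Therefore an improving alternating cycle in a $k$-Lin-Kernighan optimal tour has length $2m>2k$, i.e.\ at least $2k+1$ (indeed at least $2k+2$ by parity). I expect the main obstacle to be exactly the odd-depth bookkeeping in the core step: verifying the ``$T\triangle(\cdot)$ is a tour'' feasibility conditions and the absence of premature collisions with $x_0$ along the traced walk, which is where the close-up structure of an alternating cycle and the choice of a good rotation of $C$ (or good proper walk $Q$) must genuinely be used, together with the routine technical reduction from a possibly non-simple $Q$ to the cycle case.
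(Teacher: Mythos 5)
Your overall strategy---arguing the contrapositive, rotating the improving alternating cycle $C$ into a proper alternating cycle $Q$ via the cited Lin--Kernighan rotation theorem, and showing that the exhaustive backtracking up to depth $p_1=2k-1$ forces the search, while $g^*=0$, to reach the state $P=(x_0,\dots,x_{2m-1})$ and then set $g^*:=g(Q)>0$---is exactly the argument the paper leaves implicit (it states the lemma as immediate from the description of the algorithm). Most of your bookkeeping is right: even depths $j\le p_2$, the observation that the only even depth exceeding $p_2=2k-4$ on this branch is $2k-2$ and only when $m=k$ (where the extra requirement is literally that $T\triangle C$ is a tour and $\{x_{2k-1},x_0\}$ is the closing non-tour edge), and the final closing test at odd depth $2m-1\ge 3$.

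The genuine gap is in your verification of the odd-depth candidate conditions as they are literally printed in the pseudocode. First, from ``$\{x_{j+1},x_0\}$ is not an edge of $C$'' you cannot conclude $\{x_{j+1},x_0\}\notin T\cup P$: that edge may well be a tour edge of $T$ lying outside $C$. Second, the claim that $T\triangle(P\cup(x_j,x_{j+1},x_0))$ is a tour at an intermediate odd depth $j$ is false: $x_{j+1}$ sits at an even position $j+1<2m-1$ of the cycle, so its $C$-tour edge $\{x_{j+1},x_{j+2}\}$ is not among the removed edges of the prefix, and both $\{x_j,x_{j+1}\}$ and $\{x_{j+1},x_0\}$ would be added; hence $x_{j+1}$ has degree four in that symmetric difference, the intermediate object is not a Hamiltonian path, and under the literal conditions $x_{j+1}$ would never enter $X_{j+1}$---your branch-following would already break at $j=1$. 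The repair is not a better computation but the observation that these closing-type requirements belong (as in Korte--Vygen, Section 21.3, and as in the paper's own prose description of the algorithm) only to even depths exceeding $p_2$; at odd depths the candidate conditions are merely that $\{x_j,x\}$ be a fresh non-tour edge and that $g(P\cup(x_j,x))>g^*$, both of which follow from the cycle structure of $Q$ (distinct vertices and edges) and its properness, exactly as in the rest of your argument. With the conditions placed where the prose and the choice $p_2=2k-4$ put them, your proof is the intended one; as written against the literal pseudocode, the odd-depth step fails.
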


Obviously, this property also holds for the output of the Lin-Kernighan algorithm with parameters $p_1\geq 2k-1, p_2\geq 2k-4$ and our results carry over to this case. 
 
\subsubsection{Girth and Ex}

\begin{definition}
The \emph{girth} of a graph is the length of the shortest cycle contained in the graph if it contains a cycle and infinity otherwise. Let $\ex(n,2k)$ be the maximum number of edges in a graph with $n$ vertices and girth at least $2k$. Moreover, define $\ex^{-1}(m,2k)$ as the minimum number of vertices of a graph with $m$ edges and girth at least $2k$. 
\end{definition}

There are results in extremal graph theory on the behavior of the function $\ex(n,2k)$.

\begin{theorem}[\cite{Alon}] \label{exupper}
We have
\begin{align*}
\ex(n,2k)<\frac{1}{2^{1+\frac{1}{k-1}}}n^{1+\frac{1}{k-1}}+\frac{1}{2}n.
\end{align*}
\end{theorem}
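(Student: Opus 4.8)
The plan is to reduce the statement to the Moore bound for irregular graphs. Let $G$ have $n$ vertices, $m$ edges and girth at least $2k$ (so $k\geq 2$), and write $d:=2m/n$ for its average degree. A direct rearrangement shows that the claimed inequality $m<\frac{1}{2^{1+1/(k-1)}}n^{1+1/(k-1)}+\frac12 n$ is equivalent to $d-1<(n/2)^{1/(k-1)}$, which for $d\geq 1$ is in turn equivalent to $2(d-1)^{k-1}<n$. Since $d<1$ forces $m<n/2$ and makes the statement immediate, it suffices to prove $2(d-1)^{k-1}<n$ assuming $d\geq 1$.

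First I would handle the regime $1\leq d\leq 2$: a graph containing a cycle has $n$ at least its girth, so $n\geq 2k\geq 4$, while a forest has $m\leq n-1$; either way $2(d-1)^{k-1}\leq 2<n$ once $n\geq 3$, and the finitely many smaller cases are checked directly. For $d>2$ I would pass to a connected component $H$ of $G$ of maximum average degree $\bar d$; then $H$ still has girth at least $2k$ and $\bar d\geq d>2$, so it is enough to prove the \emph{irregular even Moore bound}
\begin{align*}
|V(H)|\ \geq\ 2\sum_{i=0}^{k-1}(\bar d-1)^{i},
\end{align*}
because its right-hand side exceeds $2(\bar d-1)^{k-1}\geq 2(d-1)^{k-1}$ (the $i=0$ term is positive) and $n\geq|V(H)|$.

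To prove the irregular Moore bound, recall the regular case: fixing an edge $uv$, let $A$ and $B$ be the sets of vertices strictly closer to $u$ and to $v$ respectively; then $A\cap B=\emptyset$, and girth at least $2k$ forces the breadth-first exploration of $A$ from $u$ to be a tree that branches by $\deg-1$ at every level down to depth $k-1$ (a repeated branch, a level-skipping chord, or a second short path back to $u$ would all close a cycle of length less than $2k$, and the same estimates keep every such vertex inside $A$), so $|A|\geq\sum_{i=0}^{k-1}(d-1)^{i}$ and symmetrically for $B$, giving $n\geq 2\sum_{i=0}^{k-1}(d-1)^{i}$. For a general $H$ one runs this count after choosing a random oriented edge from a suitably chosen distribution on oriented edges and applies Jensen's inequality: each level-$i$ population is a sum of products of $(\deg(\cdot)-1)$ factors, and because $t\mapsto t-1$ and the convex combinations entering the recursion are convex on $[1,\infty)$, the per-vertex degrees may be replaced throughout by $\bar d$; the girth hypothesis is precisely what makes the non-backtracking walks of length at most $k-1$ land on pairwise distinct vertices, so that counting walks counts vertices. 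This is the Alon--Hoory--Linial argument, and I expect its averaging/convexity bookkeeping — pinning down the correct distribution on oriented edges and handling the two endpoints of the walk — to be the main obstacle; note that the cruder reduction to a subgraph of minimum degree $\geq d/2$ is too lossy, as it overshoots by exactly the factor $2^{1+1/(k-1)}$ appearing in the statement. Putting the pieces together yields $n\geq 2\sum_{i=0}^{k-1}(d-1)^{i}>2(d-1)^{k-1}$, as needed.
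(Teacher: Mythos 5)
First, note the paper itself does not prove this statement: it is imported verbatim from the cited reference \cite{Alon} (Alon--Hoory--Linial, the Moore bound for irregular graphs), so the benchmark is that paper. Your reduction of the stated inequality to the irregular Moore bound is correct and complete: the algebraic equivalence $m<\frac{1}{2^{1+1/(k-1)}}n^{1+1/(k-1)}+\frac12 n \Leftrightarrow 2(d-1)^{k-1}<n$ (for $d\ge 1$), the easy regimes $d<1$ and $1\le d\le 2$, the passage to a component of maximal average degree $\bar d\ge d$, and the strictness coming from the lower-order terms of $2\sum_{i=0}^{k-1}(\bar d-1)^i$ are all fine, and this is exactly how the $\ex(n,2k)$ bound is extracted from the Moore-type theorem in the literature.

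The genuine gap is that the irregular Moore bound $|V(H)|\ge 2\sum_{i=0}^{k-1}(\bar d-1)^i$ for \emph{average} degree $\bar d$ is the entire content of the theorem, and your proposal does not prove it. The step you describe --- ``each level-$i$ population is a sum of products of $(\deg(\cdot)-1)$ factors, and by Jensen/convexity the per-vertex degrees may be replaced throughout by $\bar d$'' --- is not a valid argument as stated: what is needed is a \emph{lower} bound on an average of products of degree terms, and termwise convexity does not yield it (for a fixed degree sum such products are typically maximized, not minimized, when the degrees are equal, so the naive substitution points the wrong way; this is precisely why the result was open before Alon--Hoory--Linial, and why the crude minimum-degree-$d/2$ reduction loses a constant factor). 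Their actual proof requires real additional ideas: a reduction to minimum degree at least $2$ by repeatedly deleting degree-one vertices (which preserves girth and does not decrease the average degree once $d\ge 2$), a specific distribution on directed edges and non-backtracking walks, and a key lemma bounding the expected product of $(\deg-1)$ along such walks from below. You flag this ``averaging/convexity bookkeeping'' yourself as the main obstacle, but flagging it does not close it; as written, the proposal establishes only the (correct) derivation of the extremal bound from the cited theorem, i.e.\ it proves no more than the paper's one-line citation already does. To make it self-contained you would have to reproduce the Alon--Hoory--Linial argument in full, or else cite it as the paper does.
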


\begin{theorem}[\cite{Lazebnik}] \label{exlower}
We have
\begin{align*}
\ex(n,2k)= \Omega(n^{1+\frac{2}{3k-6+o}}),
\end{align*}
where $k\geq 3$ is fixed, $o= 0$ if $k$ is even, $o=1$ if $k$ is odd and $n\to \infty$.
\end{theorem}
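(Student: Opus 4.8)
\section*{Proof proposal for Theorem \ref{exlower}}

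The statement is a lower bound on an extremal function, so the plan is to exhibit an explicit family of graphs. For each prime power $q$ and each integer $r$ (to be fixed as a function of $k$) I would build the bipartite incidence graph whose two colour classes $P$ and $L$ are each a copy of $\mathbb{F}_q^{r}$, thought of as \emph{points} $(p_1,\dots,p_r)$ and \emph{lines} $[l_1,\dots,l_r]$. A point and a line are joined by an edge exactly when a fixed system of $r-1$ bilinear equations holds, each of the form (coordinate difference) $=$ (first coordinate) $\times$ (previous coordinate), where the roles of $p_1$ and $l_1$ as the distinguished multiplier alternate in blocks as the index increases. This block alternation is the design feature responsible for the girth growing with $r$, and hence is the whole point of the construction.

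The easy bookkeeping comes first. Each class has $q^{r}$ vertices, so $n=2q^{r}$. Fixing a point and choosing the free coordinate $l_1$ determines the remaining $l_2,\dots,l_r$ through the equations, so each point lies on exactly $q$ lines and, by symmetry, each line meets exactly $q$ points; the graph is $q$-regular with $q^{r+1}$ edges. Since $q\approx (n/2)^{1/r}$, this is of order $n^{1+1/r}$ edges, so the exponent is governed entirely by how small $r$ can be taken while still forcing girth at least $2k$.

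The heart of the proof, and the step I expect to be the main obstacle, is the girth lower bound: I must show the incidence system has no cycle of length below $2k$. This is a combinatorial-algebraic computation in which a hypothetical short cycle --- a closed point-line walk returning to its start without repeating a vertex --- is translated into a system of coordinate equations; exploiting the block-alternating structure one shows that successive coordinates are forced to agree, so the walk must in fact repeat a vertex and cannot be a genuine short cycle. Pushing this through yields girth at least roughly $r+5$, so that taking $r$ of order $2k$ suffices; the precise choice of $r$ and the parity distinction are what convert the crude exponent into $3k-6+\epsilon$.

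Two refinements finish the argument. To sharpen $1+1/r$ to the claimed $1+\tfrac{2}{3k-6+\epsilon}$ I would replace the full graph by a single connected component: the components have far fewer vertices --- a power of $q$ with a strictly smaller exponent --- while staying $q$-regular and keeping the girth guarantee, which raises the edge-to-vertex ratio to exactly the required power. Finally, to obtain the bound for all $n\to\infty$ rather than only for the special values of $n$ determined by prime powers $q$, I would use the density of primes (Bertrand's postulate is enough, since $\Omega$ absorbs constant factors) to select $q$ just below the target size and then pad with isolated vertices; this perturbs the vertex and edge counts by at most constant factors and therefore leaves the exponent unchanged.
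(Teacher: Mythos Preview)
The paper does not prove Theorem~\ref{exlower} at all: it is stated in the preliminaries with a citation to \cite{Lazebnik} and used as a black box thereafter, so there is no proof in the paper to compare your proposal against.

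That said, your sketch is a recognisable outline of the Lazebnik--Ustimenko--Woldar construction that underlies the cited result. The architecture is right: the bipartite algebraically defined graphs $D(r,q)$ over $\mathbb{F}_q^r$ with bilinear adjacency relations, the $q$-regularity and $n=2q^r$ count, the girth growing linearly in $r$, and---crucially---the passage to a single connected component $CD(r,q)$, which is what pushes the exponent from $1+\tfrac{1}{r}$ up to the claimed $1+\tfrac{2}{3k-6+\epsilon}$. You correctly flag the girth computation as the substantive step and the component-size estimate as the source of the improved exponent.

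Where the sketch is thin is exactly where the real work lies. The girth argument is not just ``coordinates are forced to agree''; it requires a careful induction exploiting the specific recursive structure of the defining equations, and getting the precise girth bound (at least $r+5$ for odd $r\ge 3$) is delicate. Likewise, the component count is not obvious: one must identify explicit invariants of the graph that are constant on components and count their possible values, which is how one shows a component has only about $q^{r-\lfloor (r+2)/4\rfloor+1}$ vertices. Your proposal names these steps but does not indicate how either would actually be carried out, and both are genuinely nontrivial---they are the content of the cited paper, not routine verifications. As a roadmap your proposal is sound; as a proof it would need those two pieces filled in.
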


\begin{theorem}[Polarity Graph in \cite{erdHos1966problem, erdos1962problem, Brown1966}; Construction by Benson and by Singleton \cite{benson_1966, singleton}; Construction by Benson and by Wenger \cite{benson_1966, wenger}] \label{omega small cases}
For $k=3,4,6$ we have
\begin{align*}
\ex(n,2k)= \Omega(n^{1+\frac{1}{k-1}}).
\end{align*}
\end{theorem}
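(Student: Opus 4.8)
The plan is to prove each of the three cases by exhibiting an explicit infinite family of graphs of the right girth and edge density, and then to interpolate to all values of $n$. The families come from finite geometry: for $k\in\{3,4,6\}$ and every prime power $q$ there is a \emph{generalized $k$-gon of order $(q,q)$} --- a Desarguesian projective plane $PG(2,q)$ when $k=3$, the symplectic generalized quadrangle $W(q)$ when $k=4$, and the split Cayley hexagon $H(q)$ when $k=6$ --- and I would take $G_q$ to be its point--line incidence graph. Equivalently, and avoiding the geometric language, one may use Wenger's explicit algebraic constructions of bipartite graphs of girth $6$, $8$ and $12$ with vertex classes $\mathbb{F}_q^{k-1}\sqcup\mathbb{F}_q^{k-1}$; either choice yields the same asymptotic parameters.

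The facts I would invoke about $G_q$ are the following: it is bipartite and (up to additive constants) $q$-regular; its two sides each have $N_k(q)$ elements, where $N_3(q)=q^2+q+1$, $N_4(q)=(q+1)(q^2+1)$, $N_6(q)=(q+1)(q^4+q^2+1)$, so that in all three cases $N_k(q)=\Theta(q^{k-1})$; and its incidence graph is connected with diameter $k$ and girth exactly $2k$. The last point is the defining property of a generalized $k$-gon, and for the prime-power examples above it is classical (Benson, and the theory surrounding the Feit--Higman theorem; for Wenger's graphs it is proved directly). Granting these, a short computation finishes the estimate: with $n_q:=|V(G_q)|=\Theta(N_k(q))=\Theta(q^{k-1})$ we get $q=\Theta(n_q^{1/(k-1)})$, hence
\begin{align*}
|E(G_q)|=\Theta\!\left(q\cdot N_k(q)\right)=\Theta\!\left(q^{k}\right)=\Theta\!\left(n_q^{k/(k-1)}\right)=\Theta\!\left(n_q^{1+\frac{1}{k-1}}\right),
\end{align*}
and since $G_q$ has girth $2k$, in particular at least $2k$, this shows $\ex(n_q,2k)=\Omega\!\left(n_q^{1+\frac{1}{k-1}}\right)$ along the sequence $(n_q)$.

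To obtain the bound for all $n$ rather than only along $(n_q)$, I would use that $\ex(\cdot,2k)$ is non-decreasing (adding isolated vertices never decreases the girth) together with the density of primes: since $n_q$ is a fixed polynomial in $q$ and, by Bertrand's postulate, consecutive primes differ by at most a bounded multiplicative factor, consecutive values $n_q$ differ by at most a constant factor $c=c(k)$. Thus for any $n$ one can choose a prime $q$ with $n/c\le n_q\le n$ and conclude $\ex(n,2k)\ge\ex(n_q,2k)=\Omega\!\left(n_q^{1+1/(k-1)}\right)=\Omega\!\left(n^{1+1/(k-1)}\right)$, as required.

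There is no single hard step here: the parameter bookkeeping and the interpolation over $n$ are routine, and the real content is imported from finite geometry, namely the existence of projective planes, generalized quadrangles and generalized hexagons of \emph{balanced} order $(q,q)$ for every prime power $q$, and the fact that the girth of their incidence graphs equals twice their diameter. The genuine obstacle, were one to insist on a self-contained treatment, is re-deriving this girth property; this is also precisely why the statement is confined to $k\in\{3,4,6\}$, since by the Feit--Higman theorem the only further thick generalized polygons are the octagons, which exist only with unbalanced order $(q,q^2)$ and hence fall well short of the edge density $n^{1+1/(k-1)}$.
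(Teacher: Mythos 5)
Your argument is correct and matches the paper's treatment: the paper does not prove this statement but simply cites the same finite-geometry constructions (incidence graphs of projective planes, generalized quadrangles and hexagons of order $(q,q)$, respectively Wenger's graphs), whose parameters give $\Theta(n_q^{1+1/(k-1)})$ edges at girth $2k$, exactly as you compute. Your interpolation to all $n$ via monotonicity of $\ex(\cdot,2k)$ and Bertrand's postulate is the standard routine step implicit in the cited results.
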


\begin{theorem}[Theorem 1.4' in Section III of \cite{bollobas2004extremal}] \label{regular high girth}
Let $\delta,g \geq 3$ and
\begin{align*}
m\geq \frac{(\delta-1)^{g-1}-1}{\delta-2}
\end{align*}
be integers. Then, there exists a $\delta$-regular graph with $2m$ vertices and girth at least $g$.
\end{theorem}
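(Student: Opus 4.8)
The plan is to prove this by the Erd\H{o}s--Sachs rewiring argument, carried out simultaneously for every admissible $m$. First observe that a $\delta$-regular graph on $2m$ vertices exists at all: $2m\delta$ is even, and since $g\ge 3$ the hypothesis gives $m\ge 1+(\delta-1)=\delta$, hence $2m\ge 2\delta\ge\delta+1$, so one may take, for instance, a suitable circulant. Thus the finite family $\mathcal G$ of $\delta$-regular graphs on the fixed vertex set $\{1,\dots,2m\}$ is nonempty; choose $G\in\mathcal G$ of largest girth $g^\ast$ and, among those, with the fewest cycles of length $g^\ast$ (since $\delta\ge 3$, $G$ has a cycle, so $g^\ast<\infty$). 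If $g^\ast\ge g$ we are done, so suppose $3\le g^\ast\le g-1$ and aim for a contradiction.

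For the rewiring, fix a shortest cycle $C$ of $G$ and an edge $e=\{u,v\}$ of $C$; then $\operatorname{dist}_{G-e}(u,v)=g^\ast-1$. For a suitable radius $\rho$ (of order $g^\ast$), let $B$ be the set of vertices at distance at most $\rho$ from $u$ or from $v$ in $G-e$; the Moore-type bound gives $|B|\le 2\bigl(1+\delta\tfrac{(\delta-1)^{\rho}-1}{\delta-2}\bigr)$, and the role of the hypothesis $m\ge\frac{(\delta-1)^{g-1}-1}{\delta-2}$ is exactly to make this quantity smaller than $2m$ for the needed $\rho$ (using $g^\ast\le g-1$), so that $V(G)\setminus B$ is nonempty and in fact contains an edge $f=\{x,y\}$ whose ends are far from $u$ and from $v$. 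Delete $e$ and $f$ and add the edges $\{u,x\}$ and $\{v,y\}$, or $\{u,y\}$ and $\{v,x\}$ if that is the configuration that works. The resulting graph $G'$ lies in $\mathcal G$; the removal of $e$ destroys the cycle $C$, and a case analysis on how a cycle of length at most $g^\ast$ in $G'$ could use the two new edges --- one of them, or both, and in the latter case how it runs among $u,v,x,y$ --- shows, using the distance conditions defining $f$ together with $\operatorname{dist}_{G-e}(u,v)=g^\ast-1$, that no such cycle can arise. Hence $G'$ has either larger girth than $G$, or the same girth but strictly fewer shortest cycles, contradicting the choice of $G$. Therefore $g^\ast\ge g$, as desired.

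I expect the substantive obstacle to be entirely inside the rewiring step: choosing $\rho$ and estimating $|B|$ sharply enough to land on precisely the threshold $\frac{(\delta-1)^{g-1}-1}{\delta-2}$ --- which forces one to use that $u$ and $v$ are adjacent, so their balls overlap heavily, and that only one edge was deleted, rather than a crude union bound --- and, in tandem, the somewhat fiddly check that none of the possible short cycles through the two new edges actually occurs, the delicate subcase being a cycle using both new edges, which is handled via the identity $\operatorname{dist}_{G-e}(u,v)=g^\ast-1$ coming from $e$ lying on a shortest cycle. The remaining ingredients --- producing an initial $\delta$-regular graph on $2m$ vertices, and verifying that the rewired graph is again $\delta$-regular on the same vertices --- are routine. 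Since this is Theorem~\ref{regular high girth}, quoted from \cite{bollobas2004extremal}, one may of course also simply invoke that reference for the quantitative heart of the argument.
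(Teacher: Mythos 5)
There is nothing in the paper to compare against: Theorem \ref{regular high girth} is imported verbatim from Bollob\'as (Theorem 1.4$'$ in Section III of \cite{bollobas2004extremal}) and the paper gives no proof of it, so your closing remark that one may ``simply invoke that reference'' is exactly what the paper does. Judged as a self-contained argument, however, your sketch has a genuine gap at precisely the point you flag as the ``quantitative heart.'' The structural skeleton is the correct Erd\H{o}s--Sachs scheme (extremal choice of a $\delta$-regular graph maximizing girth and then minimizing the number of shortest cycles; the identity $\operatorname{dist}_{G-e}(u,v)=g^\ast-1$; the two-edge swap and the case analysis over how a short cycle could use one or both new edges), and the existence of an initial $\delta$-regular graph on $2m$ vertices is fine since $m\geq\delta$. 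But the ball estimate you actually write down does not close the argument: with the crude union bound $\lvert B\rvert\leq 2\bigl(1+\delta\tfrac{(\delta-1)^{\rho}-1}{\delta-2}\bigr)$ and the radius $\rho$ of order $g^\ast$ that your cycle analysis requires (each endpoint of $f$ at distance at least $g^\ast$ from $u$ and $v$, with $g^\ast$ possibly equal to $g-1$), this bound \emph{exceeds} the guaranteed number of vertices $2m\geq 2\tfrac{(\delta-1)^{g-1}-1}{\delta-2}=2\sum_{i=0}^{g-2}(\delta-1)^i$, since $1+\delta\sum_{i=0}^{\rho-1}(\delta-1)^i>\sum_{i=0}^{\rho}(\delta-1)^i$. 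So no far edge $f$ is produced by the count as stated.

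To make the threshold $\tfrac{(\delta-1)^{g-1}-1}{\delta-2}$ come out, one has to count the ball around the \emph{edge} $e=\{u,v\}$ rather than around two independent centers: since $u$ and $v$ are adjacent, the number of vertices within distance $r$ of $\{u,v\}$ is at most $2\sum_{i=0}^{r}(\delta-1)^i$, and with $r=g-2$ this is at most $2m$ --- but only with non-strict inequality, so the boundary case $2m=2\tfrac{(\delta-1)^{g-1}-1}{\delta-2}$ still needs a separate argument, and one must also relax the distance requirements in the two-new-edge case (there it suffices that the two ``cross'' distances sum to at least $g^\ast-1$, which is what lets the sharper count through). You correctly anticipated that adjacency of $u,v$ and a non-crude bound are needed, but you did not carry out this step, and it is exactly where the stated constant is used; without it the proposal is an outline of the known proof rather than a proof. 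If the goal is merely to use the theorem as the paper does, citing \cite{bollobas2004extremal} is the appropriate (and the paper's) treatment.
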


\section{Outline of the Analysis} \label{sec outline of analysis}
In this section, we give an outline of the analysis for the upper bounds for the \textsc{Metric TSP} and \textsc{Graph TSP}.

\subsection{Outline of Upper Bound for Metric TSP}
In this subsection we briefly summarize the ideas for the analysis of the upper bound for the \textsc{Metric TSP} given by Theorem \ref{intro upper}. 

For a fixed $k$ assume that an instance is given with a $k$-optimal tour $T$. We fix an orientation of $T$ and assume w.l.o.g.\ that the length of the optimal tour is 1. To bound the approximation ratio it is enough to bound the length of $T$. Our general strategy is to construct an auxiliary graph depending on $T$ and bound its girth. More precisely, we show that if this graph has a short cycle this would imply the existence of an improving $k$-move contradicting the $k$-optimality of $T$. Moreover, the auxiliary graph contains many long edges of $T$ so the bound on its girth also bounds the number of long edges in the tour and hence the approximation ratio.

Let the graph $G$ consist of the vertices of the instance and the edges of $T$, i.e. $G:=(V(K_n),T)$. We first partition the edges of $T$ into length classes such that the $l$th length class consists of the edges with length between $c^{l+1}$ and $c^{l}$ for some constant $c<1$. We call these edges \emph{$l$-long} (Definition~\ref{llong}). For each $l\in \mathbb{N}_0$ we want to get an upper bound on the number of $l$-long edges that depends on the number of vertices. 

If we performed the complete analysis on $G$, we would get a bad bound on the number of $l$-long edges since $G$ contains too many vertices. To strengthen the result we first construct an auxiliary graph containing all $l$-long edges for some fixed $l$ but fewer vertices and bound the number of $l$-long edges in that graph: We partition $V(G)$ into classes with the help of the optimal tour. Using the metric property, we ensure that in each class any two vertices have small distance to each other (Definition~\ref{defnear}). We contract the vertices in each class to one vertex and delete self-loops to get the multigraph $G_1$ (Definition~\ref{defG1}). We can partition $V(G)$ in such a way that $G_1$ contains all the $l$-long edges. Note that we did not delete parallel edges in $G_1$ and hence every edge in $G_1$ has a unique preimage in $G$.

Unfortunately, we cannot directly bound the girth of $G_1$ since the existence of a short cycle would not necessarily imply an improving $k$-move for $T$. For that we need a property of the cycles in the graph: The common vertex of consecutive edges in any cycle has to be head of both or tail of both edges according to the orientation of $T$. Therefore, we construct the auxiliary graph $G_2$ from $G_1$ as follows: We start with $G_2$ as a copy of $G_1$ and color the vertices of $G_2$ red and blue. We only consider $l$-long edges in $G_2$ from a red vertex to a blue vertex according to the orientation of $T$ and delete all other edges (Definition~\ref{defG2}). We show (Lemma~\ref{coloring}) that the coloring can be done in such a way that at least $\frac{1}{4}$ of the $l$-long edges remain in $G_2$.

We claim that the underlying undirected graph of $G_2$ has girth at least $2k$. Note that by construction the graph is bipartite and hence all cycles have even length. Assume that there is a cycle $C$ with $2h<2k$ edges. We call the preimage of the edges of $C$ in $G$ the \emph{C-edges}. Our aim is to construct a tour $T'$ with the assistance of $C$ that arises from $T$ by an improving $h$-move (Theorem \ref{k-opt}). This contradicts the $k$-optimality of $T$ as we assumed that $h<k$.

For every common vertex $w$ of two consecutive edges $e_1,e_2$ of $C$ in $G_2$ we consider the preimage $e_1^{-1},e_2^{-1}$ of $e_1,e_2$ in $G$. Then there have to be endpoints $u\in e_1^{-1}$ and $v\in e_2^{-1}$ such that the images of $u$ and $v$ after the contraction in $G_2$ are both $w$. We will call the edge $\{u,v\}$ a \emph{short edge} (Definition~\ref{defShortEdge}). In fact since both endpoints of a short edge are mapped to the same vertex in $G_1$ after the contraction and we contracted vertices that have a small distance to each other, they are indeed short. Furthermore, we can show that the total length of all the short edges is shorter than that of any single $C$-edge (Lemma~\ref{number of blue edges}). The number of the short edges is equal to the number of $C$-edges which is $2h$. Now, observe that the cycle $C$ defines an alternating cycle in $G$ in a natural way: Let the preimages of $C$ in $G$ be the tour edges and that of the short edges be the non-tour edges (Lemma~\ref{short edges}).

To construct a new tour $T'$ from $T$ we start by augmenting the alternating cycle. Afterward, the tour may split into at most $2h$ connected components. A key property is that the coloring of the vertices in $G_2$ ensures that every connected component contains at least two short edges. Since there are $2h$ short edges, we know that after the augmentation we actually get at most $h$ connected components (Lemma~\ref{connected comp}). To reconnect and retain the degree condition we twice add a set $L$ of at most $h-1$ different $C$-edges, i.e. in total at most $2h-2$ edges. In the end we shortcut to the new tour $T'$ in a particular way without decreasing $\lvert T \cap T' \rvert$. 

Note that the number of $C$-edges in the original tour $T$ is $2h$, thus $T'$ contains at least 2 fewer $C$-edges than $T$. The additional short edges that $T'$ contains are cheap, therefore $T'$ is cheaper than $T$. Moreover, $T'$ arises from $T$ by replacing at most $2h-\lvert L \rvert$ $C$-edges since we deleted the $C$-edges and added twice the set $L$ consisting of $C$-edges. 

Therefore, we know that $T'$ arises from $T$ by a $2h-\lvert L\rvert\leq 2h$-move. By the $k$-optimality of $T$, we have $2h> k$ or $2h\geq k+1$. This already gives us a lower bound of $k+1$ for the girth of the graph $G_2$ as $C$ contains $2h$ edges (Remark \ref{rem2h}). 

In the next step we use the previous result to show that there is actually a cheaper tour $T'$ that arises by an $h+1$-move. This implies that $h+1>k$ or $2h\geq 2k$, i.e.\ the girth of $G_2$ is at least $2k$. As we have seen above the number of edges we have to replace to obtain $T'$ from $T$ depends on $\lvert L \rvert$, the number of $C$-edges $T'$ contains. Therefore, we modify $T'$ iteratively such that the number of $C$-edges in $T'$ increases by 1 after every iteration while still maintaining the property that $T'$ is cheaper than $T$. We stop when the number of $C$-edges in $T'$ is $h-1$ as then $T'$ would arise from $T$ by a $2h-(h-1)=h+1$-move. 

To achieve this we start with the constructed tour $T'$ and iteratively perform ``ambivalent'' (Definition \ref{def ambivalent 2-move}) 2-moves that are not necessarily improving but add at least one more $C$-edge to $T'$. In every iteration we consider $C$-edges $e$ not in the current tour $T'$. We can show that there is an edge in $T'\backslash T$ incident to each of the endpoints of $e$. Let the two edges be $f_1$ and $f_2$. We want to replace $f_1$ and $f_2$ in $T'$ by $e_1$ and the edge connecting the endpoints of $f_1$ and $f_2$ not incident to $e$. To ensure the connectivity after the 2-move we need to find edges $e$ such that the corresponding edges $f_1, f_2$ fulfill the following condition: Either both heads or both tails of $f_1$ and $f_2$ have to be endpoints of $e$. It turns out that we can find such edges $e$ in enough iterations to construct $T'$ with the desired properties (Lemma~\ref{ambivalent-2-move}). This contradicts the $k$-optimality of $T$. Hence a cycle $C$ in $G_2$ of length less than $2k$ cannot exist, and so $G_2$ does have girth at least $2k$.

To conclude, the lower bound on the girth of $G_2$ gives us an upper bound on the number of edges in $G_2$ by known results from extremal graph theory. This implies an upper bound on the number of $l$-long edges as $G_2$ contains at least $\frac{1}{4}$ of the $l$-long edges in $T$. That gives us an upper bound on the length of $T$ and thus also an upper bound on the approximation ratio as we assumed that the optimal tour has length 1.

\subsection{Outline of Upper Bound for Graph TSP}
This subsection comprises a sketch of the proof of Theorem \ref{intro graph upper}. Assume that an instance of \textsc{Graph TSP} $(K_n,c)$ is given where $c$ arises from the unweighted graph $G$. Let a $2$-optimal tour $T$ be given for the instance and fix an orientation.

First, note that every edge with length $l$ corresponds to shortest paths with $l$ edges in $G$ between the endpoints of the edges. 
Now, if the corresponding shortest paths of two edges share a common directed edge, we see that there is an improving 2-move contradicting the assumed 2-optimality of $T$ (Figure \ref{intersectPath}). Hence, the directed edges of the corresponding shortest paths are disjoint. Note that the optimal tour contains $n$ edges and hence has length at least $n$. Thus, if the approximation ratio is high, we must have many edges in the union of the shortest paths corresponding to the edges in $T$ and hence also in $G$. The main challenge now is to exploit this fact in a good way since a simple bound of $n(n-1)$ on the number of directed edges in $G$ would only give an upper bound of $O(n)$ on the approximation ratio, which is worse than the upper bound of $O(\sqrt{n})$ for \textsc{Metric TSP}.

\begin{figure}[!htb]
        \centering
 \definecolor{qqqqff}{rgb}{1,0.,0}
\begin{tikzpicture}[line cap=round,line join=round,>=triangle 45,x=1.0cm,y=1.0cm]
\draw [line width=1.pt,color=qqqqff,->] (5.,2.)-- (5.61607378939762,1.2123115552343946);
\draw [line width=1.pt,->] (5.61607378939762,1.2123115552343946)-- (5.418395726741486,0.2320445591845851);
\draw [line width=1.pt,color=qqqqff,->] (5.418395726741486,0.2320445591845851)-- (5.101228968403289,-0.7163252249214036);
\draw [line width=1.pt,dash pattern=on 3pt off 3pt,color=qqqqff,->] (5.418395726741486,0.2320445591845851)-- (6.233642227337841,-0.3470695429277686);
\draw [line width=1.pt,dash pattern=on 3pt off 3pt,color=qqqqff,->] (6.233642227337841,-0.3470695429277686)-- (6.488667149341046,-1.3140040232092711);
\draw [line width=1.pt,dash pattern=on 3pt off 3pt,color=qqqqff,<-] (5.61607378939762,1.2123115552343946)-- (6.52,1.64);
\draw [shift={(5.294199778298621,0.7671547042692544)},line width=1.pt,dash pattern=on 3pt off 3pt,<-]  plot[domain=-1.3427398904165067:0.9447625422093273,variable=\t]({1.*0.5493336881915749*cos(\t r)+0.*0.5493336881915749*sin(\t r)},{0.*0.5493336881915749*cos(\t r)+1.*0.5493336881915749*sin(\t r)});\begin{scriptsize}
\draw [fill=black] (5.,2.) circle (1.5pt);
\draw [fill=black] (6.52,1.64) circle (1.5pt);
\draw [fill=black] (5.61607378939762,1.2123115552343946) circle (1.5pt);
\draw [fill=black] (5.418395726741486,0.2320445591845851) circle (1.5pt);
\draw [fill=black] (5.101228968403289,-0.7163252249214036) circle (1.5pt);
\draw [fill=black] (6.233642227337841,-0.3470695429277686) circle (1.5pt);
\draw [fill=black] (6.488667149341046,-1.3140040232092711) circle (1.5pt);
\node[above right] at (5.,2.) {$a$};
\node[above right] at (6.52,1.64) {$c$};
\node[left] at (5.101228968403289,-0.7163252249214036) {$b$};
\node[above right] at (6.488667149341046,-1.3140040232092711) {$d$};
\end{scriptsize}
\end{tikzpicture}
  \caption{Assume that a tour $T$ contains the edges $(a,b)$ and $(c,d)$ whose corresponding shortest paths are depicted here as the straight edges for $(a,b)$ and dashed edges for $(c,d)$. If the shortest paths share a common edge, then the 2-move replacing the two edges by $\{a,c\}$ and $\{b,d\}$ would be improving.}
  \label{intersectPath}
\end{figure}

To get a better result we use the same idea from the analysis of the upper bound for \textsc{Metric TSP}: We contract vertices and get a graph with fewer vertices and many edges. Instead of contracting once, we iteratively partition the vertices into sets and contract each set to a single vertex to get a new graph. (We note that we actually just contract the vertices and construct the edges of the new graph in a slightly different way. But let us assume for simplicity that the edges of the new graph are images of the contraction of edges in the old graph.) 
Starting with $G$ in every iteration we ideally want to partition the vertices of the current graph into sets, contract each set to a vertex and delete self-loops such that:
\begin{enumerate}
\item The number of vertices decreases much faster than the number of edges.
\item The subgraphs induced by the sets we contract have small diameter.
\end{enumerate}
The first condition ensures that we get a better bound after every iteration. The second condition builds the connection between the approximation ratio and the number of edges in the contracted graph: It ensures that if the shortest paths corresponding to two edges of $T$ share a directed edge in the contracted graph, then they are also not far away in $G$, so there is an improving 2-move replacing these two edges. This means that a high approximation ratio would imply a high number of edges in the contracted graph.

Unfortunately, it is not easy to ensure both conditions at the same time even if we know that the graph has many edges, as the edges are not equally distributed in the graph. There might be many vertices with very small degree. If we contract them while still ensuring that the subgraphs have small diameter, the number of vertices cannot decrease fast enough. Therefore, we consider a subset of vertices we call \emph{active} vertices and only require that the number of active vertices decreases fast. If an active vertex has small degree, we will not contract it but will consider it as inactive in future iterations. Initially, all vertices are active and we use the following theorem to find a good partition of the active vertices. 

\begin{theorem}[Theorem 6 in \cite{fox2010decompositions}] \label{graph decomposition}
Given $\epsilon>0$ every graph $G$ on $n$ vertices can be edge partitioned $E=E_0 \cup E_1 \cup \dots \cup E_l$ such that $\lvert E_0 \rvert \leq \epsilon n^2, l\leq 16\epsilon^{-1}$ and for $1\leq i\leq l$ the diameter of $E_i$ is at most 4.
\end{theorem}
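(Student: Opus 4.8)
This statement is quoted verbatim from Fox's paper, so strictly speaking no proof is owed here; but if I had to reprove it I would use a short greedy peeling argument, and that is the plan I would write down.

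The plan is to extract the low-diameter pieces $E_1,\dots,E_l$ one at a time from a shrinking graph, always using a high-degree pivot, and to dump whatever remains into $E_0$ once the graph is sparse. Concretely, I would maintain a current graph $H$, initially $H=G$, and repeat the following while $H$ has more than $\epsilon n^2$ edges. If $H$ has more than $\epsilon n^2$ edges, then its degree sum exceeds $2\epsilon n^2$, so some vertex $v$ has $\deg_H(v)>2\epsilon n$; let $A:=N_H(v)$ and let the next piece $E_i$ be the set of all edges of $H$ incident to at least one vertex of $A$ (this automatically includes every edge incident to $v$, since the other endpoint of such an edge lies in $A$). Then delete $E_i$ from $H$ and iterate. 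When at last $H$ has at most $\epsilon n^2$ edges, set $E_0:=E(H)$; the output is $E_0,E_1,\dots,E_l$, which is an edge partition of $G$ because each edge is removed exactly once.

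The two things to verify are the diameter bound and the bound $l\le 16\epsilon^{-1}$. For the diameter: the vertices incident to $E_i$ are precisely $\{v\}\cup A\cup B$ where $B$ is the set of $H$-neighbours of $A$ lying outside $A\cup\{v\}$; inside the graph $(V,E_i)$ the vertex $v$ is joined to every vertex of $A$ by a single edge, and every vertex of $B$ is joined by a single edge to some vertex of $A$, so every vertex incident to $E_i$ is at $E_i$-distance at most $2$ from $v$, whence the diameter of $E_i$ is at most $4$ (and the piece is connected, so this is a genuine finite diameter). For the count: once $E_i$ is removed, every vertex of $A\cup\{v\}$ is isolated in the new $H$, since every edge that had an endpoint in $A$ was placed into $E_i$; hence each iteration converts at least $|A|+1>2\epsilon n$ hitherto non-isolated vertices into isolated ones. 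As there are only $n$ vertices in total, the loop runs at most $\lceil 1/(2\epsilon)\rceil\le 16\epsilon^{-1}$ times, which gives $l\le 16\epsilon^{-1}$.

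I do not expect a real obstacle here; the only point requiring care is bookkeeping, namely that the "number of non-isolated vertices" potential genuinely drops by at least $2\epsilon n$ at each step and is not double-counted — this holds because the vertices of $A\cup\{v\}$ are non-isolated at the start of the iteration (they are $v$, which has positive degree, together with its neighbours) and isolated at the end. The one substantive design choice, which is what makes the edge-count guarantee and the diameter guarantee hold at the same time, is that the pivot $v$ must be chosen by its degree in the current graph $H$ rather than in the original $G$.
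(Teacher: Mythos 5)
The paper never proves this statement: it is imported verbatim as Theorem 6 of Fox--Sudakov \cite{fox2010decompositions} and used as a black box, so there is no in-paper proof to compare against; your observation that no proof is owed is accurate. Your greedy peeling argument is nevertheless correct and self-contained. Choosing the pivot $v$ by its degree in the \emph{current} graph $H$ gives $\lvert A\rvert>2\epsilon n$ whenever $\lvert E(H)\rvert>\epsilon n^2$; since each $a\in A$ contributes the edge $\{a,v\}\in E_i$, every vertex touched by $E_i$ is within $E_i$-distance $2$ of $v$, so the piece is connected with diameter at most $4$; and because all edges meeting $A$ are removed (in particular all edges at $v$, whose other endpoints lie in $A$), the vertices of $A\cup\{v\}$ become isolated and stay isolated, so your potential argument gives $l<\tfrac{1}{2\epsilon}$, which is even stronger than the quoted bound $l\le 16\epsilon^{-1}$. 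The only cosmetic slip is the intermediate inequality $\lceil 1/(2\epsilon)\rceil\le 16\epsilon^{-1}$, which fails for very large $\epsilon$ (roughly $\epsilon>15.5$); in that regime, however, every graph has fewer than $\epsilon n^2$ edges, the loop never runs, and $l=0$, so the conclusion is unaffected. Your scheme is essentially the same degree-pivot peeling that Fox and Sudakov use for the diameter-$4$ case, so nothing is lost relative to the cited source.
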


In every iteration we apply the theorem to the subgraph induced by the currently active vertices. The vertices only incident to edges in $E_0$ become inactive after this iteration. For each of the sets $E_1,\dots, E_l$ we contract the vertices incident to an edge in the set to a single vertex. These are the active vertices in the next iteration. By choosing $\epsilon$ appropriately, we can ensure that the number of vertices decreases significantly and the number of vertices that become inactive in every iteration is small.

After a fixed number of iterations, we have at least one edge and one active vertex remaining. Since the number of active vertices decreased much faster than the edges, we can conclude that $G$ only contains few edges compared to the number of vertices. This implies a bound on the approximation ratio.
\section{Lower Bound for Metric TSP} \label{sec lower bound metric tsp}
In this section, we improve the lower bound of the $k$-Opt algorithm using the following theorem.

\begin{theorem}[Lemma 3.6 in \cite{Chandra}] \label{chandra}
Suppose there exists a Eulerian unweighted graph $G_{k,n,m}$ with $n$ vertices and $m$ edges, having girth at least $2k$. Then, there is a \textsc{Metric TSP} instance with $m$ vertices and a $k$-optimal tour $T$ such that $\frac{c(T)}{c(T^*)}\geq \frac{m}{2n}$, where $T^*$ is the optimal tour of the instance.
\end{theorem}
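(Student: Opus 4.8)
The plan is to realise the abstract graph $G=G_{k,n,m}$ as a metric on its \emph{edge} set, paired with the ``do-nothing'' tour that just follows an Eulerian circuit, and then to show that an improving $k$-move would create a cycle of length $<2k$ in $G$, which the girth forbids.

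\emph{The construction.} Fix an Eulerian circuit $(v_0,v_1,\dots,v_{m-1},v_0)$ of $G$. It uses each edge once, so it gives a cyclic enumeration $e_0,\dots,e_{m-1}$ of $E(G)$ with $e_i=\{v_i,v_{i+1}\}$ (indices mod $m$), and it orients every edge, so write $\phi(e_i):=v_{i+1}$ for its head. Take one city $c_i$ per edge $e_i$, fix a real $\varepsilon\in(0,1/m]$, and set
\[
d(c_i,c_j):=\begin{cases}0,& i=j,\\ \varepsilon,& i\ne j,\ \phi(e_i)=\phi(e_j),\\ d_G\bigl(\phi(e_i),\phi(e_j)\bigr),& \text{otherwise.}\end{cases}
\]
Because $\mathrm{girth}(G)\ge 2k\ge 6$ the graph is simple and $d_G\ge 1$ off the diagonal, so a short case check reduces the triangle inequality for $d$ to that of $d_G$; thus $d$ is a metric on the $m$ cities. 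The candidate tour is $T:=(c_0,c_1,\dots,c_{m-1},c_0)$.

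\emph{Lengths of $T$ and of the optimum.} Consecutive edges $e_i,e_{i+1}$ share $v_{i+1}$, so $\phi(e_i)=v_{i+1}\ne v_{i+2}=\phi(e_{i+1})$ and $d_G(v_{i+1},v_{i+2})=1$ (the joining edge is $e_{i+1}$); hence each of the $m$ edges of $T$ costs $1$ and $c(T)=m$. For the optimum I would exhibit a short tour $\widehat T$: take a spanning tree of $G$, let $W=(u_0,\dots,u_{2(n-1)})$ be its depth-first closed walk (consecutive vertices adjacent in $G$, every vertex met), order $V(G)$ by first appearance along $W$, and in that cyclic order concatenate the blocks $B_v$ consisting of the $\tfrac12\deg_G(v)$ cities piled at $v$. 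Inside a block all distances equal $\varepsilon$, contributing at most $\sum_v\tfrac12\deg_G(v)\cdot\varepsilon=m\varepsilon\le 1$ in all; the edge between two first-appearance-consecutive vertices $v,v'$ costs $d_G(v,v')$, and along $W$ these telescope to at most $2(n-1)$. So $c(\widehat T)\le 2(n-1)+1<2n$; the optimal tour $T^{*}$ is no longer than $\widehat T$, hence $c(T)/c(T^{*})\ge m/c(\widehat T)>m/(2n)$. It remains to prove that $T$ is $k$-optimal.

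\emph{$k$-optimality --- the heart of the argument.} Suppose a $k$-move turns $T$ into a shorter tour $T'$. Then $T\triangle T'$ is a disjoint union of alternating cycles with at most $2k$ edges in total, and one of them, $C$, with $h$ tour edges and $h$ non-tour edges where $2\le h\le k$, satisfies $\sum_{f\in C\cap T'}d(f)<\sum_{f\in C\cap T}d(f)=h$. Project $C$ into $G$ by $c_i\mapsto\phi(e_i)$: a tour edge $\{c_i,c_{i+1}\}$ of $C$ joins the piles $v_{i+1},v_{i+2}$ and is realised by the single $G$-edge $e_{i+1}$, while a non-tour edge $f$ joining piles $x,y$ is realised by a shortest $x$--$y$ path of length $d_G(x,y)\le d(f)$. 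Concatenating these pieces around $C$ yields a closed walk $\widetilde W$ in $G$ of length at most $h+\sum_{f\in C\cap T'}d(f)<2h\le 2k$. The crux is that $\widetilde W$ cannot be cancelled to the empty walk by removing backtracks: its $h$ tour-edge pieces are $h$ \emph{distinct} edges of $G$, each traversed once so far, so a cancellation to nothing would force the non-tour pieces to re-traverse each of them --- at least $h$ further edge-steps --- whereas the non-tour pieces, being shortest paths whose lengths are bounded by the corresponding costs, together traverse fewer than $h$ edges. A closed walk in the simple graph $G$ that does not cancel to nothing contains a cycle, of length at most $|\widetilde W|<2k$, contradicting $\mathrm{girth}(G)\ge 2k$. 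Hence $T$ is $k$-optimal, and $c(T)/c(T^{*})\ge m/(2n)$. I expect this last paragraph to be where the work lies: making the city-to-vertex projection precise so that alternating tour-cycles become closed walks of $G$, and running the parity-versus-length count that forces a genuine short cycle; the length estimates for $T$ and $\widehat T$ are comparatively routine.
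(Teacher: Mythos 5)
Your proposal is correct: the metric via heads of the Euler-oriented edges checks out, the Euler-order tour has cost $m$, the spanning-tree/first-appearance tour bounds the optimum below $2n$, and the parity-versus-length count (each of the $h$ distinct tour-edge images would need an extra traversal to cancel, but the non-tour shortest paths supply fewer than $h$ edge-steps) validly forces a cycle of length less than $2k$, contradicting the girth. The paper itself does not prove this statement but cites Lemma 3.6 of Chandra--Karloff--Tovey, whose argument is essentially the same construction (cities are the edges of $G$, distances induced by the graph metric, tour along the Eulerian circuit, $k$-optimality from the girth), so your proof is a faithful, correct reconstruction up to the minor variant of measuring distances between edge heads with an $\varepsilon$ for same-head pairs.
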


For the previous lower bound the theorem was applied to regular Eulerian graphs with high girth. Instead, we show that for every graph there is a Eulerian subgraph with similar edge vertex ratio and apply the theorem to the Eulerian subgraphs of dense graphs with high girth to get the new bound.  Before we start, we make the following observation.

\begin{lemma} \label{monoton}
The approximation ratio of the $k$-Opt algorithm for \textsc{Metric TSP} instances with $n$ vertices is monotonically increasing in $n$.
\end{lemma}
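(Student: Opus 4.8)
\noindent The plan is to turn a worst-case $n$-vertex instance into an equally bad $(n+1)$-vertex instance by duplicating a vertex. Fix a \textsc{Metric TSP} instance $I$ on $n$ vertices together with a $k$-optimal tour $T$ realizing the approximation ratio, let $T^*$ be an optimal tour, and normalize $c(T^*)=1$. Choose any vertex $v$ and let $I'$ be the instance on $n+1$ vertices obtained by adding a twin $v'$ of $v$, i.e.\ $c(\{v',u\}):=c(\{v,u\})$ for $u\neq v$ and $c(\{v,v'\}):=0$; a routine check of the triple types shows that $I'$ is again metric. Splicing $v'$ into $T$ immediately after $v$ gives a tour $\tilde T$ of $I'$, and since $c(\{v,v'\})=0$ and $v'$ has the same distances as $v$ this costs nothing, so $c(\tilde T)=c(T)$; splicing $v'$ into $T^*$ likewise costs nothing, so an optimal tour $Q$ of $I'$ has length at most $1$. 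Therefore $c(\tilde T)/c(Q)\geq c(T)/c(T^*)$, and it suffices to prove that $\tilde T$ is $k$-optimal in $I'$: then $\tilde T$ is a possible output of the $k$-Opt algorithm on $I'$, and the approximation ratio for $n+1$ vertices is at least that for $n$.

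\noindent To see that $\tilde T$ is $k$-optimal, suppose some improving $k$-move transforms $\tilde T$ into a shorter tour $T''$. Let $\tau$ be the operation that deletes $v'$ from a tour of $I'$: remove $v'$ from the cyclic order and join its two former neighbors by a new edge. The triangle inequality gives $c(\tau(S))\leq c(S)$ for every tour $S$ of $I'$, and since inserting $v'$ after $v$ and then deleting it again is the identity, $\tau(\tilde T)=T$. Hence $\tau(T'')$ is a tour of $I$ with $c(\tau(T''))\leq c(T'')<c(\tilde T)=c(T)$. The crucial point is that $\tau$ does not increase the symmetric difference of edge sets, i.e.\ $\lvert E(\tau(S_1))\triangle E(\tau(S_2))\rvert\leq \lvert E(S_1)\triangle E(S_2)\rvert$ for all tours $S_1,S_2$ of $I'$. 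Granting this, $\lvert E(T)\triangle E(\tau(T''))\rvert=\lvert E(\tau(\tilde T))\triangle E(\tau(T''))\rvert\leq \lvert E(\tilde T)\triangle E(T'')\rvert\leq 2k$, so $\tau(T'')$ arises from $T$ by replacing at most $k$ edges; being cheaper than $T$, it is an improving $k$-move, contradicting the $k$-optimality of $T$. This is the whole argument, so the only genuinely delicate point I expect is the non-expansiveness of $\tau$.

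\noindent To justify that, write $N_i$ for the unordered pair of $v'$-neighbors in $S_i$ and $A_i$ for the Hamiltonian path on $V(I)$ obtained from $S_i$ by deleting $v'$. Then $E(S_i)=A_i\sqcup\{\{v',x\}:x\in N_i\}$ and $E(\tau(S_i))=A_i\sqcup\{f_i\}$, where $f_i$ joins the two endpoints of $A_i$ and, for $n\geq 3$, satisfies $f_i\notin A_i$. Hence $\lvert E(S_1)\triangle E(S_2)\rvert=\lvert A_1\triangle A_2\rvert+\lvert N_1\triangle N_2\rvert$. If $N_1=N_2$ then $f_1=f_2$ and $\lvert E(\tau(S_1))\triangle E(\tau(S_2))\rvert=\lvert A_1\triangle A_2\rvert$. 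If $N_1\neq N_2$ then $f_1\neq f_2$, a short computation with the indicator quantities $[f_1\notin A_2]$ and $[f_2\notin A_1]$ gives $\lvert E(\tau(S_1))\triangle E(\tau(S_2))\rvert\leq \lvert A_1\triangle A_2\rvert+2$, and distinctness of the two-element sets $N_1,N_2$ forces $\lvert N_1\triangle N_2\rvert\geq 2$; in either case the claimed inequality follows. Two side remarks: for $n\geq 3$ the edge added by $\tau$ is never already present, since otherwise a Hamiltonian cycle would contain a triangle, so $\tau(S)$ is genuinely a tour, while the cases $n\leq 2$ are trivial; and if one wishes to avoid coincident points one may set $c(\{v,v'\})=\varepsilon$ and let $\varepsilon\to 0$ at the end, which only affects the estimates by $o(1)$.
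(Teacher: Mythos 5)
Your proof is correct, and it follows the paper's construction exactly up to the crucial step: both arguments duplicate a vertex $v$ by a twin $v'$ at distance $0$, splice $v'$ into the $k$-optimal tour, note that the optimum cannot increase, and reduce the lemma to showing that the spliced tour stays $k$-optimal. Where you genuinely diverge is in how a hypothetical improving $k$-move on the spliced tour is projected back to the original instance. The paper distinguishes whether the improved tour keeps the edge $\{v,v'\}$; if not, it contracts $v$ and $v'$ into a degree-four vertex, follows an Eulerian walk, and shortcuts a carefully chosen pair of edges at $v$ (the pair containing at most one edge of $T$), with explicit bookkeeping of which edges are deleted and added so that the result is still reachable from $T$ by a $k$-move. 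You instead always apply the single projection $\tau$ (delete $v'$, rejoin its two neighbours) and prove once and for all that $\tau$ is non-expansive for the symmetric difference of edge sets; together with $c(\tau(S))\le c(S)$ and $\tau(\tilde T)=T$ this immediately yields an improving $\le k$-move on $T$. I checked the delicate points: the identity $\lvert E(S_1)\triangle E(S_2)\rvert=\lvert A_1\triangle A_2\rvert+\lvert N_1\triangle N_2\rvert$, the indicator computation giving $\lvert E(\tau(S_1))\triangle E(\tau(S_2))\rvert\le\lvert A_1\triangle A_2\rvert+2$ when $N_1\ne N_2$, and the facts $f_i\notin A_i$ and $\{a,b\}\notin E(T'')$ for $n\ge 3$; all are sound, and the step from $\lvert E(T)\triangle E(\tau(T''))\rvert\le 2k$ to ``a $k$-move'' is justified since both tours have $n$ edges. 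Your route buys a cleaner, more modular argument that avoids the paper's case distinction, the Eulerian-walk detour, and the choice of which pair of edges to shortcut, at the modest price of proving the non-expansiveness lemma; the paper's version is more hands-on but requires its careful edge accounting to land on a $k$-move.
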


\begin{proof}
Given an instance $I$ we can increase the number of vertices of $I$ without decreasing the approximation ratio by constructing an instance $I'$ as follows: Make a copy $v'$ of an arbitrary vertex $v$ and set the costs $c(v,v'):=0$, $c(v',w):=c(v,w) \quad \forall w\neq v$. It is easy to see that $I'$ still satisfies the triangle inequality. To prove that the approximation ratio does not decrease we need to show that the optimal tour of $I$ is at least as long as that of $I'$ and the longest $k$-optimal tour of $I'$ is at least as long as that of $I$. To show this, observe that we can transform a tour of $I$ to a tour of $I'$ by visiting $v'$ directly after visiting $v$ and leaving the order of the other vertices unchanged. The transformed tour has the same cost as the old tour. Given the optimal tour of $I$, the above transformation gives us a tour of $I'$ with the same cost. Thus, the optimal tour of $I$ is at least as long as that of $I'$.

Let $T$ be a $k$-optimal tour of $I$. It remains to show that the transformed tour $T'$ is still $k$-optimal. Assume that there is an improving $k$-move; apply it on $T'$ to get $T'_2$. If the edge $\{v,v'\}$ is contained in $T'_2$, we can contract the vertices $v$ and $v'$ and delete the self-loop at $v$ to get a shorter tour of $I$ than $T$. Observe that this tour arises by performing the same $k$-move to $T$, contradicting the $k$-optimality of $T$. So assume that $\{v,v'\}$ is not contained in $T'_2$. When we contract the vertices $v$ and $v'$ from $T'_2$ we get a connected Eulerian graph $T_2$, where the degree of $v$ is four and the degree of every other vertex is two. Hence, $I$ contains at least two vertices. Now, start at an arbitrary vertex other than $v$ and traverse the graph on a Eulerian walk. Let $\{a_1,v\}, \{v,a_2\}, \{b_1,v\}, \{v,b_2\}$ be the order in which the edges incident to $v$ are traversed (Figure~\ref{tour transformation before fig}). Since there are exactly two edges incident to $v$ in $T$, $T$ contains at most one edge in either $\{a_1,v\}, \{v,a_2\}$ or $\{b_1,v\}, \{v,b_2\}$. W.l.o.g.\ let $T$ contain at most one edge of $\{a_1,v\}$ and $\{v,a_2\}$. We get a tour of $I$ with less or equal length than $T$ by shortcutting $\{a_1,v\}$ and $\{v,a_2\}$ to $\{a_1,a_2\}$ in $T_2$. To obtain this tour from $T$, take the improving $k$-move for $T'$ and replace all occurrences of $v'$ in its edges by $v$. Additionally, introduce two more modifications: First, instead of deleting the self loop $\{v,v\}$ we delete $T\cap\{\{a_1,v\}, \{v,a_2\}\}$, which comprises at most one edge. Second, instead of adding $\{\{a_1,v\}, \{v,a_2\}\} \backslash T$, which comprises at least one edge, we add $\{a_1,a_2\}$ (Figure~\ref{tour transformation after fig}). Hence, this tour arises from $T$ by performing a $k$-move, again contradicting the $k$-optimality of $T$. Therefore, the longest $k$-optimal tour of $I'$ is at least as long as that of $I$.
\end{proof}

\begin{figure}[!htb]
  \centering
  \begin{minipage}[t]{.45\textwidth}
      \centering
\begin{tikzpicture}
  \node[draw, circle, fill=black, inner sep=1.5pt] (a) at (0,0) {};
  \node[draw, circle, fill=black, inner sep=1.5pt] (b) at (0,2) {};
  \node[draw, circle, fill=black, inner sep=1.5pt] (c) at (2,0) {};
  \node[draw, circle, fill=black, inner sep=1.5pt] (d) at (2,2) {};
  \node[draw, circle, fill=black, inner sep=1.5pt] (v) at (1,1) {};

  \draw (a) -- (v) -- (b);
  \draw (c) -- (v) -- (d);

  \draw (b) to [out=180, in=180] (a);
  \draw (c) to [out=0, in=0] (d);
  
  \foreach \label/\name in {a/b_2, b/a_1, c/b_1, d/a_2, v/v}
    \node[above] at (\label) {\(\name\)};
\end{tikzpicture}
\caption{Apply an improving $k$-move to the tour $T'$ of the instance $I'$. After contracting the vertices $v$ and $v'$, the result $T_2$ may visit $v$ multiple times.}
\label{tour transformation before fig}
  \end{minipage}\qquad%
  \begin{minipage}[t]{0.45\textwidth}
      \centering
      \begin{tikzpicture}
    \node[draw, circle, fill=black, inner sep=1.5pt] (a) at (0,0) {};
    \node[draw, circle, fill=black, inner sep=1.5pt] (b) at (0,2) {};
    \node[draw, circle, fill=black, inner sep=1.5pt] (c) at (2,0) {};
    \node[draw, circle, fill=black, inner sep=1.5pt] (d) at (2,2) {};
    \node[draw, circle, fill=black, inner sep=1.5pt] (v) at (1,1) {};
      
    \draw (a) -- (v) -- (c);
    \draw (b) -- (d);
  
    \draw (b) to [out=180, in=180] (a);
    \draw (c) to [out=0, in=0] (d);
    
    \foreach \label/\name in {a/b_2, b/a_1, c/b_1, d/a_2, v/v}
      \node[above] at (\label) {\(\name\)};
      \end{tikzpicture}
      \caption{Shortcut $\{a_1,v\}$ and $\{v,a_2\}$ to $\{a_1,a_2\}$ in $T_2$ to get a tour. To directly construct this tour from $T$ make the following modifications to the $k$-move: Replace all occurrences of $v'$ by $v$. Instead of deleting $\{v,v\}$ we delete $T\cap\{\{a_1,v\}, \{v,a_2\}\}$. Instead of adding $\{\{a_1,v\}, \{v,a_2\}\} \backslash T$ we add $\{a_1,a_2\}$.}
      \label{tour transformation after fig}
  \end{minipage}
\end{figure}

\begin{lemma} \label{subgraph}
For every graph $G$ there exists a Eulerian subgraph $G'$ such that $\frac{\lvert E(G') \rvert}{\lvert V(G') \rvert }\geq \frac{\lvert E(G)\rvert +1}{\lvert V(G) \rvert}-1$.
\end{lemma}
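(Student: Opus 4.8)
The plan is to build $G'$ in two steps: first delete a small set of edges from $G$ so that every remaining degree is even, and then pass to the connected component of the resulting graph that has the best edge-to-vertex ratio. Write $n:=\lvert V(G)\rvert$, $m:=\lvert E(G)\rvert$, and let $c\geq 1$ be the number of connected components of $G$; the inequality to be proved then reads $\frac{\lvert E(G')\rvert}{\lvert V(G')\rvert}\geq \frac{m+1-n}{n}$. (Any subgraph of $G$ also inherits $G$'s girth lower bound, which is why this form of the statement is exactly what feeds into Theorem~\ref{chandra}.) Everything reduces to the following: produce a subgraph $H\subseteq G$ in which every vertex has even degree and with $\lvert E(H)\rvert\geq m-(n-c)$.

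To produce $H$, fix a spanning forest $F$ of $G$, so that $\lvert E(F)\rvert=n-c$, and let $O\subseteq V(G)$ be the set of odd-degree vertices of $G$; within each component of $G$ the set $O$ has even cardinality. Root each tree of $F$, and for every non-root vertex $v$ place the edge $e_v$ joining $v$ to its parent into a set $E'$ exactly when the subtree rooted at $v$ contains an odd number of vertices of $O$. One checks bottom-up that $\deg_{E'}(v)\equiv \mathbf{1}[v\in O]\pmod 2$ for every $v$: for a non-root $v$, the number of $O$-vertices in the subtree at $v$ equals $\mathbf{1}[v\in O]+\sum_{w\text{ child of }v}(\text{number of }O\text{-vertices in the subtree at }w)$, and hence the contribution of $e_v$ together with the contributions of the edges from $v$ to its children is congruent mod $2$ to $\mathbf{1}[v\in O]$; at a root $r$ only the edges to children of $r$ are incident to $r$, and their contributions sum mod $2$ to the number of $O$-vertices in the tree containing $r$ plus $\mathbf{1}[r\in O]$, which is $\equiv\mathbf{1}[r\in O]$ because that number is even. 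Consequently $H:=G-E'$ has all even degrees (since $\deg_{E'}(v)$ has the same parity as $\deg_G(v)$), and $\lvert E(H)\rvert=m-\lvert E'\rvert\geq m-\lvert E(F)\rvert=m-(n-c)$.

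It remains to extract $G'$ from $H$. Write $H$ as the disjoint union of its isolated vertices and its non-trivial connected components $D_1,\dots,D_t$; assume $\lvert E(H)\rvert\geq 1$, for otherwise $m=n-c$, $G$ is a forest, the right-hand side is $\leq 0$, and a single vertex already works. Each $D_j$ is connected, has all degrees even, and has at least one edge, so every vertex of $D_j$ has degree $\geq 2$, whence $D_j$ admits an Euler circuit, i.e.\ is Eulerian; also $D_j$ is a subgraph of $G$. Choose $G':=D_j$ maximizing $\frac{\lvert E(D_j)\rvert}{\lvert V(D_j)\rvert}$. Using $\max_j\frac{a_j}{b_j}\geq\frac{\sum_j a_j}{\sum_j b_j}$ for positive reals $b_j$, together with $\sum_j\lvert V(D_j)\rvert\leq n$ and $\sum_j\lvert E(D_j)\rvert=\lvert E(H)\rvert$, we obtain
\begin{align*}
\frac{\lvert E(G')\rvert}{\lvert V(G')\rvert}\ \geq\ \frac{\lvert E(H)\rvert}{n}\ \geq\ \frac{m-(n-c)}{n}\ \geq\ \frac{m+1-n}{n}\ =\ \frac{\lvert E(G)\rvert+1}{\lvert V(G)\rvert}-1,
\end{align*}
the final inequality using $c\geq 1$.

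I expect the parity-correction step to be the crux: the key point is that one can make all degree parities even while deleting only forest edges, which is exactly what keeps the edge loss down to $n-c$ rather than (say) roughly $m/2$, the loss one would incur from an averaging argument over all even subgraphs of $G$. The only other point requiring a little care is that an even-degree subgraph need not be connected, and hence need not be Eulerian, but this is handled cleanly by the final "best component" selection together with the mediant inequality.
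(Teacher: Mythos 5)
Your proof is correct, but the way you obtain the even-degree subgraph differs from the paper. The paper's proof peels cycles off $G$ one at a time and collects them in a graph $G_0$ on $V(G)$: once no cycle remains, the leftover is a forest with at most $\lvert V(G)\rvert-1$ edges, so $G_0$ is an even-degree (indeed cycle-decomposable) subgraph with at least $\lvert E(G)\rvert-\lvert V(G)\rvert+1$ edges, and a weighted averaging over its connected components (each Eulerian by construction) yields the desired $G'$. You instead fix a spanning forest and delete a parity-correcting subset of its edges (a T-join inside the forest for the set of odd-degree vertices), which gives an even spanning subgraph $H$ with at least $\lvert E(G)\rvert-(\lvert V(G)\rvert-c)$ edges, and then you pick the best non-trivial component via the mediant inequality. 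The two arguments share the same skeleton (lose at most $\lvert V(G)\rvert-1$ edges to reach an even subgraph, then select the best component by averaging); the paper's cycle-peeling is shorter and makes the Eulerian property of each component immediate, while your parity-correction is a touch more technical but isolates the structural fact that forest edges alone suffice to fix all parities, and it records the marginally sharper bound with the component count $c$ (which you then discard via $c\geq 1$). Your explicit handling of the degenerate case $\lvert E(H)\rvert=0$ (a single vertex as a trivially Eulerian subgraph) matches what the paper leaves implicit when $G$ is a forest, so there is no gap on either side.
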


\begin{proof}
We construct a new graph by deleting cycles successively from $G$ and adding them to an empty graph $G_0$ with $V(G_0)=V(G)$ until there are no cycles left. After the deletion of cycles, the remaining graph will be a forest with at most $\lvert V(G) \rvert-1$ edges. Hence, we added at least $\lvert E(G) \rvert -\lvert V(G) \rvert+1$ edges to $G_0$. There is a connected component $G'$ of $G_0$ whose edge-vertex ratio is at least as large as that of $G_0$, which is $\frac{\lvert E(G) \rvert-\lvert V(G) \rvert+1}{\lvert V(G) \rvert}$. By construction, $G'$ is Eulerian.
\end{proof}

\begin{theorem} \label{lower bound}
The approximation ratio of $k$-Opt is $\Omega\left(\frac{n}{\ex^{-1}(n,2k)}\right)$ for \textsc{Metric TSP} where $n$ is the number of vertices.
\end{theorem}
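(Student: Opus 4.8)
The plan is to feed an extremal graph for $\ex^{-1}$ through the same pipeline already used for the previous lower bound: Lemma~\ref{subgraph} (pass to a Eulerian subgraph of comparable edge-to-vertex ratio), Theorem~\ref{chandra} (build a bad \textsc{Metric TSP} instance from a dense Eulerian high-girth graph), and Lemma~\ref{monoton} (monotonicity of the approximation ratio in the number of vertices). The only change from the construction sketched before Lemma~\ref{monoton} is that, instead of starting from a graph extremal for $\ex(\cdot,2k)$, I would start from one extremal for $\ex^{-1}(\cdot,2k)$, so that the ratio coming out of Lemma~\ref{subgraph} is already phrased in terms of $\ex^{-1}$.

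In detail: fix $k$ and a large $n$, and pick by definition of $\ex^{-1}$ a graph $G^{*}$ with exactly $n$ edges, girth at least $2k$, and exactly $\ex^{-1}(n,2k)$ vertices (such a graph exists, e.g.\ a disjoint union of cycles of length at least $2k$ together with a short path). Since a subgraph has girth at least that of the ambient graph, Lemma~\ref{subgraph} applied to $G^{*}$ yields a Eulerian subgraph $G'$ of girth at least $2k$, with $|E(G')|\le n$, and with $\frac{|E(G')|}{|V(G')|}\ge \frac{n+1}{\ex^{-1}(n,2k)}-1$. Theorem~\ref{chandra} applied to $G'$ then produces a \textsc{Metric TSP} instance on $|E(G')|\le n$ vertices carrying a $k$-optimal tour $T$ with $\frac{c(T)}{c(T^{*})}\ge \frac{|E(G')|}{2|V(G')|}\ge \frac12\bigl(\frac{n+1}{\ex^{-1}(n,2k)}-1\bigr)$. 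Because this instance has at most $n$ vertices, Lemma~\ref{monoton} lifts this to a lower bound of $\frac12\bigl(\frac{n+1}{\ex^{-1}(n,2k)}-1\bigr)$ on the approximation ratio of $k$-Opt on $n$ vertices.

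It remains to show this quantity is $\Omega\!\bigl(\frac{n}{\ex^{-1}(n,2k)}\bigr)$, i.e.\ that $\ex^{-1}(n,2k)\le (1-\delta)n$ for some constant $\delta>0$ and all large $n$ — otherwise the bound above could be vacuous. For this I would invoke Theorem~\ref{regular high girth}: with $m_{0}:=2^{2k-1}-1$ it provides a $3$-regular graph on $2m_{0}$ vertices of girth at least $2k$, hence one with $3m_{0}$ edges and edge-to-vertex ratio $\tfrac32$. Taking $\lfloor n/(3m_{0})\rfloor$ disjoint copies and deleting a few edges to reach exactly $n$ edges (edge deletion only raises the girth) gives a graph of girth at least $2k$ with $n$ edges and at most $\tfrac23 n + O_{k}(1)$ vertices, so $\ex^{-1}(n,2k)\le \tfrac23 n + O_{k}(1)\le \tfrac34 n$ for $n$ large. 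Substituting this bound into the previous estimate finishes the proof.

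The one genuinely delicate point is bookkeeping rather than conceptual: Theorem~\ref{chandra} hands back an instance whose vertex count equals $|E(G')|$, which may be far smaller than $n$ and is not directly controllable, whereas the target bound must be expressed in terms of $n$; the monotonicity Lemma~\ref{monoton} is exactly what bridges this gap. The auxiliary fact that $\ex^{-1}(n,2k)$ is bounded away from $n$ is needed to keep the resulting estimate from being trivial, but it follows immediately from the existence of bounded-degree graphs of arbitrarily large girth.
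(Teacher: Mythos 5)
Your proposal is correct and follows essentially the same route as the paper: take a girth-$\geq 2k$ graph with $n$ edges and $\ex^{-1}(n,2k)$ vertices, pass to a Eulerian subgraph of comparable density via Lemma~\ref{subgraph}, feed it to Theorem~\ref{chandra}, and lift the resulting bound to $n$ vertices with Lemma~\ref{monoton}. The only (cosmetic) deviation is how the additive $-1$ is absorbed: the paper invokes Theorem~\ref{exlower} to get $\lim_{n\to\infty} n/\ex^{-1}(n,2k)=\infty$, while you derive the sufficient bound $\ex^{-1}(n,2k)\leq \tfrac{2}{3}n+O_k(1)$ from Theorem~\ref{regular high girth} (just take $\lceil n/(3m_0)\rceil$ rather than $\lfloor n/(3m_0)\rfloor$ copies, so that deleting edges down to exactly $n$ is actually possible).
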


\begin{proof}
Take a graph $G$ with girth $2k$, $ex^{-1}(n,2k)$ vertices and $n$ edges. By Lemma \ref{subgraph}, there is a Eulerian subgraph $G'$ with $\frac{\lvert E(G') \rvert}{\lvert V(G') \rvert}\geq \frac{n +1}{\ex^{-1}(n,2k)}-1$. Clearly, this subgraph has girth at least $2k$.
By Theorem \ref{chandra}, we can construct an instance with $\lvert E(G') \rvert\leq n$ vertices and an approximation ratio of $\Omega\left(\frac{n +1}{\ex^{-1}(n,2k)}-1\right)=\Omega\left(\frac{n}{\ex^{-1}(n,2k)}\right)$ since by Theorem \ref{exlower} $\lim_{n\to \infty}\frac{n}{\ex^{-1}(n,2k)} =\infty$. The statement follows from the fact that the approximation ratio is monotonically increasing by Lemma \ref{monoton}. 
\end{proof}

\begin{theorem}
If $\ex(n,2k) = \Omega(n^c)$ for some $c>0$, then the approximation ratio of $k$-Opt is $\Omega(n^{1-\frac{1}{c}})$ for \textsc{Metric TSP} where $n$ is the number of vertices.
\end{theorem}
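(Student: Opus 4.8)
The plan is to reduce everything to Theorem~\ref{lower bound}, which already states that the approximation ratio of $k$-Opt for \textsc{Metric TSP} is $\Omega\!\left(\frac{n}{\ex^{-1}(n,2k)}\right)$, and then to convert the hypothesis $\ex(n,2k)\in\Omega(n^{c})$ into an upper bound on $\ex^{-1}$. So the real content is the elementary ``inverse'' estimate: if $\ex(n,2k)=\Omega(n^{c})$ then $\ex^{-1}(m,2k)=O(m^{1/c})$.

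To prove that estimate I would argue directly from the definitions. Fix $a>0$ and $n_0$ with $\ex(n,2k)\ge a\,n^{c}$ for all $n\ge n_0$. Given a number of edges $m$, set $n:=\big\lceil (m/a)^{1/c}\big\rceil$; for $m$ large enough we have $n\ge n_0$, hence there is a graph on $n$ vertices with girth at least $2k$ and at least $\ex(n,2k)\ge a n^{c}\ge m$ edges. Deleting edges can only destroy short cycles, never create them, so removing the surplus edges leaves a graph on $n$ vertices with exactly $m$ edges and girth still at least $2k$. Hence $\ex^{-1}(m,2k)\le n=\big\lceil (m/a)^{1/c}\big\rceil=O(m^{1/c})$.

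Finally I would plug this into Theorem~\ref{lower bound}: since $\ex^{-1}(n,2k)=O(n^{1/c})$ we get $\frac{n}{\ex^{-1}(n,2k)}=\Omega\!\left(n^{1-\frac1c}\right)$, and composing the two asymptotic lower bounds yields that the approximation ratio of $k$-Opt is $\Omega\!\left(n^{1-\frac1c}\right)$, as claimed. I do not anticipate a genuine obstacle: the only points needing a little care are that the extremal high-girth graphs are only guaranteed to exist for sufficiently many vertices (absorbed by the ceiling and by the asymptotic nature of $\Omega$), that edge deletion preserves the girth lower bound (immediate), and the remark that for $c\le 1$ the target $n^{1-1/c}$ is $O(1)$ so the statement is vacuous there — all the substantive work (the tour construction of Theorem~\ref{chandra}, the Eulerian-subgraph extraction of Lemma~\ref{subgraph}, and the monotonicity Lemma~\ref{monoton}) is already packaged inside Theorem~\ref{lower bound}.
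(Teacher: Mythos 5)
Your proposal is correct and follows essentially the same route as the paper: invert the hypothesis $\ex(n,2k)\in\Omega(n^{c})$ into $\ex^{-1}(m,2k)\in O(m^{1/c})$ and plug it into Theorem~\ref{lower bound}. In fact, you supply the small inversion argument (choosing $n=\lceil (m/a)^{1/c}\rceil$ and deleting surplus edges, which cannot lower the girth) that the paper simply asserts, so your write-up is, if anything, slightly more detailed than the paper's one-line proof.
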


\begin{proof}
If $\ex(n,2k) = \Omega(n^c)$, then $\ex^{-1}(n,2k) = O(n^\frac{1}{c})$ and by Theorem \ref{lower bound} we can construct an instance with approximation ratio $\Omega\left(\frac{n}{n^\frac{1}{c}}\right)=\Omega(n^{1-\frac{1}{c}})$.
\end{proof}

Together with Theorem \ref{exlower} and \ref{omega small cases}, we conclude:

\begin{corollary} \label{coro lower}
For \textsc{Metric TSP} the approximation ratio of $k$-Opt is $\Omega(\sqrt[k]{n})$ for $k=3,4,6$ and $\Omega(n^{\frac{2}{3k-4+o}})$ for all other $k$ where $o=0$ if $k$ is even and $o=1$ if $k$ is odd and $n$ is the number of vertices.
\end{corollary}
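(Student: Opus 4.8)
The plan is to derive Corollary \ref{coro lower} by pure substitution into the theorem immediately above, which converts any bound of the form $\ex(n,2k)\in\Omega(n^c)$ into an $\Omega(n^{1-\frac1c})$ lower bound on the approximation ratio of $k$-Opt, using the two families of known lower bounds on $\ex(n,2k)$ recalled in Theorems \ref{exlower} and \ref{omega small cases}. All the genuine work has already been carried out in Theorem \ref{lower bound} and its corollary; what remains is bookkeeping with exponents.

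First, for $k\in\{3,4,6\}$ I would apply Theorem \ref{omega small cases}, which gives $\ex(n,2k)\in\Omega(n^{1+\frac{1}{k-1}})$, so we may take $c=1+\frac{1}{k-1}=\frac{k}{k-1}$. Then $1-\frac1c=1-\frac{k-1}{k}=\frac1k$, and the preceding theorem yields an approximation ratio of $\Omega(n^{1/k})=\Omega(\sqrt[k]{n})$, which is the claimed bound.

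For every other fixed $k\ge 3$ I would instead use Theorem \ref{exlower}, i.e.\ $\ex(n,2k)\in\Omega(n^{1+\frac{2}{3k-6+\epsilon}})$ with $\epsilon=0$ for even $k$ and $\epsilon=1$ for odd $k$. Here one takes $c=1+\frac{2}{3k-6+\epsilon}=\frac{3k-4+\epsilon}{3k-6+\epsilon}$, and a one-line computation gives $1-\frac1c=\frac{2}{3k-4+\epsilon}$. The preceding theorem then delivers an approximation ratio of $\Omega(n^{\frac{2}{3k-4+\epsilon}})$, as claimed.

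There is no real obstacle here: the only things to check are that the hypothesis $c>0$ of the preceding theorem is met --- in fact $c>1$ in both cases, since $1+\frac{1}{k-1}>1$ and $1+\frac{2}{3k-6+\epsilon}>1$ for all $k\ge 3$ --- and, for consistency where the two estimates overlap, that the stated bound is no weaker, i.e.\ $\frac1k\ge\frac{2}{3k-4+\epsilon}$ for $k=3,4,6$ (with equality at $k=3,4$). If one also wished to cover $k=2$, the stronger classical bound $\Omega(\sqrt n)$ of Plesník applies, but the corollary as phrased concerns $k\ge 3$.
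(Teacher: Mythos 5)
Your proposal is correct and matches the paper's own derivation, which likewise obtains the corollary by plugging the bounds $\ex(n,2k)\in\Omega(n^{1+\frac{1}{k-1}})$ (for $k=3,4,6$) and $\ex(n,2k)\in\Omega(n^{1+\frac{2}{3k-6+\epsilon}})$ into the preceding theorem and computing $1-\frac{1}{c}$. The exponent arithmetic ($1-\frac1c=\frac1k$ resp.\ $\frac{2}{3k-4+\epsilon}$) is exactly what the paper intends, so nothing is missing.
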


\section{Upper Bound for Metric TSP} \label{sec upper bound for metric tsp}
In this section we give an upper bound on the approximation ratio of the $k$-Opt and $k$-Lin-Kernighan algorithm. We bound the length of any $k$-optimal or $k$-Lin-Kernighan optimal tour $T$ compared to the optimal tour. To show the bound we divide in Subsection~\ref{subsec construction aux} the edges of $T$ into classes such that the lengths of two edges in the same class differ by at most a constant factor. For each of these classes we construct with the help of the optimal tour a graph containing at least $\frac{1}{4}$ of the edges in the class. In Subsection \ref{subsec bound on girth} we show that this graph has a high girth. Thus, we can use results from extremal graph theory to bound the number of edges in the length class in Subsection~\ref{subsec bound on approx} culminating in a bound on the length of $T$ and the approximation ratio. 

\subsection{Construction of the Auxiliary Graph $G_2$} \label{subsec construction aux}

Fix a $k>2$ and assume that a worst-case instance with $n$ vertices is given. Let $T$ be a $k$-optimal or $k$-Lin-Kernighan optimal tour of this instance. We fix an orientation of the optimal tour and $T$. Moreover, let w.l.o.g.\ the length of the optimal tour be 1. We divide the edges of $T$ into length classes.

\begin{definition} \label{llong}
An edge $e$ is \emph{$l$-long} if $(\frac{4k-5}{4k-4})^{l+1}< c(e)\leq (\frac{4k-5}{4k-4})^{l}$. Let $\{q_l\}_{l \in \N_0}$ be the sequence of the number of $l$-long edges in $T$.
\end{definition}

Note that the shortest path between every pair of vertices has length at most $\frac{1}{2}$ since the optimal tour has length 1. Thus, by the triangle inequality every edge with positive length in $T$ has length at most $\frac{1}{2}$ and is $l$-long for exactly one $l$. For every $l$ we want to bound the number of $l$-long edges. Let us consider from now on a fixed $l$. In the following we define three auxiliary graphs we need for the analysis and show some useful properties of them. Our general aim is to show that the girth of an auxiliary graph containing many $l$-long edges is high since otherwise there would exist an improving $k$-move contradicting the assumption. This would imply a bound on the number of $l$-long edges depending on the number of vertices.

\begin{definition} \label{defnear}
We view the optimal tour as a circle with circumference 1. Let the vertices of the instance lie on that circle in the order of the oriented tour where the arc distance of two consecutive vertices is the length of the edge between them. Partition the optimal tour circle into $4(k-1)\lceil (\frac{4k-4}{4k-5})^{l} \rceil$ consecutive arcs of length $\frac{1}{4(k-1)\lceil (\frac{4k-4}{4k-5})^{l} \rceil}$. Two vertices are called \emph{near} to each other (in respect of the optimal tour) if they lie on the same arc.
\end{definition}

\begin{definition} \label{defG1}
Let the directed graph $G:=(V(K_n),T)$ consist of the vertices of the instance and the oriented edges of $T$ (an example is shown in Figure \ref{tour}; the colors of the edges will be explained later). The directed multigraph $G_1$ arises from $G$ by contracting all vertices near each other to a vertex and deleting self-loops (Figure \ref{g1}). 
\end{definition}

Note that $G_1$ may contain parallel edges. By construction, $G_1$ contains fewer vertices than $G$ and we will later show that the definition of near ensures that $G_1$ contains all the $l$-long edges. Thus, a lower bound on the girth of $G_1$ would give a better upper bound on the number of $l$-long edges than a lower bound on the girth of $G$. Unfortunately, we can not bound the girth of $G_1$ since the existence of a short cycle in $G_1$ would not necessarily lead to an improving $k$-move. For that we need the property that the common vertex of consecutive edges of a cycle in the graph is the head of both or the tail of both edges according to the orientation of $T$. To ensure this, in the next step we further modify $G_1$ to the graph $G_2$.

\begin{lemma} \label{coloring}
There exists a coloring of the vertices of $G_1$ with two colors such that at least $\frac{1}{4}$ of the $l$-long edges in $G_1$ go from a red vertex to a blue vertex according to the fixed orientation of $T$.
\end{lemma}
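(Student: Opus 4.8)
The statement asks for a red/blue coloring of $V(G_1^l)$ such that at least a quarter of the $l$-long edges are directed from a red to a blue vertex. The plan is to use the probabilistic method in its simplest form: color each vertex of $G_1^l$ independently red or blue, each with probability $\tfrac12$. For a fixed $l$-long edge $e=(u,v)$ of $G_1^l$ (recall edges are directed according to the orientation of $T$), the event that $u$ is red and $v$ is blue has probability exactly $\tfrac14$, since $u\neq v$ in $G_1^l$ — self-loops were deleted in the construction — so the two endpoints receive independent colors. By linearity of expectation, the expected number of $l$-long edges going from a red vertex to a blue vertex is $\tfrac14$ times the total number of $l$-long edges in $G_1^l$. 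Hence there exists at least one coloring achieving at least this many red-to-blue $l$-long edges, which is what we want. I would phrase this as: pick a coloring attaining the expectation or more, and let that be the desired coloring.

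The one subtlety worth making explicit is that the endpoints of every $l$-long edge of $G_1^l$ really are two distinct vertices, so that the probability computation $\Pr[u \text{ red},\, v \text{ blue}] = \tfrac12\cdot\tfrac12 = \tfrac14$ is valid; this is immediate because $G_1^l$ was obtained from $G$ by contracting near vertices and then deleting self-loops, so any surviving edge — in particular any $l$-long one — joins two different vertices. (It does not matter that $G_1^l$ may have parallel edges: each parallel copy still has two distinct endpoints and contributes its own indicator variable, and linearity of expectation handles the sum regardless of dependencies between different edges.) No other property of $G_1^l$ is needed.

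I do not anticipate any real obstacle here; the lemma is a routine first-moment argument and the only thing to be careful about is the bookkeeping of "directed from red to blue according to the orientation of $T$," which is exactly the orientation carried over to $G_1^l$ by the contraction. An alternative derandomized phrasing, if one prefers to avoid probabilistic language, is to note that summing, over all $2^{|V(G_1^l)|}$ colorings, the number of red-to-blue $l$-long edges gives $2^{|V(G_1^l)|}\cdot\tfrac14\cdot(\#\,l\text{-long edges})$ by counting each edge's contribution, so some coloring meets the average; but the expectation phrasing is cleaner and is what I would write.
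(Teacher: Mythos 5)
Your proposal is correct and is essentially the paper's own argument: the paper also colors the vertices uniformly at random, notes that each $l$-long edge goes from red to blue with probability $\frac{1}{4}$, and concludes by linearity of expectation that some coloring attains at least $\frac{1}{4}$ of the $l$-long edges. Your additional remark that self-loops were deleted (so the endpoints are distinct and the $\frac{1}{4}$ computation is valid) is a fine explicit touch that the paper leaves implicit.
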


\begin{proof}
The proof uses the standard probabilistic method developed by Erd\H{o}s (see for example \cite{DBLP:books/wi/AlonS92}).

Color each vertex independently red or blue with equal probability. Each $l$-long edge goes from a red vertex to a blue vertex with probability $\frac{1}{4}$. Hence, the expected number of $l$-long edges satisfying this condition is $\frac{1}{4}$ of the original number. This implies that there is a coloring where at least $\frac{1}{4}$ of the $l$-long edges satisfy the condition.
\end{proof}

\begin{definition} \label{defG2}
We obtain the directed multigraph \emph{$G_2$} by coloring the vertices of $G_1$ red and blue according to Lemma \ref{coloring} and deleting all edges that are not $l$-long edges from a red vertex to a blue vertex according to the fixed orientation of $T$ (Figure \ref{g2}. The colors of the edges will be explained later).
\end{definition}

In the next subsection, we will prove the following central property of $G_2$.

\begin{claim} \label{claim underlying g2 girth}
The underlying undirected graph of $G_2$ has girth at least $2k$.
\end{claim}

Using the claim and that the graph $G_2$ contains at least $\frac{1}{4}$ of the $l$-long edges for the $l$ we have fixed, we get an upper bound on the number of $l$-long edges. 

\begin{corollary} \label{ql bound}
We have $q_l\leq 4\ex(4(k-1)\lceil (\frac{4k-4}{4k-5})^{l} \rceil,2k)$ where $q_l$ is the number of $l$-long edges in $T$.
\end{corollary}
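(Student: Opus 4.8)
The plan is to read the bound on $q_l$ off from the girth bound for $G_2^l$ just obtained, once two counting facts are in place: that $G_1^l$ keeps \emph{all} $l$-long edges and that $G_2^l$ keeps at least a quarter of them. The chain will be $q_l \le 4\,|E(G_2^l)| \le 4\,\ex(|V(G_2^l)|,2k) \le 4\,\ex\!\left(4(k-1)\left\lceil (\tfrac{4k-4}{4k-5})^{l}\right\rceil,2k\right)$.

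First I would discharge the claim, announced at the definition of $G_1^l$, that the contraction to $G_1^l$ does not turn any $l$-long edge into a self-loop; equivalently, that the two endpoints of an $l$-long edge are never near each other. If two vertices lie on the same arc of the partition of the optimal-tour circle, the piece of the optimal tour joining them inside that arc is a sub-path of length at most the arc length $\frac{1}{4(k-1)\lceil (\frac{4k-4}{4k-5})^{l}\rceil}$, so by the triangle inequality their distance is at most that value. An $l$-long edge $e$ instead satisfies $c(e) > \left(\frac{4k-5}{4k-4}\right)^{l+1} \ge \frac{1}{4(k-1)}\left(\frac{4k-5}{4k-4}\right)^{l} \ge \frac{1}{4(k-1)\lceil (\frac{4k-4}{4k-5})^{l}\rceil}$, where the first inequality uses $\frac{4k-5}{4k-4}\ge\frac{1}{4(k-1)}$ (valid for $k\ge 3$) and the second uses $\lceil (\frac{4k-4}{4k-5})^{l}\rceil \ge (\frac{4k-4}{4k-5})^{l}$. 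Hence $c(e)$ strictly exceeds the arc length, the endpoints of $e$ are not near, and $e$ survives the contraction; since $E(G)=T$ and parallel edges are retained in $G_1^l$, all $q_l$ $l$-long edges of $T$ lie in $G_1^l$.

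Then I would invoke Lemma~\ref{coloring}: the coloring used to define $G_2^l$ sends at least $\frac14$ of the $l$-long edges of $G_1^l$ from a red vertex to a blue one, and $E(G_2^l)$ consists precisely of these edges, so $|E(G_2^l)|\ge \frac{q_l}{4}$. By Lemmas~\ref{2k-opt}--\ref{k-opt} together with the $k$-optimality (resp.\ $k$-Lin-Kernighan optimality) of $T$, a cycle of length $2h<2k$ in $G_2^l$ is impossible, so its underlying undirected graph is simple and has girth at least $2k$. Its vertices arise by contracting near vertices of $V(K_n)$, and there are $4(k-1)\lceil (\frac{4k-4}{4k-5})^{l}\rceil$ arcs, so $|V(G_2^l)| \le 4(k-1)\lceil (\frac{4k-4}{4k-5})^{l}\rceil$. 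Since adding isolated vertices preserves the girth, $\ex(\cdot,2k)$ is non-decreasing in its first argument, and the definition of $\ex$ gives $|E(G_2^l)| \le \ex\!\left(4(k-1)\lceil (\frac{4k-4}{4k-5})^{l}\rceil, 2k\right)$. Combining the two inequalities yields the claimed bound on $q_l$.

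The only step carrying genuine content is the first: confirming that the thresholds defining the length classes and the arcs were tuned so that an $l$-long edge is strictly longer than an arc, which is exactly what guarantees the contraction loses none of the edges we want to count. The rest is bookkeeping, plus citing the already-established girth bound for $G_2^l$ and the trivial monotonicity of $\ex$.
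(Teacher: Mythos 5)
Your proposal is correct and follows essentially the same route as the paper: show that the endpoints of an $l$-long edge are never near (arc length is strictly below the $l$-long threshold), so all $q_l$ edges survive the contraction into $G_1^l$, keep at least a quarter via Lemma~\ref{coloring}, and then bound $\lvert E(G_2^l)\rvert$ by $\ex(\lvert V(G_2^l)\rvert,2k)$ using the girth bound from Lemma~\ref{k-opt} and the fact that $G_2^l$ has at most $4(k-1)\lceil (\frac{4k-4}{4k-5})^{l}\rceil$ vertices. The only cosmetic difference is that you spell out the arc-length versus $(\frac{4k-5}{4k-4})^{l+1}$ comparison and the monotonicity of $\ex$ explicitly, which the paper leaves implicit.
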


\begin{proof}
By definition, $G$ contains $q_l$ $l$-long edges. By the triangle inequality, any two vertices which are near each other have distance at most $\frac{1}{4(k-1)\lceil (\frac{4k-4}{4k-5})^{l} \rceil}\leq \frac{1}{4(k-1) (\frac{4k-4}{4k-5})^{l}} = \frac{(\frac{4k-5}{4k-4})^{l}}{4(k-1)}$ which is shorter than the length of any $l$-long edge. Hence, $G_1$ has also $q_l$ $l$-long edges. Since we have chosen a coloring according to Lemma \ref{coloring}, $G_2$ has at least $\frac{1}{4}q_l$ edges. By Claim~\ref{claim underlying g2 girth}, $G_2$ has girth at least $2k$ and thus at most $\ex(\lvert V(G_2) \rvert,2k)\leq\ex(4(k-1)\lceil (\frac{4k-4}{4k-5})^{l} \rceil,2k)$ edges. Therefore, $q_l\leq 4\ex(4(k-1)\lceil (\frac{4k-4}{4k-5})^{l} \rceil,2k)$.
\end{proof}

\subsection{The Girth of the Graph $G_2$} \label{subsec bound on girth}
In this subsection we show Claim \ref{claim underlying g2 girth}, i.e.\ the underlying undirected graph of $G_2$ has girth at least $2k$. In particular, it has no parallel edges. Assume the contrary. Then there has to be a cycle $C$ with $2h<2k$ edges since $G_2$ is bipartite by construction. Our strategy is to construct an improving alternating cycle for $T$ with help of $C$. That contradicts $k$-optimality or $k$-Lin-Kernighan optimality of $T$ and thus such a cycle $C$ cannot exist.

We call the preimages in $G$ of the edges in $C$ the \emph{$C$-edges}. Note that the preimages are unique since we do not delete parallel edges after the contraction.

\begin{definition}
Let the \emph{connecting paths} be the connected components of the graph $(V(K_n),T\setminus C\text{-edges})$, i.e.\ the paths in $T$ between consecutive heads and tails of $C$-edges (the colored edges in Figure \ref{tour} and \ref{g2}). Define the \emph{head} and the \emph{tail} of a path $p$ as the head of the last edge and the tail of the first edge of $p$ according to the orientation of $T$, respectively. The head and tail of a connecting path are also called the \emph{endpoints} of the connecting path. 
\end{definition}

Note that the number of connecting paths is equal to that of $C$-edges which is $2h$.

\begin{lemma} \label{connecting path}
The two endpoints of a connecting path are not near each other. In particular, every connecting path contains at least one edge.
\end{lemma}

\begin{proof}
Observe that the head and tail of a connecting path is a tail and head of a $C$-edge, respectively. Hence, the corresponding vertices of the heads and tails of the connecting paths in $G_2$ are colored red and blue, respectively. Therefore, the two endpoints are not near each other. Since the relation near is reflexive, we can conclude that every connecting path contains at least one edge.
\end{proof}

\begin{figure}[h]
    \centering
    \begin{minipage}[t]{.45\textwidth}
        \centering
 \definecolor{ffqqqq}{rgb}{1,0,0}
\definecolor{qqqqff}{rgb}{0,0,1}
\begin{tikzpicture}[line cap=round,line join=round,>=triangle 45,x=0.65cm,y=0.65cm]
\draw [->,shift={(-6,1)},line width=2pt,color=ffqqqq]  plot[domain=3.141592653589793:6.283185307179586,variable=\t]({1*2*cos(\t r)+0*2*sin(\t r)},{0*2*cos(\t r)+1*2*sin(\t r)});
\draw [->,line width=2pt,color=qqqqff] (-4,1)-- (-2,1);
\draw [<-,line width=2pt,color=qqqqff] (-8,1)-- (-4,2);
\draw [<-,line width=2pt,color=qqqqff] (-2,2)-- (2,1);
\draw [<-,shift={(-6,2)},line width=2pt,color=ffqqqq]  plot[domain=1.57:3.141592653589793,variable=\t]({1*2*cos(\t r)+0*2*sin(\t r)},{0*2*cos(\t r)+1*2*sin(\t r)});
\draw [<-,shift={(-6,2)},line width=2pt,color=ffqqqq]  plot[domain=1.0430772443805858:1.57,variable=\t]({1*2*cos(\t r)+0*2*sin(\t r)},{0*2*cos(\t r)+1*2*sin(\t r)});
\draw [->,shift={(-3,2.25)},line width=2pt,color=qqqqff]  plot[domain=-0.049958395721942495:3.1915510493117356,variable=\t]({1*5.006246098625197*cos(\t r)+0*5.006246098625197*sin(\t r)},{0*5.006246098625197*cos(\t r)+1*5.006246098625197*sin(\t r)});
\draw [->,shift={(-6,1)},line width=2pt,color=ffqqqq]  plot[domain=3.141592653589793:4.1832244161110745,variable=\t]({1*2*cos(\t r)+0*2*sin(\t r)},{0*2*cos(\t r)+1*2*sin(\t r)});
\draw [->,shift={(-6,1)},line width=2pt,color=ffqqqq]  plot[domain=4.1832244161110745:5.243619098560426,variable=\t]({1*2*cos(\t r)+0*2*sin(\t r)},{0*2*cos(\t r)+1*2*sin(\t r)});
\draw [<-,shift={(0,2)},line width=2pt,color=ffqqqq]  plot[domain=0:1.069050569333379,variable=\t]({1*2*cos(\t r)+0*2*sin(\t r)},{0*2*cos(\t r)+1*2*sin(\t r)});
\draw [<-,shift={(0,2)},line width=2pt,color=ffqqqq]  plot[domain=1.069050569333379:2.1120654445330964,variable=\t]({1*2*cos(\t r)+0*2*sin(\t r)},{0*2*cos(\t r)+1*2*sin(\t r)});
\draw [->,shift={(0,1)},line width=2pt,color=ffqqqq]  plot[domain=3.141592653589793:4.19198421356179,variable=\t]({1*2*cos(\t r)+0*2*sin(\t r)},{0*2*cos(\t r)+1*2*sin(\t r)});
\draw [->,shift={(0,1)},line width=2pt,color=ffqqqq]  plot[domain=4.19198421356179:5.210114300972504,variable=\t]({1*2*cos(\t r)+0*2*sin(\t r)},{0*2*cos(\t r)+1*2*sin(\t r)});
\draw [<-,shift={(-6,2)},line width=2pt,color=ffqqqq]  plot[domain=0:1.0430772443805858,variable=\t]({1*2*cos(\t r)+0*2*sin(\t r)},{0*2*cos(\t r)+1*2*sin(\t r)});
\draw [<-,shift={(0,2)},line width=2pt,color=ffqqqq]  plot[domain=2.1120654445330964:3.141592653589793,variable=\t]({1*2*cos(\t r)+0*2*sin(\t r)},{0*2*cos(\t r)+1*2*sin(\t r)});
\draw [->,shift={(0,1)},line width=2pt,color=ffqqqq]  plot[domain=-1.0730710062070825:0,variable=\t]({1*2*cos(\t r)+0*2*sin(\t r)},{0*2*cos(\t r)+1*2*sin(\t r)});
\begin{scriptsize}
\draw [fill=black] (-8,2) circle (2.5pt);
\draw [fill=black] (-4,2) circle (2.5pt);
\draw [fill=black] (-4,1) circle (2.5pt);
\draw [fill=black] (-8,1) circle (2.5pt);
\draw [fill=black] (-2,2) circle (2.5pt);
\draw [fill=black] (2,2) circle (2.5pt);
\draw [fill=black] (2,1) circle (2.5pt);
\draw [fill=black] (-2,1) circle (2.5pt);
\draw [fill=black] (-4.992871927896868,3.727915809980865) circle (2.5pt);
\draw [fill=black] (-1.0304482212271813,3.7141109833875223) circle (2.5pt);
\draw [fill=black] (0.9619137071330164,3.7534885286278943) circle (2.5pt);
\draw [fill=black] (-7.009624690035605,-0.7264582199608851) circle (2.5pt);
\draw [fill=black] (-4.986811373774729,-0.7243691042487825) circle (2.5pt);
\draw [fill=black] (-0.9944627230857661,-0.7352359759965914) circle (2.5pt);
\draw [fill=black] (0.95485616205282,-0.7573416599482752) circle (2.5pt);
\draw [fill=black] (-6,4) circle (2.5pt);
\end{scriptsize}
\end{tikzpicture}
  \caption{An example instance with a $k$-optimal tour, i.e.\ the directed graph $G$. The blue and red edges are the $C$-edges and connecting path edges that arise from the chosen cycle in $G_2$ in Figure \ref{g2}, respectively. Note that the optimal tour is not drawn here, so it is not clear from the figure which vertices to contract to construct $G_1$.}
  \label{tour}
    \end{minipage}\qquad%
    \begin{minipage}[t]{0.45\textwidth}
        \centering
 \definecolor{ffqqqq}{rgb}{0,0,0}
\definecolor{qqqqff}{rgb}{0,0,0}
\begin{tikzpicture}[line cap=round,line join=round,>=triangle 45,x=0.65cm,y=0.65cm]
\draw [->,line width=2pt,color=qqqqff] (-3,1)-- (-1,1);
\draw [->,line width=2pt,color=qqqqff] (3,1)-- (-1,1);
\draw [->,line width=2pt,color=qqqqff] (-3,1)-- (-7,1);
\draw [->,shift={(-2,1)},line width=2pt,color=qqqqff]  plot[domain=0:3.141592653589793,variable=\t]({1*5*cos(\t r)+0*5*sin(\t r)},{0*5*cos(\t r)+1*5*sin(\t r)});
\draw [<-,shift={(-5,1)},line width=2pt,color=ffqqqq]  plot[domain=0:1.053436988490858,variable=\t]({1*2*cos(\t r)+0*2*sin(\t r)},{0*2*cos(\t r)+1*2*sin(\t r)});
\draw [<-,shift={(-5,1)},line width=2pt,color=ffqqqq]  plot[domain=1.053436988490858:1.52,variable=\t]({1*2*cos(\t r)+0*2*sin(\t r)},{0*2*cos(\t r)+1*2*sin(\t r)});
\draw [<-,shift={(-5,1)},line width=2pt,color=ffqqqq]  plot[domain=1.52:3.141592653589793,variable=\t]({1*2*cos(\t r)+0*2*sin(\t r)},{0*2*cos(\t r)+1*2*sin(\t r)});
\draw [->,shift={(-5,1)},line width=2pt,color=ffqqqq]  plot[domain=3.141592653589793:4.187593208734581,variable=\t]({1*2*cos(\t r)+0*2*sin(\t r)},{0*2*cos(\t r)+1*2*sin(\t r)});
\draw [->,shift={(-5,1)},line width=2pt,color=ffqqqq]  plot[domain=4.187593208734581:5.245900127691526,variable=\t]({1*2*cos(\t r)+0*2*sin(\t r)},{0*2*cos(\t r)+1*2*sin(\t r)});
\draw [->,shift={(-5,1)},line width=2pt,color=ffqqqq]  plot[domain=-1.0372851794880606:0,variable=\t]({1*2*cos(\t r)+0*2*sin(\t r)},{0*2*cos(\t r)+1*2*sin(\t r)});
\draw [<-,shift={(1,1)},line width=2pt,color=ffqqqq]  plot[domain=2.0807683864963096:3.141592653589793,variable=\t]({1*2*cos(\t r)+0*2*sin(\t r)},{0*2*cos(\t r)+1*2*sin(\t r)});
\draw [->,shift={(1,1)},line width=2pt,color=ffqqqq]  plot[domain=3.141592653589793:4.202416920683277,variable=\t]({1*2*cos(\t r)+0*2*sin(\t r)},{0*2*cos(\t r)+1*2*sin(\t r)});
\draw [->,shift={(1,1)},line width=2pt,color=ffqqqq]  plot[domain=4.202416920683277:5.249401338813848,variable=\t]({1*2*cos(\t r)+0*2*sin(\t r)},{0*2*cos(\t r)+1*2*sin(\t r)});
\draw [->,shift={(1,1)},line width=2pt,color=ffqqqq]  plot[domain=-1.0337839683657384:0,variable=\t]({1*2*cos(\t r)+0*2*sin(\t r)},{0*2*cos(\t r)+1*2*sin(\t r)});
\draw [<-,shift={(1,1)},line width=2pt,color=ffqqqq]  plot[domain=0:1.0423625539140053,variable=\t]({1*2*cos(\t r)+0*2*sin(\t r)},{0*2*cos(\t r)+1*2*sin(\t r)});
\draw [<-,shift={(1,1)},line width=2pt,color=ffqqqq]  plot[domain=1.0423625539140056:2.0807683864963096,variable=\t]({1*2*cos(\t r)+0*2*sin(\t r)},{0*2*cos(\t r)+1*2*sin(\t r)});
\begin{scriptsize}
\draw [fill=black] (-7,1) circle (2.5pt);
\draw [fill=black] (-3,1) circle (2.5pt);
\draw [fill=black] (-1,1) circle (2.5pt);
\draw [fill=black] (3,1) circle (2.5pt);
\draw [fill=black] (-4.0108264175095565,2.73825648961913) circle (2.5pt);
\draw [fill=black] (-6.002072541083411,-0.7308525709622511) circle (2.5pt);
\draw [fill=black] (-3.9828806768827247,-0.7220535074559835) circle (2.5pt);
\draw [fill=black] (0.02369427609954644,-0.7455162942462643) circle (2.5pt);
\draw [fill=black] (2.0231423495308087,-0.7184818103769901) circle (2.5pt);
\draw [fill=black] (0.023694276099546885,2.7455162942462645) circle (2.5pt);
\draw [fill=black] (2.0083627397427817,2.727195583915855) circle (2.5pt);
\draw [fill=black] (-5,3) circle (2.5pt);
\end{scriptsize}
\end{tikzpicture}
  \caption{The directed multigraph $G_1$: We contracted vertices that lie near each other in the optimal tour. We can see there are 4 pairs of vertices in Figure~\ref{tour} that were contracted.}
  \label{g1}
    \end{minipage}
\end{figure}

\begin{figure}[h]
    \centering
    \begin{minipage}[t]{.45\textwidth}
        \centering
 \definecolor{ffqqqq}{rgb}{1,0,0}
\definecolor{qqqqff}{rgb}{0,0,1}
\begin{tikzpicture}[line cap=round,line join=round,>=triangle 45,x=0.65cm,y=0.65cm]
\draw [->,line width=2pt,color=qqqqff] (-3,1)-- (-1,1);
\draw [->,line width=2pt,color=qqqqff] (3,1)-- (-1,1);
\draw [->,line width=2pt,color=qqqqff] (-3,1)-- (-7,1);
\draw [->,shift={(-2,1)},line width=2pt,color=qqqqff]  plot[domain=0:3.141592653589793,variable=\t]({1*5*cos(\t r)+0*5*sin(\t r)},{0*5*cos(\t r)+1*5*sin(\t r)});
\draw [->,shift={(-5,1)},line width=2pt,color=ffqqqq]  plot[domain=4.187593208734581:5.245900127691526,variable=\t]({1*2*cos(\t r)+0*2*sin(\t r)},{0*2*cos(\t r)+1*2*sin(\t r)});
\draw [->,shift={(1,1)},line width=2pt,color=ffqqqq]  plot[domain=4.202416920683277:5.249401338813848,variable=\t]({1*2*cos(\t r)+0*2*sin(\t r)},{0*2*cos(\t r)+1*2*sin(\t r)});
\draw [<-,shift={(1,1)},line width=2pt,color=ffqqqq]  plot[domain=1.0423625539140056:2.0807683864963096,variable=\t]({1*2*cos(\t r)+0*2*sin(\t r)},{0*2*cos(\t r)+1*2*sin(\t r)});
\begin{scriptsize}
\draw [fill=qqqqff] (-7,1) circle (2.5pt);
\draw [fill=ffqqqq] (-3,1) circle (2.5pt);
\draw [fill=qqqqff] (-1,1) circle (2.5pt);
\draw [fill=ffqqqq] (3,1) circle (2.5pt);
\draw [fill=qqqqff] (-4.0108264175095565,2.73825648961913) circle (2.5pt);
\draw [fill=ffqqqq] (-6.002072541083411,-0.7308525709622511) circle (2.5pt);
\draw [fill=qqqqff] (-3.9828806768827247,-0.7220535074559835) circle (2.5pt);
\draw [fill=ffqqqq] (0.02369427609954644,-0.7455162942462643) circle (2.5pt);
\draw [fill=qqqqff] (2.0231423495308087,-0.7184818103769901) circle (2.5pt);
\draw [fill=ffqqqq] (0.023694276099546885,2.7455162942462645) circle (2.5pt);
\draw [fill=qqqqff] (2.0083627397427817,2.727195583915855) circle (2.5pt);
\draw [fill=ffqqqq] (-5,3) circle (2.5pt);
\end{scriptsize}
\end{tikzpicture}
  \caption{The directed multigraph $G_2$: Coloring the vertices and only considering the $l$-long edges from red to blue. In this example the upper left edge is not $l$-long and hence not drawn. The blue edges form the undirected cycle $C$. The red edges are the remaining edges of the connecting paths corresponding to this cycle.}
  \label{g2}
    \end{minipage}\qquad%
    \begin{minipage}[t]{0.45\textwidth}
        \centering
 \definecolor{qqffqq}{rgb}{0,1,0}
\definecolor{ffqqqq}{rgb}{1,0,0}
\begin{tikzpicture}[line cap=round,line join=round,>=triangle 45,x=1.43cm,y=1.43cm]
\draw [shift={(-5,2)},line width=2pt,color=ffqqqq]  plot[domain=0:3.141592653589793,variable=\t]({1*1*cos(\t r)+0*1*sin(\t r)},{0*1*cos(\t r)+1*1*sin(\t r)});
\draw [shift={(-5,1.5)},line width=2pt,color=ffqqqq]  plot[domain=3.141592653589793:6.283185307179586,variable=\t]({1*1*cos(\t r)+0*1*sin(\t r)},{0*1*cos(\t r)+1*1*sin(\t r)});
\draw [shift={(-2,2)},line width=2pt,color=ffqqqq]  plot[domain=0:3.141592653589793,variable=\t]({1*1*cos(\t r)+0*1*sin(\t r)},{0*1*cos(\t r)+1*1*sin(\t r)});
\draw [shift={(-2,1.5)},line width=2pt,color=ffqqqq]  plot[domain=3.141592653589793:6.283185307179586,variable=\t]({1*1*cos(\t r)+0*1*sin(\t r)},{0*1*cos(\t r)+1*1*sin(\t r)});
\draw [line width=2pt,color=qqffqq] (-6,2)-- (-6,1.5);
\draw [line width=2pt,color=qqffqq] (-4,2)-- (-4,1.5);
\draw [line width=2pt,color=qqffqq] (-3,2)-- (-3,1.5);
\draw [line width=2pt,color=qqffqq] (-1,2)-- (-1,1.5);
\begin{scriptsize}
\draw [fill=black] (-6,2) circle (2.5pt);
\draw [fill=black] (-4,2) circle (2.5pt);
\draw [fill=black] (-4,1.5) circle (2.5pt);
\draw [fill=black] (-6,1.5) circle (2.5pt);
\draw [fill=black] (-3,2) circle (2.5pt);
\draw [fill=black] (-1,2) circle (2.5pt);
\draw [fill=black] (-1,1.5) circle (2.5pt);
\draw [fill=black] (-3,1.5) circle (2.5pt);
\draw [fill=black] (-4.4986326803889485,2.865234540934439) circle (2.5pt);
\draw [fill=black] (-5.511366354611534,0.6406371829248653) circle (2.5pt);
\draw [fill=black] (-4.492859135063964,0.6381368187978275) circle (2.5pt);
\draw [fill=black] (-2.498293794761824,0.6329917566125098) circle (2.5pt);
\draw [fill=black] (-1.506912539986257,0.6300202549615334) circle (2.5pt);
\draw [fill=black] (-1.5104150267938854,2.871955591765296) circle (2.5pt);
\draw [fill=black] (-2.5082509895346465,2.8612089941686936) circle (2.5pt);
\draw [fill=black] (-5,3) circle (2.5pt);
\end{scriptsize}
\end{tikzpicture}
  \caption{The graph $G_3^C$: The green edges are the short edges, the red edges are the connecting paths.} 
  \label{g3}
    \end{minipage}
\end{figure}

\begin{definition} \label{defShortEdge}
For any two endpoints $v_1,v_2$ of $C$-edges in $G$ which are near each other we call the edge $\{v_1,v_2\}$ a \emph{short edge}.
\end{definition}

\begin{lemma}\label{short edges}
There are exactly $2h$ short edges forming an alternating cycle with the $C$-edges. Moreover, every short edge connects either two heads or two tails of connecting paths.
\end{lemma}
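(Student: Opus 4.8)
The plan is to translate everything into the arc‑labelling of the cycle $C$. Since the underlying undirected graph of $G_2^l$ is bipartite with the two colour classes as sides, I would first write $C$ as a cyclic sequence of arcs $w_0,w_1,\dots ,w_{2h-1}$ (indices mod $2h$) with $w_j$ red for $j$ even and blue for $j$ odd, and for each $j$ let $\vec e_j\in T$ be the unique $C$-edge (a directed edge of $T$) that is the preimage of the edge $\{w_j,w_{j+1}\}$ of $C$. The key observation is a bookkeeping one: because $G_2^l$ keeps only edges oriented from a red to a blue vertex and the contraction $G\to G_1^l\to G_2^l$ preserves the orientation of $T$, the edge $\vec e_j$ has its tail in arc $w_j$ and its head in arc $w_{j+1}$ when $j$ is even, and vice versa when $j$ is odd. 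Hence every head of a $C$-edge lies in a blue arc and every tail in a red arc, and, sharpening this, the only $C$-edge endpoints lying in a given arc $w_j$ are the heads of $\vec e_{j-1}$ and $\vec e_j$ if $w_j$ is blue, and their tails if $w_j$ is red.

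From this dictionary both parts follow quickly. For the count: every endpoint of a $C$-edge lies in one of the arcs $w_0,\dots ,w_{2h-1}$, each such arc contains exactly two $C$-edge endpoints, and two endpoints lying in distinct arcs are never near each other (distinct arcs become distinct vertices of $G_1^l$). So each arc $w_j$ produces exactly one short edge $s_j$, namely the edge between the two $C$-edge endpoints it contains, and these exhaust the short edges, giving exactly $2h$ of them, one per vertex of $C$. For the dichotomy: if $w_j$ is blue then $s_j$ joins the heads of $\vec e_{j-1}$ and $\vec e_j$; the head of a $C$-edge is the tail of the connecting path that directly follows it along $T$, so $s_j$ joins two tails of connecting paths. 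Symmetrically, if $w_j$ is red then $s_j$ joins the tails of $\vec e_{j-1},\vec e_j$, which are the heads of the connecting paths directly preceding them, so $s_j$ joins two heads of connecting paths.

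It remains to exhibit the alternating cycle. Write $\vec e_j$ as the edge from its arc-$w_j$ endpoint $p_j$ to its arc-$w_{j+1}$ endpoint $q_j$, so that $\{p_j,q_j\}$ is the pair consisting of the tail and head of $\vec e_j$; then the short edge $s_{j+1}$, which lives in arc $w_{j+1}$, joins $q_j$ to $p_{j+1}$. Therefore the walk that traverses, in order of traversal, $\vec e_0, s_1,\vec e_1,s_2,\dots ,\vec e_{2h-1}, s_0$ is a closed walk of length $4h$ that begins with a tour edge and alternates tour edges $\vec e_j$ with non-tour edges $s_j$; here $s_j\notin T$ because its two endpoints are two distinct heads, or two distinct tails, of $C$-edges, while a vertex of a tour has a unique in-edge and a unique out-edge. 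Finally the walk visits no vertex twice: the vertices visited are exactly the $2h$ tails and $2h$ heads of the $C$-edges, the tails are pairwise distinct and the heads are pairwise distinct (distinct tour edges have distinct tails and distinct heads), and a head can never coincide with a tail, since the head of $\vec e_i$ being the tail of $\vec e_j$ would make $\vec e_i$ immediately precede $\vec e_j$ in $T$, contradicting the fact that every connecting path is non-empty (Lemma \ref{connecting path}). Thus the closed walk is an alternating cycle formed by the $2h$ $C$-edges together with the $2h$ short edges, as claimed.

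The only real obstacle is the first step: getting the red/blue‑versus‑orientation bookkeeping exactly right, i.e.\ the dictionary between the vertices of $G_2^l$ and the arcs and the resulting placement of each $C$-edge's two endpoints. Once that is pinned down, the count, the heads/tails‑of‑connecting‑paths dichotomy and the alternating‑cycle structure are all immediate. The degenerate case $2h=2$, where $C$ consists of two parallel edges of $G_2^l$ joining a red arc and a blue arc, is covered by the same argument without change.
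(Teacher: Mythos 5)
Your proof is correct and follows essentially the same route as the paper: pairing the $C$-edge endpoints within the contracted classes (one pair per vertex of $C$, giving $2h$ short edges), using the red--blue orientation of the edges of $G_2^l$ for the heads/tails dichotomy, and using the fact that $C$ is a single cycle to assemble one alternating cycle. You merely carry out the bookkeeping (arc indexing, distinctness of endpoints, verification that the closed walk is a cycle) more explicitly than the paper does.
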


\begin{proof}
For any endpoint of a $C$-edge in $G$ there is exactly one other endpoint of a $C$-edge which is near to it since the $C$-edges in $G$ are the preimage of a cycle in $G_2$. By definition, every near pair of such endpoints is connected by a short edge and no other short edges exist. Note that there are $2h$ $C$-edges, so we get $2h$ short edges which form a set of alternating cycles with the $C$-edges.  Again using the fact that after the contraction we get a single cycle $C$ in $G_2$, we see that the $C$-edges form with the short edges a single alternating cycle.
Since the vertices of $C$ are colored either red or blue in $G_2$, the short edges connect two heads or two tails of $C$-edges and hence also two heads or two tails of connecting paths.
\end{proof}

\begin{definition}
We construct the graph $G_3^C$ as follows: The vertex set of $G_3^C$ is that of $G$ and the edge set consists of the connecting paths and the short edges (Figure \ref{g3}).
\end{definition}

\begin{lemma} \label{connected comp}
$E(G_3^C)$ is the union of at most $h$ disjoint cycles.
\end{lemma}

\begin{proof}
By the definition of connecting path, every endpoint of a connecting path is an endpoint of a $C$-edge and vice versa. By Lemma \ref{short edges}, every endpoint of a $C$-edge is an endpoint of a short edge and vice versa. Hence, every vertex in $G$ is either an endpoint of a connecting path and a short edge or none of them. Thus, the edges of $G_3^C$ form disjoint cycles. 
Note that every connected component in $G_3^C$ contains at least two connecting paths since the two endpoints of a connecting path are not near each other by Lemma~\ref{connecting path} and hence they cannot be connected by a short edge. Thus, there are at most $h$ connected components.
\end{proof}

Before we start with the actual analysis we show that the total length of all short edges is smaller than that of any $C$-edge. The first step is to bound the total length of all short edges.

\begin{lemma} \label{SetOfShortEdges}
Let $S$ be the set of short edges. We have
\begin{align*}
\sum_{e\in S} c(e)\leq \frac{1}{2}\left(\frac{4k-5}{4k-4}\right)^l.
\end{align*}
\end{lemma}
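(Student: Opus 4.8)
The plan is to bound the cost of a single short edge by the length of one of the $4(k-1)\lceil ((4k-4)/(4k-5))^l\rceil$ arcs into which the optimal tour circle was divided, and then multiply by the number of short edges, which is exactly $2h$ by Lemma \ref{short edges}, and simplify.

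First I would fix a short edge $\{v_1,v_2\}$. By definition its endpoints are near each other, so $v_1$ and $v_2$ lie on a common arc. Since the position of each vertex on the optimal tour circle is the cumulative length of the optimal tour read in its orientation, the vertices lying (in circular order) between $v_1$ and $v_2$ on the shorter of the two arcs joining them are all contained in that same arc. Hence the sub-path of the optimal tour from $v_1$ to $v_2$ in that direction stays inside the arc, and its length equals the arc distance between $v_1$ and $v_2$, which is at most the length of one arc, namely $\frac{1}{4(k-1)\lceil ((4k-4)/(4k-5))^l\rceil}$. Applying the triangle inequality along this sub-path gives $c(\{v_1,v_2\})\leq \frac{1}{4(k-1)\lceil ((4k-4)/(4k-5))^l\rceil}$.

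Next I would count: by Lemma \ref{short edges} the set $S$ of short edges has exactly $2h$ elements, and since the cycle $C$ was assumed to satisfy $2h<2k$, we have $2h\leq 2k-2=2(k-1)$. Combining this with the per-edge bound,
\begin{align*}
\sum_{e\in S}c(e)\;\leq\; 2(k-1)\cdot\frac{1}{4(k-1)\left\lceil\left(\frac{4k-4}{4k-5}\right)^l\right\rceil}\;=\;\frac{1}{2\left\lceil\left(\frac{4k-4}{4k-5}\right)^l\right\rceil}\;\leq\;\frac{1}{2}\left(\frac{4k-5}{4k-4}\right)^l,
\end{align*}
where the last inequality just drops the ceiling. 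This is exactly the claimed bound.

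The only point requiring care — and thus the main (albeit mild) obstacle — is the first step: one must argue that $c(\{v_1,v_2\})$ is bounded by the length of a single arc rather than merely by $\tfrac12$. This needs the observation that "near" means the two endpoints sit on one contiguous arc, so the connecting tour sub-path in the appropriate direction never leaves that arc, and only then does the triangle inequality deliver the sharp bound. Everything after that is arithmetic, using only $2h\leq 2(k-1)$ and the explicit arc length from the definition of "near".
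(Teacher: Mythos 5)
Your proposal is correct and follows essentially the same route as the paper's proof: bound each short edge by the arc length $\frac{1}{4(k-1)\lceil (\frac{4k-4}{4k-5})^{l} \rceil}$ via the triangle inequality (applied along the optimal-tour sub-path inside the arc), use $2h\leq 2(k-1)$ from Lemma \ref{short edges}, and multiply. Your extra justification of the per-edge bound is exactly the reasoning the paper compresses into the phrase ``by the triangle inequality.''
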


\begin{proof}
By Lemma \ref{short edges}, there are $2h\leq 2(k-1)$ short edges. Each of them connects two vertices which are near each other. By the triangle inequality, each of the short edges has length at most $\frac{1}{4(k-1)\lceil (\frac{4k-4}{4k-5})^{l} \rceil}$. Hence the total length of short edges is at most $2h\frac{1}{4(k-1)\lceil (\frac{4k-4}{4k-5})^{l} \rceil}\leq 2(k-1)\frac{1}{4(k-1) (\frac{4k-4}{4k-5})^{l}}=\frac{1}{2}(\frac{4k-5}{4k-4})^l$.
\end{proof}

In the subsequent step, we leverage the fact that the ratio between the upper and lower bounds of the length of $l$-long edges is by definition reasonably small. By the choice of the ratio the combined length of $x$ ``long'' $l$-long edges, along with all the short edges, remains shorter than that of $x+1$ ``short'' $l$-long edges.

\begin{lemma} \label{number of blue edges}
Let $S$ be the set of short edges. Let $B_1$ and $B_2$ be sets of $C$-edges with $\lvert B_1 \rvert < \lvert B_2 \rvert\leq 2h$. Then
\begin{align*}
\sum_{e\in B_1}c(e)+\sum_{e\in S}c(e)\leq \sum_{e\in B_2}c(e).
\end{align*}
\end{lemma}

\begin{proof}
Let $\beta_1:=\lvert B_1 \rvert, \beta_2:=\lvert B_2 \rvert$. We have
\begin{align*}
\frac{\beta_1+\frac{1}{2}}{\beta_1+1}=\frac{2\beta_1+1}{2\beta_1+2}\leq \frac{2\cdot (2h-1)+1}{2\cdot (2h-1)+2}\leq \frac{4h-1}{4h}\leq \frac{4(k-1)-1}{4(k-1)}=\frac{4k-5}{4k-4}. \end{align*}
Combined with Lemma \ref{SetOfShortEdges} and the fact that $C$-edges are edges of $G_2$ and hence $l$-long we get
\begin{align*}
\sum_{e\in B_1}c(e)+\sum_{e\in S}c(e)&\leq \beta_1\left(\frac{4k-5}{4k-4}\right)^l+\frac{1}{2}\left(\frac{4k-5}{4k-4}\right)^l\\
&= \left(\beta_1+\frac{1}{2}\right)\left(\frac{4k-5}{4k-4}\right)^l \leq (\beta_1+1)\left(\frac{4k-5}{4k-4}\right)^{l+1}\\
&\leq \beta_2\left(\frac{4k-5}{4k-4}\right)^{l+1} < \sum_{e\in B_2}c(e). \qedhere
\end{align*}
\end{proof}

\begin{figure}[h]
  \begin{minipage}[t]{.45\textwidth}
      \centering
         \definecolor{xfqqff}{rgb}{1,1,0}
\definecolor{qqqqff}{rgb}{0,0,1}
\definecolor{qqffqq}{rgb}{0,1,0}
\definecolor{ffqqqq}{rgb}{1,0,0}
\begin{tikzpicture}[line cap=round,line join=round,>=triangle 45,x=0.7cm,y=0.7cm]
\draw [shift={(4,3)},line width=2pt,color=ffqqqq]  plot[domain=0.7853981633974483:3.9269908169872414,variable=\t]({1*1.4142135623730951*cos(\t r)+0*1.4142135623730951*sin(\t r)},{0*1.4142135623730951*cos(\t r)+1*1.4142135623730951*sin(\t r)});
\draw [shift={(7,3)},line width=2pt,color=ffqqqq]  plot[domain=-0.7853981633974483:2.356194490192345,variable=\t]({1*1.4142135623730951*cos(\t r)+0*1.4142135623730951*sin(\t r)},{0*1.4142135623730951*cos(\t r)+1*1.4142135623730951*sin(\t r)});
\draw [shift={(4,0)},line width=2pt,color=ffqqqq]  plot[domain=2.356194490192345:5.497787143782138,variable=\t]({1*1.4142135623730951*cos(\t r)+0*1.4142135623730951*sin(\t r)},{0*1.4142135623730951*cos(\t r)+1*1.4142135623730951*sin(\t r)});
\draw [shift={(7,0)},line width=2pt,color=ffqqqq]  plot[domain=-2.356194490192345:0.7853981633974483,variable=\t]({1*1.4142135623730951*cos(\t r)+0*1.4142135623730951*sin(\t r)},{0*1.4142135623730951*cos(\t r)+1*1.4142135623730951*sin(\t r)});
\draw [shift={(12,-0.75)},line width=2pt,color=ffqqqq]  plot[domain=-1.5707963267948966:1.5707963267948966,variable=\t]({1*0.75*cos(\t r)+0*0.75*sin(\t r)},{0*0.75*cos(\t r)+1*0.75*sin(\t r)});
\draw [shift={(11,-0.75)},line width=2pt,color=ffqqqq]  plot[domain=1.5707963267948966:4.71238898038469,variable=\t]({1*0.75*cos(\t r)+0*0.75*sin(\t r)},{0*0.75*cos(\t r)+1*0.75*sin(\t r)});
\draw [line width=2pt,color=qqffqq] (3,2)-- (3,1);
\draw [line width=2pt,color=qqffqq] (5,-1)-- (6,-1);
\draw [line width=2pt,color=qqffqq] (8,2)-- (8,1);
\draw [line width=2pt,color=qqffqq] (5,4)-- (6,4);
\draw [line width=2pt,color=qqffqq] (11,-1.5)-- (12,-1.5);
\draw [line width=2pt,color=qqffqq] (11,0)-- (12,0);
\draw [line width=2pt,color=qqffqq] (11,-5)-- (11,-4);
\draw [line width=2pt,color=qqffqq] (9.5,-5)-- (9.5,-4);
\draw [line width=2pt,color=qqqqff] (5,-1)-- (9.5,-5);
\draw [line width=2pt,color=qqqqff] (11,-5)-- (12,-1.5);
\draw [shift={(13.689024390243905,4.243902439024391)},line width=2pt,color=xfqqff]  plot[domain=3.684588139919361:4.286905139486566,variable=\t]({1*10.148776164851574*cos(\t r)+0*10.148776164851574*sin(\t r)},{0*10.148776164851574*cos(\t r)+1*10.148776164851574*sin(\t r)});
\draw [shift={(8.336,-2.346)},line width=2pt,color=xfqqff]  plot[domain=-0.7835177594460383:0.22691844143581524,variable=\t]({1*3.7604005105839455*cos(\t r)+0*3.7604005105839455*sin(\t r)},{0*3.7604005105839455*cos(\t r)+1*3.7604005105839455*sin(\t r)});
\draw [shift={(10.25,-4)},line width=2pt,color=ffqqqq]  plot[domain=0:3.141592653589793,variable=\t]({1*0.75*cos(\t r)+0*0.75*sin(\t r)},{0*0.75*cos(\t r)+1*0.75*sin(\t r)});
\draw [shift={(10.25,-5)},line width=2pt,color=ffqqqq]  plot[domain=3.141592653589793:6.283185307179586,variable=\t]({1*0.75*cos(\t r)+0*0.75*sin(\t r)},{0*0.75*cos(\t r)+1*0.75*sin(\t r)});
\begin{scriptsize}
\draw [fill=black] (3,2) circle (2.5pt);
\draw [fill=black] (5,4) circle (2.5pt);
\draw [fill=black] (6,4) circle (2.5pt);
\draw [fill=black] (8,2) circle (2.5pt);
\draw [fill=black] (5,-1) circle (2.5pt);
\draw [fill=black] (3,1) circle (2.5pt);
\draw [fill=black] (8,1) circle (2.5pt);
\draw [fill=black] (6,-1) circle (2.5pt);
\draw [fill=black] (12,0) circle (2.5pt);
\draw [fill=black] (12,-1.5) circle (2.5pt);
\draw [fill=black] (11,-1.5) circle (2.5pt);
\draw [fill=black] (11,0) circle (2.5pt);
\draw [fill=black] (9.5,-5) circle (2.5pt);
\draw [fill=black] (11,-5) circle (2.5pt);
\draw [fill=black] (11,-4) circle (2.5pt);
\draw [fill=black] (9.5,-4) circle (2.5pt);
\end{scriptsize}
\end{tikzpicture}
         \caption{Sketch of the graph $G'$ constructed in Lemma \ref{2k-opt}. The red curves represent the connecting paths. The green edges are the short edges, the blue edges are the fixed $C$-edges and the yellow edges are the copies of the fixed $C$-edges.}
         \label{2k-opt fig}
    \end{minipage}\qquad%
  \begin{minipage}[t]{0.45\textwidth}
      \centering
      \definecolor{xfqqff}{rgb}{1,1,0}
\definecolor{qqqqff}{rgb}{0,0,1}
\definecolor{qqffqq}{rgb}{0,1,0}
\definecolor{ffqqqq}{rgb}{1,0,0}
\begin{tikzpicture}[line cap=round,line join=round,>=triangle 45,x=0.7cm,y=0.7cm]
\draw [shift={(4,3)},line width=2pt,color=ffqqqq]  plot[domain=0.7853981633974483:3.9269908169872414,variable=\t]({1*1.4142135623730951*cos(\t r)+0*1.4142135623730951*sin(\t r)},{0*1.4142135623730951*cos(\t r)+1*1.4142135623730951*sin(\t r)});
\draw [shift={(7,3)},line width=2pt,color=ffqqqq]  plot[domain=-0.7853981633974483:2.356194490192345,variable=\t]({1*1.4142135623730951*cos(\t r)+0*1.4142135623730951*sin(\t r)},{0*1.4142135623730951*cos(\t r)+1*1.4142135623730951*sin(\t r)});
\draw [shift={(4,0)},line width=2pt,color=ffqqqq]  plot[domain=2.356194490192345:5.497787143782138,variable=\t]({1*1.4142135623730951*cos(\t r)+0*1.4142135623730951*sin(\t r)},{0*1.4142135623730951*cos(\t r)+1*1.4142135623730951*sin(\t r)});
\draw [shift={(7,0)},line width=2pt,color=ffqqqq]  plot[domain=-2.356194490192345:0.7853981633974483,variable=\t]({1*1.4142135623730951*cos(\t r)+0*1.4142135623730951*sin(\t r)},{0*1.4142135623730951*cos(\t r)+1*1.4142135623730951*sin(\t r)});
\draw [shift={(12,-0.75)},line width=2pt,color=ffqqqq]  plot[domain=-1.5707963267948966:1.5707963267948966,variable=\t]({1*0.75*cos(\t r)+0*0.75*sin(\t r)},{0*0.75*cos(\t r)+1*0.75*sin(\t r)});
\draw [shift={(11,-0.75)},line width=2pt,color=ffqqqq]  plot[domain=1.5707963267948966:4.71238898038469,variable=\t]({1*0.75*cos(\t r)+0*0.75*sin(\t r)},{0*0.75*cos(\t r)+1*0.75*sin(\t r)});
\draw [line width=2pt,color=qqffqq] (3,2)-- (3,1);
\draw [line width=2pt,color=qqffqq] (8,2)-- (8,1);
\draw [line width=2pt,color=qqffqq] (5,4)-- (6,4);
\draw [line width=2pt,color=qqffqq] (11,-1.5)-- (12,-1.5);
\draw [line width=2pt,color=qqffqq] (11,0)-- (12,0);
\draw [line width=2pt,color=qqffqq] (11,-5)-- (11,-4);
\draw [line width=2pt,color=qqffqq] (9.5,-5)-- (9.5,-4);
\draw [line width=2pt,color=qqqqff] (5,-1)-- (9.5,-5);
\draw [line width=2pt,color=qqqqff] (11,-5)-- (12,-1.5);
\draw [shift={(8.336,-2.346)},line width=2pt,color=xfqqff]  plot[domain=-0.7835177594460383:0.22691844143581524,variable=\t]({1*3.7604005105839455*cos(\t r)+0*3.7604005105839455*sin(\t r)},{0*3.7604005105839455*cos(\t r)+1*3.7604005105839455*sin(\t r)});
\draw [line width=2pt] (9.5,-5)-- (6,-1);
\draw [shift={(10.25,-4)},line width=2pt,color=ffqqqq]  plot[domain=0:3.141592653589793,variable=\t]({1*0.75*cos(\t r)+0*0.75*sin(\t r)},{0*0.75*cos(\t r)+1*0.75*sin(\t r)});
\draw [shift={(10.25,-5)},line width=2pt,color=ffqqqq]  plot[domain=3.141592653589793:6.283185307179586,variable=\t]({1*0.75*cos(\t r)+0*0.75*sin(\t r)},{0*0.75*cos(\t r)+1*0.75*sin(\t r)});
\begin{scriptsize}
\draw [fill=black] (3,2) circle (2.5pt);
\draw [fill=black] (5,4) circle (2.5pt);
\draw [fill=black] (6,4) circle (2.5pt);
\draw [fill=black] (8,2) circle (2.5pt);
\draw [fill=black] (5,-1) circle (2.5pt);
\draw [fill=black] (3,1) circle (2.5pt);
\draw [fill=black] (8,1) circle (2.5pt);
\draw [fill=black] (6,-1) circle (2.5pt);
\draw [fill=black] (12,0) circle (2.5pt);
\draw [fill=black] (12,-1.5) circle (2.5pt);
\draw [fill=black] (11,-1.5) circle (2.5pt);
\draw [fill=black] (11,0) circle (2.5pt);
\draw [fill=black] (9.5,-5) circle (2.5pt);
\draw [fill=black] (11,-5) circle (2.5pt);
\draw [fill=black] (11,-4) circle (2.5pt);
\draw [fill=black] (9.5,-4) circle (2.5pt);
\end{scriptsize}
\end{tikzpicture}
      \caption{Shortcutting the copies of the $C$-edges (yellow) in the graph $G'$ from Figure~\ref{2k-opt fig} while leaving the connecting paths (red) and fixed $C$-edges (blue) fixed.}
  \label{2k-opt-shortcut fig}
  \end{minipage}
\end{figure}

\begin{figure}[h]
  \begin{minipage}[t]{.45\textwidth}
      \centering
         \definecolor{xfqqff}{rgb}{1,1,0}
\definecolor{qqqqff}{rgb}{0,0,1}
\definecolor{qqffqq}{rgb}{0,1,0}
\definecolor{ffqqqq}{rgb}{1,0,0}
\begin{tikzpicture}[line cap=round,line join=round,>=triangle 45,x=0.7cm,y=0.7cm]
\draw [shift={(4,3)},line width=2pt,color=ffqqqq]  plot[domain=0.7853981633974483:3.9269908169872414,variable=\t]({1*1.4142135623730951*cos(\t r)+0*1.4142135623730951*sin(\t r)},{0*1.4142135623730951*cos(\t r)+1*1.4142135623730951*sin(\t r)});
\draw [shift={(7,3)},line width=2pt,color=ffqqqq]  plot[domain=-0.7853981633974483:2.356194490192345,variable=\t]({1*1.4142135623730951*cos(\t r)+0*1.4142135623730951*sin(\t r)},{0*1.4142135623730951*cos(\t r)+1*1.4142135623730951*sin(\t r)});
\draw [shift={(4,0)},line width=2pt,color=ffqqqq]  plot[domain=2.356194490192345:5.497787143782138,variable=\t]({1*1.4142135623730951*cos(\t r)+0*1.4142135623730951*sin(\t r)},{0*1.4142135623730951*cos(\t r)+1*1.4142135623730951*sin(\t r)});
\draw [shift={(7,0)},line width=2pt,color=ffqqqq]  plot[domain=-2.356194490192345:0.7853981633974483,variable=\t]({1*1.4142135623730951*cos(\t r)+0*1.4142135623730951*sin(\t r)},{0*1.4142135623730951*cos(\t r)+1*1.4142135623730951*sin(\t r)});
\draw [shift={(12,-0.75)},line width=2pt,color=ffqqqq]  plot[domain=-1.5707963267948966:1.5707963267948966,variable=\t]({1*0.75*cos(\t r)+0*0.75*sin(\t r)},{0*0.75*cos(\t r)+1*0.75*sin(\t r)});
\draw [shift={(11,-0.75)},line width=2pt,color=ffqqqq]  plot[domain=1.5707963267948966:4.71238898038469,variable=\t]({1*0.75*cos(\t r)+0*0.75*sin(\t r)},{0*0.75*cos(\t r)+1*0.75*sin(\t r)});
\draw [line width=2pt,color=qqffqq] (3,2)-- (3,1);
\draw [line width=2pt,color=qqffqq] (8,2)-- (8,1);
\draw [line width=2pt,color=qqffqq] (5,4)-- (6,4);
\draw [line width=2pt,color=qqffqq] (11,0)-- (12,0);
\draw [line width=2pt,color=qqqqff] (5,-1)-- (9.5,-5);
\draw [line width=2pt,color=qqqqff] (11,-5)-- (12,-1.5);
\draw [line width=2pt] (11,-4)-- (11,-1.5);
\draw [line width=2pt] (9.5,-4)-- (6,-1);
\draw [shift={(10.25,-4)},line width=2pt,color=ffqqqq]  plot[domain=0:3.141592653589793,variable=\t]({1*0.75*cos(\t r)+0*0.75*sin(\t r)},{0*0.75*cos(\t r)+1*0.75*sin(\t r)});
\draw [shift={(10.25,-5)},line width=2pt,color=ffqqqq]  plot[domain=3.141592653589793:6.283185307179586,variable=\t]({1*0.75*cos(\t r)+0*0.75*sin(\t r)},{0*0.75*cos(\t r)+1*0.75*sin(\t r)});
\begin{scriptsize}
\draw [fill=black] (3,2) circle (2.5pt);
\draw [fill=black] (5,4) circle (2.5pt);
\draw [fill=black] (6,4) circle (2.5pt);
\draw [fill=black] (8,2) circle (2.5pt);
\draw [fill=black] (5,-1) circle (2.5pt);
\draw [fill=black] (3,1) circle (2.5pt);
\draw [fill=black] (8,1) circle (2.5pt);
\draw [fill=black] (6,-1) circle (2.5pt);
\draw [fill=black] (12,0) circle (2.5pt);
\draw [fill=black] (12,-1.5) circle (2.5pt);
\draw [fill=black] (11,-1.5) circle (2.5pt);
\draw [fill=black] (11,0) circle (2.5pt);
\draw [fill=black] (9.5,-5) circle (2.5pt);
\draw [fill=black] (11,-5) circle (2.5pt);
\draw [fill=black] (11,-4) circle (2.5pt);
\draw [fill=black] (9.5,-4) circle (2.5pt);
\end{scriptsize}
\end{tikzpicture}
         \caption{After finishing shortcutting Figure~\ref{2k-opt-shortcut fig} we get the tour $T'$.}
         \label{2k-opt-shortcut-end fig}
    \end{minipage}\qquad%
  \begin{minipage}[t]{0.45\textwidth}
      \centering
      \definecolor{xfqqff}{rgb}{1,1,0}
\definecolor{qqqqff}{rgb}{0,0,1}
\definecolor{qqffqq}{rgb}{0,1,0}
\definecolor{ffqqqq}{rgb}{1,0,0}
\begin{tikzpicture}[line cap=round,line join=round,>=triangle 45,x=0.7cm,y=0.7cm]
\draw [shift={(4,3)},line width=2pt,color=ffqqqq]  plot[domain=0.7853981633974483:3.9269908169872414,variable=\t]({1*1.4142135623730951*cos(\t r)+0*1.4142135623730951*sin(\t r)},{0*1.4142135623730951*cos(\t r)+1*1.4142135623730951*sin(\t r)});
\draw [shift={(7,3)},line width=2pt,color=ffqqqq]  plot[domain=-0.7853981633974483:2.356194490192345,variable=\t]({1*1.4142135623730951*cos(\t r)+0*1.4142135623730951*sin(\t r)},{0*1.4142135623730951*cos(\t r)+1*1.4142135623730951*sin(\t r)});
\draw [shift={(4,0)},line width=2pt,color=ffqqqq]  plot[domain=2.356194490192345:5.497787143782138,variable=\t]({1*1.4142135623730951*cos(\t r)+0*1.4142135623730951*sin(\t r)},{0*1.4142135623730951*cos(\t r)+1*1.4142135623730951*sin(\t r)});
\draw [shift={(7,0)},line width=2pt,color=ffqqqq]  plot[domain=-2.356194490192345:0.7853981633974483,variable=\t]({1*1.4142135623730951*cos(\t r)+0*1.4142135623730951*sin(\t r)},{0*1.4142135623730951*cos(\t r)+1*1.4142135623730951*sin(\t r)});
\draw [shift={(12,-0.75)},line width=2pt,color=ffqqqq]  plot[domain=-1.5707963267948966:1.5707963267948966,variable=\t]({1*0.75*cos(\t r)+0*0.75*sin(\t r)},{0*0.75*cos(\t r)+1*0.75*sin(\t r)});
\draw [shift={(11,-0.75)},line width=2pt,color=ffqqqq]  plot[domain=1.5707963267948966:4.71238898038469,variable=\t]({1*0.75*cos(\t r)+0*0.75*sin(\t r)},{0*0.75*cos(\t r)+1*0.75*sin(\t r)});
\draw [line width=2pt,color=qqffqq] (3,2)-- (3,1);
\draw [line width=2pt,color=qqffqq] (5,4)-- (6,4);
\draw [line width=2pt,color=qqqqff] (5,-1)-- (9.5,-5);
\draw [line width=2pt,color=qqqqff] (11,-5)-- (12,-1.5);
\draw [line width=2pt] (11,-4)-- (11,-1.5);
\draw [line width=2pt] (9.5,-4)-- (6,-1);
\draw [shift={(10.25,-4)},line width=2pt,color=ffqqqq]  plot[domain=0:3.141592653589793,variable=\t]({1*0.75*cos(\t r)+0*0.75*sin(\t r)},{0*0.75*cos(\t r)+1*0.75*sin(\t r)});
\draw [shift={(10.25,-5)},line width=2pt,color=ffqqqq]  plot[domain=3.141592653589793:6.283185307179586,variable=\t]({1*0.75*cos(\t r)+0*0.75*sin(\t r)},{0*0.75*cos(\t r)+1*0.75*sin(\t r)});
\draw [line width=2pt, color=qqqqff] (8,2)-- (11,0);
\draw [line width=2pt] (8,1)-- (12,0);
\begin{scriptsize}
\draw [fill=black] (3,2) circle (2.5pt);
\draw [fill=black] (5,4) circle (2.5pt);
\draw [fill=black] (6,4) circle (2.5pt);
\draw [fill=black] (8,2) circle (2.5pt);
\draw [fill=black] (5,-1) circle (2.5pt);
\draw [fill=black] (3,1) circle (2.5pt);
\draw [fill=black] (8,1) circle (2.5pt);
\draw [fill=black] (6,-1) circle (2.5pt);
\draw [fill=black] (12,0) circle (2.5pt);
\draw [fill=black] (12,-1.5) circle (2.5pt);
\draw [fill=black] (11,-1.5) circle (2.5pt);
\draw [fill=black] (11,0) circle (2.5pt);
\draw [fill=black] (9.5,-5) circle (2.5pt);
\draw [fill=black] (11,-5) circle (2.5pt);
\draw [fill=black] (11,-4) circle (2.5pt);
\draw [fill=black] (9.5,-4) circle (2.5pt);
\end{scriptsize}
\end{tikzpicture}
      \caption{Performing an ambivalent 2-move to $T'$ that decreases the number of short edges (green) by 2 and increases the number of $C$-edges (blue).}
  \label{2k-opt-ambivalent fig}
  \end{minipage}
\end{figure}

Now, we show that the existence of $C$ implies that there is an improving $k$-move or improving alternating cycle of length at most $2k$ contradicting the $k$-optimality or $k$-Lin-Kernighan optimality of $T$.

\begin{lemma} \label{2k-opt}
There is a tour $T'$ containing the connecting paths, $u-1$ $C$-edges and at least $2h-2u+2$ short edges, where $u$ is the number of connected components of $G_3^C$. Moreover, $T\triangle T'$ is an alternating cycle of $T$.
\end{lemma}

\begin{proof}
We construct such a tour $T'$. Let $S$ be a minimal set of $C$-edges that makes $E(G_3^C)$ connected. There is such a set since $T$ consists of the $C$-edges and connecting paths and is connected. Call $S$ the set of \emph{fixed $C$-edges}, colored blue in Figure \ref{2k-opt fig}. Next, let $S'$ be a copy of the fixed $C$-edges, colored yellow in Figure \ref{2k-opt fig}. (The yellow edges are not fixed.) We will call the connecting path edges and the fixed $C$-edges the \emph{fixed} edges. Define the multigraph $G'\coloneq(V(G),E(G_3^C)\sqcup S \sqcup S')$. $G'$ is by construction connected.

We decompose $E(G')$ into cycles: for every connected component in $G_3^C$ we get a cycle by Lemma \ref{connected comp} and for every fixed $C$-edge and the copy of it a cycle with two edges. Moreover, every vertex $b$ with degree greater two has degree four and is the intersection point of two cycles $C_1$ and $C_2$. Note that there are two fixed edges incident to $b$, a connecting path edge and a fixed $C$-edge, one lying on $C_1$ and the other on $C_2$. This implies that there are also two non-fixed edges incident to $b$, one lying on $C_1$ and the other on $C_2$. We call this property the \emph{transverse} property. Now, we can iteratively shortcut $E(G')$ to a tour: In every step we shortcut two cycles intersecting at vertex $b$ to one cycle by shortcutting the two non-fixed edges $\{a,b\}$ and $\{b,c\}$ to $\{a,c\}$ and decrease the number of vertices with degree greater two (Figure~\ref{2k-opt-shortcut fig}). Note that each shortcut does not affect the transverse property at other intersection points.
When this procedure is not possible anymore, every vertex has degree two and since $G'$ was connected we get a tour $T'$ that contains all the fixed edges (Figure~\ref{2k-opt-shortcut-end fig}).

By construction the final tour $T'$ contains $u-1$ fixed $C$-edges and shortcuts of their copies. To connect the $2h$ connecting paths, $T'$ also contains $2h-2u + 2$ short edges.

It remains to prove that $T\triangle T'$ is an alternating cycle. 
By Lemma \ref{short edges}, we know that $A \coloneq T\triangle E(G_3^C)$ is an alternating cycle. Note that the $C$-edges of $E(G_3^C)$ are the tour edges in $A$. The first step in the construction of $T'$ from $E(G_3^C)$ was adding the fixed $C$-edges and a copy of them. By applying this step to $A$, it adds these edges instead of removing them, i.e.\ the fixed $C$-edges change from tour edges to non-tour edges. So $A$ stays a cycle (although not necessarily alternating). The second step was shortcutting consecutive short edges and copies of fixed $C$-edges. As the edges we shortcut are non-tour edges, applying a shortcut on $A$ replaces consecutive non-tour edges with a non-tour edge preserving the property that $A$ is a cycle. Thus, after appying all steps to $A$ it is still a cycle and $T\triangle T'$ is an alternating cycle.
\end{proof}

\begin{remark} \label{rem2h}
The last lemma already gives us a bound on the girth of $G_2$: The length of $T'$ can be bounded by the length of the connecting paths plus $2(u-1)<2h$ $C$-edges and all short edges. Thus, by Lemma \ref{number of blue edges} $T'$ is shorter than $T$. The alternating cycle $T\triangle T'$ consists of $2h-(u-1)\leq 2h$ tour edges, which are the $C$-edges we remove. If $2h\leq k$, this would contradict the $k$-optimality or $k$-Lin-Kernighan optimality of $T$, hence $G_2$ has girth at least $k+1$. 
\end{remark}
Next, we use $T'$ to show Claim~\ref{claim underlying g2 girth}: $G_2$ has girth at least $2k$. 

\begin{definition} \label{def ambivalent 2-move}
Let $T'$ be a tour containing the connecting paths. An \emph{ambivalent 2-move} replaces two edges not belonging to the connecting paths of $T'$ to obtain a new tour containing at least one more $C$-edge.
\end{definition}

\begin{definition}
Fix an orientation of $T'$. A connecting path $p$ is \emph{wrongly oriented} if the orientation of $p$ in $T'$ is opposite to the orientation in $T$. Otherwise, it is \emph{correctly oriented}.
\end{definition}

\begin{figure}[h]
\centering
 \definecolor{qqffqq}{rgb}{0,1,0}
\definecolor{ffqqqq}{rgb}{1,0,0}
\definecolor{qqqqff}{rgb}{0,0,1}
\definecolor{ududff}{rgb}{0,0,0}
\begin{tikzpicture}[line cap=round,line join=round,>=triangle 45,x=1cm,y=1cm]
\draw [line width=2pt,color=qqqqff] (-7,3)-- (-5,3);
\draw [->,shift={(-4,3)},line width=2pt,color=ffqqqq]  plot[domain=0:3.141592653589793,variable=\t]({1*1*cos(\t r)+0*1*sin(\t r)},{0*1*cos(\t r)+1*1*sin(\t r)});
\draw [<-,shift={(-8,3)},line width=2pt,color=ffqqqq]  plot[domain=0:3.141592653589793,variable=\t]({1*1*cos(\t r)+0*1*sin(\t r)},{0*1*cos(\t r)+1*1*sin(\t r)});
\draw [line width=2pt] (-7,1)-- (-5,1);
\draw [->,line width=2pt,color=qqffqq] (-7,3) -- (-7,1);
\draw [->,line width=2pt,color=qqffqq] (-5,3) -- (-5,1);
\begin{scriptsize}
\draw [fill=ududff] (-7,3) circle (2.5pt);
\draw [fill=ududff] (-5,3) circle (2.5pt);
\draw[color=qqqqff] (-5.972681566356373,3.2) node {$e_1$};
\draw [fill=ududff] (-3,3) circle (2.5pt);
\draw [fill=ududff] (-9,3) circle (2.5pt);
\draw [fill=ududff] (-7,1) circle (2.5pt);
\draw [fill=ududff] (-5,1) circle (2.5pt);
\draw[color=black] (-5.972681566356373,1.2) node {$e_2$};
\draw[color=qqffqq] (-6.7,2.0670426685018586) node {$f_1$};
\draw[color=qqffqq] (-4.7,2.0670426685018586) node {$f_2$};
\end{scriptsize}
\end{tikzpicture}
  \caption{Sketch for Lemma \ref{ambivalent-2-move}. The drawn orientation is that of $T'$. The red curves represent oppositely oriented connecting paths connected by a $C$-edge $e_1$. The green edges $f_1$ and $f_2$ are the edges not belonging to the connecting paths of $T'$ incident to $e_1$. The edge $e_2$ connects the other two endpoints of $f_1$ and $f_2$ not incident to $e_1$.}
  \label{replace}
\end{figure}

\begin{lemma} \label{ambivalent-2-move}
If a tour $T'$ contains a short edge and all connecting paths, then there is an ambivalent 2-move that increases the length of the tour by at most two $C$-edges.
\end{lemma}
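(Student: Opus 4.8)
The plan is to realise the ambivalent 2-move as a single 2-Opt exchange that inserts into the tour one $C$-edge that $T'$ currently misses. Fix an orientation of $T'$. By Lemma~\ref{connecting path} every connecting path has at least one edge, so in $T'$ each endpoint of each connecting path meets exactly one non-connecting-path edge of $T'$; call it the \emph{link edge} at that endpoint. The argument rests on one bookkeeping observation, read off from the fact that $T$ runs through a connecting path $p$ from $t(p)$ to $h(p)$: $p$ is correctly oriented in $T'$ if and only if the link edge at $h(p)$ points out of $h(p)$, equivalently the link edge at $t(p)$ points into $t(p)$; it is wrongly oriented in the mirror-image situation. In particular $h(p)$ is the head of its link edge exactly when $p$ is wrongly oriented, while $t(p)$ is the head of its link edge exactly when $p$ is correctly oriented.

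First I would locate an orientation change. By Lemma~\ref{short edges} the short edge $s$ of $T'$ joins two heads $h(p_a),h(p_b)$ of connecting paths or two tails. Since $s$ is the link edge at each of its endpoints and, in the orientation of $T'$, it points out of one of them and into the other, the observation forces one of $p_a,p_b$ to be correctly and the other wrongly oriented. Walking around the cyclic list $p_1,\dots,p_{2h}$ of connecting paths in the order of $T$, there is therefore an index $i$ with $p_i$ and $p_{i+1}$ oppositely oriented.

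Next I would check that this single index already yields the move. Let $f_1,f_2$ be the link edges at $h(p_i)$ and $t(p_{i+1})$. If $c_i\in T'$, then $c_i$ is the link edge at both of these vertices, and either direction of $c_i$ in $T'$ makes $p_i$ and $p_{i+1}$ have the \emph{same} orientation by the observation --- contradiction; hence $c_i\notin T'$. Consequently $f_1,f_2\in T'\setminus T$ (the only $C$-edge at $h(p_i)$ is $c_i$, and neither $f_1$ nor $f_2$ is a connecting-path edge), so neither is a $C$-edge, and $f_1\neq f_2$ since $T'$ is a simple cycle and a common edge would be the edge on $\{h(p_i),t(p_{i+1})\}$, namely $c_i$. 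As $p_i,p_{i+1}$ are oppositely oriented, the observation also gives that $h(p_i)$ and $t(p_{i+1})$ are both heads, or both tails, of $f_1,f_2$ --- exactly the configuration in the well-known 2-Opt fact recalled in the preliminaries. So deleting $f_1,f_2$ and adding $c_i$ together with the edge $e_2$ joining the remaining two endpoints of $f_1,f_2$ yields a tour; it contains $c_i\notin T'$ and discards only the non-$C$-edges $f_1,f_2$, hence has at least one more $C$-edge: an ambivalent 2-move. For the length, the triangle inequality gives $c(e_2)\le c(f_1)+c(c_i)+c(f_2)$, so the length changes by $c(c_i)+c(e_2)-c(f_1)-c(f_2)\le 2c(c_i)$, at most two $C$-edges.

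The part I expect to need the most care is the orientation bookkeeping in the opening observation; its point is that the two things we need at an orientation change --- that $c_i\notin T'$, and that $f_1,f_2$ sit in the ``both heads / both tails'' position required for 2-Opt connectivity --- must both follow from it with no further choice. Granted that dictionary, the existence of an orientation change (from the short edge), the connectivity of the new tour, and the length estimate are all routine.
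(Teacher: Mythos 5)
Your proof is correct and follows essentially the same route as the paper's: the short edge in $T'$ forces one correctly and one wrongly oriented connecting path, hence a $C$-edge $e_1\notin T'$ joining two oppositely oriented paths, whose incident $T'$-edges $f_1,f_2$ share both tails (or both heads) with $e_1$, giving the 2-move and the triangle-inequality bound $c(e_2)\le c(f_1)+c(e_1)+c(f_2)$. Your ``link edge'' observation is just a more explicit bookkeeping of the head/tail argument the paper uses.
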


\begin{proof}
By Lemma \ref{short edges}, every short edge $e$ always connects either two heads or two tails of connecting paths. If in addition $e\in T'$, one of them is correctly oriented and the other one is wrongly oriented. Thus, as long as there is a short edge in $T'$, there has to be at least one correctly oriented and one wrongly oriented connecting path. In this case there has to be a $C$-edge $e_1$ connecting two oppositely oriented connecting paths since the $C$-edges connect the connecting paths to the tour $T$. By definition, every $C$-edge connects a head and a tail of two connecting paths. If $e_1\in T'$, the incident connecting paths would be both correctly or both wrongly oriented. Thus, $e_1$ is not contained in $T'$. Let the two edges not belonging to the connecting paths of $T'$ that share an endpoint with $e_1$ be $f_1$ and $f_2$. 
Because the two connecting paths have opposite orientations, either both tails of $f_1$ and $f_2$, following the orientation of $T'$, serve as endpoints of $e1$ or both heads do. Assume w.l.o.g.\ that they share their tails with $e_1$; let $e_2$ be the edge connecting the heads of $f_1$ and $f_2$ (Figure \ref{replace}). Now, we can make a 2-move replacing $f_1, f_2$ by $e_1$ and $e_2$ to obtain a new tour with the additional $C$-edge $e_1$. The tour stays connected since $e_1$ and $e_2$ connect the tails and heads of $f_1$ and $f_2$, respectively. By Lemma \ref{connecting path}, every connecting path contains at least one edge, hence there are no two adjacent $C$-edges. Thus, $f_1$ and $f_2$ are not $C$-edges and the new tour contains at least one more $C$-edge. 

Moreover, by the triangle inequality we have $c(e_2)\leq c(f_1)+c(e_1)+c(f_2)$ and thus each of the 2-moves increases the length of the tour by at most two $C$-edges.
\end{proof}

\begin{lemma} \label{k-opt}
The given tour $T$ is not $h+1$-optimal and not $h+1$-Lin-Kernighan optimal.
\end{lemma}

\begin{proof}
Let $u$ be the number of connected components of $G_3^C$. By Lemma \ref{2k-opt}, we can construct a tour $T'$ using the connecting paths, $u-1$ $C$-edges and $2h-2u+2$ short edges. We modify $T'$ iteratively to get a tour containing at least $h+1$ $C$-edges. We start with $T'_0=T'$. In the $i$th iteration we perform an ambivalent 2-move on $T'_{i-1}$ to get $T'_i$ (Figure~\ref{2k-opt-ambivalent fig}). Note that there are $2h-2u+2$ short edges in $T'_0$ and with each of these 2-moves, we replaced at most two short edges. 
Therefore, we can perform $s\coloneq \max\{h-u,0\}$ iterations by Lemma \ref{ambivalent-2-move}. As every ambivalent 2-move increases the number of $C$-edges by at least one, $T'_{s}$ has at least $h-1$ $C$-edges and all connecting paths. Thus, $T'_{s}$ arises by an $h+1$-move from $T$. In the beginning the length of $T'_0$ can be bounded by the length of the connecting paths, $2(u-1)$ $C$-edges and copies of $C$-edges and the short edges. In every iteration the cost increases by at most two $C$-edges. Hence, in the end the cost of $T'_{s}$ is bounded from above by the cost of the connecting paths, $2h-2$ $C$-edges and the cost of the short edges. By Lemma \ref{number of blue edges}, $T'_{s}$ is shorter than $T$ which contains $2h$ $C$-edges. 

It remains to show that the $h+1$-move can be performed by augmenting a closed alternating walk. We prove by induction over the iteration number $i$ that $T\triangle T'_i$ is an alternating cycle. In the beginning, by Lemma \ref{2k-opt} $T\triangle T'_0$ is an alternating cycle of $T$. Assume that $T\triangle T'_i$ is an alternating cycle. Let $f_1$ and $f_2$ be replaced by $e_1$ and $e_2$ during the iteration, where $e_1$ is a $C$-edge. Note that $f_1$, $e_1$ and $f_2$ share endpoints on the alternating cycle $T\triangle T'_i$. Moreover, a cycle visits every vertex by definition at most once, hence $f_1$, $e_1$ and $f_2$ are consecutive edges of $T\triangle T'_i$. With the 2-move we shortcut the three consecutive non-tour, tour and non-tour edges of the cycle by the non-tour edge $e_2$, hence $T\triangle T'_{i+1}$ remains an alternating cycle. This completes the proof.
\end{proof}

As we assumed that $h<k$, we conclude that $T$ is also not $k$-optimal and not $k$-Lin-Kernighan optimal. This is a contradiction to the assumption in the beginning that $T$ is $k$-optimal or $k$-Lin-Kernighan optimal and proves Claim~\ref{claim underlying g2 girth}.

\subsection{Bound on the Length of $T$} \label{subsec bound on approx}
We have shown in the last subsection that the girth of $G_2$ is at least $2k$. By leveraging this result, Corollary~\ref{ql bound} provides an upper bound on the number $l$-long edges for a fixed $l$. 

In this subsection we derive an upper bound on the length of the $k$-optimal tour $T$ by combining all upper bounds from different values of $l$. This implies a bound on the approximation ratio as we assumed that the optimal tour has length 1.

\begin{lemma} \label{upper bound qi}
If the number of $l$-long edges $q_l\leq f(l)$ for some function $f$ and all $l$ and $l^*:=\min\{j\in \N\mid \sum_{l=0}^{j} f(l) \geq n\}$, then
\begin{align*}
c(T)\leq \sum_{l=0}^{l^*}f(l)\left(\frac{4k-5}{4k-4}\right)^l.
\end{align*}
\end{lemma}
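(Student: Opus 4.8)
The plan is to reduce the statement to an elementary weighted-counting estimate, since no structural information about $T$ is needed beyond what has already been established.

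First I would bound $c(T)$ directly using the length classes. Every edge of $T$ of positive length is $l$-long for exactly one $l$, and such an edge has cost at most $\left(\frac{4k-5}{4k-4}\right)^l$, while edges of cost $0$ contribute nothing. Grouping the $n$ edges of $T$ by their class therefore gives
\begin{align*}
c(T)=\sum_{e\in T}c(e)\leq \sum_{l\geq 0}q_l\left(\frac{4k-5}{4k-4}\right)^l ,
\end{align*}
and simultaneously $\sum_{l\geq 0}q_l\leq n$ because $T$ has exactly $n$ edges. Only finitely many $q_l$ are nonzero, so all sums involved are finite.

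It then suffices to prove $\sum_{l\geq 0}q_l\,w_l\leq\sum_{l=0}^{l^*}f(l)\,w_l$, where I abbreviate $w_l:=\left(\frac{4k-5}{4k-4}\right)^l$. The ingredients are that $w_l$ is positive and strictly decreasing in $l$, that $q_l\leq f(l)$ for all $l$, and that $\sum_{l=0}^{l^*}f(l)\geq n$ by the choice of $l^*$. Splitting the difference at the index $l^*$,
\begin{align*}
\sum_{l=0}^{l^*}f(l)\,w_l-\sum_{l\geq 0}q_l\,w_l=\sum_{l=0}^{l^*}\bigl(f(l)-q_l\bigr)w_l-\sum_{l>l^*}q_l\,w_l ,
\end{align*}
I would bound the first sum from below by $w_{l^*}\sum_{l=0}^{l^*}\bigl(f(l)-q_l\bigr)$ (using $f(l)-q_l\geq 0$ and $w_l\geq w_{l^*}$ for $l\leq l^*$) and the second sum from above by $w_{l^*}\sum_{l>l^*}q_l$ (using $w_l\leq w_{l^*}$ for $l>l^*$). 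The difference then collapses to $w_{l^*}\bigl(\sum_{l=0}^{l^*}f(l)-\sum_{l\geq 0}q_l\bigr)\geq w_{l^*}\bigl(\sum_{l=0}^{l^*}f(l)-n\bigr)\geq 0$, which is exactly the claimed inequality; combining it with the first step completes the proof.

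I do not anticipate a genuine obstacle. The only point that deserves care is converting the intuition that the weighted sum is maximized when the edge mass is concentrated in the smallest classes, where it is capped by $f(l)$, into a rigorous bound; this is precisely the two-sided estimate against the threshold value $w_{l^*}$ used above. Two minor points are handled in passing: $l^*$ is well-defined (this is implicit in the statement, and in the intended application the partial sums of $f$ are unbounded), and tour edges of cost $0$, which are $l$-long for no value of $l$, influence neither of the two bounds.
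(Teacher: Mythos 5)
Your proof is correct and follows essentially the same route as the paper: bound $c(T)$ by $\sum_l q_l\left(\frac{4k-5}{4k-4}\right)^l$, note $\sum_l q_l\leq n$, and use the monotone decay of the weights to shift the mass to the classes $l\leq l^*$. The only difference is that you make rigorous, via the two-sided comparison against the threshold weight $\left(\frac{4k-5}{4k-4}\right)^{l^*}$, the exchange step that the paper states informally (``the right hand side is maximized if $q_l$ is maximal for small $l$''), which is a welcome addition rather than a deviation.
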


\begin{proof}
By the definition of $l$-long edges, we have 
\begin{align*}
c(T)\leq \sum_{l=0}^{\infty}q_l\left(\frac{4k-5}{4k-4}\right)^l.
\end{align*}
Since every edge with positive cost is $l$-long for some $l$, we have $\sum_{l=0}^\infty q_l\leq n$. Moreover, $(\frac{4k-5}{4k-4})^l$ is monotonically decreasing in $l$, hence the right hand side is maximized if $q_l$ is maximal for small $l$. Thus, we get an upper bound by assuming that $q_l =f(l)$ for $l\leq l^*$ and $q_l=0$ for $l>l^*$, where $l^*:=\min \{j\in \N\mid \sum_{l=0}^{j} f(l) \geq n\}$.
\end{proof}

\begin{corollary} \label{ex upper bound}
For $l^*:=\min \{j\in \N\mid \sum_{l=0}^{j} 4\ex(4(k-1)\lceil (\frac{4k-4}{4k-5})^{l} \rceil, 2k) \geq n\}$ we have 
\begin{align*}
c(T)\leq \sum_{l=0}^{l^*}\frac{4\ex(4(k-1)\lceil (\frac{4k-4}{4k-5})^{l} \rceil, 2k)}{ (\frac{4k-4}{4k-5})^{l}}.
\end{align*}
\end{corollary}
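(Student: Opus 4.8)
The plan is to obtain this as an immediate consequence of the per-length-class bound already established in Corollary \ref{ql bound} together with the rearrangement inequality of Lemma \ref{upper bound qi}. Concretely, I would set
\[
f(l):=4\ex\!\left(4(k-1)\left\lceil \left(\tfrac{4k-4}{4k-5}\right)^{l} \right\rceil, 2k\right)
\]
for every $l\in\N_0$. Corollary \ref{ql bound} states exactly that $q_l\leq f(l)$ for all $l$, so $f$ is an admissible choice of bounding function for the hypothesis of Lemma \ref{upper bound qi}, and the quantity $l^*$ defined in the statement of the present corollary is, verbatim, the quantity $\min\{j\in\N\mid \sum_{l=0}^{j}f(l)\geq n\}$ appearing in that lemma.

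Before invoking the lemma I would check that $l^*$ is well defined, i.e.\ that the partial sums $\sum_{l=0}^{j} f(l)$ eventually reach $n$. This holds because $\tfrac{4k-4}{4k-5}>1$, so $4(k-1)\lceil (\tfrac{4k-4}{4k-5})^{l}\rceil\to\infty$ as $l\to\infty$, and a spanning tree on $m$ vertices shows $\ex(m,2k)\geq m-1$; hence $f(l)\to\infty$ and the partial sums diverge. (One could alternatively quote Theorem \ref{exlower}, but the forest bound is enough and self-contained.)

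Then Lemma \ref{upper bound qi} applied with this $f$ gives directly
\[
c(T)\leq \sum_{l=0}^{l^*} f(l)\left(\frac{4k-5}{4k-4}\right)^{l},
\]
and rewriting $\left(\tfrac{4k-5}{4k-4}\right)^{l}=1/\left(\tfrac{4k-4}{4k-5}\right)^{l}$ turns the right-hand side into precisely the claimed expression $\sum_{l=0}^{l^*} 4\ex(4(k-1)\lceil (\tfrac{4k-4}{4k-5})^{l}\rceil,2k)\big/(\tfrac{4k-4}{4k-5})^{l}$. I do not expect any genuine obstacle here: all the substance — the girth argument forcing the $\ex$ bound, and the ``fill the short length classes first'' monotonicity argument — is already contained in Corollary \ref{ql bound} and Lemma \ref{upper bound qi}; the only point requiring a line of care is the well-definedness (and finiteness) of $l^*$.
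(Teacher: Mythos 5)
Your proposal is correct and follows essentially the same route as the paper: the paper's proof likewise just combines Corollary \ref{ql bound} with Lemma \ref{upper bound qi}, taking $f(l)=4\ex(4(k-1)\lceil (\frac{4k-4}{4k-5})^{l} \rceil, 2k)$ and setting $q_l=f(l)$ for $l\leq l^*$ and $q_l=0$ otherwise. Your additional remark on the well-definedness of $l^*$ (via $\ex(m,2k)\geq m-1$) is a harmless extra check that the paper leaves implicit.
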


\begin{proof}
By Corollary \ref{ql bound} and Lemma \ref{upper bound qi}, we get an upper bound by assuming $q_l =4\ex(4(k-1)\lceil (\frac{4k-4}{4k-5})^{l} \rceil, 2k)$ for $l\leq l^*$ and $q_l=0$ otherwise.
\end{proof}

\begin{theorem} \label{upper bound}
If $\ex(x,2k) = O(x^c)$ for some $c>1$, the approximation ratios of the $k$-Opt and $k$-Lin-Kernighan algorithm are $O(n^{1-\frac{1}{c}})$ for \textsc{Metric TSP} where $n$ is the number of vertices.
\end{theorem}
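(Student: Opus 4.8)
The plan is to feed the polynomial bound $\ex(x,2k)\le Ax^{c}$ (valid for all relevant $x$ after enlarging $A$ to absorb the finitely many small arguments) into the estimates already established and to evaluate the resulting geometric sums. Since the optimal tour was normalized to have length $1$, it suffices to prove $c(T)=O(n^{1-1/c})$ for every $k$-optimal or $k$-Lin-Kernighan optimal tour $T$ of an $n$-vertex instance, as this bounds the worst-case ratio over all such instances; the $k$-Lin-Kernighan case needs no separate argument because every lemma in the chain leading to Corollary~\ref{ql bound}, in particular Lemma~\ref{k-opt}, was proved for $k$-optimal and $k$-Lin-Kernighan optimal tours alike.

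Write $r:=\tfrac{4k-4}{4k-5}>1$. First I would combine Corollary~\ref{ql bound} with the hypothesis: using $\lceil r^{l}\rceil\le 2r^{l}$ for all $l\in\N_{0}$ one gets $q_{l}\le 4\ex\!\big(4(k-1)\lceil r^{l}\rceil,2k\big)\le Mr^{lc}$ for the constant $M:=4A\,(8(k-1))^{c}$. Next I would apply Lemma~\ref{upper bound qi} with the majorant $f(l):=Mr^{lc}$ (legitimate since $q_{l}\le f(l)$), which gives
\begin{align*}
c(T)\le\sum_{l=0}^{l^{*}}Mr^{lc}\cdot r^{-l}=M\sum_{l=0}^{l^{*}}r^{l(c-1)},
\end{align*}
where $l^{*}$ is the least integer $j$ with $M\sum_{l=0}^{j}r^{lc}\ge n$. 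Both series here are geometric with ratio strictly larger than $1$ since $c>1$, so each equals, up to a constant factor, its last term: $M\sum_{l=0}^{l^{*}}r^{l(c-1)}=O\!\big(r^{l^{*}(c-1)}\big)$, while the minimality of $l^{*}$ yields $M\sum_{l=0}^{l^{*}-1}r^{lc}<n$, hence $r^{l^{*}c}=O(n)$ and therefore $r^{l^{*}}=O(n^{1/c})$. Substituting, $c(T)=O\!\big((r^{l^{*}})^{c-1}\big)=O\!\big(n^{(c-1)/c}\big)=O(n^{1-1/c})$, which is the claim; the degenerate case $l^{*}=0$ forces $n=O(1)$ and is absorbed into the constant.

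I expect the one delicate point to be the choice of truncation index: it must be read off from the majorant $f(l)=Mr^{lc}$ and not from the true values $4\ex\big(4(k-1)\lceil r^{l}\rceil,2k\big)$ appearing in Corollary~\ref{ex upper bound}. The hypothesis supplies only an \emph{upper} bound on $\ex$, so a genuinely small true value of $\ex(\cdot,2k)$ would only shrink the $q_{l}$ (which is harmless) yet would push the true truncation index far out; bounding $r^{l^{*}}$ would then require a lower bound on $\ex$, which we do not have. Replacing the true values by the majorant removes this issue entirely. The remaining work is purely arithmetic, but it does require care: the target exponent $1-\tfrac1c$ is strictly below the $c-1$ that a cruder collapse of the geometric sums would give, so one must keep track that $r^{l^{*}}=\Theta(n^{1/c})$ rather than $\Theta(n)$.
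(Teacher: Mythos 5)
Your proposal is correct and follows essentially the same route as the paper: plug the polynomial bound into Corollary~\ref{ql bound} via $\lceil r^{l}\rceil\le 2r^{l}$, apply Lemma~\ref{upper bound qi} with the majorant $f(l)=M r^{lc}$ (the paper likewise defines $l^*$ from this majorant, not from the true values of $\ex$), and use the minimality of $l^*$ together with the geometric sums to get $r^{l^*}=O(n^{1/c})$ and hence $c(T)=O(n^{1-1/c})$. The only difference is cosmetic: the paper carries the explicit constants through the closed-form geometric-sum expressions instead of collapsing them with $O$-notation.
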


\begin{proof}
Let $d$ be a constant such that $\ex(x,2k)\leq d x^c$. 
By Corollary \ref{ql bound}, we have $q_l \leq 4\ex(4(k-1)\lceil (\frac{4k-4}{4k-5})^{l} \rceil, 2k) \leq 4d\left(8(k-1)\left(\frac{4k-4}{4k-5}\right)^l\right)^{c}$. 
Applying Lemma \ref{upper bound qi} for $f(l)=4d\left(8(k-1)\left(\frac{4k-4}{4k-5}\right)^l\right)^{c}$, we get the following upper bound:
\begin{align*}
c(T)&\leq \sum_{l=0}^{l^*}\frac{4d\left(8(k-1)\left(\frac{4k-4}{4k-5}\right)^l\right)^{c}}{(\frac{4k-4}{4k-5})^l}\\
&= 4d\left(8(k-1)\right)^{c}\sum_{l=0}^{l^*}\left(\frac{4k-4}{4k-5}\right)^{(c-1)l}=4d\left(8(k-1)\right)^{c}\frac{\left(\frac{4k-4}{4k-5}\right)^{(c-1)(l^*+1)}-1}{\left(\frac{4k-4}{4k-5}\right)^{(c-1)}-1}.
\end{align*}
By definition, 
\begin{align*}
\sum_{l=0}^{l^*-1} q_l \leq \sum_{l=0}^{l^*-1}4d\left(8(k-1)\left(\frac{4k-4}{4k-5}\right)^l\right)^{c}= 4d\left(8(k-1)\right)^{c}\frac{\left(\frac{4k-4}{4k-5}\right)^{cl^*}-1}{\left(\frac{4k-4}{4k-5}\right)^{c}-1}<n.
\end{align*}
Thus, $\left(\frac{4k-4}{4k-5}\right)^{cl^*}<\frac{\left(\left(\frac{4k-4}{4k-5}\right)^{c}-1\right)n}{4d\left(8(k-1)\right)^{c}}+1$ and we get
\begin{align*}
c(T)&\leq 4d\left(8(k-1)\right)^{c}\frac{\left(\frac{4k-4}{4k-5}\right)^{(c-1)(l^*+1)}-1}{\left(\frac{4k-4}{4k-5}\right)^{(c-1)}-1}\\
&= \frac{4d\left(8(k-1)\right)^{c}}{\left(\frac{4k-4}{4k-5}\right)^{(c-1)}-1}\left(\left(\frac{4k-4}{4k-5}\right)^{(c-1)}\left(\left(\frac{4k-4}{4k-5}\right)^{cl^*}\right)^{\frac{c-1}{c}}-1\right)\\
&<\frac{4d\left(8(k-1)\right)^{c}}{\left(\frac{4k-4}{4k-5}\right)^{(c-1)}-1}\left(\left(\frac{4k-4}{4k-5}\right)^{(c-1)}\left(\frac{\left(\left(\frac{4k-4}{4k-5}\right)^{c}-1\right)n}{4d\left(8(k-1)\right)^{c}}+1\right)^{\frac{c-1}{c}}-1\right)= O(n^{1-\frac{1}{c}}).
\end{align*}
Since we assumed that the length of the optimal tour is 1, we get the result.
\end{proof}

Combined with Theorem \ref{exupper} we conclude:

\begin{corollary} \label{coro upper}
The approximation ratios of the $k$-Opt and $k$-Lin-Kernighan algorithm are $O(\sqrt[k]{n})$ for \textsc{Metric TSP} where $n$ is the number of vertices.
\end{corollary}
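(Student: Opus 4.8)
The plan is to combine Theorem \ref{upper bound} with the extremal graph theory bound in Theorem \ref{exupper}. Theorem \ref{upper bound} states that whenever $\ex(x,2k)\in O(x^c)$ for some constant $c>1$, the approximation ratio of the $k$-Opt and $k$-Lin-Kernighan algorithm for \textsc{Metric TSP} is $O(n^{1-\frac{1}{c}})$. So the only task is to extract an admissible value of $c$ from the known upper bound on $\ex(\cdot,2k)$ and then simplify the resulting exponent.

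First I would read off from Theorem \ref{exupper} that
\begin{align*}
\ex(x,2k)<\frac{1}{2^{1+\frac{1}{k-1}}}x^{1+\frac{1}{k-1}}+\frac{1}{2}x,
\end{align*}
and since $1+\frac{1}{k-1}>1$ the linear term is dominated, this gives $\ex(x,2k)\in O\!\left(x^{1+\frac{1}{k-1}}\right)$. Hence we may take $c:=1+\frac{1}{k-1}=\frac{k}{k-1}$, which is $>1$ for every fixed $k\geq 3$, so the hypothesis of Theorem \ref{upper bound} is satisfied.

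Next I would simplify the exponent produced by Theorem \ref{upper bound}: with $c=\frac{k}{k-1}$ we have
\begin{align*}
1-\frac{1}{c}=1-\frac{k-1}{k}=\frac{1}{k},
\end{align*}
so the theorem yields that the approximation ratio of the $k$-Opt and $k$-Lin-Kernighan algorithm for \textsc{Metric TSP} is $O(n^{\frac{1}{k}})=O(\sqrt[k]{n})$, which is exactly the claim of Corollary \ref{coro upper}.

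There is essentially no obstacle here beyond bookkeeping: the entire content lies in Theorem \ref{upper bound} and Theorem \ref{exupper}, and the proof of the corollary is a one-line substitution of $c=\frac{k}{k-1}$ together with the elementary computation $1-\frac{1}{c}=\frac{1}{k}$. The only minor point to keep in mind is that $c>1$ must hold so that Theorem \ref{upper bound} applies, and indeed $\frac{k}{k-1}>1$ for all fixed $k\geq 3$ (which is the range under consideration throughout this section, since $k=2$ is treated separately in the literature).
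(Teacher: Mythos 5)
Your proposal is correct and is exactly the argument the paper intends: it obtains the corollary by combining Theorem \ref{upper bound} with Theorem \ref{exupper}, taking $c=1+\frac{1}{k-1}=\frac{k}{k-1}$ so that $1-\frac{1}{c}=\frac{1}{k}$. The substitution and the check that $c>1$ are all that is needed, matching the paper's one-line deduction.
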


\begin{remark}
When we do not consider $k$ as a constant the above analysis gives us an upper bound of $O(k\sqrt[k]{n})$.
\end{remark}

\section{Comparing the Lower and Upper Bound} \label{sec comparing lower and upper bound}
In this section we compare the lower and upper bound we got from the previous sections for the $k$-Opt algorithm. From Corollary \ref{coro lower} and Corollary \ref{coro upper} we can directly conclude that

\begin{theorem}
The approximation ratio of the $k$-Opt algorithm is $\Theta(\sqrt[k]{n})$ for $k=3,4,6$ where $n$ is the number of vertices.
\end{theorem}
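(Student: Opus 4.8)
The plan is to combine the matching upper and lower bounds already established in the preceding sections. Corollary~\ref{coro upper} shows that for every fixed $k\geq 3$ the approximation ratio of $k$-Opt for \textsc{Metric TSP} is $O(\sqrt[k]{n})$; this was obtained by plugging the extremal-graph bound $\ex(n,2k)=O(n^{1+\frac{1}{k-1}})$ of Theorem~\ref{exupper} into the general upper bound of Theorem~\ref{upper bound}. On the other hand, Corollary~\ref{coro lower} shows that for $k=3,4,6$ the approximation ratio is $\Omega(\sqrt[k]{n})$; this used the general lower bound of Theorem~\ref{lower bound} (equivalently, the consequence that $\ex(n,2k)\in\Omega(n^c)$ forces a lower bound of $\Omega(n^{1-\frac{1}{c}})$) together with the constructions of polarity graphs and of the incidence graphs of generalized polygons due to Benson, Singleton and Wenger, which give $\ex(n,2k)=\Omega(n^{1+\frac{1}{k-1}})$ precisely for $k=3,4,6$ (Theorem~\ref{omega small cases}).

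Concretely, for $k\in\{3,4,6\}$ I would first invoke Corollary~\ref{coro upper} to obtain the $O(\sqrt[k]{n})$ upper bound, and then invoke Corollary~\ref{coro lower} to obtain the $\Omega(\sqrt[k]{n})$ lower bound. Since the exponent occurring in both estimates is $1-\frac{1}{\,1+\frac{1}{k-1}\,}=\frac{1}{k}$, the two bounds agree and combine to $\Theta(\sqrt[k]{n})$, which is exactly the assertion.

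There is no genuine obstacle here: the theorem is an immediate consequence of the two previously proved corollaries. The only subtlety worth recording is why the statement is restricted to $k=3,4,6$. These are precisely the values of $k$ for which the best \emph{unconditional} lower bound on $\ex(n,2k)$ (Theorem~\ref{omega small cases}) matches the upper bound of Theorem~\ref{exupper}, so that the induced lower and upper bounds on the approximation ratio share the same exponent $\frac1k$. For the remaining $k$, the best known unconditional lower bound on $\ex(n,2k)$ is the weaker estimate of Theorem~\ref{exlower}, whose exponent is strictly smaller, so a gap persists; removing the restriction on $k$ would require establishing (the relevant case of) the Erd\H{o}s girth conjecture.
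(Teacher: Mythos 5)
Your proposal is correct and follows exactly the paper's argument: the theorem is stated there as a direct consequence of Corollary~\ref{coro upper} (the $O(\sqrt[k]{n})$ upper bound via Theorem~\ref{exupper}) and Corollary~\ref{coro lower} (the $\Omega(\sqrt[k]{n})$ lower bound for $k=3,4,6$ via Theorem~\ref{omega small cases}). Your added remark on why the restriction to $k=3,4,6$ arises matches the paper's discussion as well.
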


Now, we want to compare the bounds for other values of $k$ where the exact behavior of $\ex(n,2k)$ is still unknown.

\begin{lemma}\label{ex2x}
For all $x\geq 2$ we have $\ex(2x,2k)\leq 6\ex(x,2k)$.
\end{lemma}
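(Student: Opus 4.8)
The plan is to use the standard random-deletion argument from extremal graph theory. Fix an integer $x \ge 2$, and let $H$ be a graph on $2x$ vertices with girth at least $2k$ and $\lvert E(H)\rvert = \ex(2x,2k)$ edges; such a graph exists because $\ex$ is defined as a maximum. I would pick a subset $S \subseteq V(H)$ of size $x$ uniformly at random and study the induced subgraph $H[S]$.

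The first step is an expectation computation. A fixed edge $\{u,v\}$ of $H$ lies in $H[S]$ exactly when both endpoints are selected, which happens with probability $\binom{2x-2}{x-2}/\binom{2x}{x} = \tfrac{x(x-1)}{2x(2x-1)} = \tfrac{x-1}{2(2x-1)}$. By linearity of expectation, $\E[\lvert E(H[S])\rvert] = \tfrac{x-1}{2(2x-1)}\,\ex(2x,2k)$, so some choice of $S$ satisfies $\lvert E(H[S])\rvert \ge \tfrac{x-1}{2(2x-1)}\,\ex(2x,2k)$.

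The second step is to observe that $H[S]$, being a subgraph of $H$, still has girth at least $2k$ (every cycle of $H[S]$ is a cycle of $H$), and it has exactly $x$ vertices, so $\lvert E(H[S])\rvert \le \ex(x,2k)$. Combining the two bounds gives $\ex(2x,2k) \le \tfrac{2(2x-1)}{x-1}\,\ex(x,2k)$. Finally $\tfrac{2(2x-1)}{x-1} = 4 + \tfrac{2}{x-1}$ is decreasing in $x$ and equals $6$ when $x = 2$, so it is at most $6$ for every integer $x \ge 2$, which is the claim.

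This proof has no real obstacle; the only subtle point is the value of the constant. The asymptotic ratio $\ex(2x,2k)/\ex(x,2k)$ produced this way tends to $4$, and the factor $6$ is forced solely by the extreme case $x = 2$, so the monotonicity check of $x \mapsto 4 + \tfrac{2}{x-1}$ on $x \ge 2$ is the one place where the hypothesis $x \ge 2$ is genuinely used. (A symmetric alternative would be to partition $V(H)$ into two halves $A,B$ of size $x$ uniformly at random; then the expected number of non-crossing edges is $\tfrac{x-1}{2x-1}\ex(2x,2k)$, and $\lvert E(H[A])\rvert + \lvert E(H[B])\rvert \le 2\ex(x,2k)$ yields the same bound.)
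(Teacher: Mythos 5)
Your proof is correct and essentially the same as the paper's: the paper randomly splits the $2x$ vertices into two halves of size $x$, notes that each edge stays within a half with probability $\tfrac{x-1}{2x-1}\ge\tfrac13$, and bounds the monochromatic edges by $2\ex(x,2k)$, which is exactly the symmetric variant you mention in your final parenthesis. Your one-sided version (sampling a single random $x$-subset) gives the same constant $\tfrac{2(2x-1)}{x-1}\le 6$, and in fact tracks it slightly more precisely than the paper's crude $\tfrac13$ bound, so there is nothing to fix.
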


\begin{proof}
The proof uses the standard probabilistic method developed by Erd\H{o}s (see for example \cite{DBLP:books/wi/AlonS92}).

By definition there exists a graph $H$ with $2x$ vertices, girth at least $2k$, and $\ex(2x, 2k)$ edges. We will split $V(H)$ into two subsets of size $x$ such that their induced subgraphs together have at least $\frac{1}{3}$ of the edges of $H$. Hence at least one induced subgraph $H'$ has at least $\frac{1}{6}$ of the edges of $H$. Being a subgraph of $H$, it has girth at least $2k$. Hence $\ex(x, 2k) \geq \frac{1}{6} \ex(2x, 2k)$. It remains to show that such a split to $V(H)$ exists.

Consider randomly splitting $V(H)$ into two sets. For each edge the probability is $\frac{x-1}{2x-1}$ that the endpoints are in the same set. So the expected number of edges whose endpoints are in the same set is $\frac{x-1}{2x-1}\ex(2x,2k)\geq \frac{1}{3}\ex(2x,2k)$. Hence, there exists a split satisfying this condition.
\end{proof}

\begin{lemma} \label{sample lower bound}
For real numbers $p_1,\dots, p_n$ with $0 \leq p_j\leq 1$ for all $j\in\{1,\dots,n\}$ and $\sum_{j=1}^n  p_j=1$ there exists an instance with $n$ vertices and an approximation ratio of $k$-Opt of $\Omega\left(\sum_{j=1}^n p_j\frac{j}{\ex^{-1}(j,2k)}\right)$.
\end{lemma}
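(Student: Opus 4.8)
The plan is to reduce the statement to a single application of Theorem~\ref{chandra}, exploiting that $\sum_{j=1}^n p_j\frac{j}{\ex^{-1}(j,2k)}$ is a convex combination of the numbers $\frac{j}{\ex^{-1}(j,2k)}$ and hence is at most their maximum $M:=\max_{1\le j\le n}\frac{j}{\ex^{-1}(j,2k)}$. So it is enough to exhibit one \textsc{Metric TSP} instance on exactly $n$ vertices whose $k$-Opt approximation ratio is $\Omega(M)$ with a constant depending only on $k$. This is really the content of the lemma: ``sampling'' bad sub-instances with weights $p_j$ can never beat simply using the worst available scale, so one may as well use it.

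To carry this out I would let $j^\ast\le n$ attain $M$ and split into two cases. If $M\ge 2$, take a graph with girth $2k$, exactly $\ex^{-1}(j^\ast,2k)$ vertices and $j^\ast$ edges (deleting edges if there are more, which only increases the girth), apply Lemma~\ref{subgraph} to obtain a Eulerian subgraph $G'$ with edge--vertex ratio at least $\frac{j^\ast+1}{\ex^{-1}(j^\ast,2k)}-1\ge M-1\ge \frac{M}{2}$, and then feed $G'$ into Theorem~\ref{chandra}; this produces an instance on $|E(G')|\le j^\ast\le n$ vertices together with a $k$-optimal tour of ratio at least $\frac{|E(G')|}{2|V(G')|}\ge\frac{M}{4}$. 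Lemma~\ref{monoton} then lifts this to an instance on exactly $n$ vertices with ratio still at least $\frac{M}{4}$. If instead $M<2$, then $\sum_{j=1}^n p_j\frac{j}{\ex^{-1}(j,2k)}\le M<2$, and \emph{any} $n$-vertex instance already works, since its approximation ratio is at least $1\ge\frac12 M\ge\frac12\sum_{j=1}^n p_j\frac{j}{\ex^{-1}(j,2k)}$. In both cases one gets an $n$-vertex instance of approximation ratio at least $\frac14\sum_{j=1}^n p_j\frac{j}{\ex^{-1}(j,2k)}$, which is the claim.

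Since everything collapses to one invocation of Theorem~\ref{chandra}, I expect the only delicate point to be bookkeeping of constants: one has to check that the multiplicative constant is uniform in the scale $j^\ast$ and independent of $p$, which is why I would work with the explicit bound $\frac12\bigl(\frac{j^\ast+1}{\ex^{-1}(j^\ast,2k)}-1\bigr)$ that comes out of the proof of Theorem~\ref{lower bound} rather than with its $\Omega$-formulation, and peel off the bounded range $M<2$ separately via the trivial estimate ``ratio $\ge 1$''. One could alternatively try to prove the lemma by a genuine mixing construction --- concatenating $N_j$ suitably scaled copies of the sub-instances of Theorem~\ref{chandra} into a single Eulerian graph whose edge--vertex ratio is $\Theta\bigl(\sum_j p_j\frac{j}{\ex^{-1}(j,2k)}\bigr)$ and applying Theorem~\ref{chandra} once more --- but there one must additionally rule out improving $k$-moves that cross between the glued pieces (forced by making inter-piece distances large) and argue that the total number of vertices can be kept at $n$; since the composite ratio is still only a weighted average of the pieces' ratios and is therefore dominated by a single scale, the direct argument above loses nothing while avoiding these complications.
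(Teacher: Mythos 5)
Your proof is correct and is essentially the paper's argument: the paper constructs, for each $j$, an instance $I_j$ with ratio $\Omega\bigl(\frac{j}{\ex^{-1}(j,2k)}\bigr)$ via Theorem \ref{lower bound}, pads all of them to $n$ vertices by Lemma \ref{monoton}, and then picks $I_j$ with probability $p_j$, concluding by the expectation argument that some single $I_j$ already achieves the weighted sum --- which is exactly your observation that a convex combination is dominated by its maximum. Your version just derandomizes this by taking the maximizing index $j^\ast$ and redoes Theorem \ref{lower bound} with explicit constants (plus the trivial case $M<2$), which is the same substance with slightly more careful bookkeeping.
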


\begin{proof}
By Theorem \ref{lower bound}, there exists for any $1\leq j \leq n$ an instance $I_j$ with at most $n$ vertices and approximation ratio $\Omega\left(\frac{j}{\ex^{-1}(j,2k)}\right)$. We can extend the number of vertices of these instances to $n$ as described in Lemma \ref{monoton}. Now, construct a random instance which is equal to $I_j$ with probability $p_j$ for all $j\in\{1,\dots,n\}$. This instance has the expected approximation ratio of $\Omega\left(\sum_{j=1}^n p_j\frac{j}{\ex^{-1}(j,2k)}\right)$. Hence, there is a deterministic instance with an approximation ratio of this value.
\end{proof}

Next, we show that the upper bound from Corollary \ref{ex upper bound} on the approximation ratio of the $k$-Opt algorithm is tight up to a factor of $O(\log(n))$.

\begin{theorem}
The approximation ratio of the k-Opt algorithm is between 
\begin{align*}
x\coloneq\sum_{l=0}^{l^*}\frac{4\ex(4(k-1)\lceil (\frac{4k-4}{4k-5})^{l} \rceil, 2k)}{ (\frac{4k-4}{4k-5})^{l}}
\end{align*}
and 
\begin{align*}
  \Omega\left(\frac{x}{\log(n)}\right),
\end{align*}
where $l^*:=\min \{j\in \N\mid \sum_{l=0}^{j} 4\ex(4(k-1)\lceil (\frac{4k-4}{4k-5})^{l} \rceil, 2k) \geq n\}$ and $n$ is the number of vertices.
\end{theorem}

\begin{proof}
By Corollary \ref{ex upper bound} and Lemma \ref{ex2x}, we get an upper bound for the approximation ratio of the $k$-Opt algorithm of
\begin{align*}
&\sum_{l=0}^{l^*}\frac{4\ex(4(k-1)\lceil (\frac{4k-4}{4k-5})^{l} \rceil, 2k)}{ (\frac{4k-4}{4k-5})^{l}}\\
\leq &\sum_{l=0}^{l^*-1}\frac{4\ex(4(k-1)\lceil (\frac{4k-4}{4k-5})^{l} \rceil, 2k)}{(\frac{4k-4}{4k-5})^{l}}+\frac{4\ex(4(k-1)\lceil (\frac{4k-4}{4k-5}) \rceil\lceil (\frac{4k-4}{4k-5})^{l^*-1} \rceil, 2k)}{(\frac{4k-4}{4k-5})^{l^*}}\\
\leq &\sum_{l=0}^{l^*-1}\frac{4\ex(4(k-1)\lceil (\frac{4k-4}{4k-5})^{l} \rceil, 2k)}{(\frac{4k-4}{4k-5})^{l}}+\frac{24\ex(4(k-1)\lceil (\frac{4k-4}{4k-5})^{l^*-1} \rceil, 2k)}{(\frac{4k-4}{4k-5})^{l^*}}\\
\leq &28\sum_{l=0}^{l^*-1}\frac{\ex(4(k-1)\lceil (\frac{4k-4}{4k-5})^{l} \rceil, 2k)}{(\frac{4k-4}{4k-5})^{l}}.
\end{align*} 
By the definition of $l^*$, we have $\ex(4(k-1)\lceil (\frac{4k-4}{4k-5})^{l} \rceil,2k)<n$ for all $l<l^*$. Hence, we can use Lemma~\ref{sample lower bound} with
\begin{align*}
p_i=
\begin{cases}
  \frac{1}{l^*}, & \text{for } i\in \{\ex(4(k-1)\lceil (\frac{4k-4}{4k-5})^{l} \rceil, 2k) \mid 0\leq l < l^* \}\\
  0, & \text{otherwise}
\end{cases}
\end{align*}
to get a lower bound of
\begin{align*}
\Omega\left(\frac{1}{l^*}\sum_{l=0}^{l^*-1}\frac{\ex(4(k-1)\lceil (\frac{4k-4}{4k-5})^{l} \rceil, 2k)}{4(k-1)\lceil (\frac{4k-4}{4k-5})^{l} \rceil}\right).
\end{align*}
The upper and lower bound differ by a factor of $\Theta(l^*)$. By the definition of $l^*$ and applying Theorem \ref{exlower} with $4(k-1)\left\lceil \left(\frac{4k-4}{4k-5}\right)^{l} \right\rceil$ vertices, there is a constant $C$ such that
\begin{align*}
n&>\sum_{l=0}^{l^*-1} \ex\left(4(k-1)\left\lceil \left(\frac{4k-4}{4k-5}\right)^{l} \right\rceil, 2k\right)\geq \sum_{l=0}^{l^*-1} C \left(4(k-1)\left\lceil \left(\frac{4k-4}{4k-5}\right)^{l} \right\rceil\right)^{1+\frac{2}{3k-5}}\\
&\geq C (4(k-1))^{1+\frac{2}{3k-5}} \sum_{l=0}^{l^*-1}\left(\frac{4k-4}{4k-5}\right)^{l(1+\frac{2}{3k-5})}=C (4(k-1))^{1+\frac{2}{3k-5}}\frac{\left(\frac{4k-4}{4k-5}\right)^{l^*(1+\frac{2}{3k-5})}-1}{\left(\frac{4k-4}{4k-5}\right)^{(1+\frac{2}{3k-5})}-1}.
\end{align*}
Thus, $l^* = \Theta(\log(n))$ and the upper bound is tight up to a factor of $O(\log(n))$.
\end{proof}

\section{Lower Bound for Graph TSP} \label{sec lower bound graph tsp}
In this section we show a lower bound of $\Omega\left(\frac{\log(n)}{\log\log(n)}\right)$ on the approximation ratio of the $k$-Opt algorithm for \textsc{Graph TSP}. For all positive integers $f$ we first construct an instance with at most $4(2f)^{2kf}$ vertices and a $k$-optimal tour $T$ with an approximation ratio of at least $\frac{f}{4}$. 

For the construction, note that we have $\frac{(2f-1)^{2kf-1}-1}{2f-2}\leq (2f)^{2kf}$, hence by Theorem~\ref{regular high girth} there exists a $2f$-regular graph with girth at least $2kf$ and $2(2f)^{2kf}$ vertices. Let $G$ be a connected component of this graph. By construction, we know that $G$ is Eulerian. Now, we construct a $k$-optimal tour $T$ of a graph similar to $G$.

\begin{definition} \label{Graph k opt def T}
Let $W=(v_0,v_1,\dots, v_{\lvert E(G) \rvert-1})$ be a Eulerian walk of $G$. Traverse through $G$ according to $W$ starting at $v_0$ and mark every $f$th vertex both in $G$ and in $W$. Whenever we would mark an already marked vertex $v$ in $G$, we add a new copy $v'$ of $v$ adjacent exactly to the neighbors of $v$ and mark $v'$ instead. Moreover, we replace this occurrence of $v$ in $W$ by $v'$ and mark $v'$. Let $G'$ be the graph containing $G$ and all the copies of the vertices we made. After the traversal of $W$, we mark for every unmarked vertex in $G'$ one arbitrary occurrence of it in $W$. The tour $T$ consists of the edges connecting consecutive marked vertices in $W$.
\end{definition}

We only need the property that every vertex of $G'$ is marked somewhere in $W$, hence it does not matter which occurrence we mark in $W$ for the unmarked vertices in $G'$. Note that the number of edges in $W$ is $f\lvert V(G) \rvert$ since $G$ is $2f$ regular. Hence, we added at most $\lvert V(G) \rvert-1$ copies of vertices to $G$ to obtain $G'$. Therefore, we have $\lvert V(G') \rvert< 2\lvert V(G) \rvert$. Next, we show that $T$ is a tour with length $f\lvert V(G) \rvert$ and it is $k$-optimal. This will conclude the lower bound on the approximation ratio.

\begin{lemma} \label{length graph}
$T$, as defined in Definition \ref{Graph k opt def T}, is a tour of $G'$ with length $f\lvert V(G) \rvert$.
\end{lemma}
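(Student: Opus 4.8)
The plan is to verify separately that $T$ is a Hamiltonian cycle on $V(G')$ and that $c(T)=f\lvert V(G)\rvert$. For the first claim I would show that the construction marks every vertex of $G'$ exactly once. Each copy $v'$ is created together with a mark, at its unique occurrence in the modified walk, so it is marked once. Each original vertex of $G$ is marked at least once — either during the traversal, or in the post-processing step that marks an occurrence of every still-unmarked vertex — and at most once, because the copy mechanism prevents a second mark of an already marked vertex during the traversal, while the post-processing only touches unmarked vertices. Since distinct positions hold distinct marked vertices, reading off the marked vertices in the cyclic order of $W$ visits every vertex of $G'$ exactly once, so joining consecutive marked vertices yields a tour of $G'$. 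One also checks that the modified $W$ is a closed walk in $G'$: replacing an occurrence of $v$ by $v'$ only swaps the two incident edges for edges from $v'$ to the same two neighbours, and these exist in $G'$ by definition of a copy.

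For the length, let $p_0<p_1<\dots<p_{N-1}$ be the marked positions in cyclic order, with $N=\lvert V(G')\rvert$, and let $g_i$ be the gap from $p_i$ to the next marked position, so that $\sum_i g_i=\lvert E(G)\rvert=f\lvert V(G)\rvert$ (using that $G$ is $2f$-regular). The positions that are multiples of $f$ are always marked and cut the Eulerian walk into $\lvert V(G)\rvert$ arcs of length $f$; the additional marks only subdivide these, so $g_i\le f$ for every $i$. The upper bound $c(T)\le f\lvert V(G)\rvert$ is then immediate: the segment of the modified walk from $v_{p_i}$ to $v_{p_{i+1}}$ is a walk of length $g_i$ in $G'$, so the corresponding tour edge costs at most $g_i$, and the gaps sum to $f\lvert V(G)\rvert$.

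The crux is the matching lower bound: each tour edge has cost \emph{exactly} $g_i$. Here one must avoid the tempting but \emph{false} route of invoking the girth of $G'$, which is only $4$ — a vertex, its twin copy, and two common neighbours form a $4$-cycle. Instead I would use the projection $\pi\colon V(G')\to V(G)$ that fixes $V(G)$ and sends each copy to the vertex it copies. Since every edge of $G'$ joins two original vertices or joins a copy to a neighbour of its source, $\pi$ is a graph homomorphism, so $d_{G'}(a,b)\ge d_G(\pi(a),\pi(b))$ for all $a,b$. Applying $\pi$ to the segment of the modified walk between $v_{p_i}$ and $v_{p_{i+1}}$ recovers exactly the corresponding sub-trail of the Eulerian walk of $G$: a trail of length $g_i$ joining $\pi(v_{p_i})$ and $\pi(v_{p_{i+1}})$. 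It then suffices to prove that in a graph of girth $\ge 2kf$ any trail of length $g_i\le f$ is a shortest path between its endpoints. If a strictly shorter path $Q$ existed, concatenating this trail with the reverse of $Q$ would give a non-trivial closed walk of length at most $2g_i-1<2kf$; any closed walk shorter than the girth traverses every edge an even number of times, since otherwise the edges it uses an odd number of times would contain a cycle of length below the girth. But the trail has $g_i>\lvert E(Q)\rvert$ edges, so some trail edge is traversed exactly once — a contradiction. Hence $d_G(\pi(v_{p_i}),\pi(v_{p_{i+1}}))=g_i$, every tour edge costs exactly $g_i$, and $c(T)=\sum_i g_i=f\lvert V(G)\rvert$.

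The main obstacle is precisely this last step: one cannot use the girth of $G'$ directly, and must instead transport the shortest-path question to $G$ through the twin-collapsing homomorphism $\pi$, there invoking both the large girth of $G$ and the even-multiplicity property of short closed walks.
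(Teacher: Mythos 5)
Your proposal is correct and follows essentially the same route as the paper: mark-counting shows $T$ is a tour, the tour length is the sum of the gaps along $W$, which is $\lvert E(G)\rvert = f\lvert V(G)\rvert$, and each tour edge is shown to cost exactly its gap by projecting the walk segment to $G$ (collapsing copies), using that segments of the Eulerian walk are trails and that a strictly shorter path would force a cycle of length below the girth of $G$. Your explicit homomorphism $\pi$ and the even-multiplicity parity argument are just a more detailed rendering of the paper's "transfer to $G$ and find an edge of the walk outside the shortest path, so the union contains a short cycle" step.
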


\begin{proof}
By construction, we marked every vertex of $G'$ exactly once. Hence, $T$ visits every vertex of $G'$ exactly once in $W$ and is a tour. It remains to show that the length of $T$ is $f\lvert V(G) \rvert$. For that, we show that every edge of $T$ has the same length as the shorter of the two walks in $W$ between the two consecutive marked endpoints. This implies the statement since $W$ consists of $\lvert E(G) \rvert=f\lvert V(G) \rvert$ edges. First, note that two consecutive marked vertices of $W$ have distance at most $f$ in $G'$ since we marked every $f$th vertex at the beginning of the construction and two consecutive vertices of the Eulerian walk are connected by an edge in $G'$. Now, assume that the distance of two consecutive marked vertices $u$ and $v$ is not equal to the length of the shorter walk between these vertices in $W$. Then, the walk between $u$ and $v$ in $W$ is not the shortest path between them. Hence, there are at least two distinct walks in $G'$ between $u$ and $v$ that are together shorter than $2f$. Now, transfer the two walks to $G$ by mapping the copies of the vertices to the original vertex. The transferred $u$-$v$ walk in $W$ uses every edge at most once since $W$ is an Eulerian walk of $G$. Thus, there has to be an edge of the transferred $u$-$v$ walk that does not occur in the transferred shortest $u$-$v$ path, otherwise the transferred shortest path between $u$ and $v$ cannot be shorter. Hence, the union of the two has to contain at least one cycle with length less than $2f$ contradicting the girth of the graph $G$.
\end{proof}

\begin{lemma} \label{k optimal graph}
The tour $T$, as defined in Definition \ref{Graph k opt def T}, is $k$-optimal.
\end{lemma}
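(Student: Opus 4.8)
The plan is to assume for contradiction that an improving $k$-move exists and then to push the entire configuration down into the original graph $G$, where the girth being at least $2kf$ forbids the short cycle the move would create. Suppose such a move deletes a set $\{e_1,\dots,e_j\}\subseteq T$ with $j\le k$ and adds edges $e_1',\dots,e_j'$ so that $T':=(T\setminus\{e_1,\dots,e_j\})\cup\{e_1',\dots,e_j'\}$ is again a tour with $c(T')<c(T)$, i.e.\ $\sum_i c(e_i')<\sum_i c(e_i)$. We may assume no $e_i$ equals any $e_l'$ (otherwise drop that edge from the move; this keeps $T'$ and the inequality). Then $T\triangle T'=\{e_1,\dots,e_j\}\cup\{e_1',\dots,e_j'\}$, and since $T$ and $T'$ are $2$-regular on $V(G')$, every vertex has even degree in $T\triangle T'$.

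Next I would lift everything to $G$. Write $\phi\colon G'\to G$ for the graph homomorphism collapsing each copy onto the vertex it copies. Each $e_i$ joins two consecutive marked vertices of $W$; by the argument in the proof of Lemma~\ref{length graph}, $c(e_i)$ equals the length of the corresponding segment $P_i$ of $W$, a walk in $G'$ with at most $f$ edges, and distinct such segments are \emph{edge-disjoint in $G$} because $W$ traverses every edge of $G$ exactly once. Set $\hat P_i:=\phi(P_i)$; these are pairwise edge-disjoint walks in $G$ with $|\hat P_i|=c(e_i)\le f$. For each new edge $e_l'$ choose a shortest $G'$-path $Q_l$ between its endpoints, so $|Q_l|=c(e_l')$, and set $\hat Q_l:=\phi(Q_l)$, a walk in $G$ of the same length. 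Let $\hat M$ be the edge-multiset of $G$ obtained as the disjoint union of all the $\hat P_i$ together with all the $\hat Q_l$; then $|\hat M|=\sum_i c(e_i)+\sum_l c(e_l')<2\sum_i c(e_i)\le 2jf\le 2kf\le \operatorname{girth}(G)$.

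The contradiction then comes from parity plus the girth bound. First, $\hat M$ has even degree at every vertex of $G$: at a vertex $w$ of $G'$ the parity of its degree in $\biguplus_i P_i\uplus\biguplus_l Q_l$ equals the parity of the number of edges of $\{e_1,\dots,e_j\}\cup\{e_1',\dots,e_j'\}=T\triangle T'$ incident to $w$, which is even, and summing over a fibre $\phi^{-1}(v)$ preserves evenness. Since $|\hat M|<\operatorname{girth}(G)$, the support of $\hat M$ in $G$ contains no cycle, hence is a forest; but an even edge-multiset supported on a forest must give every edge even multiplicity (peel off leaves inductively). Every edge of $\bigcup_i\hat P_i$ has multiplicity exactly one among the $\hat P_i$, so it must be covered at least once by the $\hat Q_l$, whence $\sum_l c(e_l')=\sum_l|\hat Q_l|\ge\big|\bigcup_l\hat Q_l\big|\ge\big|\bigcup_i\hat P_i\big|=\sum_i c(e_i)$, contradicting $\sum_l c(e_l')<\sum_i c(e_i)$. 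Therefore no improving $k$-move exists and $T$ is $k$-optimal. The main obstacle, in my view, is recognizing that one must descend to $G$ rather than work in $G'$ (whose girth is only $4$ once copies are present), together with the structural input that the $W$-segments attached to tour edges are pairwise edge-disjoint in $G$; once those are in hand, the even-multiset-on-a-forest observation does the rest.
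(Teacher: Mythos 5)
Your proof is correct, and its skeleton coincides with the paper's: both arguments lift each deleted tour edge to its segment of the Euler walk $W$ (a shortest path of length $c(e)\le f$, edge-disjoint from the other segments because $W$ traverses every edge of $G$ exactly once), lift each added edge to a shortest path in $G'$, project everything to $G$, and play the total length, which is below $2kf$ by the improvement assumption, against the girth bound $2kf$. Where you diverge is the endgame. The paper first decomposes the improving $k$-move into alternating cycles, selects one of positive gain, so that the lifted edges form a single closed walk of length less than $2kf$ in which some tour-segment edge occurs exactly once, and then uses the fact that such a closed walk contains a cycle, contradicting the girth of $G$. You skip the alternating-cycle decomposition entirely: you take the whole symmetric difference $T\triangle T'$, note that the lifted edge multiset $\hat M$ has even degree at every vertex of $G$, conclude from $\lvert \hat M\rvert<\operatorname{girth}(G)$ that its support is a forest, and then apply the elementary observation that an even-degree multiset supported on a forest has every edge with even multiplicity; since each segment edge appears exactly once among the $\hat P_i$, the $\hat Q_l$ must cover all of them, forcing $\sum_l c(e_l')\ge\sum_i c(e_i)$ and the desired contradiction. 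So you trade the positive-gain cycle selection for a parity-plus-forest lemma; both finishes are sound, yours avoids having to extract a positive-gain cycle from the move, while the paper's localizes the contradiction to one explicit short cycle of $G$. One point worth making explicit in your write-up is that no edge of $G'$ projects to a loop of $G$ (each copy $v'$ is adjacent only to the original neighbours of $v$), so $\hat M$ really is a multiset of edges of $G$ and your degree and parity bookkeeping transfers under the projection exactly as claimed; with that remark added, the argument is complete.
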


\begin{proof}
This proof is similar to the proof of Theorem 3.5 in \cite{Chandra}.

Assume that there is an improving $k$-move. Then, this $k$-move can be decomposed into alternating cycles. Since the $k$-move is improving, at least one alternating cycle has positive gain. Choose such a cycle $C$; it consists of at most $k$ tour edges. By construction all tour edges of the cycle have length at most $f$, so the total length of the tour edges is at most $kf$. Since $C$ has positive gain, the non-tour edges have a total length of less than $kf$. Recall that we showed in the proof of Lemma \ref{length graph} that the shorter of the two walks in $W$ between consecutive marked vertices is a shortest path between them. Now, consider for all tour edges in $C$ the corresponding walk in $W$ between the endpoints and call these edges in the walk \emph{tour-$W$-edges}. For all non-tour edges in $C$ consider the shortest path in $G'$ and call them \emph{non-tour-$G'$-edges}. The union of the tour-$W$-edges and non-tour-$G'$-edges is a closed walk of length less than $2kf$ in $G'$. We map the closed walk to $G$ by mapping the copies of a vertex to the original vertex. Note that every tour-$W$-edge occurs at most once in this closed walk since $W$ is a Eulerian walk of $G$. Thus, there has to be a tour-$W$-edge that does not occur a second time as a non-tour-$G'$-edge, otherwise the cost of the non-tour-$G'$-edges is not strictly less than that of the tour-$W$-edges. Hence, the closed walk contains a cycle with length less than $2kf$ contradicting the girth of $G$.
\end{proof}

\begin{lemma} \label{given f high ratio graph}
For all positive integers $f$ there exists an instance of \textsc{Graph TSP} with at most $4(2f)^{2kf}$ vertices and approximation ratio of at least $\frac{f}{4}$ for the $k$-Opt algorithm.
\end{lemma}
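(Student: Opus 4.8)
The plan is to use the Graph TSP instance induced by the graph $G'$ from Definition \ref{Graph k opt def T}, together with the tour $T$ constructed there as a possible output of the $k$-Opt algorithm, and to beat a crude upper bound on the optimum tour. Concretely, I would show that this instance has at most $4(2f)^{2kf}$ vertices, that $c(T)$ is large by Lemma \ref{length graph}, that $T$ is a legitimate $k$-Opt output by Lemma \ref{k optimal graph}, and that the optimum is small by a spanning-tree argument.

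First I would pin down the vertex count. The inequality $\frac{(2f-1)^{2kf-1}-1}{2f-2}\le (2f)^{2kf}$ lets us apply Theorem \ref{regular high girth} with $\delta=2f$ and $g=2kf$, producing a $2f$-regular graph on $2(2f)^{2kf}$ vertices with girth at least $2kf$; taking $G$ to be one connected component gives $\lvert V(G)\rvert\le 2(2f)^{2kf}$. As noted after Definition \ref{Graph k opt def T}, $G'$ is $G$ with at most $\lvert V(G)\rvert-1$ duplicated vertices added, each adjacent to the neighbours of the vertex it copies, so $G'$ stays connected and $\lvert V(G')\rvert<2\lvert V(G)\rvert\le 4(2f)^{2kf}$. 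Hence the Graph TSP instance on $G'$ has at most $4(2f)^{2kf}$ vertices, and by Lemmas \ref{length graph} and \ref{k optimal graph} the tour $T$ of this instance is $k$-optimal with $c(T)=f\lvert V(G)\rvert$. For the optimum I would double a spanning tree of the connected graph $G'$ and shortcut a closed walk through the resulting Eulerian multigraph, obtaining a tour $T^*$ with $c(T^*)\le 2(\lvert V(G')\rvert-1)<4\lvert V(G)\rvert$. Combining the two bounds, $\frac{c(T)}{c(T^*)}>\frac{f\lvert V(G)\rvert}{4\lvert V(G)\rvert}=\frac f4$, so on this instance the $k$-Opt algorithm may return a tour more than $\frac f4$ times longer than the optimum, which is the assertion.

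This is largely an assembly of the preceding results; the only points that need a moment's care are verifying $\frac{(2f-1)^{2kf-1}-1}{2f-2}\le (2f)^{2kf}$ so that Theorem \ref{regular high girth} is applicable, checking that $G'$ really is connected so that the shortest-path metric underlying the Graph TSP instance is well defined, and keeping the strict inequalities aligned so that the ratio is genuinely at least $\frac f4$. I do not expect a real obstacle here: the substance of the construction is contained in Lemmas \ref{length graph} and \ref{k optimal graph}, which I would take as given.
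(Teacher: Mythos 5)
Your proposal is correct and follows essentially the same route as the paper: the same girth-$2kf$ regular-graph construction with the tour $T$ from Definition \ref{Graph k opt def T}, the vertex bound $\lvert V(G')\rvert<2\lvert V(G)\rvert\le 4(2f)^{2kf}$, Lemmas \ref{length graph} and \ref{k optimal graph} for $c(T)=f\lvert V(G)\rvert$ and $k$-optimality, and the doubled-spanning-tree bound $c(T^*)<4\lvert V(G)\rvert$ giving the ratio $\frac{f}{4}$.
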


\begin{proof}
By construction (Definition \ref{Graph k opt def T}), $G'$ has at most $2\lvert V(G) \rvert\leq 4(2f)^{2kf}$ vertices and by Lemmas \ref{length graph} and \ref{k optimal graph}, $T$ is a $k$-optimal tour of $G'$ with length $f\lvert V(G) \rvert$. By the double tree algorithm (see for example \cite{Korte:2007:COT:1564997}), we can bound the length of the optimal tour by twice the cost of the minimum spanning tree. In the special case of \textsc{Graph TSP} this is at most $2 (\lvert V(G')\rvert -1)< 2(2\lvert V(G) \rvert-1)< 4\lvert V(G) \rvert$ since the minimum spanning tree consists only of edges of cost 1. Hence, the approximation ratio is at least $\frac{f}{4}$. 
\end{proof}

\begin{lemma} \label{increase vertices graph}
For all positive integers $f$ and $n\geq 4(2f)^{2kf}$ there exists an instance of \textsc{Graph TSP} with $n$ vertices and approximation ratio of at least $\frac{f}{8}$ for the $k$-Opt algorithm.
\end{lemma}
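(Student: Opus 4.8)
The plan is to pad the instance produced by Lemma~\ref{given f high ratio graph} up to exactly $n$ vertices, losing only a constant factor in the approximation ratio, by gluing together several disjoint copies of it into a single \textsc{Graph TSP} instance. Fix $f$ and write $G'$, $m:=\lvert V(G')\rvert\le 4(2f)^{2kf}$ and the $k$-optimal tour $T$ with $c(T)=f\lvert V(G)\rvert$ as in Lemma~\ref{given f high ratio graph}; recall $\lvert V(G)\rvert>m/2$ and that the optimal tour of that instance has length less than $4\lvert V(G)\rvert$. Since $n\ge 4(2f)^{2kf}\ge m$, I would put $t:=\lfloor n/m\rfloor\ge 1$ and $r:=n-tm\in\{0,\dots,m-1\}$, so that $tm>n/2$ and $c(T)-f>\tfrac12 f(m-1)$.

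First I would build $\tilde G$ on exactly $n$ vertices. Take $t$ vertex-disjoint copies $G'_1,\dots,G'_t$ of $G'$, with corresponding copies $T_1,\dots,T_t$ of $T$, plus $r$ further new vertices. In each $G'_i$ choose a tour edge $\{x_i,y_i\}$ of $T_i$ whose endpoints are as far apart in $G'_i$ as possible, delete it so that $T_i$ becomes a Hamiltonian path $P_i$ of $G'_i$ from $x_i$ to $y_i$, and add the new edges $\{y_i,x_{i+1}\}$ for $1\le i\le t-1$ together with $\{y_t,x_1\}$, subdividing one of these new edges into an internal path through the $r$ spare vertices. The resulting graph $\tilde G$ is connected and has $tm+r=n$ vertices, and the tour $\tilde T$ that runs through $P_1$, crosses to $G'_2$, runs through $P_2$, and so on around all $t$ copies and back is a tour of $\tilde G$ visiting every vertex once.

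Next I would estimate the two relevant lengths. Within each copy $\tilde T$ uses $T_i$ with one tour edge of length at most $f$ removed, and by the choice of far-apart link endpoints no shortcut of a within-copy edge through other copies is shorter, so $c(\tilde T)\ge\sum_{i=1}^t\big(c(T_i)-f\big)=t\,(c(T)-f)>\tfrac12\,t\,f\,(m-1)$. On the other hand $\tilde G$ is connected with all edge costs equal to $1$, so the double-tree bound gives an optimal tour of $\tilde G$ of length at most $2(n-1)<2n$. Combining with $tm>n/2$ and $m$ large yields
\[
\frac{c(\tilde T)}{\mathrm{OPT}(\tilde G)}\;>\;\frac{\tfrac12\,t\,f\,(m-1)}{2n}\;>\;\frac{f}{8}\cdot\frac{m-1}{m}\,,
\]
which is at least $\tfrac f8$ once the (easily checked) constants are tracked, since $m=\lvert V(G')\rvert$ is enormous compared to $f$.

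The step I expect to be the main obstacle is proving that $\tilde T$ is $k$-optimal; here I would mimic the proof of Lemma~\ref{k optimal graph}. Assume an improving $k$-move exists, decompose it into alternating cycles, and pick one, $C$, of positive gain with at most $k$ tour edges, each of length at most $f$. Expanding every tour edge of $C$ into the corresponding $W_i$-walk inside its copy (as in the proofs of Lemma~\ref{length graph} and Lemma~\ref{k optimal graph}) and every non-tour edge into a shortest path of $\tilde G$ produces a closed walk of length less than $2kf$. The delicate point to pin down is that such a short closed walk cannot make any real use of the new link edges: leaving one copy and returning in order to close up forces it to traverse a link twice or to travel the long way around all $t$ copies, and since the $2f$-regular graph $G$ of girth at least $2kf$ has diameter at least $kf-1$, attaching the links at far-apart vertices makes either option already cost at least $2kf$; the inserted degree-two padding vertices are harmless because an alternating cycle through a degree-two vertex must use both of its edges, a detour that contracts away. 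Hence the closed walk lies inside a single copy $G'_i$, where mapping the copied vertices back onto $G$ exactly as in Lemma~\ref{k optimal graph} turns it into a cycle of length less than $2kf$ in $G$, contradicting the girth of $G$. Therefore no improving $k$-move exists, $\tilde T$ is $k$-optimal, and the lemma follows. Getting this confinement argument fully airtight — in particular choosing the gluing so that no profitable short move can exploit the junctions between copies — is the part I expect to require the most care.
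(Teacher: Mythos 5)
There is a genuine gap, and it sits exactly where you flagged it: the $k$-optimality of $\tilde T$. Your confinement argument rests on the premise that the attachment points are far apart, but $x_i,y_i$ are by your own choice the endpoints of a \emph{tour edge} of $T_i$, and every tour edge has length at most $f$; so $\mathrm{dist}_{G'_i}(x_i,y_i)\le f$, nowhere near the diameter bound $\ge kf$ you invoke, and the claim that leaving a copy and returning ``already cost[s] at least $2kf$'' is unsupported. It is in fact false in a regime the lemma must cover: when $m\le n<2m$ you have $t=\lfloor n/m\rfloor=1$ and $r=n-m$ can be as small as $1$, so the ``long way around'' is just the spare path of length $r+1$. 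Then $\tilde G$ contains a genuine cycle of length at most $f+r+1\ll 2kf$ through the link edges, the shortest-path metric \emph{inside} the copy changes (vertices near $x_1$ and near $y_1$ become close via the spare path), your claim that no within-copy edge can be shortcut through the links fails, and with it both the lower bound $c(\tilde T)\ge t(c(T)-f)$ and the girth-of-$G$ contradiction for closed walks that use link edges. More structurally: because you glue the copies in a cycle, attached at two vertices per copy, the link edges are not bridges, so the cut-parity argument of the paper (every connecting edge is crossed an even number of times by the $k$-move, and a bridge deleted twice without being re-added disconnects the tour) is unavailable, and you have no replacement for it; this is exactly the tool that confines a positive-gain alternating cycle to a single copy.

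The paper's construction is designed to avoid all of this: all copies and the $b$ padding vertices are strung along a \emph{path}, each copy is attached at a \emph{single} vertex $v_i$, the tour $T$ of each copy is kept intact, and the connecting edges are doubled and then shortcut. Single-vertex attachment leaves every within-copy distance unchanged, every connecting edge is a bridge, and the even-crossing argument reduces any improving $k$-move to a positive-gain closed walk of length $<2kf$ inside one copy, contradicting the girth of $G$ as in Lemma~\ref{k optimal graph}. If you re-glue your construction in this way (one attachment vertex per copy, path rather than cycle of copies, no tour edge removed), your padding idea goes through. Two smaller points: as displayed, your chain only gives $\frac{f}{8}\cdot\frac{m-1}{m}<\frac{f}{8}$, however large $m$ is; to reach $\frac{f}{8}$ you must keep the cost contributed by the link/padding edges (the paper retains the analogous term $2(a+b-1)$ in its estimate). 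Also note that adding the edge $\{y_i,x_{i+1}\}$ changes the underlying graph and hence the TSP costs, so ``$\tilde T$ restricted to a copy costs $c(T_i)-c(\{x_i,y_i\})$'' already presupposes the unchanged-metric claim you would first have to prove.
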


\begin{proof}
Let $G'$ and $T$ be constructed as above. For nonnegative integers $a,b$ we construct a graph $G'_{a,b}$ from $G'$. Choose an arbitrary vertex $v\in V(G')$ and let $G'_1,\dots, G'_a$ be $a$ copies of $G'$ and $v_1\dots, v_a$ be the corresponding vertices of $v$ in these copies. Let $V(G'_{a,b})$ be the union of the vertices in $V(G'_i)$, $1\leq i\leq a$, and $b$ extra vertices $v_{a+1},\dots, v_{a+b}$, and $E(G'_{a,b})$ be the union of the edges in $E(G'_i)$ together with the edges $\{v_i,v_{i+1}\}$ for $i\in \{1,\dots, a+b-1\}$. We call the edges of the form $\{v_i,v_{i+1}\}$ the \emph{connecting edges}. Consider the copies of the tour $T$ for each of the graphs $G'_1,\dots, G'_a$. Assemble the tour copies together with two copies of the connecting edges and shortcut to a tour $T'$ of $G'_{a,b}$. The length of $T'$ is $a\lvert V(G) \rvert f+2(a+b-1)$. 

Next, we show that $T'$ is still $k$-optimal. Assume that there is an improving $k$-move and applying it results in a shorter tour $T''$. For every tour edge in the $k$-move we replace it by the corresponding walk according to $W$ possibly connected by connecting edges between different copies. For every non-tour edge in the $k$-move we replace it by the shortest path in $G'_{a,b}$, respectively. We call these edges \emph{tour}-$W$-edges and \emph{non-tour}-$G'_{a,b}$-edges, respectively. Note that by construction $T'$ crosses the cut beween $v_i$ and $v_{i+1}$ twice for all $i$. As the new tour $T''$ has to cross these cuts an even number of times, it has to cross it at least twice. That means that the multiset of tour-$W$-edges does not contain any connecting edge and the multiset of non-tour-$G'_{a,b}$-edges contains each connecting edge an even number of times. Thus, we can split the union of the tour-$W$-edges and the non-tour-$G'_{a,b}$-edges into a union of cycles (not necessarily alternating) such that each of them either only contains edges in one copy of $G'_{j}$ or only connecting edges. As the multiset of tour-$W$-edges does not contain any connecting edge, there is a cycle with positive gain lying completely in a single copy $G'_j$. We get a contradiction to the girth of $G$ by transforming this cycle to $G$ similar to the proof of Lemma \ref{k optimal graph}.

We choose $a,b$ such that $a\lvert V(G') \rvert+b=n$, $a\geq 1$ and $0\leq b<\lvert V(G') \rvert$ since $\lvert V(G') \rvert<2 \lvert V(G) \rvert \leq 4(2f)^{2kf}\leq n$. In this case $G'_{a,b}$ has $n$ vertices and the approximation ratio is at least 
\begin{align*}
\frac{a\lvert V(G) \rvert f+2(a+b-1)}{2(a\lvert V(G') \rvert+b)}&>\frac{\frac{1}{2}a\lvert V(G') \rvert f+2(a+b-1)}{2(a\lvert V(G') \rvert+b)}> \frac{a\lvert V(G') \rvert f}{4(a\lvert V(G') \rvert+\lvert V(G') \rvert)}\\
&= \frac{af}{4(a+1)}\geq \frac{f}{8}.
\end{align*}
\end{proof}

After constructing the instances in the previous lemma, our next task is to determine the asymptotic relationship between the approximation ratio $f$ and the number of vertices $n$. To facilitate this, we introduce an auxiliary lemma in a more general form, which will be also useful in the analysis of the upper bound.

\begin{lemma} \label{ff asymp}
If $(c_1f)^{c_2f^{c_3}}\leq n$ for constants $c_1,c_2,c_3>0$, then $f = O\left(\left(\frac{\log(n)}{\log\log(n)}\right)^{\frac{1}{c_3}}\right)$. Similarly, if $(c_1f)^{c_2f^{c_3}}\geq n$, then $f = \Omega\left(\left(\frac{\log(n)}{\log\log(n)}\right)^{\frac{1}{c_3}}\right)$.
\end{lemma}

\begin{proof}
By taking the logarithm on both sides of $(c_1f)^{c_2f^{c_3}}\leq n$, we obtain $\log(n) \geq c_2f^{c_3}\log(c_1) + c_2f^{c_3} \log(f) = \Theta(f^{c_3} \log(f))$.
Hence $f^{c_3}\log(f) = O(\log n)$, which implies $\log(f) = O(\log\log(n))$ by taking the logarithm again. The first statement follows from 
\begin{align*}  
f^{c_3}= O\left(\frac{f^{c_3}\log(f)}{\log(f)}\right) = O\left(\frac{\log(n)}{\log\log(n)}\right).
\end{align*}
The second statement can be shown analogously.
\end{proof}

\begin{theorem}
The approximation ratio of $k$-Opt for \textsc{Graph TSP} is $\Omega\left(\frac{\log(n)}{\log\log(n)}\right)$ where $n$ is the number of vertices.
\end{theorem}

\begin{proof}
For all positive integers $f$ and $n$ with $4(2f)^{2kf}\leq n < 4(2(f+1))^{2k(f+1)}$ we get an instance with $n$ vertices and approximation ratio at least $\frac{f}{8}$ by Lemma \ref{increase vertices graph}. Applying Lemma~\ref{ff asymp} we obtain the result $f=\Theta\left(\frac{\log(n)}{\log\log(n)}\right)$.
\end{proof}

\section{Upper Bound for Graph TSP} \label{sec upper bound graph tsp}
In this section we show an upper bound of $O\left(\left(\frac{\log(n)}{\log\log(n)}\right)^{\log_2(9)+\epsilon}\right)$ for all $\epsilon>0$ on the approximation ratio for the 2-Opt algorithm for \textsc{Graph TSP} instances. This implies the same upper bound also for the general $k$-Opt and Lin-Kernighan algorithm since they also produce 2-optimal tours. To show the bound, we assume that a worst-case instance together with a 2-optimal tour is given and bound the length of the tour compared to the length of the optimal tour. Starting with the given instance we iteratively contract a subset of vertices. We show that the cardinality of a subset of the vertices, the so-called \emph{active vertices}, decreases by a factor exponential in the approximation factor after a certain number of iterations. In the end we know that by construction at least one active vertex is remaining. Hence, we can bound the approximation ratio by the number of active vertices at the beginning which is upper bounded by the total number of vertices.

Let an instance $(K_n,c)$ of \textsc{Graph TSP} and a graph $G=(V(K_n),E(G))$ be given such that $c(u,v)$ is the shortest distance between $u$ and $v$ in $G$. Moreover, let $T$ be a 2-optimal TSP tour of this instance. Fix an orientation of $T$ and define $f:=\frac{\sum_{e\in T}c(e)}{n}$. Note that $f$ does not have to be an integer. We may assume that $f>1$ since otherwise $T$ has length $n$ and is optimal.

\begin{definition}
For every edge $(u,v)\in T$ fix a shortest path between $u$ and $v$ in $G$. We call $(u',v')$ a \emph{subedge} of $(u,v)\in T$ if $u'$ and $v'$ lie on the fixed shortest path between $u$ and $v$ in $G$ and $c(u,u')<c(u,v')$.
\end{definition}

Next, we construct a directed multigraph $G_0$ with $V(G_0)=V(G)$. Starting from $G_0$, we iteratively construct the directed multigraph $G_{i+1}$ from $G_i$. We get $V(G_{i+1})$ by contracting subsets of $V(G_i)$ using Theorem~\ref{graph decomposition}. The edge sets $E(G_{i})$ do not depend on the previous graph and can be constructed directly. Throughout the construction, we keep track of functions $p_i: V(G) \to V(G_i)$ that map vertices from graph $G$ to their corresponding images in graph $G_i$. Let $n_i$ be the number of active vertices in $G_i$.

\begin{figure}[h]
  \centering
\definecolor{ffffqq}{rgb}{1.,1.,0.}
\definecolor{qqzzqq}{rgb}{0.,0.6,0.}
\definecolor{ffqqqq}{rgb}{1.,0.,0.}
\definecolor{qqqqff}{rgb}{0.,0.,1.}
\definecolor{ffxfqq}{rgb}{1.,0.5,0.}
\begin{tikzpicture}[line cap=round,line join=round,>=triangle 45,x=0.65cm,y=0.65cm]
\draw [line width=2.pt,color=qqqqff] (-6.18,3.39)-- (-4.14,3.29);
\draw [line width=2.pt,color=qqqqff] (-6.18,3.39)-- (-6.3,1.37);
\draw [line width=2.pt,color=qqqqff] (-4.14,3.29)-- (-4.,1.13);
\draw [line width=2.pt,color=qqqqff] (-6.18,3.39)-- (-4.,1.13);
\draw [line width=2.pt,color=ffqqqq] (-4.,1.13)-- (1.38,-2.53);
\draw [line width=2.pt,color=ffqqqq] (1.38,-2.53)-- (4.22,-2.75);
\draw [line width=2.pt,color=ffqqqq] (1.38,-2.53)-- (1.04,-5.17);
\draw [line width=2.pt,color=qqzzqq] (2.,5.)-- (1.62,3.25);
\draw [line width=2.pt,color=qqzzqq] (1.62,3.25)-- (5.4,5.01);
\draw [line width=2.pt,color=qqzzqq] (2.,5.)-- (3.84,5.15);
\draw [line width=2.pt,color=qqzzqq] (3.84,5.15)-- (5.4,5.01);
\draw [line width=2.pt,color=qqzzqq] (5.4,5.01)-- (4.46,3.09);
\draw [line width=2.pt,color=ffqqqq] (1.62,3.25)-- (1.38,-2.53);
\draw [line width=2.pt,color=ffqqqq] (4.46,3.09)-- (4.22,-2.75);
\draw [line width=2.pt] (-4.88,2.43) circle (1.5049149477628299cm);
\draw [line width=2.pt] (3.44,3.97) circle (1.7373856796923361cm);
\draw [line width=2.pt] (2.16,-3.45) circle (1.7577956245428663cm);
\draw [line width=2.pt,color=ffffqq] (-6.,-2.)-- (-5.6,-3.78);
\draw [line width=2.pt,color=ffffqq] (-5.6,-3.78)-- (-2.96,-3.61);
\draw (-5.116,4.6108) node[anchor=north west] {$V_1^i$};
\draw (3.004,6.4308) node[anchor=north west] {$V_2^i$};
\draw (2.224,-1.3892) node[anchor=north west] {$V_3^i$};
\draw [line width=2.pt] (-4.656,-2.8892) circle (1.43271560332119cm);
\draw [line width=2.pt,color=ffffqq] (-2.96,-3.61)-- (-3.756,-2.1492);
\draw [line width=2.pt,color=ffffqq] (-3.756,-2.1492)-- (-4.876,-2.1692);
\draw [line width=2.pt,color=ffffqq] (-4.876,-2.1692)-- (-4.876,-1.2892);
\draw [line width=2.pt] (-8.816,0.1108)-- (-9.676,-0.8092);
\draw [line width=2.pt] (-9.676,-0.8092)-- (-8.196,-1.7492);
\draw [line width=2.pt] (-8.796,-0.9092) circle (1.1259080779530806cm);
\draw (-4.796,-2.5492) node[anchor=north west] {$X_i$};
\draw (-9.096,-0.2492) node[anchor=north west] {$X_{i-1}$};
\draw [line width=2.pt] (-8.816,0.1108)-- (-6.3,1.37);
\draw [line width=2.pt] (-9.076,2.5908) circle (0.5992703897240373cm);
\draw (-9.736,3.4708) node[anchor=north west] {$X_{i-2}$};
\begin{scriptsize}
\draw [fill=ffxfqq] (-6.18,3.39) circle (2.5pt);
\draw [fill=ffxfqq] (-4.14,3.29) circle (2.5pt);
\draw [fill=ffxfqq] (-6.3,1.37) circle (2.5pt);
\draw [fill=ffxfqq] (-4.,1.13) circle (2.5pt);
\draw [fill=ffxfqq] (1.38,-2.53) circle (2.5pt);
\draw [fill=ffxfqq] (1.04,-5.17) circle (2.5pt);
\draw [fill=ffxfqq] (4.22,-2.75) circle (2.5pt);
\draw [fill=ffxfqq] (2.,5.) circle (2.5pt);
\draw [fill=ffxfqq] (3.84,5.15) circle (2.5pt);
\draw [fill=ffxfqq] (5.4,5.01) circle (2.5pt);
\draw [fill=ffxfqq] (1.62,3.25) circle (2.5pt);
\draw [fill=ffxfqq] (4.46,3.09) circle (2.5pt);
\draw [fill=ffxfqq] (-6.,-2.) circle (2.5pt);
\draw [fill=ffxfqq] (-5.6,-3.78) circle (2.5pt);
\draw [fill=ffxfqq] (-2.96,-3.61) circle (2.5pt);
\draw [fill=ffxfqq] (-3.756,-2.1492) circle (2.5pt);
\draw[color=ffffqq] (-2.986,-2.4892) node {$f_1$};
\draw [fill=ffxfqq] (-4.876,-2.1692) circle (2.5pt);
\draw [fill=ffxfqq] (-4.876,-1.2892) circle (2.5pt);
\draw [fill=black] (-8.816,0.1108) circle (2.5pt);
\draw [fill=black] (-9.676,-0.8092) circle (2.5pt);
\draw [fill=black] (-8.196,-1.7492) circle (2.5pt);
\draw [fill=black] (-9.076,2.5908) circle (2.5pt);
\end{scriptsize}
\end{tikzpicture}
    \caption{Construction of $V(G_{i+1})$: The orange and black vertices are the active and inactive vertices in $G_i$, respectively. The yellow, blue, green and red edges are the edges of $E_0^i$, $E_1^i$, $E_2^i$ and $E_3^i$, respectively. The black edges have at least one inactive vertex in $G_i$ as endpoint and are hence unassigned. Each of the sets $V_1^i$, $V_2^i$ and $V_3^i$ will be contracted to a single vertex in $G_{i+1}$; they will be the active vertices of $G_{i+1}$. All vertices in $X_i$ will become inactive in $G_{i+1}$.}
    \label{construction}
  \end{figure}

\begin{definition}
Fix some $0<\delta<1$ and set $s:=n(f^\delta-1)>0$. Starting with $G_0$ we iteratively construct the directed multigraph $G_{i+1}$ from $G_i$ (Figure \ref{construction}):
\begin{itemize}
\item Let $V(G_0):=V(G)$; we call all vertices of $G_0$ \emph{active}, in particular $n_0\coloneq n$. Moreover, let $p_0(v):=v$ for all $v\in V(G)$.
\item To construct $E(G_{i})$ for all $i\geq 0$ we start with $E(G_{i})=\emptyset$. For every subedge $(u',v')$ of $(u,v)\in T$ with $c(u',v')=9^i$ and such that $9^i$ divides $c(u,u')$ we add the edge $(p_{i}(u'), p_{i}(v'))$ to $G_{i}$ (Figure \ref{giedges}). 
\item To construct $V(G_{i+1})$ from $G_i$ consider the underlying undirected graph of $G_i$ and delete parallel edges. We call the resulting graph $G'_i$. The set of active vertices in $G_i'$ is the same as in $G_i$.
\item By Theorem~\ref{graph decomposition} there is an $n_{i+1}$ with an edge partition $E_0^i,\dots, E_{n_{i+1}}^i$ of the subgraph induced by the active vertices in $G'_i$ satisfying the following conditions: $\epsilon_i:= \frac{s}{8n_i^22^i}$, $\lvert E_0 \rvert \leq \epsilon_i n_i^2$, $n_{i+1}\leq \frac{16}{\epsilon_i}$ and the diameter of $E_j^i$ is at most 4 for all $j>0$. 
\item Define iteratively the sets $V_1^i,\dots, V_{n_{i+1}}^i$ as follows: $V_j^i:=\{v\in V(G_{i})\mid \exists e \in E_j^i, v\in e\} \backslash \cup_{h=1}^{j-1} V_{h}^i$. 
\item We contract the vertices in each of the sets $V_j^i$ to a single vertex, which together with the vertices in $V(G_i)\backslash \left( V_1^i \cup \dots \cup V_{n_{i+1}}^i \right)$ form the vertex set of $G_{i+1}$. 
\item We call the contracted vertices of $V_1^i, \dots, V_{n_{i+1}}^i$ the \emph{active} vertices of $G_{i+1}$, all other vertices of $G_{i+1}$ are called \emph{inactive}. 
\item Note that if a vertex is inactive in $G_{i}$ it is also inactive in $G_{i+1}$. Let $X_{i}:=V(G_{i})\backslash \left(X_1\cup \dots \cup X_{i-1} \cup V_1^i \cup \dots \cup V_{n_{i+1}}^i \right)$ be the set of vertices that become inactive the first time in $G_{i+1}$.
\item Let $p_{i+1}(v)\in V(G_{i+1})$ for all $v\in V(G)$ be the image of $p_{i}(v)$ in $G_{i+1}$.  
\end{itemize}

\end{definition}

\begin{figure}[h]
\centering
 \definecolor{ccqqqq}{rgb}{1,0.,0.}
\begin{tikzpicture}[line cap=round,line join=round,>=triangle 45,x=0.9cm,y=0.9cm]
\draw [line width=2.pt] (-8.,4.)-- (-7,4.);
\draw [line width=2.pt] (-6.,4.)-- (-4,4.);
\draw [->,line width=2.pt] (-3.,4.)-- (0.,4.);
\draw [<-,shift={(-6.5,2.77)},line width=2.pt,color=ccqqqq]  plot[domain=0.6868176497586452:2.454775003831148,variable=\t]({1.*1.939819579239265*cos(\t r)+0.*1.939819579239265*sin(\t r)},{0.*1.939819579239265*cos(\t r)+1.*1.939819579239265*sin(\t r)});
\draw [<-,shift={(-3.5,2.77)},line width=2.pt,color=ccqqqq]  plot[domain=0.6868176497586452:2.454775003831148,variable=\t]({1.*1.939819579239265*cos(\t r)+0.*1.939819579239265*sin(\t r)},{0.*1.939819579239265*cos(\t r)+1.*1.939819579239265*sin(\t r)});
\begin{scriptsize}
\draw [fill=black] (-8.,4.) circle (2.5pt);
\draw [fill=black] (-7.,4.) circle (2.5pt);
\draw [fill=black] (-6.3,4.) circle (0.5pt);
\draw [fill=black] (-6.5,4.) circle (0.5pt);
\draw [fill=black] (-6.7,4.) circle (0.5pt);
\draw [fill=black] (-6.,4.) circle (2.5pt);
\draw [fill=black] (-5.,4.) circle (2.5pt);
\draw [fill=black] (-4.,4.) circle (2.5pt);
\draw [fill=black] (-3.3,4.) circle (0.5pt);
\draw [fill=black] (-3.5,4.) circle (0.5pt);
\draw [fill=black] (-3.7,4.) circle (0.5pt);
\draw [fill=black] (-3.,4.) circle (2.5pt);
\draw [fill=black] (-2.,4.) circle (2.5pt);
\draw [fill=black] (-1.,4.) circle (2.5pt);
\draw [fill=black] (0.,4.) circle (2.5pt);
\node[below=2] at (-8.,4.) {$u_0$};
\node[below=2] at (-7.,4.) {$u_1$};
\node[below=2] at (-6.,4.) {$u_8$};
\node[below=2] at (-5.,4.) {$u_9$};
\node[below=2] at (-4.,4.) {$u_{10}$};
\node[below=2] at (-3.,4.) {$u_{17}$};
\node[below=2] at (-2.,4.) {$u_{18}$};
\node[below=2] at (-1.,4.) {$u_{19}$};
\node[below=2] at (0.,4.) {$u_{20}$};
\end{scriptsize}
\end{tikzpicture}
  \caption{Construction of $E(G_{i})$: Let the fixed shortest path of the edge $(u_0, u_{20})\in T$ be $u_0, u_1, \dots, u_{20}$. For the edge $(u_0, u_{20})$, we add the edges $\{(u_{9^ij}, u_{9^i(j+1)}) \mid j\in \N_0 \land 9^i(j+1)\leq 20 \}$ to $E(G_i)$. The red edges illustrate the edges we add to $E(G_{1})$.}
  \label{giedges}
\end{figure}

In the following we will show that $G_i$ is a simple directed graph and give a lower bound on the number of edges of $G_i$ depending on the constant $\delta$ we fixed above.

\begin{figure}[h]
  \centering
   \begin{tikzpicture}[line cap=round,line join=round,>=triangle 45,x=0.7cm,y=0.7cm]
\draw [line width=2pt] (3.303902439024386,2.213170731707321)-- (5.278536585365849,2.4492682926829303);
\draw [line width=2pt] (6.823902439024385,2.4278048780487835)-- (8.841463414634141,2.27756097560976);
\draw [line width=2pt] (10.472682926829263,1.9341463414634184)-- (12.36146341463414,1.333170731707321);
\draw [line width=2pt] (0.1058536585365818,1.2473170731707355)-- (2,2);
\draw [line width=2pt] (2.6385365853658493,2.0843902439024427) circle (1cm);
\draw [line width=2pt] (6.029756097560971,2.5136585365853694) circle (1cm);
\draw [line width=2pt] (9.614146341463409,2.1273170731707354) circle (1cm);
\draw [line width=2pt] (12.983902439024384,1.0112195121951257) circle (1cm);
\draw [line width=2pt] (-0.49512195121951563,0.9897560975609794) circle (1cm);
\begin{scriptsize}
\draw [fill=black] (3.303902439024386,2.213170731707321) circle (2.5pt);
\node[label=above:$x_2$] at (3.303902439024386,2.213170731707321) {};
\draw [fill=black] (5.278536585365849,2.4492682926829303) circle (2.5pt);
\node[label=above:$y_2$] at (5.278536585365849,2.4492682926829303) {};
\draw [fill=black] (6.823902439024385,2.4278048780487835) circle (2.5pt);
\node[label=above:$x_3$] at (6.823902439024385,2.4278048780487835) {};
\draw [fill=black] (8.841463414634141,2.27756097560976) circle (2.5pt);
\node[label=above:$y_3$] at (8.841463414634141,2.27756097560976) {};
\draw [fill=black] (10.472682926829263,1.9341463414634184) circle (2.5pt);
\node[label=above:$x_4$] at (10.472682926829263,1.9341463414634184) {};
\draw [fill=black] (12.36146341463414,1.333170731707321) circle (2.5pt);
\node[label=above:$y_4$] at (12.36146341463414,1.333170731707321) {};
\draw [fill=black] (0.1058536585365818,1.2473170731707355) circle (2.5pt);
\node[label=above:$x_1$] at (0.1058536585365818,1.2473170731707355) {};
\draw [fill=black] (2,2) circle (2.5pt);
\node[label=above:$y_1$] at (2,2) {};
\draw [fill=black] (-1.0102439024390277,0.903902439024394) circle (2.5pt);
\node[label=above:$u$] at (-1.0102439024390277,0.903902439024394) {};
\draw [fill=black] (13.778048780487797,1.0756097560975648) circle (2.5pt);
\node[label=above:$v$] at (13.778048780487797,1.0756097560975648) {};
\end{scriptsize}
\end{tikzpicture}
    \caption{Sketch for the proof of Lemma \ref{dist vertices}.}
    \label{liinduction}
\end{figure}  

\begin{lemma} \label{dist vertices}
If $p_i(u)=p_i(v)$, then $c(u,v)< 9^i$  for all $u,v\in V(G)$.
\end{lemma}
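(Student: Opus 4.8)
The plan is to prove Lemma \ref{dist vertices} by induction on $i$, using two structural facts about the construction: that $G_{i+1}$ is obtained from $G_i$ only by contracting vertex sets $V_j^i$ whose edge sets $E_j^i$ have diameter at most $4$ in $G_i'$, and that every edge of $G_i$ is the image $(p_i(u'),p_i(v'))$ of some subedge $(u',v')$ of a tour edge with $c(u',v')=l_i$. The base case $i=0$ is immediate: $l_0=1$, $p_0$ is the identity, so $p_0(u)=p_0(v)$ forces $u=v$ and $c(u,v)=0<1=l_0$.

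For the inductive step I would assume the statement for $i$ and take $u,v\in V(G)$ with $p_{i+1}(u)=p_{i+1}(v)$; recall $l_{i+1}=9l_i$. If $p_i(u)=p_i(v)$, the inductive hypothesis gives $c(u,v)<l_i<l_{i+1}$ and we are done. Otherwise $p_i(u)$ and $p_i(v)$ are distinct vertices of $G_i$ that become identified in $G_{i+1}$; by the construction of $V(G_{i+1})$ this happens only if both lie in a common class $V_j^i$, and in particular both are incident to edges of $E_j^i$ in $G_i'$. Since the diameter of $E_j^i$ is at most $4$, there is a path $p_i(u)=a_0,a_1,\dots,a_m=p_i(v)$ in $G_i'$ with $1\le m\le 4$ all of whose edges lie in $E_j^i$. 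Each undirected edge $\{a_{t-1},a_t\}$ comes from an edge of $G_i$, hence — by the definition of $E(G_i)$ — from a subedge $(x_t,y_t)$ of a tour edge with $c(x_t,y_t)=l_i$, and after orienting suitably $p_i(x_t)=a_{t-1}$, $p_i(y_t)=a_t$.

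Then I would chain the triangle inequality along $u\to x_1\to y_1\to x_2\to\cdots\to y_m\to v$: the $m$ "jump" terms $c(x_t,y_t)$ each equal $l_i$, while the $m+1$ "glue" terms $c(u,x_1)$, $c(y_t,x_{t+1})$ for $1\le t\le m-1$, and $c(y_m,v)$ each join two points with the same $p_i$-image, hence are $<l_i$ by the inductive hypothesis. Summing gives $c(u,v)<(2m+1)l_i\le 9l_i=l_{i+1}$, completing the induction. The only delicate point, rather than any real obstacle, is the bookkeeping: verifying that two vertices of $G_i$ are merged in $G_{i+1}$ exactly when they share a class $V_j^i$, that membership in $V_j^i$ (despite the set difference in its definition) still implies incidence to an edge of $E_j^i$, and that the diameter‑$4$ property indeed yields a connecting path of at most $4$ edges inside $E_j^i$; once these are pinned down, the numerical part is just $2\cdot 4+1=9=l_{i+1}/l_i$.
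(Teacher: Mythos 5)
Your proof is correct and follows essentially the same route as the paper: induction on $i$, with the inductive step using that identified vertices lie in a common class $V_j^i$, taking a path of at most $4$ edges of $E_j^i$, replacing each edge by its defining subedge of length exactly $l_i$, and chaining the triangle inequality so that the at most $4$ subedge terms plus the at most $5$ glue terms (each $<l_i$ by the induction hypothesis) give $c(u,v)<9l_i=l_{i+1}$. The bookkeeping points you flag are handled implicitly in the paper as well, so there is no substantive difference.
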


\begin{proof}
We prove this statement by induction on $i$. For $i=0$ the two vertices $u$ and $v$ have to be identical, hence $c(u,v)=0<1=9^0$. Now, consider the case $i>0$. By construction, either $p_{i-1}(u)=p_{i-1}(v)$ or $p_{i-1}(u), p_{i-1}(v) \in V^{i-1}_j$ for some $j>0$. In the first case we can simply apply the induction hypothesis. In the second case recall that by construction the diameter of $E_j^{i-1}$ is at most 4. Hence, there exists a path of length at most 4 in $G_{i-1}$ connecting $p_{i-1}(u)$ and $p_{i-1}(v)$. W.l.o.g.\ assume the worst case that the path has length 4. Let $(p_{i-1}(x_j),p_{i-1}(y_j))\in E(G_{i-1})$ for $j\in \{1,2,3,4\}$ such that $p_{i-1}(y_j)=p_{i-1}(x_{j+1})$ for $j\in \{1,2,3\}$, $p_{i-1}(x_1)=p_{i-1}(u)$ and $p_{i-1}(y_4)=p_{i-1}(v)$, i.e.\ $(p_{i-1}(x_j),p_{i-1}(y_j))$ are the edges of the path (Figure \ref{liinduction}). We can use the induction hypothesis five times for $c(u,x_1), c(y_4,v)$ and $c(y_j,x_{j+1})$, $j\in \{1,2,3\}$ to bound the distance:
\begin{align*}
c(u,v)&\leq c(u,x_1)+\sum_{j=1}^4c(x_j,y_j)+\sum_{j=1}^3c(y_j,x_{j+1})+c(y_4,v)< 9\cdot 9^{i-1}=9^{i}
\end{align*}
\end{proof}

\begin{lemma} \label{subedge non optimal}
If there are two subedges $(a',b')$ and $(u',v')$ of different edges $(a,b)$ and $(u,v)$ in $T$ with $c(a',b')+c(u',v')> c(a',u')+c(b',v')$, then $T$ is not 2-optimal.
\end{lemma}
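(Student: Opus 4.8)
The plan is to produce an explicit improving $2$-move on $T$ from the two subedges. Let $T''$ be the tour obtained from $T$ by replacing the two tour edges $(a,b)$ and $(u,v)$ by the edge $(a,u)$ joining their tails and the edge $(b,v)$ joining their heads; this is again a tour by the well-known fact about $2$-moves recalled in the preliminaries, and since it arises from $T$ by replacing two edges it is reached by a $2$-move. As $T$ is assumed $2$-optimal, it is enough to show $c(T'')<c(T)$.

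To do so I would first expand $c(a,b)$ and $c(u,v)$ along their fixed shortest paths. Since $a'$ and $b'$ lie on the fixed shortest $a$--$b$-path and $c(a,a')<c(a,b')$, the vertices appear in the order $a,a',b',b$ along that path, so (sub-paths of shortest paths being shortest) $c(a,b)=c(a,a')+c(a',b')+c(b',b)$, and symmetrically $c(u,v)=c(u,u')+c(u',v')+c(v',v)$. For the two new edges I would apply the triangle inequality twice, $c(a,u)\le c(a,a')+c(a',u')+c(u',u)$ and $c(b,v)\le c(b,b')+c(b',v')+c(v',v)$. Subtracting and telescoping the four common terms $c(a,a'),c(b,b'),c(u,u'),c(v,v')$ yields
\[
c(T'')-c(T)=c(a,u)+c(b,v)-c(a,b)-c(u,v)\le\bigl(c(a',u')+c(b',v')\bigr)-\bigl(c(a',b')+c(u',v')\bigr)<0,
\]
the last inequality being exactly the hypothesis. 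Thus $T''$ is a strictly shorter tour reachable by a $2$-move, contradicting $2$-optimality.

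The only point that requires attention is choosing the right reconnection: of the two ways to rejoin the paths left after deleting $(a,b)$ and $(u,v)$, only $(a,u),(b,v)$ produces a single tour rather than two disjoint cycles, and this is precisely the pairing — $a'$ with $u'$ toward the tails, $b'$ with $v'$ toward the heads — that makes the telescoping above output the hypothesis on the right-hand side. (There is nothing to check about the two edges being non-adjacent: if they shared a vertex the displayed line would read $0\le(\text{something negative})$, so the hypothesis could not hold in that case.) Beyond that the argument is routine bookkeeping — one just keeps the orientation of $T$ fixed so that ``tail'', ``head'', and the order $a,a',b',b$ are unambiguous.
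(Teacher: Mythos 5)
Your proposal is correct and follows essentially the same route as the paper: expand $c(a,b)$ and $c(u,v)$ along the fixed shortest paths through the subedge endpoints, apply the triangle inequality to $c(a,u)$ and $c(b,v)$, and conclude that replacing $(a,b),(u,v)$ by the tails-to-tails and heads-to-heads edges is an improving 2-move. The extra remarks on vertex order along the shortest path and on connectivity of the reconnection are exactly the points the paper relies on implicitly, so there is no gap.
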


\begin{proof}
We have by the triangle inequality
\begin{align*}
c(a,b)+c(u,v)&= c(a,a')+c(a',b')+c(b',b)+c(u,u')+c(u',v')+c(v',v)\\
&> c(a,a')+ c(a',u') +c(u',u)+c(b,b')+ c(b',v')+c(v',v)\\
&\geq c(a,u)+c(b,v).
\end{align*}
Hence, replacing $(a,b)$ and $(u,v)$ by $(a,u)$ and $(b,v)$ is an improving 2-move.
\end{proof}

\begin{lemma} \label{gi simple}
$G_i$ is a simple directed graph with at least $s$ edges for all $i\leq \log_9(f^{1-\delta})$.
\end{lemma}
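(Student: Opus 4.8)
The plan is to establish the two assertions separately, both as consequences of Lemma~\ref{dist vertices} and Lemma~\ref{subedge non optimal} applied to the map $p_i$, which is already fixed by the construction (so no new induction on $i$ is needed): (i) $G_i$ has no self-loop and no two parallel directed edges, hence is simple; (ii) $|E(G_i)|\ge s$ for $i\le\log_9(f^{1-\epsilon})$.

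For (i), recall that every edge of $G_i$ has the form $(p_i(u'),p_i(v'))$ for a subedge $(u',v')$ of some $(u,v)\in T$ with $c(u',v')=l_i$ and $l_i\mid c(u,u')$; call such subedges \emph{qualifying}. A self-loop would mean $p_i(u')=p_i(v')$, hence $c(u',v')<l_i$ by Lemma~\ref{dist vertices}, contradicting $c(u',v')=l_i$. For parallel edges, take two distinct qualifying subedges $(u_1',v_1')$, $(u_2',v_2')$ of $T$-edges $(a_1,b_1)$, $(a_2,b_2)$ with $p_i(u_1')=p_i(u_2')$ and $p_i(v_1')=p_i(v_2')$. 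If $(a_1,b_1)=(a_2,b_2)$, then $u_1'$ and $u_2'$ lie on a common fixed shortest path at positions $c(a_1,u_1')$ and $c(a_1,u_2')$, which are distinct multiples of $l_i$; since a subpath of a shortest path is shortest, $c(u_1',u_2')\ge l_i$, contradicting $c(u_1',u_2')<l_i$ from Lemma~\ref{dist vertices}. If $(a_1,b_1)\ne(a_2,b_2)$, then $c(u_1',u_2')<l_i$ and $c(v_1',v_2')<l_i$ by Lemma~\ref{dist vertices}, so $c(u_1',v_1')+c(u_2',v_2')=2l_i>c(u_1',u_2')+c(v_1',v_2')$, and Lemma~\ref{subedge non optimal} (applied to these two subedges of different tour edges) contradicts the $2$-optimality of $T$.

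For (ii), once $G_i$ is known to be simple the assignment of a qualifying subedge to its edge in $G_i$ is injective, so $|E(G_i)|$ equals the number of qualifying subedges. For a fixed $(u,v)\in T$ these correspond bijectively to the admissible positions of $u'$ on the fixed shortest path, namely the multiples of $l_i$ in $[0,c(u,v)-l_i]$, of which there are $\lfloor c(u,v)/l_i\rfloor$. Hence, using $\sum_{e\in T}c(e)=nf$,
\[
|E(G_i)|=\sum_{(u,v)\in T}\left\lfloor\frac{c(u,v)}{l_i}\right\rfloor\ \ge\ \sum_{(u,v)\in T}\!\left(\frac{c(u,v)}{l_i}-1\right)=\frac{nf}{l_i}-n=n\!\left(\frac{f}{l_i}-1\right),
\]
and for $i\le\log_9(f^{1-\epsilon})$ we have $l_i=9^i\le f^{1-\epsilon}$, so $f/l_i\ge f^{\epsilon}$ and $|E(G_i)|\ge n(f^{\epsilon}-1)=s$.

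The one point I expect to need care is the parallel-edge case: Lemma~\ref{subedge non optimal} only speaks about subedges of \emph{different} tour edges, so the sub-case of two qualifying subedges of the same tour edge must be handled separately, using that their starting positions are distinct multiples of $l_i$ together with Lemma~\ref{dist vertices}; after that separation everything reduces to routine bookkeeping.
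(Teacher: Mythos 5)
Your proposal is correct and follows essentially the same route as the paper: self-loops and parallel edges between subedges of different tour edges are excluded via Lemma~\ref{dist vertices} and Lemma~\ref{subedge non optimal}, and the edge count is the same $\sum_{e\in T}\lfloor c(e)/l_i\rfloor\geq n(f/l_i-1)\geq s$ computation. The only (harmless) local variation is in the same-tour-edge parallel case, where you invoke the divisibility of the starting positions by $l_i$ to get $c(u_1',u_2')\geq l_i$ directly, while the paper instead orders the four endpoints along the fixed shortest path and derives a contradiction from a chain of triangle inequalities; both rest on Lemma~\ref{dist vertices}.
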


\begin{proof}
Assume that there are parallel edges $(p_i(a'),p_i(b')), (p_i(u'),p_i(v')) \in E(G_i)$, where $p_i(a')=p_i(u')$ and $p_i(b')=p_i(v')$ for some $a',b',u',v' \in V(G)$. Then, by Lemma \ref{dist vertices} $c(a',u')+ c(b',v')< 9^i+9^i= c(a',b')+c(u',v')$. If $(a',b')$ and $(u',v')$ are subedges of different edges, there is an improving 2-move by Lemma \ref{subedge non optimal} contradicting the 2-optimality of $T$. Otherwise, assume that $(a',b')$ and $(u',v')$ are subedges of an edge $e\in T$. By construction, the fixed shortest paths of $(a',b')$ and $(u',v')$ are disjoint except for possibly one of the endpoints $a',b', u',v'$. So we can w.l.o.g.\ assume that $a',b', u',v'$ lie in this order on the fixed shortest path between the endpoints of $e$ according to the orientation of $T$ (with possibly $b'=u'$). Thus,
\begin{align*}
c(a',v') = c(a',u')+c(u',v')\leq c(a',u')+c(b',v') < c(a',b')+c(u',v')\leq c(a',v'),
\end{align*}
where the strict inequality arises from Lemma \ref{subedge non optimal}. Contradiction. 

Assume that there is a self-loop $(p_i(u),p_i(u'))$ with $p_i(u)=p_i(u')$ for some $u,u' \in V(G)$. By Lemma \ref{dist vertices}, we have $c(u,u')< 9^i=c(u,u')$, contradiction. Hence, $G_i$ is simple.

Note that every edge $e\in T$ produces at least $\lfloor \frac{c(e)}{9^i}\rfloor$ edges in $G_i$. Hence, $G_i$ has in total at least $\sum_{e\in T} \lfloor \frac{c(e)}{9^i} \rfloor \geq \sum_{e\in T} (\frac{c(e)}{9^i}-1)=n(\frac{f}{9^i}-1)$ edges. For $i\leq \log_9(f^{1-\delta})$ we have $9^i\leq 9^{\log_9(f^{1-\delta})}=f^{1-\delta}$. Therefore, we have at least $n(\frac{f}{9^i}-1)\geq n(f^\delta-1)=s$ edges.
\end{proof}

Recall that $n_i$ is the number of active vertices and let $m_i$ be the number of edges where both endpoints are active vertices in $G_i$. Our next aim is to get a lower bound on $m_i$ and an upper bound on $n_i$. 

\begin{lemma} \label{m_i bound}
We have $m_i\geq \frac{s}{2^i}$ for $i\leq \log_9(f^{1-\delta})$.
\end{lemma}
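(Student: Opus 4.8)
The statement to prove is $m_i \geq \frac{s}{2^i}$ for all $i \leq \log_9(f^{1-\epsilon})$. The plan is to argue by induction on $i$. For the base case $i=0$, all vertices of $G_0$ are active, so $m_0$ equals the total number of edges of $G_0$, which is at least $s$ by Lemma \ref{gi simple}; this gives $m_0 \geq s = s/2^0$. For the inductive step, assume $m_i \geq \frac{s}{2^i}$ and consider how edges with both endpoints active can be lost when passing from $G_i$ to $G_{i+1}$. First I would recall that edges are regenerated from scratch in each $G_{i+1}$ (via the subedge construction with $l_{i+1} = 9^{i+1}$), so the cleanest route is not to track ``surviving'' edges directly but rather: the edges of $G_{i+1}'$ between active vertices are obtained by taking the active-induced subgraph of $G_i'$, performing the Fox--type edge partition, discarding the edges in $E_0^i$, and contracting; contraction can only identify endpoints (and then we delete parallels, losing at most the parallels created), while the vertices that become inactive are exactly those incident only to edges of $E_0^i$.

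The key quantitative step is to bound how many active-active edges are destroyed. Passing to $G_{i+1}$, an active-active edge of $G_i'$ is ``killed'' only if it lies in $E_0^i$, or if after contracting the sets $V_1^i,\dots,V_l^i$ it becomes a self-loop or a parallel copy. The edges in $E_0^i$ number at most $\epsilon_i n_i^2 = \frac{s}{8n_i^2 2^i}\, n_i^2 = \frac{s}{2^{i+3}}$. For the self-loops and parallels, I would invoke Lemma \ref{gi simple}: since $G_{i+1}$ is simple (for $i+1 \leq \log_9(f^{1-\epsilon})$), no parallel edges or self-loops survive, but the point is that the number of edges of $G_{i+1}$ between active vertices is still at least the number of edges of $G_{i+1}$ overall minus contributions involving inactive vertices — and crucially $G_{i+1}$ has at least $s$ edges total by Lemma \ref{gi simple}. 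Actually the cleaner bookkeeping: the active-active edge count $m_{i+1}$ is at least $m_i$ minus the $E_0^i$-edges minus the edges of $G_i'$ incident to a vertex of $X_i$ that becomes inactive. A vertex enters $X_i$ only if all its incident active-active edges lie in $E_0^i$, so those edges are already counted among the $\leq \epsilon_i n_i^2$ discarded ones. Hence $m_{i+1} \geq m_i - \epsilon_i n_i^2 \geq \frac{s}{2^i} - \frac{s}{2^{i+3}} \geq \frac{s}{2^{i+1}}$, where the last inequality is $\frac{1}{2^i} - \frac{1}{2^{i+3}} = \frac{7}{2^{i+3}} \geq \frac{4}{2^{i+3}} = \frac{1}{2^{i+1}}$, which completes the induction.

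I expect the main obstacle to be the precise justification that contraction of the sets $V_j^i$ does not itself destroy too many active-active edges — i.e., ruling out that many distinct active-active edges collapse to the same edge or to self-loops after contraction. The reassuring fact is Lemma \ref{gi simple}, which asserts $G_{i+1}$ is a \emph{simple} graph with at least $s$ edges as long as $i+1 \leq \log_9(f^{1-\epsilon})$; since the subedges are regenerated afresh at scale $l_{i+1}$, the construction of $E(G_{i+1})$ produces genuinely distinct edges, and the simplicity lemma guarantees none are lost to parallelism at the level of $G_{i+1}$ itself. So the only genuine loss is through $E_0^i$ and through vertices deactivated, both controlled by $\epsilon_i$. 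A careful write-up must make sure the range constraint $i \leq \log_9(f^{1-\epsilon})$ is propagated correctly so that Lemma \ref{gi simple} applies at every step used (in particular at step $i+1$), and that the telescoping of the $\frac{s}{2^{i+3}}$ error terms is handled cleanly.
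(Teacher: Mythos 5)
Your central inductive step rests on the inequality $m_{i+1}\geq m_i-\epsilon_i n_i^2$, justified by viewing the edges of $G_{i+1}$ as images under contraction of the active--active edges of $G_i'$ minus $E_0^i$. That is not how $G_{i+1}$ is built: its edge set is regenerated from scratch at the coarser scale $l_{i+1}=9l_i$, i.e.\ an edge of $G_{i+1}$ comes from a subedge of length $9l_i$ whose offset is divisible by $l_{i+1}$. A tour edge $e$ therefore contributes only about $\lfloor c(e)/l_{i+1}\rfloor\approx\frac{1}{9}\lfloor c(e)/l_i\rfloor$ edges to $G_{i+1}$, so the edge count typically drops by a factor of roughly $9$ between consecutive graphs, and the survival inequality is simply false: already for $i=0$ one has $m_0\approx nf$ while $m_1\leq \lvert E(G_1)\rvert\leq nf/9$, which is far below $m_0-\frac{s}{8}$ once $f$ is large (recall $s=n(f^\epsilon-1)\ll nf$). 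Lemma \ref{gi simple} gives $\lvert E(G_{i+1})\rvert\geq s$ by a fresh count at scale $l_{i+1}$ --- this is exactly where the hypothesis $i\leq\log_9(f^{1-\epsilon})$ enters --- not by comparison with $G_i$, so it cannot be used to certify that active--active edges of $G_i$ ``survive''.

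Because of this rescaling, the loss caused by deactivated vertices cannot be charged once at the step where they become inactive and then forgotten, as your telescoping does: a vertex $x\in X_j$ remains in every later graph, and at step $i>j$ one must bound the number of scale-$l_i$ edges incident to $x$, not the number of scale-$l_j$ edges it had when it was deactivated. The paper supplies exactly this missing ingredient: since $l_j$ divides $l_i$, every scale-$l_i$ subedge with an endpoint mapping to $x$ contains an aligned scale-$l_j$ subedge with an endpoint mapping to $x$, whence $\delta_i(x)\leq\delta_j(x)$; combining this with $\sum_{x\in X_j}\delta_j'(x)\leq 2\lvert E_0^j\rvert\leq\frac{s}{4\cdot 2^j}$ and the simplicity of $G_j$ (at most one parallel edge deleted per vertex pair when forming $G_j'$) gives $\sum_{x\in X_j}\delta_i(x)\leq\frac{s}{2^{j+1}}$, and then directly $m_i\geq\lvert E(G_i)\rvert-\sum_{j<i}\frac{s}{2^{j+1}}\geq s-s\left(1-2^{-i}\right)=\frac{s}{2^i}$, with no induction on $m_i$ at all. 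Your base case and the arithmetic of the telescoping are fine, but without the cross-scale degree monotonicity the inductive step has no valid justification, so the proposal has a genuine gap.
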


\begin{proof}
Let $\delta_j(v)$ for $v\in V(G_j)$ be the sum of the indegree and outdegree of $v$ in $G_j$. Similarly, let $\delta'_j(v)$ for $v\in V(G'_j)$ be the degree of $v$ in $G_j'$. Since by Theorem \ref{graph decomposition} $\lvert  E_0^j \rvert \leq \epsilon_j n_j^2=\frac{s}{8\cdot 2^j}$, we know that $\sum_{x\in X_j}\delta'_j(x) \leq \frac{s}{4 \cdot 2^j}$. By Lemma \ref{gi simple}, we know that $G_j$ is a simple directed graph, hence we delete at most one parallel edge between every pair of vertices while constructing the graph $G_j'$. This gives us $\sum_{x\in X_j}\delta_{j}(x) \leq \frac{s}{2\cdot 2^j}$ for all $j< i$. The vertices in $X_j$ won't be contracted in future iterations because they will have become inactive. Moreover, $9^i$ is divisible by $9^j$ for all $j<i$. Thus, by construction $\delta_i(x)\leq \delta_{j}(x)$ for all $x\in X_j$ with $j<i$ and hence $\sum_{x\in X_j}\delta_i(x)\leq \sum_{x\in X_j} \delta_{j}(x) \leq \frac{s}{2\cdot 2^j}$.
By Lemma \ref{gi simple}, we have $s\leq \lvert E(G_i) \rvert \leq \sum_{j=0}^{i-1}\sum_{x\in X_j}\delta_i(x)+m_i$. Therefore, 
\begin{align*}
m_i\geq s-\sum_{j=0}^{i-1} \sum_{x\in X_j} \delta_i(x)\geq s- \sum_{j=0}^{i-1} \frac{s}{2^{j+1}}= \frac{s}{2^i}.
\end{align*}
\end{proof}

\begin{lemma} \label{n_i bound}
There is a constant $d>0$ such that $n_i\leq \frac{n}{\left(d(f^\delta-1)\right)^{2^{i}-1}}$.
\end{lemma}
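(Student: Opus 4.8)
The plan is to read a one-step recursion for the number $n_i$ of active vertices off the construction and the decomposition theorem (Theorem~\ref{graph decomposition}), and then iterate it. For the one-step bound, note that the active vertices of $G_{i+1}$ are exactly the contracted sets $V_1^i,\dots,V_l^i$, so $n_{i+1}\le l$. The partition at step $i$ is produced by applying Theorem~\ref{graph decomposition} to the subgraph of $G_i'$ induced by its $n_i$ active vertices with parameter $\epsilon_i=\frac{s}{8n_i^2 2^i}$, which yields $l\le 16\epsilon_i^{-1}$. Writing $C:=128$, this gives
\[
n_{i+1}\le 16\epsilon_i^{-1}=\frac{128\,n_i^2\,2^i}{s}=\frac{C\,2^i\,n_i^2}{s},\qquad n_0=n ,
\]
with $s=n(f^\epsilon-1)>0$. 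Only the bound on $l$ is needed here; the control on $\lvert E_0^i\rvert$ enters only in Lemma~\ref{m_i bound}.

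The issue with iterating this directly is the factor $2^i$: the natural-looking hypothesis $n_i\le n^{2^i}/(ds)^{2^i-1}$ is not preserved by one step. I would fix this by rescaling. Put $t_i:=\frac{2C\,2^i\,n_i}{s}$. Then the recursion becomes
\[
t_{i+1}=\frac{2C\,2^{i+1}n_{i+1}}{s}\le\frac{2C\,2^{i+1}}{s}\cdot\frac{C\,2^i n_i^2}{s}=\Bigl(\frac{2C\,2^i n_i}{s}\Bigr)^2=t_i^2 ,
\]
so by induction $t_i\le t_0^{2^i}$ with $t_0=\frac{2Cn}{s}=\frac{2C}{f^\epsilon-1}$.

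It remains to unwind. Since $n_i=\frac{s\,t_i}{2C\,2^i}$,
\[
n_i\le\frac{s}{2C\,2^i}\Bigl(\frac{2C}{f^\epsilon-1}\Bigr)^{2^i}=\frac{n}{2^i}\Bigl(\frac{2C}{f^\epsilon-1}\Bigr)^{2^i-1},
\]
using $s=n(f^\epsilon-1)$. Setting $d:=\frac1{2C}=\frac1{256}$ turns the last factor into $(d(f^\epsilon-1))^{-(2^i-1)}$, so
\[
n_i\le\frac{n}{2^i\,(d(f^\epsilon-1))^{2^i-1}}\le\frac{n}{(d(f^\epsilon-1))^{2^i-1}},
\]
which is the assertion (in fact with a spare factor $2^{-i}$); the base case is $n_0=n\le n$.

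The argument is short and the only genuinely non-mechanical point is spotting the rescaling $t_i$. Alternatively one can run a direct induction after strengthening the hypothesis to $n_i\le 2^{2^i-i-1}C^{2^i-1}n^{2^i}/s^{2^i-1}$, which the recursion preserves exactly, and then use $2^{2^i-i-1}\le 2^{2^i-1}$ to get $n_i\le(2C)^{2^i-1}n^{2^i}/s^{2^i-1}=n/(d(f^\epsilon-1))^{2^i-1}$ with the same $d$. I do not expect a real obstacle.
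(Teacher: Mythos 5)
Your proof is correct and follows essentially the same route as the paper: both extract the one-step bound $n_{i+1}\leq 16\epsilon_i^{-1}=\frac{2^{i+7}n_i^2}{s}$ from Theorem \ref{graph decomposition} and then iterate this quadratic recursion, arriving at the same constant $d=\frac{1}{256}$. Your rescaling $t_i:=\frac{2C\,2^i n_i}{s}$ is just a tidier packaging of the paper's direct induction with the explicit hypothesis $n_i\leq \frac{n^{2^i}2^{2^{i+3}-i-8}}{s^{2^i-1}}$ (which is exactly the strengthened hypothesis you mention as an alternative).
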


\begin{proof}
By Theorem \ref{graph decomposition}, we can bound the number of active vertices by $n_{i+1}\leq 16\frac{1}{\epsilon_i}=\frac{16\cdot 8 \cdot 2^i \cdot n_i^2}{s}=\frac{2^{i+7} n_i^2}{s}$. Now, we show by induction that $n_i\leq \frac{n^{2^i}2^{2^{i+3}-i-8}}{s^{2^{i}-1}}$. For $i=0$ we have $n_0=n=\frac{n^{2^0}2^{2^{3}-8}}{s^{2^{0}-1}}$. The inductive step is
\begin{align*}
n_{i+1} \leq \frac{2^{i+7}n_i^2}{s}\leq \frac{2^{i+7}}{s} \cdot \frac{n^{2^{i+1}}2^{2^{i+4}-2i-16}}{s^{2^{i+1}-2}}=\frac{n^{2^{i+1}}2^{2^{i+4}-(i+1)-8}}{s^{2^{i+1}-1}}.
\end{align*}
Hence,
\begin{align*}
n_{i} \leq \frac{n^{2^{i}}2^{2^{i+3}-i-8}}{s^{2^{i}-1}} = \frac{n^{2^{i}}2^{2^{i+3}-i-8}}{(n(f^\delta-1))^{2^{i}-1}} \leq \frac{n2^{2^{i+3}-8}}{(f^\delta-1)^{2^{i}-1}}= \frac{n}{(\frac{1}{2^8}(f^\delta-1))^{2^{i}-1}}.
\end{align*}
\end{proof}

\begin{theorem}
The approximation ratio of the 2-Opt algorithm for \textsc{Graph TSP} is $O\left(\left(\frac{\log(n)}{\log\log(n)}\right)^{\log_2(9)+\epsilon}\right)$ for all $\epsilon>0$ where $n$ is the number of vertices.
\end{theorem}

\begin{proof}
By the definition of $f$, we have $\sum_{e\in T}c(e)=nf$. The cost of the optimal tour is at least $n$ since it consists of $n$ edges. Hence, the approximation ratio is at most $f$ and it is enough to get an upper bound on $f$.

Consider the graph $G_{\lfloor\log_9(f^{1-\delta}) \rfloor}$. On the one hand, by Lemma \ref{m_i bound} $m_{\lfloor\log_9(f^{1-\delta}) \rfloor}\geq \frac{s}{2^{\lfloor\log_9(f^{1-\delta}) \rfloor}}=\frac{n(f^\delta-1)}{2^{\lfloor\log_9(f^{1-\delta}) \rfloor}}>0$ and hence $n_{\lfloor\log_9(f^{1-\delta}) \rfloor}\geq 1$. On the other hand, we have by Lemma \ref{n_i bound} $n_{\lfloor \log_9(f^{1-\delta}) \rfloor}\leq \frac{n}{(d(f^\delta-1))^{2^{\lfloor \log_9(f^{1-\delta}) \rfloor}-1}}$ for some constant $d$.
Thus, for all $f\geq 2^{\frac{1}{\delta}}$ there exists a constant $d_1$ such that
\begin{align*}
n&\geq (d(f^\delta-1))^{2^{\lfloor \log_9(f^{1-\delta}) \rfloor}-1}\geq (d_1f^\delta)^{2^{(1-\delta)\log_9(f)-1}-1}= (d_1f^\delta)^{2^{\frac{(1-\delta)\log_2(f)}{\log_2(9)} - 1}-1}\\
&=(d_1f^\delta)^{f^{\frac{1-\delta}{\log_2(9)} -\frac{1}{\log_2(f)}}-1}.
\end{align*}
For a given $\epsilon>0$ we can choose constants $\delta, d_2$ such that for all $f\geq d_2$ we have
\begin{align*}
\delta \left( f^{\frac{1-\delta}{\log_2(9)} -\frac{1}{\log_2(f)}}-1 \right) \geq \frac{\delta}{2} f^{\frac{1}{\log_2(9)+\epsilon}}. 
\end{align*}
By Lemma \ref{ff asymp}, we conclude $f= O\left(\left(\frac{\log(n)}{\log\log(n)}\right)^{\log_2(9)+\epsilon}\right)$. 
\end{proof}

\section{Lower Bound on the Approximation Ratio of the $k$-Improv and $k$-Opt Algorithm} \label{sec 1,2 TSP lower}
In this section we show that the approximation ratio of the $k$-improv algorithm (see Subsection~\ref{subsec k-improv}) is at least $\frac{11}{10}$ for arbitrary fixed $k$. For given fixed $k\geq 2$ and $\epsilon>0$, we construct a $k$-improv-optimal instance $I_{k,\epsilon}$ together with a 2-matching $\widetilde{T}$ with approximation ratio at least $\frac{11}{10}-\epsilon$. Moreover, we show that by connecting a $2k$-improv-optimal 2-matching to a tour, we get a $k$-optimal tour. Thus, this lower bound on the approximation ratio also carries over to the $k$-Opt algorithm. 

\subsection{Construction of the Instance $I_{k,\epsilon}$ and the 2-Matching $\widetilde{T}$} \label{subsec construction instance 2 matching}
We first construct some auxiliary graphs before the construction of the instance. Let $S$ be a graph with 10 vertices $w_0,\dots, w_9$ and the edges $\{w_{0}, w_{1}\}$, $\{w_{0},w_{4}\}$, $\{w_{2},w_{3}\}$, $\{w_{3},w_{4}\}$, $\{w_{5},w_{9}\}$, $\{w_{5},w_{6}\}$, $\{w_{6},w_{7}\}$ and $\{w_{8},w_{9}\}$ (Figure \ref{localGraphS}).

\begin{figure}[ht]
\centering
\begin{tikzpicture}[line cap=round,line join=round,>=triangle 45,x=1cm,y=1cm]
\draw [shift={(-8,4)},line width=2pt]  plot[domain=0:0,variable=\t]({1*3*cos(\t r)+0*3*sin(\t r)},{0*3*cos(\t r)+1*3*sin(\t r)});
\draw [line width=2pt] (0,4)-- (1,4);
\draw [shift={(-6,1.9487223436021441)},line width=2pt]  plot[domain=0.7980546443205417:2.3435380092692517,variable=\t]({1*2.864915360641058*cos(\t r)+0*2.864915360641058*sin(\t r)},{0*2.864915360641058*cos(\t r)+1*2.864915360641058*sin(\t r)});
\draw [shift={(-1,1.9487223436021441)},line width=2pt]  plot[domain=0.7980546443205417:2.3435380092692517,variable=\t]({1*2.864915360641058*cos(\t r)+0*2.864915360641058*sin(\t r)},{0*2.864915360641058*cos(\t r)+1*2.864915360641058*sin(\t r)});
\draw [line width=2pt] (-3,4)-- (-2,4);
\draw [line width=2pt] (-2,4)-- (-1,4);
\draw [line width=2pt] (-8,4)-- (-7,4);
\draw [line width=2pt] (-6,4)-- (-4,4);
\begin{scriptsize}
\draw [fill=black] (-8,4) circle (2.5pt);
\draw[color=black] (-7.8594230151522675,3.7785780973902116) node {$w_0$};
\draw [fill=black] (0,4) circle (2.5pt);
\draw[color=black] (0.1609826461566822,3.7785780973902116) node {$w_8$};
\draw [fill=black] (-7,4) circle (2.5pt);
\draw[color=black] (-6.814239835694997,3.7785780973902116) node {$w_1$};
\draw [fill=black] (-6,4) circle (2.5pt);
\draw[color=black] (-5.835068225466607,3.7785780973902116) node {$w_2$};
\draw [fill=black] (-5,4) circle (2.5pt);
\draw[color=black] (-4.8448946870334035,3.7785780973902116) node {$w_3$};
\draw [fill=black] (-4,4) circle (2.5pt);
\draw[color=black] (-3.8547211486001993,3.7785780973902116) node {$w_4$};
\draw [fill=black] (-3,4) circle (2.5pt);
\draw[color=black] (-2.8315418255525553,3.7785780973902116) node {$w_5$};
\draw [fill=black] (-2,4) circle (2.5pt);
\draw[color=black] (-1.8303663589145385,3.7785780973902116) node {$w_6$};
\draw [fill=black] (-1,4) circle (2.5pt);
\draw[color=black] (-0.8401928204813349,3.7785780973902116) node {$w_7$};
\draw [fill=black] (1,4) circle (2pt);
\draw[color=black] (1.1401542563850724,3.7785780973902116) node {$w_9$};
\end{scriptsize}
\end{tikzpicture}
  \caption{The graph $S$ consists of 10 vertices and the drawn edges.}
  \label{localGraphS}
\end{figure}

Fix $\epsilon > 0$ and an integer $k$. Apply Theorem \ref{regular high girth} for $\delta=4$, $g:=\max\{2k+1,\left\lceil \frac{1}{\frac{11}{11-10\epsilon}-1} \right \rceil\}$ and an integer $m\geq \frac{3^{g-1}-1}{2}$. The theorem guarantees the existence of a 4-regular graph $G$ with $s\coloneq 2m$ vertices and girth at least $g$. 

Next, we construct a graph $G_S$ with the vertex set $V(G_S)=\{v_0,\dots,v_{10s-1}\}$. For simplicity, we consider in the following all indices modulo $10s$. To construct $G_S$ we replace every vertex of $G$ by a copy of $S$. In each copy of $S$, the vertices $w_0,\dots, w_9$ are mapped to $v_{10h}, \dots, v_{10h+9}$ in this order for some unique $h\in \Z$, which we refer to as a \emph{block}. For every edge $\{u,v\}\in E(G)$, we connect two vertices with degree 1 in each of the corresponding blocks (Figure \ref{12g1construction} and Figure \ref{12gSconstruction}). The 4-regularity of $G$ ensures that this procedure can be carried out so that every vertex in $G_S$ has degree 2.

\begin{figure}[!htb]
    \centering
 \definecolor{qqffqq}{rgb}{0,1,0}
\definecolor{ffqqqq}{rgb}{1,0,0}
\definecolor{ffxfqq}{rgb}{1,0.4980392156862745,0}
\definecolor{qqqqff}{rgb}{0,0,1}
\begin{tikzpicture}[line cap=round,line join=round,>=triangle 45,x=1cm,y=1cm]
\draw [line width=2pt,color=qqqqff] (2,7)-- (8,7);
\draw [line width=2pt,color=ffxfqq] (2,7)-- (2,5);
\draw [line width=2pt,color=cyan] (8,7)-- (8,5);
\draw [line width=2pt,color=ffqqqq] (0.845299461620749,5)-- (2,7);
\draw [line width=2pt,color=qqffqq] (2,7)-- (3.1547005383792524,5);
\draw [line width=2pt,color=brown] (6.8452994616207485,5)-- (8,7);
\draw [line width=2pt,color=yellow] (8,7)-- (9.154700538379249,5);
\begin{scriptsize}
\draw [fill=black] (2,7) circle (2.5pt);
\draw [fill=black] (8,7) circle (2.5pt);
\end{scriptsize}
\end{tikzpicture}
  \caption{A part of a graph $G$ from which we will construct a part of the graph $G_S$. The colored edges correspond to those with the same color in Figure \ref{12gSconstruction}.}
  \label{12g1construction}
\end{figure}
\begin{figure}[!htb]
\definecolor{qqffqq}{rgb}{0,1,0}
\definecolor{ffxfqq}{rgb}{1,0.4980392156862745,0}
\definecolor{ffqqqq}{rgb}{1,0,0}
\definecolor{qqqqff}{rgb}{0,0,1}
\begin{tikzpicture}[line cap=round,line join=round,>=triangle 45,x=0.7cm,y=0.7cm]
\draw [shift={(-8,4)},line width=2pt]  plot[domain=0:0,variable=\t]({1*3*cos(\t r)+0*3*sin(\t r)},{0*3*cos(\t r)+1*3*sin(\t r)});
\draw [line width=2pt] (0,4)-- (1,4);
\draw [shift={(-6,1.9487223436021441)},line width=2pt]  plot[domain=0.7980546443205417:2.3435380092692517,variable=\t]({1*2.864915360641058*cos(\t r)+0*2.864915360641058*sin(\t r)},{0*2.864915360641058*cos(\t r)+1*2.864915360641058*sin(\t r)});
\draw [shift={(-1,1.9487223436021441)},line width=2pt]  plot[domain=0.7980546443205417:2.3435380092692517,variable=\t]({1*2.864915360641058*cos(\t r)+0*2.864915360641058*sin(\t r)},{0*2.864915360641058*cos(\t r)+1*2.864915360641058*sin(\t r)});
\draw [line width=2pt] (-3,4)-- (-2,4);
\draw [line width=2pt] (-2,4)-- (-1,4);
\draw [line width=2pt] (-8,4)-- (-7,4);
\draw [line width=2pt] (-6,4)-- (-4,4);
\draw [shift={(5.764114133233071,1.9487223436021441)},line width=2pt]  plot[domain=0.7980546443205417:2.343538009269252,variable=\t]({1*2.864915360641058*cos(\t r)+0*2.864915360641058*sin(\t r)},{0*2.864915360641058*cos(\t r)+1*2.864915360641058*sin(\t r)});
\draw [shift={(10.764114133233072,1.9487223436021441)},line width=2pt]  plot[domain=0.7980546443205471:2.3435380092692517,variable=\t]({1*2.8649153606410427*cos(\t r)+0*2.8649153606410427*sin(\t r)},{0*2.8649153606410427*cos(\t r)+1*2.8649153606410427*sin(\t r)});
\draw [line width=2pt] (3.7641141332330683,4)-- (4.764114133233069,4);
\draw [line width=2pt] (5.764114133233071,4)-- (7.764114133233071,4);
\draw [line width=2pt] (8.764114133233072,4)-- (10.764114133233072,4);
\draw [line width=2pt] (11.764114133233065,4)-- (12.76411413323305,4);
\draw [shift={(5.882057066616532,16.137902002887127)},line width=2pt,color=qqqqff]  plot[domain=4.261135112770937:5.163642847998442,variable=\t]({1*13.488041383634052*cos(\t r)+0*13.488041383634052*sin(\t r)},{0*13.488041383634052*cos(\t r)+1*13.488041383634052*sin(\t r)});
\draw [line width=2pt,color=cyan] (5.764114133233071,4)-- (5.764114133233071,1.9487223436021441);
\draw [line width=2pt,color=ffqqqq] (-7,4)-- (-8.184305707103968,1.9487223436021441);
\draw [line width=2pt,color=qqffqq] (-1,4)-- (0.184305707103966,1.9487223436021441);
\draw [line width=2pt,color=brown] (4.764114133233069,4)-- (3.579808426129101,1.9487223436021441);
\draw [line width=2pt,color=yellow] (10.764114133233072,4)-- (11.948419840337039,1.9487223436021441);
\draw [line width=2pt,color=ffxfqq] (-6,4)-- (-6,1.9487223436021441);
\begin{scriptsize}
\draw [fill=black] (-8,4) circle (2.5pt);
\draw [fill=black] (0,4) circle (2.5pt);
\draw [fill=black] (-7,4) circle (2.5pt);
\draw [fill=black] (-6,4) circle (2.5pt);
\draw [fill=black] (-5,4) circle (2.5pt);
\draw [fill=black] (-4,4) circle (2.5pt);
\draw [fill=black] (-3,4) circle (2.5pt);
\draw [fill=black] (-2,4) circle (2.5pt);
\draw [fill=black] (-1,4) circle (2.5pt);
\draw [fill=black] (1,4) circle (2.5pt);
\draw [fill=black] (3.7641141332330683,4) circle (2.5pt);
\draw [fill=black] (4.764114133233069,4) circle (2.5pt);
\draw [fill=black] (5.764114133233071,4) circle (2.5pt);
\draw [fill=black] (6.764114133233071,4) circle (2.5pt);
\draw [fill=black] (7.764114133233071,4) circle (2.5pt);
\draw [fill=black] (8.764114133233072,4) circle (2.5pt);
\draw [fill=black] (9.764114133233072,4) circle (2.5pt);
\draw [fill=black] (10.764114133233072,4) circle (2.5pt);
\draw [fill=black] (11.764114133233065,4) circle (2.5pt);
\draw [fill=black] (12.76411413323305,4) circle (2.5pt);
\end{scriptsize}
\end{tikzpicture}
  \caption{A part of the graph $G_S$ we constructed from the part of $G$ shown in Figure~\ref{12g1construction}. For its construction replace every vertex of $G$ with a copy of $S$. For every edge in $G$ connect two vertices with degree 1 in the corresponding} blocks.
  \label{12gSconstruction}
\end{figure}

The vertex set of the instance $I_{k,\epsilon}$ is $V(G_S)$. The edge set of $I_{k,\epsilon}$ with cost 1 is the union of $E(G_S)$ with the edges $\{ \{v_{10h+1},v_{10h+2} \},\{v_{10h+4},v_{10h+5}\}, \{v_{10h+7},v_{10h+8}\} \mid h\in \Z \}$. All other edges have cost 2.

The tour $T$ consists of the edges $\{\{v_i,v_{i+1}\} \mid i\in \Z \}$. It has cost $11s$ since all of its edges except $\{\{v_{10h+9},v_{10h+10}\} \mid h\in\Z \}$ have cost 1. Let $\widetilde{T}$ be the 2-matching we get by removing all edges with cost 2 in $T$.

\subsection{Proof of the $k$-improv-optimality of $\widetilde{T}$}
In this subsection, we prove the $k$-improv-optimality of $\widetilde{T}$. Assume that there is an improving $k$-improv-move for $\widetilde{T}$ and $\widetilde{T'}$ is the result after performing it. The $k$-improv-move can be decomposed into edge-disjoint alternating cycles and paths. We choose such a decomposition that contains a minimal number of alternating paths, i.e.\ we cannot merge two alternating paths to a longer alternating path.

\begin{lemma} \label{no alternating cycles}
There are no alternating cycles. Moreover, any augmenting path does not visit any block twice.
\end{lemma}
\begin{proof}
Suppose there exists an alternating cycle. If this cycle traverses at least two blocks, it would imply the existence of a cycle in $G$ with at most $2k<g$ edges, contradicting the girth of $G$. Likewise, any augmenting path does not revisit any block. Now, assume that the cycle only visits one block. Since we chose a decomposition of alternating paths and cycles that is edge-disjoint, the alternating cycle cannot add an edge that is already in $\widetilde{T}$. As the edges $\{w_0,w_4\}$ and $\{w_5,w_9\}$ are the only edges with cost 1 in the block that are not contained in $\widetilde{T}$, they have to be the non-tour edges of the alternating cycle. But then, the edges $\{w_0,w_9\}$ and $\{w_4,w_5\}$ or $\{w_0,w_5\}$ and $\{w_4,w_9\}$ have to be tour edges of the alternating cycle. Both cases are impossible because at least one edge has a cost of 2 and cannot be a tour edge of the alternating cycle.
\end{proof}

Whenever an alternating path can be decomposed into $p_1, p=(w_{0},w_{4},w_{5},w_{9}),p_2$, where $p_1$ and $p_2$ (possibly empty) are subpaths and the vertices of $p$ belong to a single block, we split this alternating path into three alternating paths: $p_1$, $p$ and $p_2$. Since augmenting $p$ creates two cycles within that block by adding the edges $\{w_0,w_4\}$ and $\{w_5,w_9\}$ and removing $\{w_4, w_5\}$, we refer to $p$ as \emph{cycle-creating}.

Starting with $\widetilde{T}$ we augment all cycle-creating alternating paths and let the result be $\widetilde{T_1}$. They create $2q$ cycles in $q$ blocks where $q$ is the number of these paths. We will call these blocks \emph{cycle-containing}. Augmenting $\widetilde{T_1}$ by the remaining alternating paths results in $\widetilde{T'}$. Since $\widetilde{T'}$ is obtained by applying an improving $k$-improv-move from $\widetilde{T}$, the remaining alternating paths must remove an edge from enough of the $2q$ cycles to reduce the number of connected components. Our strategy is to demonstrate that this is impossible.

We distinguish three possible types for the remaining alternating paths:
\begin{enumerate}
\item Alternating paths starting and ending with non-tour edges.
\item Alternating paths starting and ending with one tour edge and one non-tour edge.
\item Alternating paths starting and ending with tour edges.
\end{enumerate}

\begin{lemma}
There are no alternating paths starting and ending with non-tour edges.
\end{lemma}

\begin{proof}
To maintain the property that every vertex in $\widetilde{T'}$ has degree at most 2, these alternating paths have to start and end at a vertex with degree at most 1 in $\widetilde{T_1}$, i.e.\ in $w_0$ or $w_9$ in a block. Then, such an alternating path has to contain the subpath $(w_0,w_4,w_5,w_9)$ from one block. This is impossible because such a subpath would be a cycle-creating alternating path that has already been augmented.
\end{proof}

\begin{lemma}
Alternating paths that start and end with one tour edge and one non-tour edge do not remove any cycle in $\widetilde{T_1}$. 
\end{lemma}

\begin{proof}
Note that there are only four possibilities for such alternating paths, namely such visiting the following vertices of a single block: $(w_0,w_4,w_5)$; $(w_0,w_4,w_3)$; $(w_9,w_5,w_4)$ and $(w_9,w_5,w_6)$. Each of the possibilities adds either $\{w_0,w_4\}$ or $\{w_5,w_9\}$. Thus, the block visited by such a path is not cycle-containing since the decomposition of alternating paths is edge-disjoint. Hence, these alternating paths do not remove any cycle in $\widetilde{T_1}$. 
\end{proof}

Last, alternating paths starting and ending with tour edges decrease the number of edges in the 2-matching by one. As the alternating paths of the previous type do not remove cycles, either these paths remove a cycle created by the cycle-creating alternating paths or the cycle remains in $\widetilde{T'}$. The alternating paths of this type are called \emph{cycle-removing}.

We construct an auxiliary multigraph $G_P$ with $q$ vertices. Each of its vertices corresponds to one cycle-containing block in $\widetilde{T_1}$. For every cycle-removing alternating path $p$ we construct a path $p'$ in $G_P$ we call the \emph{block-path} of $p$ as follows: The vertices of $p'$ are the vertices in $G_P$ corresponding to the cycle-containing blocks visited by $p$. The edges of $p'$ connect the vertices in the order in which the corresponding blocks are visited by $p$ (Figure \ref{Spath Construction}). Note that $p'$ may consist of only one vertex or may be the empty set. Since by Lemma~\ref{no alternating cycles} every alternating path visits any block at most once, $p'$ is indeed a path. The edge multiset $E(G_P)$ consists of the disjoint union of all block-paths.

\begin{figure}[!htb]
    \centering
    \begin{minipage}[b]{.45\textwidth}
        \centering
 \definecolor{ffqqqq}{rgb}{1,0,0}
\begin{tikzpicture}[line cap=round,line join=round,>=triangle 45,x=1cm,y=1cm]
\draw [line width=2pt] (5,5)-- (1,1);
\draw [line width=2pt] (1,1)-- (3,0);
\draw [line width=2pt] (3,0)-- (2,3);
\draw [line width=2pt] (2,3)-- (4,2);
\draw [line width=2pt] (4,2)-- (6,3);
\begin{scriptsize}
\draw [fill=ffqqqq] (1,1) circle (2.5pt);
\draw [fill=black] (2,3) circle (2.5pt);
\draw [fill=black] (4,2) circle (2.5pt);
\draw [fill=ffqqqq] (3,0) circle (2.5pt);
\draw [fill=black] (5,5) circle (2.5pt);
\draw [fill=ffqqqq] (6,3) circle (2.5pt);
\draw [fill=ffqqqq] (5,4) circle (2.5pt);
\end{scriptsize}
\end{tikzpicture}
    \end{minipage}\qquad%
    \begin{minipage}[b]{0.45\textwidth}
        \centering
 \definecolor{ffqqqq}{rgb}{1,0,0}
\begin{tikzpicture}[line cap=round,line join=round,>=triangle 45,x=1cm,y=1cm]
\draw [line width=2pt] (1,1)-- (3,0);
\draw [line width=2pt] (3,0)-- (6,3);
\begin{scriptsize}
\draw [fill=ffqqqq] (1,1) circle (2.5pt);
\draw [fill=ffqqqq] (3,0) circle (2.5pt);
\draw [fill=ffqqqq] (6,3) circle (2.5pt);
\draw [fill=ffqqqq] (5,4) circle (2.5pt);
\end{scriptsize}
\end{tikzpicture}
    \end{minipage}
    \caption{The construction of the blocks-paths. Let $p$ be a cycle-removing alternating path. Left: The edges of $p$ in $G$ after contracting the blocks and removing self-loops. The red vertices are cycle-containing. Right: the corresponding block-path $p'$ in $G_P$ of $p$. It visits the red vertices in the same order as $p$ in $G$.}
    \label{Spath Construction}
\end{figure}

\begin{lemma} \label{GP acyclic}
The graph $G_P$ is acyclic. In particular, it has no parallel edges.
\end{lemma}

\begin{proof}
If $G_P$ contains a cycle, the cycle corresponds to a closed walk in $G$ by considering for every edge in the cycle the corresponding block-path $p'$ it belongs to and the corresponding subpath in $p$. The length of this closed walk is bounded by the number of edges in the cycle-removing alternating paths, which is at most $2k$. As $g>2k$ and $G$ has girth at least $g$ we get a contradiction. 
\end{proof}

\begin{lemma} \label{cycle-removing alternating path remove cycles}
A cycle-removing alternating path whose block-path has $h$ edges can remove at most $h+1$ cycles in $\widetilde{T_1}$.
\end{lemma}

\begin{proof}
It is enough to show that every visit to a vertex along the block-path can eliminate at most one cycle. Assume that a cycle-removing alternating path eliminates two cycles within a block during a single visit. Then, this alternating path has to visit vertices from both sets $\{w_0,w_1,w_2,w_3,w_4\}$ and $\{w_5,w_6,w_7,w_8,w_9\}$ in the block without leaving the block. Thus, it has to remove $\{w_4,w_5\}$, the only edge with cost 1 connecting the two sets. In this case this block is not cycle-containing by the edge-disjoint property of the decomposition as cycle creating alternating paths also have to remove the edge $\{w_4,w_5\}$, contradiction. 
\end{proof}

\begin{lemma} \label{12 k improv optimality T}
The 2-matching $\widetilde{T}$ for the instance $I_{k,\epsilon}$ is $k$-improv-optimal.
\end{lemma}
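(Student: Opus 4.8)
## Proof proposal

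The plan is to show directly that no improving $k$-improv-move exists for $\widetilde{T}$. Recall that $\widetilde{T}$ is obtained from $T$ by deleting exactly the edges $\{v_{10h+9},v_{10h+10}\}$ (the only cost-$2$ edges of $T$), so $\widetilde{T}$ is a disjoint union of $s$ paths, one spanning each copy of $S$ on the vertices $v_{10h},\dots,v_{10h+9}$. In particular $\widetilde{T}$ already has no singletons and no cycles, so an improving move would have to strictly decrease the number of connected components while keeping all edges of cost $1$. The key quantitative fact to establish first is that any improving $k$-improv-move, since it adds and deletes at most $k$ edges in total, can merge at most $O(k)$ of these path-components; more precisely, if it adds $a$ edges and removes $r$ edges with $a+r\le k$, then the number of components drops by at most $a$, and since a cost-$1$ edge joining two different $S$-copies corresponds to an edge of $G_1$, the added edges together with the affected components form a small subgraph of the "blown-up" structure.

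The core of the argument is a girth/locality obstruction, exactly parallel to Lemma \ref{k optimal graph}. I would argue that an improving move merging components forces the existence, inside $G_S$ (with cost-$1$ edges), of a short cycle or short closed substructure that, when the $S$-copies are contracted back to vertices of $G_1$, yields a closed walk of length $O(k)$ in $G_1$ using each edge at most a bounded number of times. Concretely: the set of at most $k$ added cost-$1$ edges that bridge distinct $S$-copies projects to at most $k$ edges of $G_1$; if the move genuinely reduces the component count while preserving the $2$-matching degree constraints and using only cost-$1$ edges, these projected edges must contain a cycle in $G_1$ (otherwise one can track a "leftover" bridge edge whose removal would disconnect, contradicting that the result is a valid $2$-matching with fewer components — this is the same parity/cut argument used in Lemma \ref{increase vertices graph}). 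Such a cycle has length at most $k < g$, contradicting that $G_1$ has girth at least $g$. Before invoking this, I must also rule out improving moves that stay within a single copy of $S$ or within a bounded cluster of copies: this is a finite case check, since within one $S$-copy the induced cost-$1$ graph is just $S$ together with the added edge $\{v_{10h+4},v_{10h+5}\}$, a fixed small graph, and one verifies by inspection that no cost-$1$ rearrangement of the two path-halves decreases the component count.

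The main obstacle I anticipate is the careful bookkeeping in the "cut" step: showing that an improving move which merges components must route through a genuine cycle of bridge edges in $G_1$, rather than exploiting the internal flexibility of the $S$-gadgets or the special diagonal edges $\{v_{10h+4},v_{10h+5}\}$. One must confirm that every cost-$1$ edge used by the move either lies inside some copy of $S$, is one of the designated diagonal edges, or is a bridge edge projecting to $G_1$; that the diagonals and internal $S$-edges cannot by themselves merge distinct copies; and that therefore any component merge is "witnessed" by bridge edges whose projections close up a cycle in $G_1$ of length at most $k$. Once this is in place, the girth bound $g \ge 2k+1$ finishes the contradiction, and hence $\widetilde{T}$ is $k$-improv-optimal. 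The remaining no-singleton / no-new-cycle alternatives in the definition of improving $k$-improv-move are immediate since $\widetilde{T}$ already has no singletons and consists only of paths, so an improving move is forced into the "fewer components" case handled above.
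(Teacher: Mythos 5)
There is a genuine gap, in two places. First, you dismiss the second branch of the definition of an improving $k$-improv-move (``same number of connected components but more cycles'') as immediate because $\widetilde{T}$ consists only of paths; this is backwards. Since $\widetilde{T}$ has zero cycles, a move that keeps the component count and creates a cycle \emph{is} improving, and such moves are exactly the dangerous ones here: using the cost-$1$ chords $\{w_{0},w_{4}\}$ and $\{w_{5},w_{9}\}$ one can split a $5$-cycle off a copy of $S$ while simultaneously merging two copies through a connecting edge, so that the component count is unchanged and the cycle count goes up; whether the edge budget and the degree constraints ever let this (or a longer combination of such local operations) succeed is precisely what has to be ruled out. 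The paper does this by decomposing $\widetilde{T}\triangle\widetilde{T'}$ into edge-disjoint alternating paths and cycles, classifying them into cycle-creating paths (necessarily of the form $(w_0,w_4,w_5,w_9)$ inside one copy) and cycle-removing paths, and then bounding, via an auxiliary graph $G_P$ on the cycle-containing copies which is acyclic because $G_1$ has girth greater than $2k$, how many of the created cycles can be reabsorbed; the contradiction is a global edge/cycle/component count, not a local one. Your proposal contains none of this bookkeeping, and you yourself flag it as the unresolved ``main obstacle.''

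Second, the step you do offer in its place is unsound: you claim that a component-reducing move forces the added inter-copy (``bridge'') edges to project to a cycle of length at most $k$ in $G_1$, justified by ``the same parity/cut argument used in Lemma \ref{increase vertices graph}.'' That argument relies on the fact that a \emph{tour} must remain connected and must cross every cut an even number of times; a $2$-matching has no connectivity requirement at all, so a ``leftover bridge edge whose removal would disconnect'' contradicts nothing. (This asymmetry is exactly why the paper derives the $k$-optimality of the tour $T$ from the $2k$-improv-optimality of $\widetilde{T}$ in Lemma \ref{12 tour k optimal}, rather than arguing about improv-moves via tour connectivity.) The girth of $G_1$ does enter the correct proof, but only to exclude alternating cycles in $\widetilde{T}\triangle\widetilde{T'}$ and to prove that $G_P$ is acyclic; it never yields a cycle formed by the added edges themselves. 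Also, your reduction of the remaining cases to a ``finite check within one copy or a bounded cluster'' does not suffice, because cycles created locally interact with the global component count. So the proposal, as it stands, does not prove the lemma.
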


\begin{proof}
Recall that $\widetilde{T'}$ results from performing an improving $k$-improv move from $\widetilde{T}$. Since $\widetilde{T}$ does not contain singletons, either $\widetilde{T'}$ contains fewer connected components than $\widetilde{T}$ or the same number but more cycles. In both cases $\widetilde{T'}$ contains more edges than $\widetilde{T}$. Note that cycle-creating and cycle-removing alternating paths increases and decreases the number of edges by one, respectively. All other alternating paths do not change the number of edges. Thus, there are at most $r\leq q-1$ cycle-removing alternating paths. Note that $G_P$ can have at most $q-1$ edges since it is a simple acyclic graph with $q$ vertices by Lemma~\ref{GP acyclic}. Thus, by definition the union of the block-paths contains at most $q-1$ edges. By Lemma~\ref{cycle-removing alternating path remove cycles}, we conclude that at most $r+ (q-1)$ cycles can be removed and at least $2q-r-(q-1)=q+1-r$ cycles are remaining. 
Therefore, $\widetilde{T'}$ must contain at least $q+1-r$ more edges than $\widetilde{T}$ to maintain at least the same number of connected components as $\widetilde{T'}$. Since $\widetilde{T'}$ contains $q-r$ more edges than $\widetilde{T}$, we have $q-r\geq q+1-r$, which is a contradiction.
\end{proof}

\subsection{Analyzing the Approximation Ratio}
In this subsection, we examine the approximation ratio of the $k$-improv algorithm by utilizing the tour $T$ and the $k$-improv-optimal 2-matching $\widetilde{T}$, as constructed in Subsection~\ref{subsec construction instance 2 matching}. Additionally, we extend our analysis to include the $k$-Opt algorithm.

\begin{lemma} \label{k-improv does not change T}
If the $k$-improv algorithm starts with the tour $T$, it outputs a tour with the same cost as $T$.
\end{lemma}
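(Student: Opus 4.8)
The plan is to execute the $k$-improv algorithm (Algorithm~\ref{k improv algorithm}) on the input tour $T$ and to keep track of the total cost line by line. The algorithm first deletes all cost-$2$ edges of $T$, producing the $2$-matching $\widetilde{T}$. Since $T$ is a Hamiltonian cycle and at least one of its edges is deleted (its average edge cost exceeds $1$), $\widetilde{T}$ is a disjoint union of paths; in particular it contains no cycle, and $c(\widetilde{T}) = c(T) - 2m$, where $m$ is the number of cost-$2$ edges of $T$.

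Next I would invoke Lemma~\ref{12 k improv optimality T}: $\widetilde{T}$ is $k$-improv-optimal, so no improving $k$-improv-move exists and the while loop of the algorithm leaves $\widetilde{T}$ unchanged, and since $\widetilde{T}$ has no cycle, the step ``remove an arbitrary edge from each cycle'' also changes nothing. Hence when the algorithm reaches the reconnection step, $\widetilde{T}$ still consists of exactly the cost-$1$ edges of $T$, partitioned into $m$ paths, and the algorithm turns them into a tour $T'$ by adding exactly $m$ edges, each joining an endpoint of one path to an endpoint of another. Therefore $c(T') = c(\widetilde{T}) + \sum_e c(e) = c(T) - 2m + \sum_e c(e)$, the sum running over the $m$ added edges, so the claim $c(T') = c(T)$ is equivalent to the statement that every added edge has cost $2$.

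The remaining step, and the one I expect to be the main obstacle, is to prove exactly that. The endpoints of the $m$ paths of $\widetilde{T}$ are its degree-one vertices, which, as already noted in the proof of Lemma~\ref{12 k improv optimality T}, are the images in $G_S$ of the vertices $w_0$ and $w_9$ of the copies of $S$; so it suffices to show that no cost-$1$ edge of $I_{k,\epsilon}$ joins two of these vertices. The cost-$1$ edges are the edges of $E(G_S)$ together with the edges $\{v_{10h+4},v_{10h+5}\}$, and I would verify directly from the construction that none of them is incident to two such endpoints: inside a copy of $S$ the vertex $w_0$ is adjacent only to $w_1,w_4$ and $w_9$ only to $w_5,w_8$, the edges $\{v_{10h+4},v_{10h+5}\}$ are incident to images of $w_4$ and $w_5$, and the edges of $G_S$ arising from $E(G_1)$ join only copies of the degree-one vertices $w_1,w_2,w_7,w_8$. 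Consequently the ``arbitrary'' reconnection is forced to use $m$ cost-$2$ edges, which yields $\sum_e c(e) = 2m$ and hence $c(T') = c(T)$. The care needed here lies in correctly identifying the path endpoints of $\widetilde{T}$ and in the small case analysis over the edge set of $S$ and the bipartite lift $G_1 \to G_S$; everything before it is routine once Lemma~\ref{12 k improv optimality T} is in hand.
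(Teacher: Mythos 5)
Your proposal is correct and follows the same overall skeleton as the paper up to the decisive step: compute $\widetilde{T}$, invoke Lemma~\ref{12 k improv optimality T} so that the while loop and the cycle-removal step do nothing, and reduce everything by cost bookkeeping to the claim that every edge added in the reconnection step has cost $2$, i.e.\ that no cost-$1$ edge joins two degree-one vertices of $\widetilde{T}$. Where you diverge is in how this last claim is established. The paper proves it indirectly, by one more appeal to optimality: if some corresponding tour were cheaper than $T$, there would be a cost-$1$ edge between two degree-one vertices of $\widetilde{T}$, and adding that single edge would already be an improving improv-move (it decreases the number of connected components, or closes a cycle, using only cost-$1$ edges), contradicting Lemma~\ref{12 k improv optimality T}. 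You instead verify the claim directly from the construction, after identifying the path endpoints of $\widetilde{T}$ as the images of $w_0$ and $w_9$ and running through the three kinds of cost-$1$ edges (the $S$-edges incident to $w_0,w_9$, the extra edges $\{v_{10h+4},v_{10h+5}\}$, and the lifted $G_1$-edges, which touch only the colored vertices $w_1,w_2,w_7,w_8$). Your case analysis is complete — it also rules out a cost-$1$ closing edge between $w_0$ and $w_9$ of the same copy, which is the one added edge that joins two ends of the same merged path. The trade-off: the paper's argument is shorter and entirely insensitive to the fine structure of $\widetilde{T}$ (it never needs to know which vertices are path endpoints), whereas your argument is self-contained at that point but leans on the endpoint identification taken from the proof of Lemma~\ref{12 k improv optimality T} and on the explicit edge inventory of the instance; both are valid proofs of the lemma.
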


\begin{proof}
The $k$-improv algorithm first computes the 2-matching $\widetilde{T}$ from $T$ by removing all edges with cost 2. By Lemma~\ref{12 k improv optimality T}, $\widetilde{T}$ is $k$-improv-optimal, hence the algorithm cannot make any improvements. By construction, $\widetilde{T}$ is cycle-free. Consequently, the algorithm arbitrarily adds edges to connect the paths in $\widetilde{T}$, forming a tour. Suppose an edge with a cost of 1 was added during this step. Adding this edge to $\widetilde{T}$ would decrease the number of connected components, contradicting the $k$-improv-optimality of $\widetilde{T}$. Therefore, all added edges must have a cost of 2 resulting in a tour with the same cost as $T$.
\end{proof}

Now, we need an upper bound on the length of the optimal tour before we can conclude the lower bound on the approximation ratio.

\begin{lemma} \label{12 k length opt}
The optimal tour $T^*$ of the instance $I_{k,\epsilon}$ has cost at most $10s+\frac{10s}{g}$.
\end{lemma}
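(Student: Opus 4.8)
The plan is to exhibit one explicit tour of cost at most $10s+\frac{10s}{g}$; since $T^*$ is a shortest tour, this proves the claim. A tour on the $10s$ vertices uses exactly $10s$ edges, so it suffices to build a Hamiltonian cycle with at most $\frac{10s}{g}$ edges of cost $2$ and all other edges of cost $1$.

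The starting point is $G_S$ itself: by construction every vertex of $G_S$ has degree $2$, so $G_S$ is a disjoint union of cycles $C_1,\dots,C_t$ spanning all $10s$ vertices, and every edge of $E(G_S)$ has cost $1$. If $t=1$ we are done; otherwise the work is to bound $t$ and then merge the $C_i$ cheaply. To bound $t$, contract each copy of $S$ to a point, which turns the inter-copy edges of $G_S$ into precisely the edges of $G_1$. Inside a copy the cost-$1$ edges from $E(G_S)$ form exactly the graph $S$, whose two connected components are the paths $(w_1,w_0,w_4,w_3,w_2)$ and $(w_8,w_9,w_5,w_6,w_7)$, with endpoints precisely the four ports $w_1,w_2,w_7,w_8$ that carry the inter-copy edges. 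Hence a cycle $C_i$ entering a copy through a port is forced to leave through the port in the same component ($w_1\leftrightarrow w_2$ or $w_7\leftrightarrow w_8$), and---using the fixed $4$-edge-coloring of $G_1$---this is exactly a transition system on $G_1$ that pairs, at every vertex, the two incident edges whose colors are those of $w_1,w_2$ and the two whose colors are those of $w_7,w_8$. Consequently $C_1,\dots,C_t$ are in bijection with the closed trails of the induced edge-partition of $G_1$, and each such trail is nonempty (since $S$ is a forest, no $C_i$ lies inside a single copy). A nonempty closed trail is an edge-disjoint union of cycles, hence has at least $g$ edges; as these trails partition the $2s$ edges of $G_1$, we get $t\le\frac{2s}{g}$.

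It remains to merge $C_1,\dots,C_t$ into a single tour, which I would do by $t-1$ successive two-opt patches: given two of the current cycles, delete one edge from each and add the two edges joining the freed endpoints so that a single cycle on the combined vertex set results. Each patch deletes two edges of cost $1$ and adds two edges of cost at most $2$, so the total cost grows by at most $2$; after $t-1$ patches we obtain a Hamiltonian cycle of cost at most $10s+2(t-1)<10s+\frac{4s}{g}\le 10s+\frac{10s}{g}$, and hence $c(T^*)\le 10s+\frac{10s}{g}$.

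The main obstacle is the middle paragraph: recognizing that the two components of $S$ pair the ports in a fixed way, so that the cycles of $G_S$ are governed by a transition system on $G_1$, and then using the girth of $G_1$ to bound the number of closed trails. The other ingredients---that $G_S$ is $2$-regular and that each two-opt merge costs at most $2$---are routine, and the bound $\frac{10s}{g}$ in the statement leaves room to spare (the argument even yields $\frac{4s}{g}$).
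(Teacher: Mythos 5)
Your proof is correct and follows essentially the same route as the paper: view the cost-$1$ graph $G_S$ as a disjoint union of spanning cycles, use the girth of $G_1$ together with the acyclicity of $S$ to bound their number, and merge them into a tour at extra cost at most $2$ per merge. Your count via closed trails partitioning $E(G_1)$ is slightly sharper ($t\le \frac{2s}{g}$ instead of the paper's $t\le \frac{10s}{g}$ obtained from the girth of $G_S$), but the substance of the argument is the same.
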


\begin{proof}
Note that the edges in $G_S$ form disjoint cycles since the degree of every vertex is exactly 2. Moreover, every cycle in $G_S$ corresponds to a closed walk in $G$ if we contract the blocks. Since the girth of the graph $G$ is at least $g$ and $S$ is acyclic, the girth of $G_S$ is also at least $g$. Thus, each of the disjoint cycles in $G_S$ has at least $g$ edges. We can get a tour by removing an arbitrary edge from each cycle and arbitrarily add edges to complete the paths to a tour. Recall that all edges in $G_S$ have cost 1. Hence, we introduced at most one edge with cost 2 for every path with at least $g-1$ edges with cost 1. The constructed tour has a maximum length of $\lvert V(G_S)\rvert +\frac{\lvert V(G_S)\rvert}{g} =10s+\frac{10s}{g}$.
\end{proof}

\begin{theorem} \label{12 lower apx ratio}
The approximation ratio of the $k$-improv algorithm with arbitrarily fixed $k$ for \textsc{(1,2)-TSP} is at least $\frac{11}{10}$.
\end{theorem}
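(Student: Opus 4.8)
The plan is to read off the approximation ratio on the instance $I_{k,\epsilon}$ directly from the three lemmas already established for it and then let $\epsilon\to 0$. The real work has been done; what remains is bookkeeping.

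First I would invoke Lemma~\ref{k-improv does not change T}: started from the tour $T$, the $k$-improv algorithm terminates with a tour of cost exactly $c(T)=11s$ (the $2$-matching $\widetilde{T}$ is already $k$-improv-optimal by Lemma~\ref{12 k improv optimality T}, and every tour corresponding to $\widetilde{T}$ has the same cost as $T$). Hence the approximation ratio of $k$-improv on $I_{k,\epsilon}$ is at least $c(T)/c(T^*)$. Combining this with the upper bound $c(T^*)\le 10s+\frac{10s}{g}$ from Lemma~\ref{12 k length opt} gives
\begin{align*}
\frac{c(T)}{c(T^*)}\ \ge\ \frac{11s}{10s+\frac{10s}{g}}\ =\ \frac{11}{10}\cdot\frac{g}{g+1}\ =\ \frac{11}{10}-\frac{11}{10(g+1)} .
\end{align*}

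Next I would substitute the defining inequality for $g$. Since $g\ge \frac{1}{\frac{11}{11-10\epsilon}-1}=\frac{11-10\epsilon}{10\epsilon}$, we have $g+1\ge \frac{11}{10\epsilon}$, so $\frac{11}{10(g+1)}\le \epsilon$, and therefore the approximation ratio of $k$-improv on $I_{k,\epsilon}$ is at least $\frac{11}{10}-\epsilon$. Finally, Theorem~\ref{regular high girth} provides a suitable $4$-regular graph $G_0$ of every sufficiently large even order, so the instances $I_{k,\epsilon}$ can be chosen with arbitrarily many vertices; hence for every $\epsilon>0$ there are arbitrarily large $(1,2)$-TSP instances on which $k$-improv has ratio at least $\frac{11}{10}-\epsilon$. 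Letting $\epsilon\to 0$ yields the claimed lower bound of $\frac{11}{10}$. (The analogous bound for $k$-Opt then follows by running the same construction with parameter $2k$ and using that every tour corresponding to a $2k$-improv-optimal $2$-matching is $k$-optimal, so $T$ itself is a $k$-optimal tour of cost $11s$.)

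I do not expect any genuine obstacle in this final step: all the substance is already contained in Lemmas~\ref{12 k improv optimality T}, \ref{k-improv does not change T} and \ref{12 k length opt}. The only points requiring a little care are that the conclusion is a supremum not attained by any single instance, so the argument must exhibit for each $\epsilon$ an instance beating $\frac{11}{10}-\epsilon$ rather than one attaining $\frac{11}{10}$, and that the output cost $11s$ is independent of the internal (possibly arbitrary) choices of the $k$-improv algorithm, which is exactly what Lemma~\ref{k-improv does not change T} guarantees.
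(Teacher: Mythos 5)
Your proposal is correct and follows essentially the same route as the paper: it combines Lemma \ref{k-improv does not change T} (output cost $11s$) with Lemma \ref{12 k length opt} (optimum at most $10s+\frac{10s}{g}$), uses the choice $g\geq \frac{1}{\frac{11}{11-10\epsilon}-1}$ to bound the ratio by $\frac{11}{10}-\epsilon$, and lets $\epsilon\to 0$. The only difference is cosmetic bookkeeping (the explicit remark about arbitrarily large instances), which the paper leaves implicit.
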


\begin{proof}
The constructed tour $T$ has length $11s$ and by Lemma \ref{k-improv does not change T} the $k$-improv algorithm outputs a tour of this length if starting with $T$. The optimal tour $T^*$ has cost at most $10s+\frac{10s}{g}$ by Lemma \ref{12 k length opt}. Recall that for any fixed $\epsilon>0$ we chose $g\geq \frac{1}{\frac{11}{11-10\epsilon}-1}$ which implies $\frac{11}{10} \cdot \frac{1}{1+\frac{1}{g}} \geq \frac{11}{10}-\epsilon$. Hence, for every $\epsilon>0$ there exists an instance with approximation ratio at least 
\begin{align*}
\frac{c(T)}{c(T^*)}=\frac{11s}{10s+\frac{10s}{g}}=\frac{11s}{10s(1+\frac{1}{g})}=\frac{11}{10} \cdot \frac{1}{1+\frac{1}{g}}\geq \frac{11}{10}-\epsilon.
\end{align*}
\end{proof}

Moreover, we show that we can carry over the result to the $k$-Opt algorithm.

\begin{lemma} \label{12 tour k optimal}
The constructed tour $T$ for the instance $I_{2k,\epsilon}$ is $k$-optimal.
\end{lemma}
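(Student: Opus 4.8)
The plan is to establish the more general fact, from which the lemma follows at once: \emph{every corresponding tour of a $2k$-improv-optimal $2$-matching is $k$-optimal}. Indeed, by Lemma~\ref{12 k improv optimality T} applied to the instance $I_{2k,\epsilon}$, the $2$-matching $\widetilde{T}$ corresponding to $T$ is $2k$-improv-optimal, and $T$ is one of its corresponding tours, so it suffices to prove this general statement for $\widetilde T$ and $T$.

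So suppose for contradiction that $T$ admits an improving $k$-move, yielding a shorter tour $T'$; let $D$ be the (at most $k$) edges it deletes and $A$ the equally many edges it adds. Partition $D=D_1\sqcup D_2$ and $A=A_1\sqcup A_2$ according to whether an edge has cost $1$ or $2$. Since $|A_1|+|A_2|=|D_1|+|D_2|$ and $c(T')<c(T)$ while all costs lie in $\{1,2\}$, comparing total costs forces $|A_2|<|D_2|$, and hence $|A_1|>|D_1|$. Let $\widetilde{T'}$ be the $2$-matching corresponding to $T'$, i.e.\ $T'$ with all its cost-$2$ edges deleted. One checks directly that $\widetilde{T'}=(\widetilde{T}\setminus D_1)\cup A_1$ (the cost-$2$ edges of $T'$ are exactly the cost-$2$ edges of $T$ outside $D$, together with $A_2$), so $\widetilde{T'}$ arises from $\widetilde{T}$ by deleting $D_1$ and adding $A_1$ — a total of $|D_1|+|A_1|\le 2k$ edge modifications — it consists only of cost-$1$ edges, and it has $|\widetilde{T}|+(|A_1|-|D_1|)>|\widetilde{T}|$ edges; being a subgraph of the tour $T'$ it is a valid $2$-matching on the same vertex set.

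Next I would prove that $\widetilde{T'}$ is acyclic. The cost-$1$ edges of $I_{2k,\epsilon}$ consist of $E(G_S)$ together with the bridges $\{v_{10h+4},v_{10h+5}\}$, and inside each copy of $S$ they form a forest (the two paths of $S$ joined by one bridge, plus the pendant connecting edges). Moreover, by the construction of $T$, the $2$-matching $\widetilde{T}$ is a disjoint union of paths each lying inside a single copy of $S$, and no edge of $\widetilde{T}$ is a connecting edge. Now suppose $\widetilde{T'}$ contained a cycle $Z$. Since $\widetilde{T}$ is acyclic, $Z$ uses at least one and at most $|A_1|\le k$ edges of $A_1$, while the portions of $Z$ lying in $\widetilde{T}$ are subpaths of components of $\widetilde{T}$, each inside one copy of $S$. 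Contracting the copies of $S$ therefore maps $Z$ to a closed walk in $G_1$ of length at most $|A_1|\le k<g\le\operatorname{girth}(G_1)$; exactly as in the proof of Lemma~\ref{12 k length opt}, such a walk must traverse every edge of $G_1$ an even number of times, yet each connecting edge of $G_S$ on $Z$ is traversed exactly once, so $Z$ uses no connecting edge at all and hence lies inside a single copy of $S$ — contradicting that the cost-$1$ edges there form a forest. Thus $\widetilde{T'}$ is acyclic.

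To conclude: for an acyclic $2$-matching the number of connected components equals (number of vertices) minus (number of edges); since $\widetilde{T'}$ is acyclic with strictly more edges than the acyclic $\widetilde{T}$, it has strictly fewer connected components. Hence the move $\widetilde{T}\to\widetilde{T'}$ is an improving $2k$-improv-move, contradicting the $2k$-improv-optimality of $\widetilde{T}$ given by Lemma~\ref{12 k improv optimality T}; therefore $T$ is $k$-optimal. The main obstacle is precisely the acyclicity step: without controlling the cycles that the (up to $2k$) edge changes could create, the induced modification of $\widetilde{T}$ need not satisfy the ``fewer components / more cycles / fewer singletons'' requirement of an improving $k$-improv-move, and it is the girth bound $g\ge 2(2k)+1$ — which holds because the construction is carried out with parameter $2k$ rather than $k$ — that rules those cycles out.
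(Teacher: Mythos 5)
Your proof is correct and, at its core, takes the paper's route: an improving $k$-move deletes at most $k$ and adds at most $k$ edges, so the corresponding $2$-matchings differ by at most $2k$ edge changes, the cost comparison gives $\lvert A_1\rvert>\lvert D_1\rvert$, and the contradiction comes from Lemma~\ref{12 k improv optimality T} applied to $I_{2k,\epsilon}$. Where you genuinely diverge is in verifying the ``improving'' condition, and there your argument is much heavier than necessary. You prove acyclicity of $\widetilde{T'}$ by contracting the copies of $S$ and invoking the girth of $G_1$; this goes through, but no girth (and no structure of the instance at all) is needed: $\widetilde{T'}$ is a subgraph of the Hamiltonian cycle $T'$, so it is automatically a disjoint union of paths unless it equals $T'$ itself, and its number of connected components equals the number of cost-$2$ edges of $T'$ (or $1$ in the degenerate case), which is strictly smaller than the number of cost-$2$ edges of $T$, i.e.\ the number of components of $\widetilde{T}$, simply because $T'$ is strictly cheaper and both tours have $n$ edges with costs in $\{1,2\}$. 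This is the one-line fact behind the paper's ``$\widetilde{T'}$ must contain less connected components,'' and it is also what actually delivers the general statement you announce at the start (every corresponding tour of a $2k$-improv-optimal $2$-matching is $k$-optimal, for \emph{any} \textsc{(1,2)-TSP} instance); your girth-based acyclicity step, by contrast, ties the proof to this particular construction without any gain. Relatedly, your closing diagnosis is off: the girth bound $g\geq 2(2k)+1$ is not what is needed to ``rule those cycles out'' — it is needed inside Lemma~\ref{12 k improv optimality T} so that $\widetilde{T}$ is $2k$-improv-optimal, and the parameter is doubled solely because a $k$-move changes up to $2k$ edges in total while a $k$-improv-move counts deletions and additions together, exactly as the remark after Lemma~\ref{12 tour k optimal} explains.
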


\begin{proof}
Assume that there is an improving $k$-move after which augmentation we get a shorter tour $T'$. Let $\widetilde{T'}$ be the 2-matching we obtain by removing all edges with cost 2 from $T'$. Then, $\widetilde{T'}$ must contain fewer connected component than $\widetilde{T}$ and we can perform a $2k$-improv-move to obtain $\widetilde{T'}$ from $\widetilde{T}$. This is a contradiction to the $2k$-improv-optimality of $\widetilde{T}$ by Lemma \ref{12 k improv optimality T}.
\end{proof}

\begin{remark}
The fact that we need a $2k$-improv-optimal (instead of a $k$-improv-optimal) 2-matching to ensure that every corresponding tour is $k$-optimal is caused by the different definitions of the two algorithms. In contrast to a $k$-move where at most $k$ edges can be removed and added a $k$-improv-move is defined such that at most $k$ edges can be removed and added in total.
\end{remark}

Therefore, we can carry over the result to the $k$-Opt algorithm:

\begin{theorem}
The approximation ratio of the $k$-Opt algorithm with arbitrarily fixed $k$ for \textsc{(1,2)-TSP} is at least $\frac{11}{10}$.
\end{theorem}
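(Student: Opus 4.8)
The plan is to reuse, for the fixed value of $k$ in the statement, the instance family $I_{2k,\epsilon}$ together with its tour $T$ that were constructed above. First I would invoke Lemma \ref{12 tour k optimal}: it states that $T$ is $k$-optimal for $I_{2k,\epsilon}$, so the $k$-Opt algorithm, when started from $T$, finds no improving $k$-move and simply returns $T$. Thus for every $\epsilon>0$ the instance $I_{2k,\epsilon}$ admits a $k$-optimal tour whose cost is $c(T)=11s$, since all edges of $T$ other than the $s$ edges $\{v_{10h+9},v_{10h+10}\}$ have cost $1$.

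Next I would bound the optimum. In the construction of $I_{2k,\epsilon}$ the girth parameter is $g=\max\{4k+1,\ \tfrac{1}{\frac{11}{11-10\epsilon}-1}\}$; the only change from the $k$-improv construction is that $k$ is replaced by $2k$ in the lower bound on $g$, which only strengthens the girth requirement. Hence Lemma \ref{12 k length opt} applies verbatim and gives $c(T^*)\le 10s+\tfrac{10s}{g}$. Combining the two bounds exactly as in the proof of Theorem \ref{12 lower apx ratio},
\[
\frac{c(T)}{c(T^*)}\ \ge\ \frac{11s}{10s\left(1+\frac{1}{g}\right)}\ =\ \frac{11}{10}\cdot\frac{1}{1+\frac{1}{g}}\ \ge\ \frac{11}{10}-\epsilon .
\]
Since $\epsilon>0$ is arbitrary and, by Theorem \ref{regular high girth}, $4$-regular graphs of girth at least $g$ — and hence the instances $I_{2k,\epsilon}$ — exist with arbitrarily many vertices, this shows that the approximation ratio of the $k$-Opt algorithm for \textsc{(1,2)-TSP} is at least $\frac{11}{10}$.

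There is no substantial obstacle remaining: the real work lies in Lemma \ref{12 k improv optimality T} (the $2k$-improv-optimality of $\widetilde T$) and in Lemma \ref{12 tour k optimal} (the passage from $2k$-improv-optimality to $k$-optimality), both of which I would take as given. The only point needing a word of care is a quantifier/initialization subtlety: Lemma \ref{12 tour k optimal} exhibits just one $k$-optimal tour per instance, whereas the $k$-Opt algorithm may start from any tour; but because the approximation ratio is the supremum over all instances and all admissible runs of the algorithm — equivalently, over all $k$-optimal tours — producing a single bad $k$-optimal tour for each $\epsilon$ suffices, and letting $\epsilon\to 0$ yields the claimed bound of $\frac{11}{10}$.
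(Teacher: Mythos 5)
Your proposal is correct and follows essentially the same route as the paper: take the instance $I_{2k,\epsilon}$ with the constructed tour $T$ of cost $11s$, use Lemma \ref{12 tour k optimal} for $k$-optimality, Lemma \ref{12 k length opt} for the bound $c(T^*)\le 10s+\frac{10s}{g}$, and conclude via the computation of Theorem \ref{12 lower apx ratio} that the ratio is at least $\frac{11}{10}-\epsilon$ for every $\epsilon>0$. Your added remarks on the girth parameter becoming $\max\{4k+1,\cdot\}$ and on starting the algorithm from the $k$-optimal tour $T$ are consistent with the paper's setup and introduce no deviation.
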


\subsubsection*{Acknowledgements}
I want to thank Fabian Henneke, Stefan Hougardy, Yvonne Omlor, Heiko Röglin and anonymous reviewers for reading this paper and making helpful remarks. I was supported by the Bonn International Graduate School.


\bibliographystyle{plain}
\bibliography{k-opt-lin-kernighan.bbl}
\end{document}